\pdfoutput=1
\documentclass{article}
\usepackage{style}
\usepackage{shortcuts}

\title{Universe Reduction for APSP:\\Equivalence of Three Fine-Grained Hypotheses}
\author{Nick Fischer\thanks{Max Planck Institute for Informatics. Email: \texttt{nfischer@mpi-inf.mpg.de}.}}
\date{}

\begin{document}

\maketitle
\begin{abstract}
\noindent
The \emph{APSP Hypothesis} states that the All-Pairs Shortest Paths (APSP) problem requires time $n^{3-o(1)}$ on graphs with polynomially bounded integer edge weights. Two increasingly stronger assumptions are the \emph{Strong APSP Hypothesis} and the \emph{Directed Unweighted APSP Hypothesis}, which state that the fastest-known APSP algorithms on graphs with small weights and unweighted graphs, respectively, are best-possible. In this paper, we design an efficient \emph{universe reduction} for APSP, which proves that these three hypotheses are, in fact, \emph{equivalent}, conditioned on $\omega = 2$ and a plausible additive combinatorics assumption. 

Along the way, we resolve the fine-grained complexity of many long-standing graph and matrix problems with ``intermediate'' complexity such as Node-Weighted APSP, All-Pairs Bottleneck Paths, Monotone Min-Plus Product in certain settings, and many others, by designing matching APSP-based lower bounds.
\end{abstract}

\thispagestyle{empty}
\clearpage

\tableofcontents
\thispagestyle{empty}
\clearpage

\setcounter{page}{1}
\section{Introduction} \label{sec:intro}
The overarching goal of fine-grained complexity is to pinpoint the exact complexity of polynomial-time problems conditioned on a small set of believable core assumptions. Over the years, three such hypotheses have crystallized: SETH~\cite{ImpagliazzoP01,ImpagliazzoPZ01}, the 3SUM Hypothesis~\cite{GajentaanO95}, and the APSP Hypothesis~\cite{VassilevskaW18}. In this modern view, a problem is considered ``closed'' when we can establish both an~$n^{c+o(1)}$ upper bound and a matching~$n^{c-o(1)}$ lower bound conditioned on one of these three assumptions. The framework has become a major success story, resolving the exact complexity of numerous important problems in P.

Yet many other central problems currently resist lower bounds under these ``primary'' assumptions. A common workaround is to resort to stronger ``secondary'' hypotheses (such as Quantified SETH~\cite{BringmannC20}, Strong 3SUM~\cite{AmirCLL14}, and Minimum-Weight $k$-Clique~\cite{AbboudVW14}, to name a few). Often, these secondary assumptions postulate that the primary problems are hard even on restricted instances (as is the case for Strong 3SUM and Minimum-Weight $k$-Clique). On the one hand, these variants are extremely helpful: they are typically easier to reduce from, and we still regard them as ``morally equivalent'' to their primary counterparts.

On the other hand, these secondary hypotheses are, of course, less convincing. Unlike the three core assumptions, they often lack decades of effort dedicated to breaking them. As a result, while conditional lower bounds based on secondary hypotheses are still seen as strong hardness barriers, they must be taken with a grain of salt. Another drawback is conceptual: With every new hypothesis we move a little further away from the dream goal of classifying the complexity in P conditioned on a \emph{small} set of core assumptions. This naturally raises an important quest in fine-grained complexity: to \emph{relate} our primary and secondary hypotheses to each other, ideally reducing all of our secondary assumptions back to the original three primary ones. Unfortunately, so far there has been little progress in this direction.

In this paper, we take one of the first significant steps. We consider the APSP Hypothesis and two of its prominent secondary variants, and prove that these three hypotheses are not only morally related but are in fact \emph{equivalent}, modulo the assumption $\omega = 2$ and some plausible additive combinatorics assumption. Along the way we derive strong APSP-based lower bounds---many of which resolve long-standing open problems.

\subsection{All-Pairs Shortest Paths} \label{sec:intro:sec:apsp}
The \emph{All-Pairs Shortest Paths (APSP)} problem is to compute all pairs of distances in an edge-weighted graph. Few other graph problems have sparked as much attention as the quest for fast algorithms for APSP. Yet in the 60 years since the classic $O(n^3)$-time Floyd--Warshall algorithm~\cite{Floyd62,Warshall62}, a long line of improvements~\cite{Floyd62,Warshall62,Fredman76,Dobosiewicz90,Takaoka92,Han04,Takaoka04,Zwick04,Takaoka05,Chan08,Han08,Chan10,HanT16} only shaved off polylogarithmic factors, culminating in Williams’~\cite{Williams18,ChanW21} current best bound of \smash{$n^3 / 2^{O(\sqrt{\log n})}$}. Whether this can be improved to truly subcubic time~$O(n^{2.99})$, say, remains a major open question. With the rise of fine-grained complexity in the 2010s, this barrier was reinterpreted as a hardness assumption~\cite{VassilevskaW18}, which has since evolved into one of the three primary hypotheses of the area:

\begin{hypothesis}[APSP] \label{hypo:apsp}
For every constant $\epsilon > 0$ there is a constant $c$ such that APSP on graphs with weights in $\set{0, \dots, n^c}$ cannot be solved in time $O(n^{3-\epsilon})$.\footnote{Unless stated otherwise we assume that graphs are directed. However, it is known that \cref{hypo:apsp} holds equivalently for directed and undirected graphs~\cite{VassilevskaW18}.}\footnote{Here and throughout we only consider nonnegative weights, although it is equivalent to assume weights in $\set{-n^c, \dots, n^c}$. Indeed, with Johnson's trick~\cite{Johnson77}, it suffices to solve a Single-Source Shortest Paths instance (e.g., using~\cite{BernsteinNW22}) to compute a potential function that transforms negative weights into nonnegative integers.}\footnote{In many papers the dependence between $\epsilon$ and $c$ is not made explicit. In particular, another reasonable formulation of the hypothesis is to change the quantifiers, i.e., to postulate that there is some absolute constant $c$ such that for all $\epsilon > 0$ APSP on graphs with weights $\set{0, \dots, n^c}$ cannot be solved in time $O(n^{3-\epsilon})$. We consider the weaker \cref{hypo:apsp} throughout (so that our results take the strongest form). In fact, one could weaken the hypothesis further to weights bounded by \smash{$\exp(n^{o(1)})$} and our results would still apply.}
\end{hypothesis}

The APSP Hypothesis forms the basis for fine-grained lower bounds for many important problems, including natural graph and matrix problems~\cite{VassilevskaW18,BackursDT16,HenzingerLNV17,LincolnVW18,AbboudGV23,KociumakaP23}, tree edit distance~\cite{BringmannGMW18,NoglerPSVXY25}, triangle listing~\cite{VassilevskaX20b}, as well as approximate~\cite{AbboudBKZ22,ChanX24} and dynamic problems~\cite{AbboudV14,VassilevskaX20b}, even in planar graphs~\cite{AbboudD16}. In fact, many such problems have been shown not merely to be APSP-hard, but APSP-equivalent.

\medskip
The two secondary hypotheses we consider can both be seen as strengthenings of the APSP hypothesis concerning graphs with \emph{small} weights and \emph{no} weights. (In fact, the restriction to small-weight instances is a common secondary pattern, see e.g.\ the Strong 3SUM~\cite{AmirCLL14}, Strong Zero Triangle~\cite{AbboudBKZ22} and Strong Min-Plus Convolution~\cite{ChanVX21} Hypotheses.) Specifically, the fastest-known algorithm for graphs with weights in $\set{0, \dots, u}$ runs in time $\tilde O(n^\omega u)$~\cite{AlonGM97}, where $2 \leq \omega < 2.372$ is the matrix multiplication exponent~\cite{AlmanDVXXZ25}. Thus, for a universe of size $u = n^{3-\omega}$ the fastest-known algorithms take cubic time~$n^{3-o(1)}$, and it is plausible to hypothesize that this is best-possible. Chan, Vassilevska W., and Xu~\cite{ChanVX23} first proposed this as the following hardness assumption, which has become the basis for several conditional lower bounds~\cite{ChanVX23,HorowiczK25}:

\begin{hypothesis}[Strong APSP] \label{hypo:strong-apsp}
For every constant $\epsilon > 0$, APSP on graphs with weights in $\set{0, \dots, n^{3-\omega}}$ cannot be solved in time $O(n^{3-\epsilon})$.
\end{hypothesis}

Conversely, APSP on graphs with weights smaller than $n^{3-\omega}$ and therefore also \emph{unweighted} directed graphs is known to be in subcubic time. The fastest-known algorithm for this problem is due to Zwick~\cite{Zwick02} and runs in time $\tilde O(n^{2+\mu}) = \tilde O(n^{2.528})$. This running time depends on a rectangular matrix multiplication constant\footnote{Namely, the constant $\mu$ satisfying that $\omega(1, \mu, 1) = 1 + 2\mu$, where $\omega(\cdot, \cdot, \cdot)$ is the rectangular matrix multiplication exponent.} $0.5 \leq \mu \leq 0.528$ and becomes~$\tilde O(n^{2.5})$ if~\makebox{$\omega = 2$}. This time bound has not been challenged for more than 20 years, which recently led Chan, Vassilevska W., and Xu~\cite{ChanVX21} to hypothesize that Zwick's algorithm is best-possible.

\begin{hypothesis}[Directed Unweighted APSP] \label{hypo:dir-unw-apsp}
For every constant $\epsilon > 0$, APSP on directed unweighted graphs cannot be solved in time $O(n^{2+\mu-\epsilon})$.
\end{hypothesis}

This hypothesis turns out to be surprisingly powerful---it is equivalent to various $\tilde O(n^{2+\mu})$-time problems~\cite{ChanVX21}, and conditionally resolves the complexity of many well-studied graph problems with ``intermediate'' complexity; see the discussion below in \cref{sec:intro:sec:intermediate}.

\paragraph{Min-Plus Product}
To relate these three hypotheses, it is helpful to introduce some more background. The \emph{min-plus product} of two matrices $A$ and $B$ of size $n_1 \times n_2$ and $n_2 \times n_3$, respectively, is the matrix~\makebox{$A * B$} of size $n_1 \times n_3$ defined by $(A * B)[i, j] = \min_k(A[i, k] + B[k, j])$. We refer to the problem of computing $A * B$ as the \emph{Min-Plus Product} problem, and we denote its complexity by $\MinPlus(n_1, n_2, n_3)$. We sometimes annotate bounds on the maximum entry $u$ in the three matrices by $\MinPlus(n_1, n_2, n_3 \mid u)$. It is a long-known fact that the complexity of APSP is \emph{exactly} $\MinPlus(n, n, n)$ (up to $n^{o(1)}$ factors). Via two elegant reductions~\cite{ShoshanZ99,ChanW21} one can similarly give equivalent characterizations of the Strong and Directed Unweighted APSP Hypotheses in terms of some Min-Plus Product problems:

\begin{hypothesis}[APSP] \label{hypo:apsp-min-plus}
$\MinPlus(n, n, n) = n^{3 \pm o(1)}$.
\end{hypothesis}

\begin{hypothesis}[Strong APSP~\cite{ShoshanZ99}] \label{hypo:strong-apsp-min-plus}
$\MinPlus(n, n, n \mid u \leq n^{3-\omega}) = n^{3 \pm o(1)}$.
\end{hypothesis}

\begin{hypothesis}[Directed Unweighted APSP~\cite{ChanVX21}] \label{hypo:dir-unw-apsp-min-plus}
$\MinPlus(n, n^\mu, n \mid u \leq n^{1-\mu}) = n^{2 + \mu \pm o(1)}$.
\end{hypothesis}

From these characterizations it is easy to read off that, when $\omega = 2$, the Directed Unweighted APSP Hypothesis implies the Strong APSP Hypothesis.\footnote{Note that $\MinPlus(n, \sqrt{n}, n \mid u \leq \sqrt{n}) = O(n \cdot \MinPlus(\sqrt{n}, \sqrt{n}, \sqrt{n} \mid u \leq \sqrt{n}))$. Thus, \makebox{$\MinPlus(n, n, n \mid u \leq n) = O(n^{3-\epsilon})$} implies that $\MinPlus(n, \sqrt{n}, n \mid u \leq \sqrt{n}) = O(n^{2.5-\epsilon/2})$.} In summary, the three assumptions get \emph{increasingly stronger}---i.e., the Directed Unweighted APSP Hypothesis implies the Strong APSP Hypothesis (if $\omega = 2$) which in turn implies the APSP Hypothesis. 

\subsection{Universe Reductions for APSP} \label{sec:intro:sec:univ-reduct}
The main goal of this paper is to establish that these three hypotheses are actually equivalent. It is reasonable to expect that such an equivalence result hinges on the assumption that $\omega = 2$; we will later comment on this issue in detail, but for now assume that $\omega = 2$. The two missing ``hard'' directions can both be seen as \emph{universe reductions} for APSP: from large (polynomially bounded) weights to weights bounded by $n$ (\emph{large-universe reduction}), and from weights bounded by $n$ to no weights at all (\emph{small-universe reduction}). We start with the latter task which turns out to be the slightly simpler one.

\subsubsection{Small-Universe Reduction: From Small to No Weights}
There has been partial progress towards a small-universe reduction. In the same paper that introduces the Strong APSP Hypothesis, Chan, Vassilevska W., and Xu~\cite{ChanVX23} develop an exciting technique called ``Fredman's trick meets dominance product'' and thereby establish a non-tight reduction: Directed Unweighted APSP cannot be solved in time $O(n^{7/3-\epsilon})$ unless the Strong APSP Hypothesis fails. Ultimately, they left open if the two hypotheses are equivalent, but their technique is inspiring. We propose a new approach, \emph{low-rank APSP,} and based on this new technique we successfully prove the desired equivalence result:

\begin{theorem}[Strong APSP Implies Directed Unweighted APSP] \label{thm:strong-apsp-implies-dir-unw-apsp}
APSP in directed unweighted graphs cannot be solved in time $O(n^{2.5-\epsilon})$ (for any constant $\epsilon > 0$), unless the Strong APSP Hypothesis fails. In particular, conditioned on $\omega = 2$, the Strong APSP and Directed Unweighted APSP Hypotheses are equivalent.
\end{theorem}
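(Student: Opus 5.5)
The goal is a reduction showing that an $O(n^{2.5-\epsilon})$-time algorithm for Directed Unweighted APSP yields $\MinPlus(n,n,n \mid u \le n^{3-\omega}) = O(n^{3-\epsilon'})$. By the characterization in \cref{hypo:dir-unw-apsp-min-plus} (via~\cite{ChanVX21}), the assumed algorithm gives $\MinPlus(n, \sqrt n, n \mid u \le \sqrt n) = O(n^{2.5-\epsilon''})$ whenever $\mu$ is close to $1/2$. In general we use the rectangular variant $\MinPlus(n,n^\mu,n \mid u\le n^{1-\mu})$.

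The plan has three steps, organized around the \emph{low-rank APSP} idea.

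\textbf{Step 1: reduce to low-rank matrices.} Take $A,B$ of size $n\times n$ with entries in $\{0,\dots,u\}$, $u=n^{3-\omega}$. Split the inner dimension into $n/m$ blocks of width $m$, so that $A*B$ is the entrywise minimum of $n/m$ products of shape $n\times m\times n$. This alone does not help, because the entries stay large. Instead, decompose each entry of $A$ as $A[i,k]=A_{\mathrm{hi}}[i,k]\cdot s + A_{\mathrm{lo}}[i,k]$, and similarly for $B$. Using Fredman's trick together with a dominance-style bucketing in the spirit of~\cite{ChanVX23}, we handle the high parts so that, for each output pair $(i,j)$, only a structured set of $k$ can be optimal. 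The aim is to arrive at matrices whose entries are \emph{low rank}: $A[i,k]=a_i+\alpha_k$ plus a small perturbation. Such structured instances are exactly what arise when unweighted distances are computed through a layered graph.

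\textbf{Step 2: solve low-rank instances with unweighted APSP.} For low-rank instances, rewrite $\min_k(a_i+\alpha_k+\beta_k+b_j+\text{small})$, pulling the terms $a_i,b_j$ out. The remaining inner dimension of size $n$ with small offsets is then encoded as a directed unweighted graph. Paths of bounded length simulate additive weights up to $\sqrt n$, and $\sqrt n$ intermediate layers each carry $\sqrt n$ nodes. One call to unweighted APSP on $O(n)$ vertices then handles an $n\times\sqrt n\times n$ piece with weights $\le\sqrt n$, at cost $n^{2.5-\epsilon}$. There are $\sqrt n$ such pieces, giving $n^{3-\epsilon}$ in total.

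\textbf{Step 3: accounting.} Balance the block size $m$ and the split parameter $s$ so that the preprocessing in Step 1 costs $\tilde O(n^{\omega}\cdot\text{something}) \le n^{3-\delta}$. This is also where $\omega=2$ enters the equivalence statement. For the ``in particular'' part, the other direction is the footnote argument already given in the paper.

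The main obstacle is Step 1: turning an arbitrary small-weight instance into a low-rank one without losing more than a subpolynomial factor. The first thing I would try is a randomized rank-reduction. Pick random pivots $k_0$ and express $A[i,k]-A[i,k_0]$ and $B[k_0,j]-B[k,j]$, following Fredman's trick. Then argue by a counting and pigeonhole argument that, after handling "heavy" triples with dominance products (cost $n^{(3+\omega)/2}$-type terms, which is subcubic if $\omega<3$), the remaining instances have small rank.
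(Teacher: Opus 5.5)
There is a genuine gap, and it is exactly the step you flag as the main obstacle. Step~1 is never carried out. The route you suggest for it (random pivots $k_0$, Fredman's trick, dominance/equality products for the heavy part) is essentially the ``Fredman's trick meets dominance product'' technique of~\cite{ChanVX23}. That technique is known to give only the non-tight $n^{7/3-o(1)}$ lower bound, and nothing in your plan explains how to get past that barrier to $n^{2.5}$. Your notion of low rank also concerns the \emph{input} matrices ($A[i,k]=a_i+\alpha_k$ plus a small perturbation), and you give no way to split a general instance into few instances of that form. Even for such instances Step~2 fails: after removing $a_i$ and $b_j$, the residual $\alpha_k+\beta_k$ still ranges over $\{0,\dots,2u\}$. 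So the $\sqrt n$ pieces need not have weights $O(\sqrt n)$, and they cannot be handed to unweighted APSP as you describe.

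The paper uses the oracle and the structure the other way around.
\begin{itemize}
\item \emph{Setup.} Write $A=qA_1+A_0$ and $B=qB_1+B_0$ with $q=\lceil u/\sqrt n\rceil$ and low parts below $q/2$. Call $k$ a pseudo-witness of $(i,j)$ if it is a witness of $A_1*B_1$; every true witness is a pseudo-witness.
\item \emph{Unpopular pairs} (fewer than $t\sqrt n$ pseudo-witnesses). All pseudo-witnesses are listed using $\tilde O(t\sqrt n)$ column-sampled products $A_1*B_1$ of shape $n\times O(\sqrt n)\times n$ with entries below $\sqrt n$. This is where the hypothesized fast unweighted APSP algorithm enters, via~\cite{ChanVX21} or \cref{lem:min-plus-sm-univ-to-apsp}.
\item \emph{Popular pairs.} A sample $\mathcal K$ of $\tilde O(\sqrt n/t)$ inner indices hits a pseudo-witness of each such pair. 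Hence $R_1=A_1[\cdot,\mathcal K]*B_1[\mathcal K,\cdot]$ has select-plus rank $|\mathcal K|$. A recursively computed approximation $\tilde C$ of $A*B$ has additive error at most $2$. The matrix $R$ of candidate values $\tilde C+z$ consistent with $R_1$ then has select-plus rank $|\mathcal K|\cdot q=\tilde O(u/t)$, by \cref{fac:rank-triv-univ,fac:rank-submult}.
\item \emph{Solving the low-rank part.} The popular entries are read off from the Exact Triangle instance $(A,B,R)$. This instance is solved \emph{unconditionally} by \cref{thm:exact-tri-low-rank} in time $n^{3+o(1)}(r/n^{3-\omega})^{\Omega(1)}$, which is subcubic since $r\le n^{3-\omega}/t$.
\end{itemize}

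This low-rank Exact Triangle algorithm is the key ingredient your proposal lacks. Its proof uses regular rank decompositions, the heavy/light step with global removal, the slice-uniform/uniform/regular chain, and BSG covering down to low-doubling instances handled by fast matrix multiplication. Taking $t=n^{\epsilon/2}$ yields \cref{cor:min-plus-rect-sm-univ} and hence the theorem. Finally, $\omega=2$ is not needed for the first sentence of the theorem, only for the equivalence.
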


\cref{thm:strong-apsp-implies-dir-unw-apsp} constitutes one of the very rare cases that two fine-grained hypotheses are shown to be equivalent---in fact, we are not aware of any prior case where two explicitly established fine-grained assumptions have later been shown to be equivalent. Moreover, \cref{thm:strong-apsp-implies-dir-unw-apsp} constitutes an equivalence result of two problems with the same input and output size but \emph{different} complexities ($n^3$ versus~$n^{2.5}$). This, too, is quite exceptional; there are only few other examples such as the Mono-Convolution problem by Lincoln, Vassilevska W., and Polak~\cite{LincolnPV20}.

More generally, our universe reduction correctly interpolates between unweighted and $[n]$-weighted graphs, both in the directed and undirected settings. Specifically, for directed graphs with weights $\set{0, \dots, u}$ Zwick's algorithm~\cite{Zwick02} takes time $\tilde O(n^{2+1/(4-\omega)} u^{1/(4-\omega)})$. If $\omega = 2$ then this running time becomes $\tilde O(n^{2.5} \sqrt{u})$. For undirected graphs with weights $\set{0, \dots, u}$ the Shoshan--Zwick algorithm~\cite{ShoshanZ99} takes time $\tilde O(n^\omega u)$. We show both algorithms are best-possible if $\omega = 2$, for all values of $u$:

\begin{restatable}[Zwick's Algorithm is Optimal]{theorem}{thmzwickoptimal} \label{thm:zwick-opt}
Let $0 \leq \delta \leq 1$ be a constant. APSP in directed graphs with weights $\set{0, \dots, n^\delta}$ cannot be solved in time $O(n^{2.5+\delta/2-\epsilon})$ (for any constant $\epsilon > 0$), unless the Strong APSP Hypothesis fails.
\end{restatable}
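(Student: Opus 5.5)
We reduce from the Min-Plus characterization of the Strong APSP Hypothesis (\cref{hypo:strong-apsp-min-plus}). Suppose APSP on directed graphs with weights in $\set{0,\dots,n^\delta}$ can be solved in time $O(n^{2.5+\delta/2-\epsilon})$. By Zwick's reduction, in reverse, this problem captures rectangular min-plus products $\MinPlus(n, n^{\mu'}, n \mid u \leq n^{\delta+1-\mu'})$ with $\mu' = (1+\delta)/2$: Zwick's algorithm handles paths of hop-length at most $n^{1-\mu'}$ by repeated squaring with bounded weights $n^{\delta}\cdot n^{1-\mu'}$, and longer paths via a hitting set of size $n^{\mu'}$. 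Conversely, a min-plus product $A * B$ with inner dimension $n^{\mu'}$ and entries bounded by $n^{\delta} \cdot n^{1-\mu'}$ can be embedded into an $[n^\delta]$-weighted directed graph. Split each entry into a layered path of length about $n^{1-\mu'}$ with weights in $[n^\delta]$, so that an entry value $v$ becomes a path of $n^{1-\mu'}$ edges through the $n^{\mu'}$ middle nodes. Sharing subdivision nodes among the $n^{\mu'}$ middle vertices keeps the total vertex count $O(n)$. This parallels the $\delta = 0$ case of \cref{thm:strong-apsp-implies-dir-unw-apsp}. The target is therefore the bound
\[ \MinPlus(n, n^{\mu'}, n \mid u \leq n^{1+\delta-\mu'}) = n^{2+\mu'-o(1)} = n^{2.5+\delta/2-o(1)} \]
under the Strong APSP Hypothesis.

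The key step is to show that a fast algorithm for this rectangular, moderately-bounded product yields a truly subcubic algorithm for $\MinPlus(m,m,m \mid u \leq m)$, the Strong APSP setting when $\omega=2$. I would reuse the low-rank APSP machinery developed for \cref{thm:strong-apsp-implies-dir-unw-apsp}. Given an $m\times m$ instance with entries bounded by $m$, tile the $m^3$ triples $(i,k,j)$ into blocks, pack many independent small square products into one rectangular product of shape $n\times n^{\mu'}\times n$, and use the fact that weights are bounded to trade inner dimension against universe. Concretely, I would encode the high-order digits of entries into the path structure (the inner dimension) and the low-order digits into the weights. Then I would choose parameters so that $n^{2+\mu'}$ total work equals $m^3$ total triples, with $n^2$ output cells matching the packed outputs and the universe $n^{1+\delta-\mu'}$ matching the digit split. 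The balance condition $\mu' = (1+\delta)/2$ should fall out exactly as it does for $\delta=0$ with $\mu=1/2$.

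The main obstacle is making this packing lossless. Outputs are minima, so the low-order parts from different blocks or digits cannot simply be added. I would first try the approach that presumably underlies the unweighted case: first approximately compute the result with few bits via the given algorithm, then fix the residual errors using a bounded-difference/low-rank structure, where each residual becomes a rank-reduced small-universe instance. Interpolating in $\delta$ should only change the number of digits handled by weights versus hops. Finally, I would verify that every auxiliary step runs in $\tilde O(n^2)$-type time, which is dominated when $\omega=2$, or otherwise fits within $m^{3-\epsilon'}$. This would give the contradiction for each constant $\delta\in[0,1]$, with $\delta=1$ recovering the Strong APSP bound itself.
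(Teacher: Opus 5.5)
Your first half is right and is essentially \cref{lem:min-plus-sm-univ-to-apsp}. Take $\alpha=(1+\delta)/2$, $q=n^\delta$, $p=O(n^{1-\alpha})$, and write $A=qA_1+A_0$, $B=qB_1+B_0$. Give each middle index $k$ one directed path of weight-$q$ edges, entered at position $-A_1[i,k]$ and left at position $B_1[k,j]$, with the low digits on the first and last edge. Then an $O(n^{2.5+\delta/2-\epsilon})$ APSP algorithm gives $\MinPlus(n,n^\alpha,n\mid u\le n^\alpha)=O(n^{2+\alpha-\epsilon})$.

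The genuine gap is the second half, which carries all the difficulty: you never show that this rectangular bound refutes Strong APSP. Packing does not do it. With the outer dimensions kept at $m$, each slice of $m^\alpha$ inner indices still has entries up to $u$, not $m^\alpha$. And, as you concede, minima do not decompose over digits, so sending high digits to the inner dimension and low digits to the weights is not lossless. Your fallback (``approximate, then fix the residuals via low-rank structure'') does not say which pairs are handled how, where a rank decomposition comes from, or what solves the low-rank instance.

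The missing argument is \cref{lem:sm-univ-reduct}, which gives \cref{cor:min-plus-rect-sm-univ}. Its key idea is a dichotomy on the number of pseudo-witnesses.
\begin{itemize}
\item \emph{Setup.} Take $q\approx u/n^\alpha$ so that $A_1,B_1<n^\alpha$. After splitting each matrix into two shifted parts so that all low parts are below $q/2$, every witness of $A*B$ is a witness of $A_1*B_1$, i.e., a pseudo-witness.
\item \emph{Few pseudo-witnesses.} For pairs with fewer than $tn^{1-\alpha}$ pseudo-witnesses, make $\tilde O(tn^{1-\alpha})$ calls to the hypothesized algorithm on random inner subsets of size $O(n^\alpha/t)$, with witness reporting (\cref{lem:min-plus-wit}). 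This isolates a true witness in total time $n^{2+\alpha-\epsilon}\cdot tn^{1-\alpha}$.
\item \emph{Many pseudo-witnesses.} For the remaining pairs, one random $\mathcal K$ of size $\tilde O(n^\alpha/t)$ contains a pseudo-witness. So $R_1=A_1|_{\mathcal K}*B_1|_{\mathcal K}$ (one more call) comes with a select-plus rank-$|\mathcal K|$ decomposition. Combine it with a recursively computed $\tilde C$ satisfying $\tilde C\le A*B\le\tilde C+2$ and with the $q$ possible residues. This gives candidate matrices $R$ with explicit rank-$|\mathcal K|q=\tilde O(u/t)$ decompositions.
\item \emph{Solving the low-rank part.} The Exact Triangle instances $(A,B,R)$ are solved by the unconditional low-rank algorithm of \cref{thm:exact-tri-low-rank}, not by the hypothesized one.
\end{itemize}
That last step is subcubic only for rank at most $n^{3-\omega-\Omega(\epsilon)}$. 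So start from $u=n^{3-\omega}$ with $t=n^{\epsilon/2}$, rather than from $u=m$ under $\omega=2$; the theorem needs no assumption on $\omega$.
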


\begin{restatable}[Shoshan--Zwick Algorithm is Optimal]{theorem}{thmshoshanzwickoptimal} \label{thm:shoshan-zwick-opt}
Let $0 \leq \delta \leq 1$ be a constant. APSP in undirected graphs with weights $\set{0, \dots, n^\delta}$ cannot be solved in time $O(n^{2+\delta-\epsilon})$ (for any constant $\epsilon > 0$), unless the Strong APSP Hypothesis fails.
\end{restatable}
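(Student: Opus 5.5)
The plan is to reduce from the Strong APSP Hypothesis in its Min-Plus form (\cref{hypo:strong-apsp-min-plus}) to undirected APSP with weights $\set{0,\dots,n^\delta}$. Suppose undirected APSP with weights up to $n^\delta$ can be solved in time $O(n^{2+\delta-\epsilon})$. Our goal is to compute $A * B$ for $N \times N$ matrices with entries in $\set{0,\dots,N^{3-\omega}}$ in time $O(N^{3-\epsilon'})$.

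The first step uses the standard tripartite gadget, which computes a min-plus product of matrices with entries in $[0,u]$ via undirected APSP. Shift every entry by a large additive offset $M \approx 2u$: each edge then has weight in $[M, M+u]$. Consequently any path with more than two edges between the outer layers costs at least $3M > 2M + 2u$, so undirectedness cannot create shortcuts. This already handles the case $\delta = 1$, taking $n = N$ and $u \le N$ (using $\omega = 2$ as the regime of interest; in general we just need $u \le n^\delta$).

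For $\delta < 1$ we need to trade matrix dimension for universe size, and this is the crux. We split the $N \times N$ instance with universe $N$ into smaller products. I would aim to show that $\MinPlus(N,N,N \mid N)$ reduces to about $N^{3(1-\delta)/(2+\delta)\cdot(\ldots)}$ instances of $\MinPlus(n,n,n \mid n^\delta)$, using a low-rank or bit-splitting decomposition. Concretely, I would write each entry as high bits times $n^\delta$ plus low bits. The idea is to compute the min-plus product of the high parts first, which has a smaller universe and can be handled recursively or by the same oracle. We would then restrict attention to witnesses $k$ whose high parts attain the minimum or come within $1$ of it, and solve these candidate sets with low-bit weights in $[0,n^\delta]$.

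The main obstacle is that the candidate witness sets can be large and unstructured. Controlling them is exactly where the paper's low-rank APSP technique, or a Fredman-trick-meets-dominance argument, must enter. A naive split loses the tight exponent. What I would try first is to reduce via the directed result \cref{thm:zwick-opt}'s underlying machinery: show that the hard instances produced there have a bounded-rank structure. An undirected instance with $n$ vertices and weights $n^\delta$ could then simulate such an instance, with the $n^{2+\delta}$ versus $n^{2.5+\delta/2}$ accounting matching after rebalancing the dimensions. We would balance parameters so that the total time $\sum (\text{instances})\cdot n^{2+\delta-\epsilon}$ is $N^{3-\epsilon'}$, and check that the balancing is consistent for all $\delta\in[0,1]$.
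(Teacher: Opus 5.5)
For $\delta<1$ your proposal is a plan rather than a proof, so there is a genuine gap. The tripartite gadget with an additive offset is the reduction the paper starts with, and it does settle $\delta=1$ for every $\omega$, since $N^{3-\omega}\le N$.

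In your bit-splitting plan, ``solving the candidate sets with low-bit weights'' is not a min-plus product, because the set of near-minimal $k$ differs from pair to pair. You also explicitly leave open how to handle pairs with many candidates.

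The fallback through \cref{thm:zwick-opt} is not substantiated either. There is no generic size- and weight-preserving simulation of directed instances by undirected ones: combined with the Shoshan--Zwick algorithm it would, for $\omega=2$, beat the bound of \cref{thm:zwick-opt} and refute the Strong APSP Hypothesis outright. Nor is a ``bounded-rank structure'' of the hard instances what is needed. What the undirected gadget can simulate directly are the rectangular products $\MinPlus(n, n^\alpha, n \mid u \le n^\alpha)$ underlying \cref{thm:zwick-opt}. Going through that theorem as a black box, however, only gives $\alpha=(1+\delta')/2\ge 1/2$, so it could never reach $\delta<1/2$.

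The missing step is short, because the needed machinery is already available.
\begin{itemize}
\item The gadget gives $\MinPlus(n,n,n \mid u < n^\delta/2) = O(n^{2+\delta-\epsilon})$.
\item Padding the inner dimension with $\bot$-entries, i.e.\ using only $n^\delta$ middle vertices, yields $\MinPlus(n, n^\delta, n \mid u \le n^\delta) = O(n^{2+\delta-\epsilon})$, up to a harmless constant in the universe.
\item \cref{cor:min-plus-rect-sm-univ} with $\alpha=\delta$ says precisely that this refutes the Strong APSP Hypothesis. The case $\delta=0$ is trivial by output size.
\end{itemize}

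The obstacle you identified is exactly what \cref{lem:sm-univ-reduct} resolves inside that corollary, but with the bits arranged differently from your sketch. One writes $A = A_1 q + A_0$ with $q=\lceil u/n^\delta\rceil$, so it is the \emph{high} parts that have universe $n^\delta$.
\begin{itemize}
\item Pairs with fewer than $t n^{1-\delta}$ pseudo-witnesses get all of them listed by $\tilde O(t n^{1-\delta})$ oracle calls. Each call is an $n\times n^\delta\times n$ product of $A_1,B_1$ over a random inner subset. For $t=n^{\epsilon/2}$ this costs $\tilde O(n^{3-\epsilon/2})$ in total.
\item Pairs with many pseudo-witnesses are handled in one shot by $O(1)$ Exact Triangle instances $(A,B,R)$. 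Here $R$ is built from a recursively computed approximation $\tilde C\le A*B\le \tilde C+2$ and a sampled product $R_1$. By \cref{fac:rank-triv-univ,fac:rank-submult}, $R$ has select-plus rank $\tilde O(u/t)$. For $u=n^{3-\omega}$, \cref{thm:exact-tri-low-rank} then solves these instances in subcubic time.
\end{itemize}
The low bits are never passed to the oracle; they enter only through this rank bound.
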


\subsubsection{Large-Universe Reduction: From Large to Small Weights}
Next, focus on the universe reduction from polynomially large weights to weights bounded by $n$. This reduction appears even more difficult. In fact, we are neither aware of any prior progress on this question, nor are we aware of any other (exact) graph distance problem where a similar weight reduction is known. In the related, more restricted setting where one graph has to be mapped one-to-one to a graph with smaller weights such a reduction turns out to be impossible~\cite{BernsteinBW24}.

Our approach involves two steps. The first builds on the new \emph{low-rank APSP} technique from before. The second builds on \emph{additive combinatorics}. In fact, the \emph{low-rank APSP} technique also relies on some additive combinatorics, so this work should be seen as part of a recent trend to exploit additive combinatorics in fine-grained complexity and algorithm design~\cite{ChanL15,BringmannN20,DudekGS20,BringmannW21,BringmannN21,AbboudBF23,ChanVX23,JinX23,ChenLMZ24a,Bringmann24,Jin24,ChenLMZ24b,ChenLMZ24c,ChenLMZ24d,BringmannFN25,Fischer25,FischerJX25,AbboudFJVX25}.

\paragraph{Step 1: Doubling Reduction}
We start with the first and technically more involved step. Specifically, we establish that APSP on general graphs with polynomially bounded weights can be reduced to graphs with at most $n$ distinct weights (i.e., much fewer than the trivial upper bound of $n^2$). In addition, we can enforce that there are only few distinct \emph{sums} of two weights. To formalize this requirement we use the following notation: For an integer set $X$ we write $X + X = \set{x + y : x, y \in X}$ to denote its so-called \emph{sumset}. In the additive combinatorics literature the ratio $|X + X| / |X|$ is typically called the \emph{doubling} of $X$ (hence the term ``doubling reduction''). Formally we prove:

\begin{theorem}[Uniform Low-Doubling APSP] \label{thm:apsp-unif-low-doubling}
Let $\kappa > 0$ be a constant. APSP in graphs with integer weights $X \subseteq \Int$ where $|X| \leq n$ and $|X + X| \leq n^\kappa |X|$ cannot be solved in time $O(n^{3-\epsilon})$ (for any constant~\makebox{$\epsilon > 0$}), unless the APSP Hypothesis fails.
\end{theorem}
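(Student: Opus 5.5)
We prove the theorem in the Min-Plus Product formulation. Assume APSP on graphs whose weight set $X$ satisfies $|X| \le n$ and $|X+X| \le n^\kappa |X|$ can be solved in time $O(n^{3-\epsilon})$. Our goal is an $O(n^{3-\epsilon'})$-time algorithm for $\MinPlus(n,n,n)$ with entries in $\set{0,\dots,n^c}$. The natural plan is a recursive \emph{bit-scaling} scheme combined with \emph{low-rank} structure.

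\textbf{Step 1: scaling.} Write $A = 2^s A' + A''$ and $B = 2^s B' + B''$, where $A', B'$ contain the high-order bits. Recursively compute $C' = A' * B'$. By the standard scaling argument, for each pair $(i,j)$ the true witnesses $k$ of $(A*B)[i,j]$ satisfy $A'[i,k] + B'[k,j] \in C'[i,j] + \set{0,1,2}$. The residual task is therefore a min-plus product restricted to \emph{near-witnesses}.

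\textbf{Step 2: low-rank form of the residual.} The key structural step is to reorganize the residual problem into instances whose weights are sums $a_i + b_k$ of row and column potentials, or at least lie in a structured set. Concretely, I would try to show via Fredman's trick plus dominance-style bucketing that the instance decomposes into subproblems whose matrix entries lie in a set of the form $X = P + Q$ with $|P|, |Q| \le n^{o(1)}\cdot n^{1/2}$. Such a generalized arithmetic progression-like set has $|X| \le n$ and small doubling. The paper's forthcoming \emph{low-rank APSP} technique is presumably exactly this: entries of the form $u_i + v_k$ restricted to few distinct values.

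\textbf{Step 3: enforcing bounded doubling.} To get $|X + X| \le n^\kappa |X|$, I would hash the weights modulo a random prime, or use Behrend-type or Balog--Szemerédi--Gowers arguments. The aim is to cover the set of weights by $n^{o(1)}$ structured pieces, each a subset of a low-dimensional generalized arithmetic progression of size about $n$. Such pieces have doubling at most $2^{d} \le n^{\kappa}$ for dimension $d = O(\kappa \log n / \log 2)$, or even constant $d$. Each piece is solved by the assumed algorithm, and the results are combined by taking minima.

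\textbf{Main obstacle.} The hardest step is Step 2: after scaling, the near-witness residual weights are still arbitrary, and only about $n$ distinct values can be allowed. I would first try to exploit that the residual only needs to be resolved up to additive $O(1)$ on pairs where $C'$ has few witnesses. Pairs with many witnesses would be handled by random sampling of $k$, while pairs with few witnesses admit a rank/threshold decomposition yielding $n^{1-\delta}$ values per block. This balancing is where I expect the polynomial-savings bookkeeping to be delicate.
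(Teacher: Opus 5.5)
There is a genuine gap. Nothing in your plan reduces an arbitrary instance, which may have $n^2$ distinct weights, to instances with at most $n$ distinct low-doubling weights, and the concrete mechanisms you name would fail.

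\begin{itemize}
\item The recursive bit-scaling of Step~1 only yields an additive $O(1)$-approximation $C'$ of $A*B$. That carries no exploitable structure by itself; it may simply be $A*B$.
\item The target of Step~2 does not help. For generic $P,Q$ with $|P|,|Q|\approx n^{1/2}$, the set $|(P+Q)+(P+Q)| = |2P+2Q|$ is of order $|P+Q|^2$. So $P+Q$ has doubling about $n$, not $n^\kappa$.
\item Step~3 cannot work at the level of the weight set. A cover of $X$ by $n^{o(1)}$ translates of a GAP of size $O(n)$ and dimension $d$ forces $|X+X|\le n^{o(1)}2^d\cdot O(n)$. Generic weights violate this: already $n$ Sidon-type weights do, let alone $n^2$ arbitrary ones.
\item Reducing modulo a random prime destroys the order that Min-Plus Product needs. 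It produces residues rather than an APSP instance.
\item You never use potential adjustments (row/column shifts). These are how the paper collapses arbitrary weights to few distinct values: Fredman's trick $A[i,k]-U[i,\ell]$ per rank component, and the shifts $s_{i,g},t_{j,h}$ in the popular-sum decomposition.
\end{itemize}
The structure has to be extracted from the problem, namely which pairs $(x,y)$ with $x+y$ a relevant target actually occur, not from the raw weight set.

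The missing ideas are those of Lemma~\ref{lem:doubling-reduct}.

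\emph{Multi-scale pseudo-witness counting.} For each pair $(i,j)$ not handled by the easy exceptional cases, there is a level $\ell$ with at most $t$ $2^\ell$-pseudo-witnesses but at least $t$ $2^{\ell+1}$-pseudo-witnesses. Sampling $\tilde O(n/t)$ inner indices then gives a matrix $R$ of select-plus rank $\tilde O(n/t)$ with $(A*B)[i,j]\le R[i,j]<(A*B)[i,j]+2^{\ell+1}$. So the rounded instance $(A_{\ell,x},B_{\ell,y},R_{\ell,z})$ is a low-rank Exact Triangle instance containing a true witness. The scarcity of $2^\ell$-pseudo-witnesses guarantees that listing $t$ witnesses per entry (Lemma~\ref{lem:min-plus-wit}) in each derived min-plus instance recovers that witness. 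Your ``main obstacle'' paragraph has these roles reversed: low rank comes from \emph{many} pseudo-witnesses at the coarser scale, and \emph{few} pseudo-witnesses at the finer scale are what make listing affordable.

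\emph{Low-rank to low-doubling.} This is the potential-adjusting reduction from rank-$r$ to $D$-uniform, $1/D$-regular, $K$-doubling Exact Triangle with $D\le r$ (Lemma~\ref{lem:exact-tri-low-rank-to-low-doubling}). It consists of:
\begin{itemize}
\item Fredman's trick with a heavy/light split, made affordable by row-regular rank decompositions (Lemma~\ref{lem:decomp-reg});
\item the popular-sum decomposition, to go from slice-uniform to uniform;
\item a regularization step that recurses on parts of strictly smaller rank;
\item BSG covering on the pairs $(x,y)$ with $x+y$ in the weight set, with the uncovered remainder enumerated by brute force thanks to regularity.
\end{itemize}

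\emph{Tying back to Min-Plus Product.} Potential adjustments leave $A[i,k]+B[k,j]-C[i,j]$ unchanged on surviving triangles. Hence every witness of a derived min-plus product is a witness of $(A_{\ell,x}*B_{\ell,y})[i,j]$, which connects the Exact Triangle machinery back to Min-Plus Product.
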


\cref{thm:apsp-unif-low-doubling} is already a strong and arguably surprising result. Even ignoring the extra constraint on the doubling of $X$, it demonstrates that the worst-case APSP instances have only $n$ distinct weights (rather than the naive upper bound of $n^2$)! An appropriate technical generalization of \cref{thm:apsp-unif-low-doubling} is the key tool for various interesting fine-grained lower bounds (that we will survey soon). We are optimistic that it will be similarly helpful in the design of future APSP-based lower bounds.

\paragraph{Step 2: Sum-Order-Preserving Hashing}
But how is \cref{thm:apsp-unif-low-doubling} useful for the desired universe reduction? To understand this, let us take a step back and forget about \cref{thm:apsp-unif-low-doubling} for a minute, to review a naive \emph{hashing-based} approach for the universe reduction. At first thought hashing appears to be a reasonable idea---after all, for many problems (such as 3SUM) we can analogously obtain weight reductions using additive hash functions. But as we will see, naively this approach is bound to fail.

In the language of \cref{hypo:apsp-min-plus,hypo:strong-apsp-min-plus} our goal is to reduce the computation of a min-plus product of two arbitrary $n \times n$ matrices $A, B$ to the computation of a min-plus product with entries bounded by $n$. Let $X \subseteq \Int$ be the set of entries that appear in $A$ and $B$. Suppose there was a hash function~\makebox{$h : X \to \set{0, \dots, n}$} satisfying the property that for all $x_1, x_2, y_1, y_2 \in X$:
\begin{equation*}
    x_1 + x_2 < y_1 + y_2 \quad\text{implies}\quad h(x_1) + h(x_2) < h(y_1) + h(y_2).
\end{equation*}
We call such a function $h$ \emph{sum-order-preserving.} We could then define matrices $A' = h(A)$ and $B' = h(B)$ (where we apply $h$ entry-wise) which, as desired, have entries bounded by $n$, and we could read off $A * B$ from $A' * B'$ with the aid of some standard tricks.\footnote{Specifically, call $k$ a \emph{witness} of $(i, j)$ in $A * B$ if $(A * B)[i, j] = A[i, k] + B[k, j]$. Then the sum-order-preserving property implies that any witness $k$ of $(i, j)$ in $A' * B'$ is also a witness of $(i, j)$ in $A * B$. Moreover, by a standard trick we can turn any algorithm to compute the min-plus product of two matrices into an algorithm that also reports a witness for each entry $(i, j)$; see \cref{lem:min-plus-wit}.}

Unfortunately, clearly such a hash function cannot exist in general! There are up to $|X|^2$ distinct sums of the form $x + y$, which in the worst case is up to $n^4$, but only $O(n)$ possible values for $h(x) + h(y)$. By the pigeonhole principle there is necessarily a collision $h(x_1) + h(x_2) = h(y_1) + h(y_2)$ for some $x_1 + x_2 < y_1 + y_2$. So it seems that the hashing-based idea is fundamentally flawed.

This is where \cref{thm:apsp-unif-low-doubling} comes into play. It allows us to assume that the sumset~$X + X$ does not have size up to $n^4$ as the trivial bound would suggest, but rather only roughly \emph{linear} in $n$. For this reason it would be sufficient to have a sum-order-preserving hash function $h : X \to \set{0, \dots, |X + X|}$. The previous pigeonhole argument does \emph{not} rule out such a function. So could it possibly be the case that such a hash function always exists? At least in the relevant special case when $X$ has small doubling, i.e., when $|X + X|$ is small?

The surprising answer is: yes, almost! By combining two results from additive combinatorics---the first by Amirkhanyan, Bush, and Croot~\cite{AmirkhanyanBC18} on so-called order-preserving Freiman isomorphisms, and second, the quasi-polynomial bounds for the Freiman-Ruzsa theorem due to Sanders~\cite{Sanders12}---one obtains the following theorem stating that for any small-doubling set $X$ there exists a reasonably large subset $Y \subseteq X$ that can be hashed by a sum-order-preserving hash function. (See \cref{sec:hashing:sec:quasi-poly} for how to derive \cref{thm:sum-order-preserving-quasi-poly} from~\cite{AmirkhanyanBC18,Sanders12}.)

\begin{restatable}[Sum-Order-Preserving Hashing with Quasi-Polynomial Bounds~\cite{AmirkhanyanBC18,Sanders12}]{theorem}{thmsumorderpreservingquasipoly} \label{thm:sum-order-preserving-quasi-poly}
For every integer set~$X$ with doubling $|X + X| \leq K |X|$ there is a subset $Y \subseteq X$ of size $|Y| \geq |X| / \exp((\log K)^{O(1)})$ and a sum-order-preserving function $h : Y \to \set{0, \dots, |X|}$.
\end{restatable}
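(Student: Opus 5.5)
Fix $X$ with $|X+X| \le K|X|$; the plan is to compose the two cited results. \textbf{Step 1 (Freiman--Ruzsa, Sanders).} By Sanders' quasi-polynomial bounds~\cite{Sanders12}, $X$ is covered by few translates of a low-rank generalized arithmetic progression. Concretely, there is a proper GAP
\[
P=\{a_0+\ell_1 a_1+\dots+\ell_d a_d : 0\le \ell_i< L_i\}
\]
of dimension $d\le (\log K)^{O(1)}$ and size $|P|\le \exp((\log K)^{O(1)})|X|$ with $|X\cap (t+P)|\ge |X|/\exp((\log K)^{O(1)})$ for some shift $t$. If needed, I first pass to a subset and use Ruzsa covering to get this density statement. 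Let $Z = X\cap (t+P)$ and translate so that $t=0$. Translation is harmless because sum-order preservation is invariant under shifting all points by the same constant.

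\textbf{Step 2 (order-preserving Freiman isomorphism, Amirkhanyan--Bush--Croot).} The map $P\to \Int^d$, $x\mapsto (\ell_1,\dots,\ell_d)$, is a Freiman isomorphism of order $2$ when $P$ is $2$-proper. However, the order on $\Int$ corresponds to a linear functional $\ell\mapsto \sum_i \ell_i a_i$ on the box, and this functional must be compressed into a short interval while preserving the order of all pairwise sums. The result of~\cite{AmirkhanyanBC18} does exactly this. It states that a set with small doubling (or a subset of a low-dimensional GAP) has a large subset $Y$ admitting an order-preserving Freiman $2$-isomorphism into an interval of length $O(|X|)$, at a loss of a factor $\exp(O(d\log d))$ or similar in the density. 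I will use the version that preserves the strict order of $2$-fold sums, i.e., $y_1+y_2<y_3+y_4 \iff h(y_1)+h(y_2)<h(y_3)+h(y_4)$. Their argument, as I recall it, perturbs the direction vector $(a_1,\dots,a_d)$ to a nearby rational one with small denominators. This keeps the sign of $\sum \ell_i a_i$ on all differences of sums of box points, restricted to a sub-box where these signs are stable. That sub-box is obtained by pigeonholing over cones/cells, which costs a factor $2^{O(d)}$ or $d^{O(d)}$ in size per coordinate refinement.

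\textbf{Step 3 (bookkeeping).} I intersect $Z$ with the good sub-box to get $Y$, so that
\[
|Y|\ge |X|/\bigl(\exp((\log K)^{O(1)})\cdot d^{O(d)}\bigr)=|X|/\exp((\log K)^{O(1)})
\]
since $d=(\log K)^{O(1)}$. Next I check the range. The image lies in an interval of length comparable to the size of the sub-box, which is at most $|P|\le \exp((\log K)^{O(1)})|X|$. To land in $\{0,\dots,|X|\}$, I further restrict $Y$ to a sub-box scaled down by that factor, i.e., I shrink each side length $L_i$ appropriately. This loses only another $\exp((\log K)^{O(1)})$ in density and keeps the required bound on $|Y|$. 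Finally I shift the image so that it starts at $0$.

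The main obstacle is Step 2. Fitting the exact hypotheses of~\cite{AmirkhanyanBC18} (what they assume about $X$, the properness of $P$, and the dependence of their loss on $d$ versus $K$) to the output of Sanders' theorem requires care. In particular, I must make sure their density loss is $\exp(\mathrm{poly}(d))$ rather than something depending on $|X|$, and that the rescaling in Step 3 respects strict sum-order preservation.
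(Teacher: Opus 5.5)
Your overall route matches the paper's: Sanders' theorem, then the Amirkhanyan--Bush--Croot result, with all $O(1)^d$ and $d^{O(d)}$ losses absorbed because $d=(\log K)^{O(1)}$. The gap is Step~2. It contains the entire content of the derivation, and you leave it as an unverified black box. You assume a ``version'' of~\cite{AmirkhanyanBC18} that directly yields a sum-order-preserving map on a dense subset, and your perturbation/cone sketch is a guess at a proof rather than an argument.

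What the paper actually uses is the Condensing Lemma. Suppose $P=\{\sum_i \ell_i a_i : |\ell_i|\le L_i\}$ is such that its $4$-fold dilate $\{\sum_i \ell_i a_i : |\ell_i|\le 4L_i\}$ is proper. Then for suitable integers $a_i'$, the linear map $h'(\sum_i \ell_i a_i)=\sum_i \ell_i a_i'$ is order-preserving on $P$, with $|h'|\le d^{O(d)}|P|$. This lemma has three features your plan does not account for:
\begin{itemize}
\item it gives only \emph{order} preservation, not sum-order preservation;
\item it needs $4$-fold properness, which is more than Sanders provides;
\item it has no density loss at all; its only cost is the $d^{O(d)}$ range blow-up.
\end{itemize}

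The missing bridge is a double shrinking. Start from Sanders' proper $Q$ with bounds $L_i$. Apply the lemma to $P$ with bounds $\lfloor L_i/4\rfloor$; its $4$-dilate sits inside $Q$, hence is proper. Then restrict to $R$ with bounds $\lfloor L_i/8\rfloor$, so that $R+R\subseteq P$. Now take $x_1,x_2,y_1,y_2\in R$ with $x_1+x_2<y_1+y_2$. Linearity on the proper $P$ plus order preservation give $h'(x_1)+h'(x_2)=h'(x_1+x_2)<h'(y_1+y_2)=h'(y_1)+h'(y_2)$. Each shrinking costs only $O(1)^d$ translates.

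There are also two smaller issues in Step~3.
\begin{itemize}
\item A single fixed sub-box need not meet $X$ densely. You must pigeonhole over the $\exp((\log K)^{O(1)})$ translates of $R$ covering $X$, and set $h(x)=h'(x-t)$ on $Y=(R+t)\cap X$. This is where your translation-invariance remark is used.
\item For the range, shrinking every side by a factor $F$ does work, but only because $h'$ is linear in the coefficients. It costs $F^{d}$, which is still $\exp((\log K)^{O(1)})$ since $F=\exp((\log K)^{O(1)})$ and $d=(\log K)^{O(1)}$. The paper's simpler route is to split $\{-K'|X|,\dots,K'|X|\}$ into $O(K')$ windows of length $|X|$, keep the preimage of the most popular window, and shift to $0$. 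This works for any sum-order-preserving map, since restricting the domain and shifting all values by a constant both preserve the property.
\end{itemize}
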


Unfortunately, \cref{thm:sum-order-preserving-quasi-poly} is slightly too weak. In principle it is acceptable that only a subset $Y \subseteq X$ admits the sum-order-preserving hash function (as with some further overhead we can cover $X$ by few translates of $Y$); however, the quantitative bounds on $|Y|$ are not strong enough for our purposes. \cref{thm:sum-order-preserving-quasi-poly} entails that $Y$ covers at least a fraction of~$X$ that depends \emph{quasi-polynomially} on $K$, but we would require a dependence that is at most \emph{polynomial.}

This is a common challenge in additive combinatorics. In fact, it is a major conjecture in additive combinatorics, termed the \emph{Polynomial Freiman-Ruzsa (PFR) Conjecture} (see e.g.~\cite{Zhao23} and also~\cite{LovettR18,Manners17}), that Sanders'~\cite{Sanders12} aforementioned quasi-polynomial bounds for the Freiman-Ruzsa theorem can be improved to polynomial.

Notably, in a recent breakthrough, Gowers, Green, Manners, and Tao~\cite{GowersGMT25} have positively resolved the PFR conjecture in characteristic $2$. Their result does not have immediate implications for the integer case---in fact, they leave the integer case as a ``challenging open problem''. In an even more recent preprint, Raghavan~\cite{Raghavan25} further improved the integer PFR bound to $\exp((\log K)^{1+o(1)})$. Given these promising developments there is reason to be optimistic that the PFR Conjecture will eventually be positively resolved and that ideally along the way the bounds in \cref{thm:sum-order-preserving-quasi-poly} can be improved to polynomial. We formulate this as the following plausible hypothesis:

\begin{restatable}[Sum-Order-Preserving Hashing with Polynomial Bounds]{hypothesis}{hypohashing} \label{hypo:hashing}
There is a constant $c$ such that for every integer set $X$ with doubling $|X + X| \leq K |X|$ there is a subset $Y \subseteq X$ of size $|Y| \geq \Omega(|X| / K^c)$ and a sum-order-preserving function $h : Y \to \set{0, \dots, |X|}$. Moreover, given $X$ one can compute~$Y$ and $h$ in time $|X|^{1+o(1)} K^{O(1)}$.
\end{restatable}

We defer further discussion of \cref{hypo:hashing} to \cref{sec:hashing:sec:poly}, and instead finally return to the APSP universe reduction. Combined with \cref{thm:apsp-unif-low-doubling} and a couple of other tricks, \cref{hypo:hashing} implies the desired universe reduction for APSP:

\begin{restatable}[APSP Conditionally Implies Strong APSP]{theorem}{thmapspimpliesstrongapsp} \label{thm:apsp-implies-strong-apsp}
Conditioned on \cref{hypo:hashing}, there is a subcubic reduction from APSP to APSP over the universe $\set{0, \dots, n}$. In particular, conditioned on \cref{hypo:hashing} and on the assumption that $\omega = 2$, the APSP and Strong APSP Hypotheses are equivalent.
\end{restatable}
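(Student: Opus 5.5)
Suppose, for contradiction, that APSP over the universe $\set{0,\dots,n}$ (equivalently, $\MinPlus(n,n,n \mid u \leq n)$) can be solved in time $O(n^{3-\epsilon})$. We show how to compute an arbitrary min-plus product $A * B$ with polynomially bounded entries in time $O(n^{3-\epsilon'})$. The first step invokes the doubling reduction. I would use the technical generalization of \cref{thm:apsp-unif-low-doubling} as a reduction rather than merely as a hardness statement. Concretely, we reduce $A * B$ in subcubic overhead to a collection of min-plus products whose entries are all drawn from a set $X$ with $|X| \leq n$ and $|X+X| \leq n^\kappa |X|$. Here $\kappa>0$ is chosen as a sufficiently small constant depending on $\epsilon$. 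The reduction is subcubic exactly when solving the produced instances in time $O(n^{3-\epsilon})$ yields a truly subcubic algorithm for general APSP.

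Next, fix one such instance with entry set $X$ and doubling $K = n^\kappa$. Apply \cref{hypo:hashing} to obtain $Y_1 \subseteq X$ of size $\Omega(|X|/K^c)$ together with a sum-order-preserving map $h_1 : Y_1 \to \set{0,\dots,|X|}$. We then cover $X$ by few pieces: either iterate on $X \setminus Y_1$, or use a Ruzsa-covering argument to cover $X$ by $K^{O(1)}$ translates $t + Y$. Iteration is tricky because the doubling of the remainder is not controlled, whereas translates of one sum-order-preserving set are handled by the same $h$ plus an additive offset per translate. So I would use the translate approach: since $|X+Y| \leq |X+X| \leq K|X| \leq K^{c+1}|Y|$, Ruzsa's covering lemma gives $X \subseteq T + Y - Y$ with $|T| \leq K^{c+1}$. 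The $Y-Y$ complicates matters, so I would instead apply \cref{hypo:hashing} to a slightly larger set such as $X-X$ or $X+X$, which still has doubling $K^{O(1)}$ by Plünnecke--Ruzsa. The goal is $X \subseteq T+Y$ with $|T| = K^{O(1)} = n^{O(\kappa)}$ and $h$ sum-order-preserving on $Y$.

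Given this covering, split each entry $A[i,k]=t+y$ and $B[k,j]=t'+y'$ according to translates. For each pair $(t,t')\in T^2$, form matrices $A_t$ and $B_{t'}$ that keep only the entries in the respective translate, mapped by $h$ and with all other entries set to $\infty$. Within a fixed pair $(t,t')$, comparisons of sums $t+t'+y+y'$ reduce to comparisons of $y+y'$, which $h$ preserves. Hence a witness of $h(A_t)*h(B_{t'})$ is a true witness among the entries in that class (by \cref{lem:min-plus-wit}). We then recover the true value $A[i,k]+B[k,j]$ from the witness and take the minimum over all $|T|^2$ pairs. Each product has entries in $\set{0,\dots,O(n)}$, so the total cost is $n^{O(\kappa)} \cdot n^{3-\epsilon+o(1)}$. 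This is subcubic after choosing $\kappa \ll \epsilon$. The "in particular" part then follows by combining this with the trivial implication and $\omega=2$, under which $n^{3-\omega}=n$.

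The main obstacle is the interplay of parameters and coverage. We need the doubling reduction to allow $\kappa$ arbitrarily small while keeping the number and size of instances within an $n^{O(\kappa)}$ overhead. We also need the covering of $X$ by translates of a single hashable set to cost only $K^{O(1)}$ pieces, handling the $Y-Y$ issue via Plünnecke--Ruzsa. Finally, the hashing must be computable in time $|X|^{1+o(1)}K^{O(1)}$, which \cref{hypo:hashing} grants.
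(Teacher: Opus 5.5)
Your overall architecture matches the paper's proof. You apply the doubling reduction (\cref{lem:doubling-reduct}) to reach $n$-uniform, $n^{\kappa}$-doubling instances, use \cref{hypo:hashing} to get a hashable $Y\subseteq X$, cover $X$ by translates of $Y$, and solve one product over the universe $\{0,\dots,n\}$ with witnesses per pair of translates. Your correctness argument for a fixed pair of translates is fine.

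The gap is the covering step. It is the one new ingredient in this proof, and you leave it as a stated goal. Ruzsa's covering lemma only gives $X\subseteq T+Y-Y$, and your remedy, applying \cref{hypo:hashing} to $X-X$ or $X+X$, does not fix this. The hypothesis returns an \emph{arbitrary} large subset $W$ of that bigger set, together with a hash on $W$. Nothing guarantees $W\supseteq Y_0-Y_0$ for a large $Y_0\subseteq X$, and covering $X$ by translates of $W$ via Ruzsa again produces $W-W$. You also cannot extend $h$ to $Y-Y$ by setting $h(y_1-y_2)=h(y_1)-h(y_2)$. Sum-order preservation on $2$-fold sums says nothing about comparisons in $2Y-2Y$, and with the strict-implies-strict definition it does not even make this extension well defined.

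The missing idea is elementary: cover $X$ greedily by translates of $Y$ itself, paying a logarithmic factor (this is \cref{lem:sumset-cover}).
\begin{itemize}
\item If $X'\subseteq X$ is the still-uncovered part, then $\sum_{s}|X'\cap(Y+s)|=|X'|\,|Y|$, where the sum ranges over the at most $|X-Y|$ shifts $s\in X'-Y$.
\item Hence some shift covers at least $|X'|\,|Y|/|X-Y|$ new elements, and iterating gives $|T|\le\frac{|X-Y|}{|Y|}\ln|X|$.
\item Plünnecke--Ruzsa gives $|X-Y|\le|X-X|\le K^2|X|$, and the hypothesis gives $|Y|\ge|X|/K^c$. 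So $|T|=O(K^{c+2}\log n)$.
\item Each greedy step can be computed with sparse convolution (\cref{lem:sparse-conv}).
\end{itemize}
With this covering in place, the rest of your argument goes through. You get $|T|^2=n^{O(\kappa)}$ products over universe $\{0,\dots,n\}$, and choosing $\kappa$ small relative to $\epsilon$ yields a truly subcubic algorithm for general Min-Plus Product.
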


\subsubsection{Critical Remarks}
Ultimately we reach our goal, proving that the APSP, Strong APSP and Directed Unweighted APSP Hypotheses are equivalent, only conditioned on two strong assumptions: $\omega = 2$ and \cref{hypo:hashing}. Let us take a moment to scrutinize these two assumptions.

On the one hand, it is to be expected that we have to assume $\omega = 2$. The reason is that the Strong APSP and Directed Unweighted APSP Hypotheses both depend on two \emph{different} matrix multiplication constants, $\omega$~and~$\mu$, and so the equivalence would (in a sense) tightly relate $\omega$ and $\mu$. However, these constants are only known to be loosely related by trivial bounds like $\mu \leq 1/(4-\omega)$. Only if $\omega = 2$ (and thus $\mu = \frac12$) would this trivial bound become tight, and so only if $\omega = 2$ would the hypothetical equivalence not have surprising implications for the matrix multiplication constants. Finally, we emphasize that all of our results remain meaningful even in a world with~\makebox{$\omega > 2$}.

On the other hand, despite our confidence that \cref{hypo:hashing} is plausible, it is a new assumption and should thus be considered with some care. We find it somewhat ironic that we show that some fine-grained hypotheses are equivalent at the cost of introducing yet another new hypothesis. However, we believe that there is significant value in this conditional equivalence result. First, \cref{hypo:hashing} is not a \emph{fine-grained complexity} assumption, but rather a purely \emph{mathematical} assumption, which we hope will be eventually resolved or refuted by the additive combinatorics community alongside their effort to resolve the polynomial Freiman-Ruzsa conjecture. In this sense, we view it as remotely analogous to the (admittedly much more established) Extended Riemann Hypothesis that some modern algorithms rely on.

Second, intuitively it seems unavoidable to rely on \cref{hypo:hashing} or a similar additive combinatorics assumption. The vague reason is that we can embed $[u]$ in many ways into low-doubling sets $X$, for instance, by stretching $[u]$ to an arithmetic progression $X = \set{a, 2a, \dots, ua}$. A universe reduction would effectively \emph{invert} this embedding. But then it would implicitly answer the structural question behind the Freiman-Ruzsa theorem, namely if all low-doubling sets can be covered by progression-like sets.

Third, and perhaps most importantly, for the APSP-based lower bounds we develop along the way it is \emph{not} necessary to assume \cref{hypo:hashing}. We will now elaborate on these lower bounds in detail.

\subsection{APSP-Based Lower Bounds for Intermediate Problems} \label{sec:intro:sec:intermediate}
An extensive line of work in the fine-grained algorithms and complexity literature is concerned with studying graph problems of ``intermediate'' complexity, i.e., with complexity strictly between $\tilde O(n^\omega)$ and $O(n^3)$. On the upper bound side this trend started in the 2000s~\cite{Zwick02,VassilevskaWY07,ShapiraYZ11,DuanP09,Vassilevska10,DuanGZ18,DuanJW19,Chan10,Yuster09,AbboudFJVX25,VassilevskaW18,BenderFPSS05,CzumajKL07,GrandoniILPU21,BringmannKW19}. Most algorithms follow the same high-level scheme: One identifies a matrix multiplication-type problem that \emph{exactly} captures the complexity of the graph problem, and then designs specialized algorithms for that matrix problem. A prominent example is \emph{Directed Unweighted APSP} which, as mentioned before, can be solved by Zwick's algorithm~\cite{Zwick02} in time $\tilde O(n^{2+\mu})$. The corresponding matrix problem is the min-plus product of two rectangular matrices with bounded entries.

Progress on the lower bound side is more recent. Several works~\cite{BarrKPR19,LincolnPV20,VassilevskaX20b,ChanVX21} established fine-grained connections \emph{between} intermediate problems (even on the finer-grained scale that distinguishes between running times like $n^{(3+\omega)/2 \pm o(1)}$ and $n^{2+\mu \pm o(1)}$, say~\cite{VassilevskaX20b}). These connections imply matching lower bounds only conditioned on secondary hypotheses such as the Directed Unweighted APSP Hypothesis (which itself is about a problem of intermediate complexity). As mentioned before, there is one result besides these mostly simple connections that stands out: In their work on \emph{Fredman's Trick meets Dominance Product}, Chan, Vassilevska W., and Xu give non-matching lower bounds for various intermediate problems conditioned on the Strong APSP Hypothesis. This is still a secondary assumption, but their ideas are inspiring and can to some degree be seen as the baseline for this paper.

In summary, despite significant effort, prior to this work no non-trivial lower bounds under the primary hypotheses are known for any intermediate problem---not even non-matching $n^{2.1-o(1)}$ lower bounds.

From our equivalence result (\cref{thm:strong-apsp-implies-dir-unw-apsp,thm:apsp-implies-strong-apsp}) it follows immediately that we can turn the Directed Unweighted APSP-based lower bounds into APSP-based ones, \emph{assuming that $\omega = 2$ and the additive combinatorics \cref{hypo:hashing}.} This result already leads to exciting new APSP-based lower bounds for many problems that are additionally conditioned on two extra assumptions (\cref{hypo:hashing} and $\omega = 2$).

However, it turns out that we can even remove these two extra assumptions! The reason is that our reductions only depend on step 1 (the unconditional doubling reduction) from before. We thus obtain clean $n^{2.5-o(1)}$-time lower bounds for many important intermediate problems based on the APSP Hypothesis. In all cases, these match the known upper bounds if $\omega = 2$. This statement is essentially the strongest conditional lower bound one could hope for,\footnote{Perhaps a first instinct would be to hope for higher lower bounds, say $n^{(3+\omega)/2 - o(1)}$. Lower bounds of this type can only be expected for hypotheses which themselves depend on $\omega$, like the Strong APSP Hypothesis. The APSP Hypothesis is agnostic of $\omega$, hence a lower bound higher than $n^{2.5}$ would imply lower bounds on $\omega$ conditioned on the APSP Hypothesis.} so our work successfully \emph{closes} the respective intermediate problems. We now discuss three particularly interesting cases in detail, and leave further lower bounds to \cref{sec:intermediate}.

\paragraph{Lower Bound 1: Node-Weighted APSP}
The first subcubic-time algorithm for APSP on directed \emph{node-weighted} graphs is due to Chan~\cite{Chan10} and runs in time $\tilde O(n^{(9 + \omega)/4}) = O(n^{2.843})$. This was later improved by Yuster~\cite{Yuster09} using rectangular matrix multiplication. Only very recently, Abboud, Fischer, Jin, Vassilevska~W., and Xi~\cite{AbboudFJVX25} obtained an algorithm in time~$\tilde O(n^{(3+\omega)/2})$ (with slight improvements using rectangular matrix multiplication). In~\cite{ChanVX21} it was observed that Node-Weighted APSP (even in undirected graphs) is at least as hard as Directed Unweighted APSP. Here we give the strengthened lower bound based on the primary APSP Hypothesis:

\begin{restatable}[Node-Weighted APSP]{theorem}{thmnodewgtapsp} \label{thm:node-wgt-apsp}
APSP in undirected node-weighted graphs cannot be solved in time $O(n^{2.5-\epsilon})$ (for any constant $\epsilon > 0$), unless the APSP Hypothesis fails.
\end{restatable}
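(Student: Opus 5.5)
We reduce from the low-doubling APSP instances of \cref{thm:apsp-unif-low-doubling}, in the min-plus formulation. More precisely, we use the min-plus product of two $n \times n$ matrices $A, B$ whose entries lie in a set $X$ with $|X| \leq n$ and $|X+X| \leq n^{\kappa}|X|$. Suppose Node-Weighted APSP on undirected graphs runs in time $O(N^{2.5-\epsilon})$. We want to compute $A * B$ in time $O(n^{3-\epsilon'})$ for $\kappa$ chosen small relative to $\epsilon$, and the key fact to exploit is that there are only $|X| \leq n$ distinct weights.

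\textbf{Step 1: split by value.} Sort $X$ and partition it into $g$ consecutive groups of $n/g$ values each. Entrywise, $(A*B)[i,j]$ is a minimum over pairs of groups $(s,t)$ of the restricted products $A_s * B_t$, where $A_s$ keeps only the entries of $A$ in group $s$ and sets the others to $\infty$. This is the wrong decomposition, though: it produces $g^2$ subproblems of full dimension. We use the low doubling instead. The sums $A[i,k]+B[k,j]$ take only $|X+X| \leq n^{1+\kappa}$ distinct values, so a Fredman-style comparison sorts all sums, and each of $A$ and $B$ can be relabeled by its rank in $X$.

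\textbf{Step 2: encode as a node-weighted graph.} The standard reduction in~\cite{ChanVX21} turns the unweighted rectangular product $\MinPlus(n,\sqrt n,n \mid u\le\sqrt n)$ into Node-Weighted APSP. It builds layers $I$, $K$, $J$ and uses gadget paths whose node weights encode the entries: an entry $v \le \sqrt n$ becomes a node of weight $v$ attached through a small path structure. To handle an arbitrary-valued $X$, we split each entry $x \in X$ by its rank $r(x) = a\sqrt n + b$ into a high part $a$ and a low part $b$. Assigning node weights equal to the actual values of $X$ is allowed, since node weights may be arbitrary. The weight is then carried on nodes indexed by $(k, a)$, which gives $n^{1.5}$ middle nodes, too many. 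We therefore block the middle dimension into $\sqrt n$ chunks and produce $\sqrt n$ instances, each with $N = O(n)$ nodes. Each instance costs $N^{2.5-\epsilon}$, for a total of $n^{3-\epsilon}$.

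\textbf{Main obstacle.} The main obstacle is making the node weights faithfully represent $A[i,k]+B[k,j]$ with only $O(n)$ nodes per instance. Node weights are additive per vertex, so a path $i \to k \to j$ with weights on $i$, $k$, and $j$ alone cannot express an arbitrary $A[i,k]$. We need intermediate nodes indexed by (row, value) pairs. Here the bound $|X| \le n$ is crucial: for each row $i$ and each value $x$ we would create a node $(i,x)$ of weight $x$, but that gives $n^2$ nodes. So we must additionally group the values into $\sqrt n$ buckets. Each bucket is encoded coarsely by a node weight, and the residual offset within the bucket is resolved by the low-doubling structure, via the paper's low-rank APSP technique, the natural generalization of \cref{thm:apsp-unif-low-doubling}. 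I would first try to show that inside each bucket the residual sums range over a set of size $n^{o(1)}\sqrt n$, small enough to handle by the unweighted gadget from~\cite{ChanVX21}. Undirectedness is enforced by large additive layer offsets, as usual.
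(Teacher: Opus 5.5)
Your proposal has a genuine gap: it never arrives at a working reduction. The last paragraph is a plan, not an argument. It buckets the values, resolves ``residual offsets'' with the low-rank machinery, and hopes that the residual sums form a set of size $n^{o(1)}\sqrt n$, without showing any of this. The earlier steps are also unsound as stated. Relabeling entries by their rank in $X$, or keeping only the high part $a$ of $\mathrm{rank}(x)=a\sqrt n+b$, does not preserve the order of the sums $A[i,k]+B[k,j]$, because ranks are not additive. An order-preserving compression of this kind is exactly what the paper can only obtain conditionally, via \cref{hypo:hashing}. More fundamentally, the obstacle you hit comes from your choice of source problem. In the square $n\times n\times n$ instance with $|X|\le n$, every chunk of the inner dimension may still carry $n$ distinct values. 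A gadget with one node per (inner index, value) pair therefore has $n^{1.5}$ nodes per chunk of $\sqrt n$ inner indices. Keeping $O(n)$ nodes per instance forces chunks of constant size, hence $n$ calls and total time $n^{3.5-\epsilon}$.

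The missing idea is to apply the doubling reduction directly at the rectangular shape. \cref{lem:doubling-reduct} with $n_1=n_3=n$ and $n_2=\sqrt n$ produces instances with at most $D\le n_2=\sqrt n$ distinct entries. So by \cref{cor:min-plus-rect-unif} (with $\alpha_1=\alpha_3=1$, $\alpha_2=1/2$), already $\MinPlus(n,\sqrt n,n\mid D\le\sqrt n)$ cannot be solved in time $O(n^{2.5-\epsilon})$ under the APSP Hypothesis. The low-doubling property is not needed at all.

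The additivity issue you raise then disappears if you index the auxiliary nodes by (inner index, value) rather than (row, value), since there are only $\sqrt n\cdot\sqrt n=n$ of them. The construction is as follows.
\begin{itemize}
\item Take layers $I=[n]$, $K_1=K_2=[\sqrt n]\times X$, and $J=[n]$.
\item Join $i$ to $(k,A[i,k])\in K_1$, join $(k,B[k,j])\in K_2$ to $j$, and join every $(k,x_1)\in K_1$ to every $(k,x_2)\in K_2$ with the same $k$.
\item Give the nodes of $I$ and $J$ weight $10u$ and node $(k,x)$ weight $10u+x$, where $u=\max\{1,\max_{x\in X}|x|\}$.
\end{itemize}
The edge from $i$ selects the value $A[i,k]$, and the node weight carries it. Every $i$--$j$ path on four nodes has the form $i,(k,A[i,k]),(k,B[k,j]),j$ and weight $A[i,k]+B[k,j]+40u\le 42u$. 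Every longer path has at least six nodes, because the graph is bipartite with $I$ and $J$ on opposite sides, and hence weight at least $54u$. Thus the $I$--$J$ distances in this undirected graph on $4n$ nodes are exactly $A*B+40u$, with distances above $42u$ signalling $\bot$. A single call to the assumed algorithm then contradicts \cref{cor:min-plus-rect-unif}; no blocking, rank splitting, or low-rank APSP step is required.
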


\paragraph{Lower Bound 2: All-Pairs Bottleneck Paths}
The \emph{All-Pairs Bottleneck Paths (APBP)} problem is to compute, for all pairs of nodes $s, t$ in a directed edge-capacitated graph, the maximum flow that can be routed on one path from $s$ to~$t$. The complexity of APBP is exactly captured by the complexity of computing the \emph{min-max product} of two matrices (see \cref{sec:intermediate:sec:min-max-prod}), and based on this insight Vassilevska, Williams, and Yuster~\cite{VassilevskaWY07} designed the first subcubic-time algorithm for APBP, running time $\tilde O(n^{2+\omega/3}) = O(n^{2.791})$. At the time they raised the question if APBP can be solved in matrix multiplication time $\tilde O(n^\omega)$. Their algorithm was later improved by Duan and Pettie~\cite{DuanP09} to time $\tilde O(n^{(3+\omega)/2})$. We show that Duan and Pettie's algorithm is best-possible (if $\omega = 2$), and consequently that time $\tilde O(n^\omega)$ cannot be achieved.

\begin{restatable}[All-Pairs Bottleneck Paths]{theorem}{thmapbp} \label{thm:apbp}
All-Pairs Bottleneck Paths cannot be solved in time $O(n^{2.5-\epsilon})$ (for any constant~$\epsilon > 0$), unless the APSP Hypothesis fails.
\end{restatable}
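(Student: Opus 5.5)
We reduce from the low-doubling APSP instances of \cref{thm:apsp-unif-low-doubling}, or more precisely from a min-plus product $A * B$ of $n \times n$ matrices whose entries lie in a set $X$ with $|X| \le n$ and $|X + X| \le n^{\kappa}|X|$. The target is the min-max product. The bridge is the standard encoding of min-plus comparisons as dominance-type comparisons (``Fredman's trick'').

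The intended chain of steps is as follows.
\begin{enumerate}
\item Split the middle dimension into $n/n^{\beta}$ blocks of width $n^{\beta}$.
\item Within a block, decide whether $k$ beats $k'$ for the pair $(i,j)$. By Fredman's trick, this amounts to checking $A[i,k] - A[i,k'] \le B[k',j] - B[k,j]$.
\item Observe that all these differences lie in $X - X$. Because $X$ has small doubling, Plünnecke--Ruzsa gives $|X - X| \le n^{O(\kappa)}|X| = n^{1+O(\kappa)}$. So every comparison is between values drawn from a universe that is only almost linear in $n$.
\item Replace each difference by its rank in the sorted set $X - X$. Each comparison then becomes a comparison of integers in $[n^{1+O(\kappa)}]$.
\item Implement these comparisons as ``bottleneck'' thresholds. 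Build a three-layer graph (equivalently, a min-max product) whose edge capacities are the ranks. Whether a path of bottleneck value at least $t$ exists then encodes whether some $k$ in the block has $A[i,k] - A[i,k']$ below the threshold determined by $B$.
\end{enumerate}

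Concretely, I would follow the Vassilevska--Williams--Yuster and Duan--Pettie equivalence of APBP with the min-max product. I would also use the known reduction from dominance product and from the ``$(\min,\le)$-product'' to min-max. The claim to establish is that the min-plus product of $X$-valued matrices with $|X + X| \le n^{1+\kappa}$ reduces to $n^{1/2 + O(\kappa)}$ instances of size $n$ of a $(\min,\le)$-type product, where that product is equivalent to min-max. The balancing is the usual one. Sparse equality checks are handled by enumeration in time $n^{3-\beta}$ per window of rank values. The remaining threshold comparisons are handled with about $n^{1-\beta}$ min-max products. Choosing $\beta = 1/2$ then shows that an $O(n^{2.5-\epsilon})$-time APBP algorithm gives an $O(n^{3-\epsilon'})$ min-plus algorithm on these instances, once $\kappa$ is taken small relative to $\epsilon$. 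By \cref{thm:apsp-unif-low-doubling} that would refute the APSP Hypothesis. Witnesses are recovered via \cref{lem:min-plus-wit}.

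The main obstacle is the middle step, which turns comparisons against a universe of size $n^{1+O(\kappa)}$ into $\tilde O(n^{1/2})$ min-max products with no polynomial loss. The naive route expresses each comparison as a dominance product. That costs $n^{(3+\omega)/2}$ per product, which is too expensive. Instead I would bucket the ranks of $X - X$ into $\sqrt n$ intervals of length $\sqrt n$.

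Pairs whose values fall in different buckets are resolved by the min-max product. Here the capacity on the $A$-side is the bucket index of $A[i,k] - A[i,k']$, and the capacity on the $B$-side is the bucket index of $B[k',j] - B[k,j]$. Pairs falling in the same bucket are resolved by brute force. Because the universe is only $n^{1+o(1)}$, each $(i,j)$ has on average $n^{1/2+o(1)}$ such ties, so the total brute-force work is $n^{2.5+o(1)}$.

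If this bucketing cannot be made to work with a uniform $n^{o(1)}$ overhead, I would instead go through the Directed Unweighted APSP characterization $\MinPlus(n, n^{\mu}, n \mid u \le n^{1-\mu})$. For that I would use the paper's low-rank APSP machinery to map low-doubling instances onto bounded-range rectangular products. I would then apply the known reduction from these products to APBP by Chan, Vassilevska W., and Xu.
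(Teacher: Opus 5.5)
\textbf{There is a genuine gap: the proposal does not establish the unconditional bound.}

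\textbf{The main route.} Fredman's trick needs, for every $(i,j)$ and every block, the outcome of each comparison $A[i,k]-A[i,k'] \le B[k',j]-B[k,j]$. Only from this full pattern can you read off the block minimizer. A min-max product whose inner dimension is indexed by the pairs $(k,k')$ returns a single aggregate $\min_{(k,k')}\max\{\cdot,\cdot\}$ per $(i,j)$, not the individual outcomes. You give no way to recover the minimizer from that aggregate. This is the same obstacle that keeps ``Fredman's trick meets dominance product'' non-tight.

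\textbf{The tie count.} Per $(i,j)$ there are $(n/b)\cdot b^2 = n^{1.5}$ comparisons. Even if the ranks were spread evenly over $\sqrt n$ buckets, that gives about $n$ same-bucket ties per $(i,j)$, so $n^3$ brute-force work in total, not $n^{2.5}$. Nothing forces an even spread anyway: with doubling $K$, the $|X|^2$ differences fall on at most $K^2|X|$ values.

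\textbf{The fallback.} The route through $\MinPlus(n,n^\mu,n\mid u\le n^{1-\mu})$ needs low-doubling instances mapped to bounded universes. The low-rank machinery alone does not do that; it requires \cref{hypo:hashing}, and the small-universe step is only tight if $\omega=2$. So it proves only the conditional statement that the paper explicitly improves on.

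\textbf{The missing idea.} Do not start from the square instance at all. \cref{lem:doubling-reduct} applies to rectangular products with $n_2\le n_1,n_3$ and yields $D\le n_2$ distinct entries. Hence, by \cref{cor:min-plus-rect-unif}, already $\MinPlus(n,\sqrt n,n\mid D\le\sqrt n)$ needs $n^{2.5-o(1)}$ time under the APSP Hypothesis. Once the number of distinct values matches the inner dimension, the rest is a trivial blow-up:
\begin{itemize}
\item Index the new inner dimension by $(k,x)\in[\sqrt n]\times X$, which has size $n$.
\item Set $A'[i,(k,x)]=A[i,k]+x$, and $B'[(k,x),j]=1$ if $B[k,j]=x$ and $0$ otherwise. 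The min product of $A'$ and $B'$ is exactly $A*B$ (\cref{lem:min-prod}).
\item Replace $1\mapsto-\infty$ and $0\mapsto\infty$ in $B'$ to obtain a min-max product (\cref{lem:min-max-prod}).
\item Min-max product is equivalent to APBP by Vassilevska, Williams, and Yuster.
\end{itemize}
No Fredman's trick is needed, and only uniformity of the hard instance is used, not the doubling bound. Starting from your square $n$-uniform instance, the same blow-up would have inner dimension $n^2$ and cost $n^{3.5}$. This is why the rectangular uniform hardness is the essential ingredient.
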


Via known reductions, \cref{thm:apbp} settles the complexity of even more problems. For instance, the \emph{All-Pairs Nondecreasing Paths (APNP)} problem is to compute, for all pairs of nodes $s, t$ in a directed edge-weighted graph, the smallest weight $w$ such that there is an $s$-$t$-path along which the weights are nondecreasing and at most $w$. When the edge weights model departure times, this can be interpreted as traveling from~$s$ to~$t$ in the fastest possible time. The first subcubic-time algorithm is due to Vassilevska~\cite{Vassilevska10}, and later improvements due to Duan, Gu, Zhang~\cite{DuanGZ18} and Duan, Jin, Wu~\cite{DuanJW19} have optimized the time complexity to $\tilde O(n^{(3+\omega)/2})$. APNP is known to generalize APBP~\cite{Vassilevska10}, and thus \cref{thm:apbp} provides a matching APSP-based lower bound.

As a final example, Bringmann, Künnemann, and Węgrzycki~\cite{BringmannKW19} proved that APBP is further equivalent to \emph{$(1+\epsilon)$-Approximate APSP} for \emph{strongly polynomial} algorithms (i.e., for algorithms for which the number of arithmetic operations does not depend on the universe size $u$). Thus, combined with \cref{thm:apbp} we obtain a matching lower bound.

\paragraph{Lower Bound 3: Bounded-Difference and Monotone Min-Plus Product}
An important special case of the Min-Plus Product problem is to consider matrices $A$ where all adjacent entries differ at most by a constant. Bringmann, Grandoni, Saha, and Vassilevska W.~\cite{BringmannGSV19} first designed a subcubic-time algorithm for this special case, and as a consequence developed faster algorithms for string problems like language edit distance and RNA-folding; see also~\cite{Mao21}. This inspired a line of research~\cite{VassilevskaX20a,GuPVX21,ChiDX22,ChiDXZ22} to optimize the subcubic running time that recently led to the state-of-the-art $\tilde O(n^{(3+\omega)/2})$-time algorithm due to Chi, Duan, Xie, and Zhang~\cite{ChiDXZ22}. In fact, they give two different algorithms to treat the strictly more general cases where $A$ is \emph{row-monotone} or \emph{column-monotone}. We prove that the first algorithm is optimal, conditioned on the APSP Hypothesis; we leave open whether the second algorithm is similarly optimal.

\begin{restatable}[Row-Bounded-Difference Row-Monotone Min-Plus Product]{theorem}{thmminplusbd} \label{thm:min-plus-bd}
The min-plus product of a row-bounded-difference, row-monotone matrix $A \in \Int^{n \times n}$ and a column-bounded-difference, column-monotone matrix $B \in \Int^{n \times n}$ cannot be computed in time $O(n^{2.5-\epsilon})$ (for any constant $\epsilon > 0$), unless the APSP Hypothesis fails.
\end{restatable}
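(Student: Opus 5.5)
The plan is to reduce from a rectangular, structured Min-Plus Product instance that the doubling reduction shows to be APSP-hard. I would not attempt a direct reduction from the Strong APSP Hypothesis. The intended source is the min-plus product of an $n \times n^{1/2}$ matrix $A_0$ with an $n^{1/2} \times n$ matrix $B_0$. By the appropriate technical generalization of \cref{thm:apsp-unif-low-doubling} (via the low-rank APSP technique), I expect such a product to require time $n^{2.5-o(1)}$ under the APSP Hypothesis even when the entries have an almost rank-one structure:
\begin{equation*}
A_0[i,k] = \alpha_i + \beta_k + a_{ik}, \qquad B_0[k,j] = -\beta_k + \gamma_j + b_{kj}, \qquad 0 \le a_{ik}, b_{kj} \le n^{1/2}.
\end{equation*}
The shifts $\alpha_i$ and $\gamma_j$ are irrelevant for witnesses, so they can be dropped. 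The $\beta_k$ cancel in every sum $A_0[i,k] + B_0[k,j]$, so they can be chosen freely, for instance $\beta_k = C k n^{1/2}$ for a large constant $C$. Obtaining this core problem is the first step, and I would take it essentially as given from the paper's step-1 machinery.

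Second, I would embed this instance into an $n \times n$ instance with the required monotonicity and bounded difference. Expand each middle index $k \in [n^{1/2}]$ into a block of $n^{1/2}$ consecutive columns of $A$ (and rows of $B$), indexed by $(k,t)$ with $t \in [n^{1/2}]$. In row $i$ of $A$, the block $k$ should ramp by unit steps from the level of block $k-1$ up to $\beta_k + a_{ik}$. Since consecutive blocks differ by about $C n^{1/2}$ and each block has $n^{1/2}$ columns, the choice of $C$ makes this possible with steps in $\{0,1,\dots,C\}$. This makes every row of $A$ nondecreasing with bounded differences. Symmetrically, column $j$ of $B$ is built from $-\beta_k + b_{kj}$. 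Negating the row order (or adding $2\beta_k$ and absorbing it elsewhere) makes the columns of $B$ nondecreasing with bounded differences, after shifting by a global linear term $\lambda\cdot(k,t)$ that is added to $A$ and subtracted from $B$ and so does not change any sums.

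The main obstacle is to ensure that the padding entries, meaning the ``ramp'' positions $(k,t)$ with $t$ not the designated position, never produce a strictly smaller sum than the genuine entries $A_0[i,k] + B_0[k,j]$. What I would try first is the following. At position $(k,t)$, set $A$ to the genuine value minus $(T-t)$ and $B$ to the genuine value plus $2(T-t)$, where $T$ is the designated position. The net penalty $(T-t)$ is then nonnegative and vanishes only at the designated column, and the unit-slope ramps keep both bounded difference and monotonicity. If the slopes clash with monotonicity in $B$, I would change the orientation of the ramps: $A$ ramps up before $T$, and $B$ ramps down after $T$ in the reversed order. I would also check that each ramp length ($\le n^{1/2}$) is compatible with the entries $a_{ik}, b_{kj} \le n^{1/2}$.

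Finally, I would compute the product $A * B$ together with witnesses using \cref{lem:min-plus-wit}. Each witness lies at a designated position and therefore identifies the witness $k$ in $A_0 * B_0$, from which the original product is recovered. An $O(n^{2.5-\epsilon})$ algorithm for the structured problem of \cref{thm:min-plus-bd} would therefore solve the core problem in time $O(n^{2.5-\epsilon})$, contradicting the APSP Hypothesis.
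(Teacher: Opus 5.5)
Your first step is where the argument breaks. Up to potentials, which only shift $(A_0 * B_0)[i,j]$ by $\alpha_i + \gamma_j$, the structure you posit is exactly an instance of $\MinPlus(n, n^{1/2}, n \mid u \leq n^{1/2})$. That is the core problem of the Directed Unweighted APSP Hypothesis. The unconditional step-1 machinery (\cref{lem:doubling-reduct} and \cref{cor:min-plus-rect-low-doubling}) does not give you this. It only guarantees that the entries come from a set $X$ with $|X| \leq n^{1/2}$ and $|X+X| \leq n^{\kappa}|X|$, and that the instance is $1/D$-regular. Such values can be spread over a polynomial range (think of a two-dimensional progression $\set{a + bN}$ with huge $N$). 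In general, no choice of potentials squeezes $A_0[i,k] - \alpha_i - \beta_k$ into a window of length $n^{1/2}$.

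Turning low doubling into a small universe is precisely what the sum-order-preserving hashing of \cref{hypo:hashing} would provide. Without it, hardness of your core problem is only known in two settings: under the Strong APSP Hypothesis (\cref{cor:min-plus-rect-sm-univ} with $\alpha = 1/2$), or under the APSP Hypothesis together with \cref{hypo:hashing} and $\omega = 2$. So your plan proves the theorem only from the Strong APSP Hypothesis. This is the simple reduction described right after \cref{thm:min-plus-bd} in \cref{sec:intermediate:sec:min-plus-bd}, which the paper explicitly sets aside as insufficient.

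The missing idea is to reduce directly from the uniform, regular, low-doubling instance and to replace values by their ranks in the sorted sumset $X+X$. Index the inner dimension by pairs $(k,y) \in [n^{1/2}] \times X$. Set $A'[i,(k,y)] = f(A[i,k],y)$, and set $B'[(k,y),j] = 0$ if $B[k,j]=y$ and $\bot$ otherwise. Initially $f(x,y) = \operatorname{rank}(x,y)$ is the position of $x+y$ in the sorted $X+X$. This map is order-preserving on sums, so reading $A'*B'$ back through the sorted order recovers $A*B$. All entries now lie in $\set{1,\dots,|X+X|}$ with $|X+X| \leq n^{1/2+\kappa}$.

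To obtain bounded differences, $f(x,\cdot)$ is raised to be $L$-bounded-difference with $f \geq \operatorname{rank}$. Each forced raise consumes $L$ out of a range of size $K|X|$, so only $K|X|^2/L$ pairs $(x,y)$ become bad. Triangles through bad pairs are then listed by brute force in time $O(n^{2.5}K/L)$, using $1/D$-regularity. The remaining gaps between blocks are filled by interpolating columns placed opposite all-$\bot$ rows of $B'$. Choosing $K$ and $L$ as small polynomials keeps the size blow-up affordable.

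Separately, your penalty ramps do not work as written. With a common designated position $T$ and unit slopes, the jump between blocks $k-1$ and $k$ in row $i$ is $(\beta_k - \beta_{k-1} - T + 1) + (a_{ik} - a_{i,k-1})$. This varies with $i$ over a range of size about $2n^{1/2}$, so no choice of $\beta$ bounds it. The clean route, as in \cref{lem:min-plus-bd-obs}, has three steps:
\begin{itemize}
\item put $\bot$ in $B$ opposite padding columns;
\item clamp and replace $B$ by $\min_{k'}(B[k',j] + c|k-k'|)$, which preserves the product precisely because $A$ is row-bounded-difference;
\item add $kc$ to $A$ and $(n-k)c$ to $B$ to get monotonicity.
\end{itemize}
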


\paragraph{And More \texorpdfstring{\dots}{...}}
In \cref{sec:intermediate} we give some more matching APSP-based lower bounds for intermediate-complexity problems such as \emph{Min Product} and \emph{Min-Equality Product}, and recap how via known reductions these further imply lower bounds for even more problems such as \emph{All-Edges Monochromatic Triangle}~\cite{VassilevskaX20b}. Finally, we give a \emph{non-matching} $n^{7/3-o(1)}$ lower bound for the important \emph{Min-Witness Product} problem.

\subsection{Proof Ideas}
On a technical level, our approach differs from most other fine-grained reductions. Typically, to design a reduction, we study the problem to reduce \emph{to}, aiming to understand the hard instances to reduce to. Here, we instead study the problem to reduce \emph{from}, namely APSP or Min-Plus Product, and give subcubic algorithms for a large class of input graphs (or matrices) satisfying a certain \emph{structural property}. Then, to design APSP-based hardness reductions, we can first apply this subcubic-time algorithm to handle the structured part of the graph (or matrix) and reduce to its unstructured core, which is often easier to reduce from. This perspective can be seen as a recent trend with comparably few successful applications~\cite{AbboudBKZ22,AbboudBF23,ChanVX23,JinX23,ChanX24}, including the aforementioned work introducing the ``Fredman's trick meets dominance product'' technique~\cite{ChanVX23}.

A significant challenge is to identify the \emph{right} structural property. We propose a new, expressive one: a rank measure for matrices that we call the \emph{select-plus rank}.

\subsubsection{Select-Plus Rank}

\begin{samepage}
\begin{restatable}[Select-Plus Rank]{definition}{defrank} \label{def:rank}
The \emph{select-plus rank} of a matrix $A \in (\Int \cup \set{\bot})^{n \times m}$, denoted by $r(A)$, is the smallest integer $r \geq 0$ such that there are integer matrices $U \in \Int^{n \times r}$ and $V \in \Int^{r \times m}$ satisfying for all $(i, j) \in [n] \times [m]$:
\begin{equation*}
    A[i, j] \in \set{U[i, 1] + V[1, j], \dots, U[i, r] + V[r, j]} \cup \set{\bot}.
\end{equation*}
\end{restatable}
\end{samepage}

This definition is reminiscent of standard matrix rank, which can be characterized as the smallest $r$ such that $A = U V$ for some $U \in \Int^{n \times r}$ and $V \in \Int^{r \times m}$, i.e., such that each entry can be written as:
\begin{equation*}
    A[i, j] = U[i, 1] \cdot V[1, j] + \dots + U[i, r] \cdot V[r, j].
\end{equation*}
Hence, the select-plus rank can be seen as the modification where we replace multiplication by addition, and addition by ``selection''.\footnote{Another reasonable notion could be the \emph{min-plus rank} of a matrix, defined analogously with the condition that $A[i, j] = \min\set{U[i, 1] + V[1, j], \dots, U[i, r] + V[r, j]}$. This is a strictly more restrictive notion, i.e., the select-plus rank is always upper-bounded by the min-plus rank. Therefore, all of our algorithmic results immediately also apply to the min-plus rank. However, for some applications (specifically, \cref{lem:sm-univ-reduct}) we crucially exploit the extra freedom that the select-plus rank offers.} In many aspects it also behaves like the standard matrix rank, e.g., the select-plus rank of an $n \times n$ matrix ranges from $0$ to $n$.

Our key technical result is the following algorithm.

\begin{theorem}[Low-Rank Min-Plus Product] \label{thm:min-plus-low-rank}
The min-plus product of two given matrices $A, B$ can be computed in deterministic time $n^{3+o(1)} \cdot (r / n^{3-\omega})^{\Omega(1)}$, provided that we have access to a select-plus rank-$r$ decomposition of $A$ or $B$ (or of a matrix $C$ that approximates $A * B$ with entry-wise additive error $\pm O(1)$).
\end{theorem}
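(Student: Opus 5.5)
The plan is to first reduce all three variants to the case where we are given a rank-$r$ decomposition of an approximation $C$ of $A * B$, and then to solve that case by a scaling argument combined with equality products. In the $C$ case the unknown answer is pinned down up to $\pm O(1)$, so it remains to decide for each $(i,j)$ and each of $O(1)$ offsets $t$ whether some $k$ satisfies $A[i,k]+B[k,j] \le C[i,j]+t$. Witnesses then follow via \cref{lem:min-plus-wit}.

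Write $C[i,j] = U[i,\ell(i,j)] + V[\ell(i,j),j]$ with labels $\ell(i,j) \in [r]$. For each label $\ell$ define
\begin{equation*}
A_\ell[i,k] = A[i,k]-U[i,\ell], \qquad B_\ell[k,j] = B[k,j]-V[\ell,j].
\end{equation*}
For a pair $(i,j)$ with label $\ell$ the question becomes whether $A_\ell[i,k]+B_\ell[k,j] \le t$ for some $k$. Moreover $A_\ell[i,k]+B_\ell[k,j] \ge -O(1)$ for all $k$, since $C$ approximates $A*B$. So each label reduces to a min-plus product whose sums are confined to a window of constant width, although the summands themselves may be huge. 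This is the key gain from the decomposition: it turns an arbitrary instance into $r$ ``sum-near-zero'' instances. Fredman's trick rewrites $A_\ell[i,k]+B_\ell[k,j] \approx 0$ as $A_\ell[i,k] \approx -B_\ell[k,j]$, i.e., as a near-equality between a row-side quantity and a column-side quantity.

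Next I exploit this with rounding at a scale $s$. Split $A_\ell = s\cdot\lfloor A_\ell/s\rfloor + (A_\ell \bmod s)$, and similarly for $B_\ell$. A triple with sum in $[-O(1),t]$ must have high parts satisfying $\lfloor A_\ell[i,k]/s\rfloor + \lfloor B_\ell[k,j]/s\rfloor \in \{-1,0,1\}$ (roughly). Given that constraint, the sum is determined by the low parts, which lie in $\{0,\dots,s-1\}$. This is an equality-constrained bounded-entry min-plus product, which I would handle with the standard heavy/light split. Values of the high part that occur often in a row or column are treated by bounded-entry min-plus via fast matrix multiplication in time $\tilde O(n^\omega s)$ per heavy value, via~\cite{AlonGM97}. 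Light values are enumerated directly, and since each $k$ matches only few entries, the number of candidate triples stays small. Summing over the $r$ labels, charging each pair $(i,j)$ only to its own label, and optimizing $s$ and the heaviness threshold should give $n^{3+o(1)}(r/n^{3-\omega})^{\Omega(1)}$. The exponent only needs to be some positive constant, so I would not chase the optimal balance.

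For the cases where the decomposition is of $A$ (or of $B$, which is symmetric by transposition), write $A[i,k] = U[i,\ell]+V[\ell,k]$ with $\ell = L(i,k)$. Then
\begin{equation*}
(A*B)[i,j] = \min_\ell \bigl(U[i,\ell] + \min_{k : L(i,k)=\ell}(V[\ell,k]+B[k,j])\bigr).
\end{equation*}
I would reduce this to the previous case by computing a coarse approximation $C$ through the same scaling recursion: compute $A*B$ at scale $2s$ from rounded matrices, and use the result as the approximation at scale $s$. I would then check that the rounded matrices inherit low select-plus rank. They do, since rounding $U[i,\ell]+V[\ell,k]$ is within $\pm1$ of the sum of the roundings, and the select-plus rank tolerates such $O(1)$ perturbations after a constant blowup. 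This yields $O(\log u)$ levels, each solved by the $C$ case with an $O(1)$-error approximation.

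I expect the main obstacle to be the second step: bounding the light-value enumeration so that it is genuinely subcubic uniformly over all labels. The trouble is that the pairs $(i,j)$ assigned to a label $\ell$ form an arbitrary pattern, so per-label work must be charged to the pairs of that label rather than to all $n^2$ pairs. My first attempt would be to enumerate, for each $(i,k)$ and each $\ell$, only the light matches $j$ with $\ell(i,j)=\ell$, and to bound the total using the fact that each column value is shared by at most $n/\text{threshold}$ entries. If this fails, I would fall back on a dominance-product formulation, in the style of~\cite{ChanVX23}, for the near-equality test.
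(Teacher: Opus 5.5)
\paragraph{Where the proposal breaks.}
Your plan stays inside the per-label ``Fredman's trick plus heavy/light'' paradigm, and that paradigm cannot reach ranks close to $n^{3-\omega}$. The $r$ instances $(A_\ell, B_\ell)$ are genuinely different, since each is shifted by $U[\cdot,\ell]$ and $V[\ell,\cdot]$. So each gets a budget of about $n^{3-\epsilon}/r$, which for $r$ near $n^{3-\omega}$ is barely more than $n^{\omega}$.

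Enumerating light matches costs $n^2\tau$ per label: for each $(k,j)$ you visit the up to $\tau$ rows $i$ sharing the value. This forces the heaviness threshold down to $\tau \le n^{1-\epsilon}/r$. But then every column of $A_\ell$ can carry up to about $r$ distinct heavy values. The Matou\v{s}ek-style fast-matrix-multiplication treatment of them (inner dimension indexed by column and heavy value) costs $\mathrm{MM}(n, nr, n) \approx r n^{\omega}$ per label, and your $\tilde O(n^\omega s)$ per heavy value is worse still. That gives about $r^2 n^\omega$ overall, which is subcubic only for $r \ll n^{(3-\omega)/2}$. This is exactly the barrier of the Fredman-plus-Matou\v{s}ek/dominance step of \cite{ChanVX23}, which your fallback would inherit.

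The rounding at scale $s$ does not change this. It multiplies the bounded-entry cost by $s$, and it does not reduce the number of distinct high parts per column, because the entries of $A_\ell$ can be spread over a polynomially large range. So the real obstruction is the heavy part, not the charging of light work that you flag.

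\paragraph{What the paper does instead.}
The paper gets past this barrier in \cref{lem:exact-tri-low-rank-to-slice-unif} with two ideas your proposal lacks.
\begin{itemize}
\item \emph{Regularize the decomposition.} By \cref{lem:decomp-reg}, the decomposition of $C$ is split into a row-regular part, a column-regular part, and a part of rank at most $r/2$ handled by recursion. Row-regularity means each label covers only $\tilde O(n/r)$ entries per row. This lets the case of at most $n^2/t$ heavy entries be brute-forced in $\tilde O(n^3/(rt))$ per label.
\item \emph{Handle many heavy entries globally, not per label.} When label $\ell$ has more than $n^2/t$ heavy entries, the paper shifts $A$ and $C$ by the potential $U[\cdot,\ell]$. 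This gives an instance, restricted to those heavy entries, that is equivalent to the original for \emph{all} labels at once and is $rt$-slice-uniform. These entries are then deleted globally, so this case occurs at most $t$ times.
\end{itemize}
The slice-uniform instances are then solved by the chain of \cite{AbboudFJVX25}: popular-sum decomposition, regularization with recursion on smaller rank, BSG covering, and the algebraic $\tilde O(\mathrm{MM}(n,n,n)\cdot DK)$ algorithm of \cref{lem:exact-tri-low-doubling}. That last step is where the $n^{3-\omega}$ threshold and the $(r/n^{3-\omega})^{\Omega(1)}$ factor come from, and nothing in your plan plays its role.

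\paragraph{A smaller issue.}
Your reduction from a decomposition of $A$ to the $C$-case does not work as stated. An approximation $\tilde C$ computed recursively comes with no low-rank decomposition, so the $C$-case cannot be invoked on it. The paper instead works with Exact Triangle. Min-Plus reduces to it by a rank-preserving scaling trick, and an $O(1)$-approximation is used via constantly many shifted targets. The instance is then rotated so that whichever matrix is decomposed plays the role of $C$.
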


The point is that when the select-plus rank is at most $r \leq n^{3-\omega-\epsilon}$ the running time is truly subcubic, $n^{3-\Omega(\epsilon)}$. In particular, if $\omega = 2$ the running time is subcubic for ranks up to $r \leq n^{1-\epsilon}$---i.e., unless all matrices have almost full rank, we obtain a subcubic-time algorithm for Min-Plus Product!

\cref{thm:min-plus-low-rank} implies that low-rank APSP, i.e., APSP on graphs whose adjacency lists have select-plus rank at most $n^{3-\omega-\epsilon}$, can be solved in subcubic time. This result unifies and generalizes several previously studied classes of graphs known to admit subcubic-time algorithms, including sparse graphs, graphs with small weights in~$[n^{3-\omega-\epsilon}]$~\cite{Seidel95,GalilM97,AlonGM97,ShoshanZ99,Zwick02}, node-weighted graphs~\cite{Chan10,Yuster09,AbboudFJVX25}, and even graphs with at most~$n^{3-\omega-\epsilon}$ distinct edge weights per node~\cite{Yuster09,AbboudFJVX25}.

In the following, we will first explore how the claimed universe reductions and fine-grained lower bounds can be derived from \cref{thm:min-plus-low-rank} (\cref{sec:intro:sec:overview:sec:apps}), and then describe on a high level how \cref{thm:min-plus-low-rank} can be proved (\cref{sec:intro:sec:overview:sec:min-plus-low-rank}).

\subsubsection{Low-Rank Min-Plus Product in Action} \label{sec:intro:sec:overview:sec:apps}
The main power of our new low-rank technique is that it allows us to significantly restrict the set of ``hard'' Min-Plus Product instances. Such restrictions are already known, though mostly simple ones. For example, call any index $k$ attaining the minimum $\min_k (A[i, k] + B[k, j])$ a \emph{witness} of $(A * B)[i, j]$. In a hard instance~$A * B$ we expect that most entries have only very few, say at most $n^\epsilon$, witnesses. Otherwise we could narrow our attention to $\tilde O(n^{1-\epsilon})$ randomly sampled indices $k$---with high probability these include at least one witness, and the resulting smaller product can be computed in subcubic time $\tilde O(n^{3-\epsilon})$. The immediate consequence is that in all lower bound constructions based on Min-Plus Product we can now assume the given matrices~$A, B$ to fulfill an extra requirement (namely that most entries in $A * B$ have few witnesses). Of course, this particular insight about the number of witnesses has limited impact.

Instead, a quantity which we really care about (for reasons that we will outline soon) is the number of \emph{$q$\=/pseudo-witnesses}, i.e., the number of indices $k$ attaining the minimum \smash{$\min_k (\floor{\frac{A[i, k]}{q}} + \floor{\frac{B[k, j]}{q}})$}. In words, $k$ is a $q$-pseudo-witness if it becomes a witness after rounding all entries to multiples of $q$. In the previous paragraph we argued by elementary means that we expect at most $n^\epsilon$ $1$-pseudo-witnesses (a.k.a.\ witnesses) per output entry. With the help of low-rank Min-Plus Product one can show that, more generally for any~$q \geq 1$, there are at most~$q \cdot n^\epsilon$ $q$-pseudo-witnesses per output entry in the hard instances. As we will see this fact has surprisingly strong implications! For instance, the equivalence of the Strong APSP and Directed Unweighted APSP Hypotheses follows almost immediately.

We now describe the main idea behind this fact. Focus on the simplified setting that \emph{all} entries $(A * B)[i, j]$ have more than $q \cdot n^\epsilon$ $q$-pseudo-witnesses. We show that $A * B$ has small select-plus rank, $r(A * B) \leq \tilde O(n^{1-\epsilon})$, and thus \cref{thm:min-plus-low-rank} allows to compute $A * B$ in subcubic time. In particular, $A * B$ cannot be a hard instance. To prove the rank bound we proceed in two steps:
\begin{enumerate}
    \item Consider the matrices \smash{$A' = \floor{\frac{A}{q}}$} and \smash{$B' = \floor{\frac{B}{q}}$}. The first step is to show that \smash{$r(A' * B') \leq \tilde O(n^{1-\epsilon} / q)$}. To see this, take a subset $\mathcal K$ of \smash{$\tilde O(n^{1-\epsilon} / q)$} randomly sampled indices $k$. With high probability each output entry has at least one $q$-pseudo-witness in the sample. So consider the restriction $U$ of $A'$ to the columns in~$\mathcal K$, and the restriction $V$ of $B'$ to the rows in $\mathcal K$. Each entry $(A' * B')[i, j]$ can be expressed as a sum of the form $U[i, k] + V[k, j]$ for some $k \in \mathcal K$, which, by \cref{def:rank}, implies that $r(A' * B') \leq |\mathcal K| = \tilde O(n^{1-\epsilon} / q)$.
    \item Next, observe that we can express $A * B = q \cdot (A' * B') + R$ for some remainder matrix $R$ with entries bounded by $O(q)$. That matrix trivially has select-plus rank $r(R) \leq O(q)$ (\cref{fac:rank-triv-univ}). Since the select-plus rank behaves submultiplicatively (\cref{fac:rank-submult}), it follows that $r(A * B) \leq r(A' * B') \cdot r(R) \leq \tilde O(n^{1-\epsilon} / q \cdot q) = \tilde O(n^{1-\epsilon})$ as claimed.
\end{enumerate}

The description so far demonstrates generically that low-rank Min-Plus Product can be helpful in the design of reductions. We now give some very abstract pointers how our specific applications benefit from this approach.

\paragraph{Application 1: Small-Universe Reduction}
\cref{thm:strong-apsp-implies-dir-unw-apsp} follows rather easily now by picking~\makebox{$q = n^{1/2}$} and adapting the ideas from~\cite{ChanVX23}. Specifically, knowing that there are at most $n^{1/2+\epsilon}$ $n^{1/2}$-pseudo-witnesses per output entry, we can afford to \emph{list} all such pseudo-witnesses. When everything is set up carefully this list also contains all proper witnesses, so we can easily read off the min-plus product~$A * B$. Based on this idea one can show that $\MinPlus(n, n, n \mid u \leq n)$ reduces to $n^{1/2+\epsilon}$ instances of $\MinPlus(n, n^{1/2}, n \mid u \leq n^{1/2})$ (in case that $\omega = 2$) which entails the desired reduction. See \cref{sec:univ-reduct:sec:small} for the details.

\paragraph{Application 2: Doubling Reduction}
The proof of \cref{thm:apsp-unif-low-doubling} is more complicated and requires developing more technical ideas. This result applies to arbitrarily large polynomial universes $u$, so our approach is to ``grow'' $q$ from $1$ to $u$ in small steps. In some step the number of $q$-pseudo-witnesses must jump from less than~$n^\epsilon$ to at least $n^\epsilon$---this is the step where we can effectively apply the previous ideas. On the one hand, the number of pseudo-witnesses is large enough so that the (appropriately rounded) min-plus product matrix has small rank. On the other hand, the number of pseudo-witnesses is small enough so that we can afford to \emph{list} all pseudo-witnesses (and thereby all witnesses). This idea essentially allows us to assume that the initial Min-Plus Product instance is low-rank, so it remains to reduce low-rank Min-Plus Product to uniform Min-Plus Product. We obtain this reduction as a by-product of our low-rank Min-Plus Product algorithm (to be described next in \cref{sec:intro:sec:overview:sec:min-plus-low-rank}). We defer the many technical details to \cref{sec:univ-reduct:sec:doubling}.

\paragraph{Application 3: Conditional Lower Bounds}
Finally, our conditional lower bounds follow from the doubling reduction when appropriately generalized to \emph{rectangular} matrices (see \cref{cor:min-plus-rect-low-doubling}). Besides applying this big hammer, most of the individual reductions require only simple or previously known ideas. One exception is \cref{thm:min-plus-bd}. See \cref{sec:intermediate} for details and \cref{fig:reducts} for the resulting web of reductions. 

\begin{figure}[t]
\begin{tikzpicture}[
    prob base/.style={
        draw=black,
        fill=PaleBlue,
        inner sep=0.1cm,
        minimum width=3.1cm,
        minimum height=0.95cm,
        font=\fontsize{9}{11}\selectfont,
        align=center
    },
    prob/.style={
        prob base,
        rounded corners,
    },
    large/.style={minimum width=4.46cm},
    important/.style={
        font=\fontsize{9}{11}\selectfont\bfseries,
        fill=LightBlue,
    },
    reduct/.style={
        draw,
        >=latex,
        shorten <=.1cm,
        shorten >=.1cm,
        shift up/.style={transform canvas={yshift=.12cm}},
        shift down/.style={transform canvas={yshift=-.12cm}},
        shift left/.style={transform canvas={xshift=-.12cm}},
        shift right/.style={transform canvas={xshift=.12cm}},
    },
    every edge quotes/.append style={font=\fontsize{9}{10}\selectfont},
]

\matrix[
    every node/.style={anchor=north},
    column sep=1.7cm,
    row sep=1cm,
    row 2/.style={row sep=2.1cm},
] at (0, 0) {
    \node[prob, important] (apsp) {APSP}; &
    &
    \node[prob, important] (strong-apsp) {Strong APSP}; \\
    \node[prob] (min-plus-sq) {$\MinPlus(n, n, n)$}; &
    \node[prob] (min-plus-sq-unif) {$n$-Uniform\\$\MinPlus(n, n, n)$}; &
    \node[prob] (min-plus-sq-univ) {$n$-Universe\\$\MinPlus(n, n, n)$}; \\
    \node[prob] (min-plus-sqrt) {$\MinPlus(n, \sqrt{n}, n)$}; &
    \node[prob] (min-plus-sqrt-unif) {$\sqrt{n}$-Uniform\\$\MinPlus(n, \sqrt{n}, n)$}; &
    \node[prob] (min-plus-sqrt-univ) {$\sqrt{n}$-Universe\\$\MinPlus(n, \sqrt{n}, n)$}; \\
    &
    \node[prob] (min-max) {Min-Max Product}; &
    \node[prob, important] (dir-unw-apsp) {Dir.\ Unw.\ APSP}; \\
};

\node[prob base, cloud, cloud puffs=14, aspect=2.5, inner sep=-.55cm, below=.8cm of min-plus-sqrt] (node-wgt-apsp) {Node-Weighted APSP,\\Row-Monotone Min-Plus Product,\\Min-Equality Product,\\\dots\\[-.7cm]};
\node[prob base, cloud, cloud puffs=18, aspect=2.3, inner sep=-.5cm, below=1cm of min-max] (apbp) {All-Pairs Bottleneck Paths,\\All-Pairs Nondecreasing Paths,\\Approximate APSP (without Scaling),\\\dots};

\path[reduct, <->] (apsp) to (min-plus-sq);
\path[reduct, <->] (strong-apsp) to["\cite{ShoshanZ99}"] (min-plus-sq-univ);
\path[reduct, <->] (dir-unw-apsp) to["\cite{ChanVX21}", swap] (min-plus-sqrt-univ);
\path[reduct, ->, shift up] (min-plus-sq) to["Lem.~\ref{lem:doubling-reduct}" above=.1cm] (min-plus-sq-unif);
\path[reduct, ->, shift down] (min-plus-sq-unif) to (min-plus-sq);
\path[reduct, ->, shift up, dashed] (min-plus-sq-unif) to["Lem.~\ref{lem:min-plus-sum-order-preserving}" above=.1cm] (min-plus-sq-univ);
\path[reduct, ->, shift down] (min-plus-sq-univ) to (min-plus-sq-unif);
\path[reduct, ->, shift up] (min-plus-sqrt) to["Lem.~\ref{lem:doubling-reduct}" above=.1cm] (min-plus-sqrt-unif);
\path[reduct, ->, shift down] (min-plus-sqrt-unif) to (min-plus-sqrt);
\path[reduct, ->, shift up, dashed] (min-plus-sqrt-unif) to["Lem.~\ref{lem:min-plus-sum-order-preserving}" above=.1cm] (min-plus-sqrt-univ);
\path[reduct, ->, shift down] (min-plus-sqrt-univ) to (min-plus-sqrt-unif);
\path[reduct, <->] (min-plus-sq) to (min-plus-sqrt);
\path[reduct, ->] (min-plus-sqrt-unif) to (min-plus-sq-unif);
\path[reduct, ->, shift left] (min-plus-sq-univ) to["Cor.~\ref{cor:min-plus-rect-sm-univ}", swap] (min-plus-sqrt-univ);
\path[reduct, ->, shift right] (min-plus-sqrt-univ) to (min-plus-sq-univ);
\path[reduct, ->] (min-plus-sqrt-unif) to["Lem.~\ref{lem:min-max-prod}"] (min-max);
\path[reduct, ->] (min-plus-sqrt-unif.south west) to["Sec.~\ref{sec:intermediate}", swap, pos=.73] (node-wgt-apsp);
\path[reduct, ->] (dir-unw-apsp) to["\cite{LincolnPV20}", swap] (min-max);
\path[reduct, ->] (min-max) to["\cite{VassilevskaWY07,Vassilevska10,BringmannKW19}"] (apbp);

\begin{pgfonlayer}{bg}
    \path (-.5\textwidth+0.4pt, 0) coordinate (pic-left);
    \path (.5\textwidth-0.4pt, 0) coordinate (pic-right);
    \path (apsp) + (0, 1cm) coordinate (n3-top);
    \path (min-plus-sq-univ) + (0, -1cm) coordinate (n3-bottom);
    \path (min-plus-sqrt) + (0, 1cm) coordinate (n25-top);
    \path (apbp) + (0, -2.2cm) coordinate (n25-bottom);
    \path[draw, dashed, rounded corners] (pic-left |- n3-top)
        node[below right=.3cm] {$n^{3\pm o(1)}$}
        rectangle (pic-right |- n3-bottom);
    \path[draw, dashed, rounded corners] (pic-left |- n25-top)
        node[below right=.3cm] {$n^{2.5\pm o(1)}$}
        rectangle (pic-right |- n25-bottom);
\end{pgfonlayer}
\end{tikzpicture}

\caption{Illustrates our fine-grained reductions and resulting lower bounds assuming that $\omega = 2$. Each arrow symbolizes a (tight) fine-grained reduction. Unlabeled arrows correspond to trivial reductions. The dashed arrows are conditioned on the additive combinatorics \cref{hypo:hashing}.} \label{fig:reducts}
\end{figure}
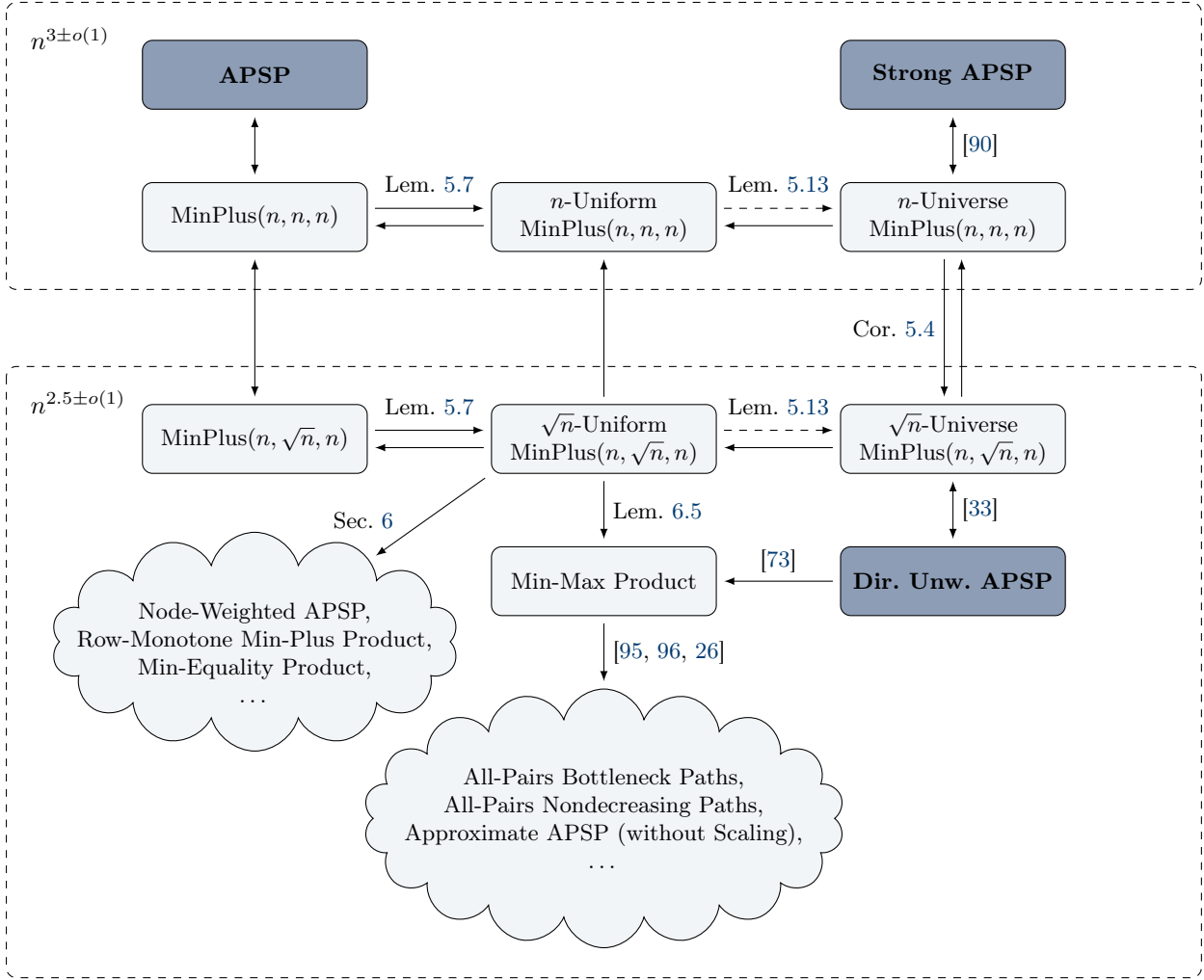

\paragraph{Even More Applications?}
Overall, we are confident that our new low-rank APSP algorithm will find more applications in the future---perhaps in the design of more fine-grained reductions, or, even more ambitiously, perhaps even as a step towards challenging the APSP Hypothesis.

\subsubsection{Subcubic Algorithm for Low-Rank Min-Plus Product} \label{sec:intro:sec:overview:sec:min-plus-low-rank}
The proof of \cref{thm:min-plus-low-rank} is extremely involved. It builds on the recent algorithm for APSP with few distinct edge weights per node due to Abboud, Fischer, Jin, Vassilevska W., and Xi~\cite{AbboudFJVX25}, which itself is already a complicated chain of reductions, and adds more layers to the chain. Even worse: To obtain the most general form of \cref{thm:min-plus-low-rank} (which is necessary e.g.\ for the doubling reduction) we are forced to open up their algorithm, recast it in a restricted framework of reductions (so-called \emph{potential-adjusting} reductions to be defined in \cref{sec:prelims:sec:pot}), and generalize all steps to rectangular matrices. Fortunately, along the way we manage to simplify one of the highly technical steps in~\cite{AbboudFJVX25} (again leveraging the expressiveness of the novel select-plus rank), and in doing so, we even obtain a slightly faster APSP algorithm for the few-weights case.\footnote{To be precise, the previously fastest APSP algorithm with at most $n^{3-\omega-\epsilon}$ distinct edge weights per node~\cite{AbboudFJVX25} runs in subcubic time \smash{$O(n^{3-\Omega_\epsilon(1)})$} where the hidden dependence on $\epsilon$ is polynomial. This unusual dependence is due to a complicated recursion. In our newer version we obtain a more streamlined recursion leading to a linear dependence $\Omega(\epsilon)$ as in \cref{thm:min-plus-low-rank}.}

As in~\cite{AbboudFJVX25}, our subcubic-time algorithm also works for the strictly harder low-rank \emph{Exact Triangle} problem. In this problem we are additionally given a matrix $C$, and the goal is to report all entries in the min-plus product $A * B$ that coincide with $C$. We now describe the key insights behind this algorithm for low-rank Exact Triangle.

\paragraph{Main Idea: Reduction to Slice-Uniform Instances}
Let $(A, B, C)$ be a given Exact Triangle instance with select-plus rank $r(C) \leq r$. Let $C_\ell$ denote the submatrix of $C$ consisting of all entries that are captured by the $\ell$-th part of the rank decomposition (i.e., $C[i, j] = U[i, \ell] + V[\ell, j]$), where all other entries are replaced by some dummy symbol $\bot$. Our strategy is to iteratively solve the subinstances $(A, B, C_1), \dots, (A, B, C_r)$ one-by-one. As we aim for total time $O(n^{3-\epsilon})$, each such subinstance must be solved in time $O(n^{3-\epsilon} / r)$. Recall that the most interesting case is when $r$ is very close to $n$, so this time budget is just barely superquadratic. Focus on an instance $(A, B, C_\ell)$, and define matrices $A', B'$ by
\begin{align*}
    A'[i, k] &= A[i, k] - U[i, \ell], \\
    B'[k, j] &= -(B[k, j] - V[\ell, j]).
\end{align*}
Then any exact triangle $A[i, k] + B[k, j] = C_\ell[i, j]$ corresponds to a solution of the much simpler equality condition $A'[i, k] = B'[k, j]$. This remaining problem is typically called \emph{Equality Product} (where in our case the set of relevant outputs is only the non-$\bot$ entries in $C_\ell$). Up to this point the algorithm can be seen as the ``Fredman's trick'' part of the ``Fredman's trick meets dominance product'' technique~\cite{ChanVX23}. Next, this technique would solve the Equality Product instances each in time $\tilde O(n^{(3-\omega)/2})$ by Matoušek's algorithm~\cite{Matousek91} applied as a black-box. This running time is too slow for our purposes, so from now on our algorithm differs.

Matoušek's algorithm can be viewed as a simple heavy/light trick, so it is natural to proceed similarly. Call an entry in $A'$ \emph{light} if in its respective column in the same integer entry appears at most $n^{1-\epsilon} / r$ times, and \emph{heavy} otherwise. This threshold is chosen in such a way that a simple brute-force algorithm can deal with instances containing \emph{no} heavy entries in our time budget, $O(n^{3-\epsilon} / r)$.

More interestingly, consider the other extreme where \emph{all} entries in $A'$ are heavy. If we ever reach this situation then we will ``give up'' on the iterative algorithm, go back to the original instance $(A, B, C)$, and solve it in one shot. How? Let~$C'$ be the matrix defined by
\begin{equation*}
    C'[i, j] = C[i, j] - U[i, \ell].
\end{equation*}
This is chosen in a way that the instances $(A, B, C)$ and $(A', B, C')$ are exactly equivalent (as the term~$U[i, \ell]$ cancels in $A'[i, k] + B[k, j] = C'[i, j]$). It thus suffices to solve the augmented instance $(A', B, C')$. The benefit is that the augmented instance satisfies the extra constraint that each column in $A'$ contains a truly sublinear number $d \leq r \cdot n^\epsilon$ of distinct entries (as each heavy entry repeats at least $n^{1-\epsilon} / r$ times per column in $A'$). This is exactly the \emph{$d$-slice-uniform} special case known to be in subcubic time due to Abboud, Fischer, Jin, Vassilevska W., and Xi~\cite{AbboudFJVX25}.

\paragraph{Technical Challenge: Regular Rank Decompositions}
Unfortunately, the previous paragraph just treats the unrealistic situations that either \emph{no} or \emph{all} entries are heavy---but what if only \emph{some} entries are heavy? This remaining case poses some serious technical challenges. Our solution involves two subcases. The good subcase is when the number of heavy entries is still quite large, say at least $n^{2-\delta}$ for some sufficiently small~$\delta > 0$. In this case a simple modification of the previous approach works: after solving an instance~$(A', B, C')$ simply continue with the iterative algorithm, but \emph{remove} the at least $n^{2-\delta}$ heavy entries from $A$ in all future iterations. We thereby solve at most $n^\delta$ instances $(A', B, C')$ in total.

The bad subcase is when the number of heavy entries is less than $n^{2-\delta}$. Here, a natural idea is to exploit the sparsity of $C_\ell$. Specifically, we could enumerate all heavy entries $A'[i, k]$ and further all~\makebox{$C_\ell[i, j] \neq \bot$}. For each pair we can test in constant time whether $A'[i, k] = B'[k, j]$. Heuristically speaking, if the entries of $C$ were uniformly distributed among the matrices $C_1, \dots, C_r$, then we would expect only $O(n/r)$ entries~\makebox{$C_\ell[i, j] \neq \bot$}, leading to a total time of $O(n^{3-\delta} / r)$ as intended. The challenge lies in making this heuristic argument rigorous. More precisely, we would like to assume that the given rank decomposition is \emph{row-regular}, i.e., that the matrices $C_1, \dots, C_r$ each have at most $\tilde O(n / r)$ non-$\bot$ entries per row. Is this possible without loss of generality?

It is illuminating to consider the following example. Suppose that $C$ contains a large rank-$1$ submatrix, of size $s \times s$ for some $s \gg n / r$, and suppose that this submatrix is fully contained in some part $C_\ell$. Then that part violates the row-regularity condition. Even worse: There could be up to $n^2 / s^2$ such low-rank submatrices, leading to several violations. Luckily, such submatrices essentially turn out to be the only obstacles. We show that, perhaps surprisingly, we can construct an alternative decomposition for (the union of) these obstructions with rank $\tilde O(n / s) \ll r$. That is, conceptually, the only way to violate the regularity condition is when some parts of $C$ have \emph{strictly smaller} rank.

Formally, we show that an arbitrary rank-$r$ matrix $C$ can be decomposed into three parts $C_\row$, $C_\col$, and~$C_\sm$ such that $C_\row$ has a row-regular rank-$r$ decomposition, $C_\col$ has an analogously defined column-regular rank-$r$ decomposition, and $C_\sm$ has rank at most $r/2$ (see \cref{sec:rank:sec:decomp-reg}). Then $C_\row$ is as planned for our low-rank Min-Plus Product algorithm, $C_\col$ can be dealt with by symmetry, and $C_\sm$ can be dealt with by \emph{recursion}.

\subsection{Real Weights}
A remarkable consequence of our results is that we conditionally resolve the complexity of All-Pairs Shortest Paths for \emph{all} variants: undirected or directed, with edge or node weights, ranging from unweighted instances over small weights $\set{0, \dots, n}$ up to polynomially large integer weights. Out of these variants, only undirected unweighted APSP is in time $\tilde O(n^\omega)$ by Seidel's algorithm~\cite{Seidel95}. For all other variants---i.e., directed graphs, or undirected graphs with edge or node weights in $\set{0, \dots, n^\delta}$ for some constant $\delta > 0$---the respective fastest-known algorithms with super-quadratic complexity turn out to be optimal (in the sense that beating any such algorithm implies a subcubic-time algorithm for general APSP, at least if $\omega = 2$, and conditioned on \cref{hypo:apsp}). See \cref{sec:intermediate:sec:node-wgt-apsp} for the full treatment of node-weighted graphs.

A remaining interesting question is to further connect the APSP problem on integer-weighted graphs with its more general counterpart on \emph{real-weighted} graphs. This problem, and the implications of the weaker \emph{Real APSP Hypothesis} (along with other real-valued hypotheses) have been explored in~\cite{ChanVX22}. A particularly ambitious open question is whether there is a ``universe reduction'' from the (uncountably infinite) real universe to the integers. While parts of our results apply to real-weighted instances as well, other parts inherently rely on the finite bit representation of integers (such as the scaling trick to reduce from Min-Plus Product to Exact Triangle), and so we leave further progress in this direction for future research.
\section{Preliminaries} \label{sec:prelims}
We write $[n] = \set{1, \dots, n}$. Throughout we typically consider matrices $A, B, C \in (\Int \cup \set{\bot})^{n \times m}$ that consist of integer entries plus some designated symbol $\bot$ representing a missing entry (e.g., when taking $A$ to be the adjacency matrix of a weighted graph, $\bot$ represents a missing edge).\footnote{In some contexts it would be natural to write $\infty$, but to be consistent overall we decided to write $\bot$.} We follow the conventions~\makebox{$x + \bot = \bot + x = \bot$} and $\min\set{x, \bot} = x$ for all $x \in \Int \cup \set{\bot}$, and $\min \emptyset = \bot$. We say that $A$ is \emph{partitioned} into two matrices $A_1, A_2$ of the same size, denoted by $A = A_1 \sqcup A_2$, if each entry $A[i, j] \neq \bot$ appears in \emph{exactly one} of the two matrices (i.e., $A_1[i, j] = A[i, j]$ and $A_2[i, j] = \bot$, or $A_1[i, j] = \bot$ and~$A_2[i, j] = A[i, j]$).

Let $\MM(n_1, n_2, n_3)$ denote the time complexity of multiplying an $n_1 \times n_2$ with an $n_2 \times n_3$ integer matrix. Let $2 \leq \omega < 2.372$~\cite{AlmanDVXXZ25} denote the exponent of square matrix multiplication.

\subsection{APSP, Min-Plus Product and Exact Triangle} \label{sec:prelims:sec:probs}
\begin{definition}[APSP] \label{def:apsp}
The \emph{All-Pairs Shortest Paths (APSP)} problem is to compute the $u$-$v$-distances for all pairs of nodes $u, v$ in a given directed edge-weighted $n$-vertex graph.
\end{definition}

\begin{definition}[Min-Plus Product] \label{def:min-plus}
The \emph{min-plus product} of $A \in (\Int \cup \set{\bot})^{n_1 \times n_2}, B \in (\Int \cup \set{\bot})^{n_2 \times n_3}$ is the matrix $A * B \in (\Int \cup \set{\bot})^{n_1 \times n_3}$ defined by
\begin{equation*}
    (A * B)[i, j] = \min_{k \in [n_2]} \parens*{A[i, k] + B[k, j]}.
\end{equation*}
The \emph{Min-Plus Product} problem is to compute the min-plus product $A * B$ of two given matrices $A, B$.
\end{definition}

\begin{definition}[Exact Triangle] \label{def:exact-tri}
Let $A \in (\Int \cup \set{\bot})^{n_1 \times n_2}, B \in (\Int \cup \set{\bot})^{n_2 \times n_3}$ and $C \in (\Int \cup \set{\bot})^{n_1 \times n_3}$. A \emph{triangle} is a triple $(i, k, j) \in [n_1] \times [n_2] \times [n_3]$ such that $A[i, k] \neq \bot, B[k, j] \neq \bot, C[i, j] \neq \bot$, and an \emph{exact triangle} is a triangle $(i, k, j)$ such that $A[i, k] + B[k, j] = C[i, j]$.\footnote{Of course, one could equivalently define an exact triangle to satisfy that $A[i, k] + B[k, j] + C[i, j] = 0$; in this case the problem is typically called the \emph{Zero Triangle} problem. We stick to the version in \cref{def:exact-tri} which is nicer to relate to the Min-Plus Product problem.} The \emph{(All-Edges) Exact Triangle}\footnote{The established name in the fine-grained literature is ``All-Edges'' Exact Triangle. However, in this paper we are exclusively concerned with the All-Edges version and so we will usually drop the ``All-Edges'' prefix for the sake of brevity.} problem is, given $A, B, C$, to decide for each edge $(i, k) \in [n_1] \times [n_2]$ and $(k, j) \in [n_2] \times [n_3]$ and $(i, j) \in [n_1] \times [n_3]$ if it is involved in an exact triangle $(i, k, j)$.
\end{definition}

It is long-known that the APSP and Min-Plus Product problems have the same complexity up to logarithmic factors. Vassilevska~W.\ and Williams~\cite{VassilevskaW18} further proved that Min-Plus Product reduces to All-Edges Exact Triangle by a scaling trick that only incurs an overhead logarithmic in the universe size. 

\subsection{Parameters} \label{sec:prelims:sec:params}
We study the Min-Plus Product and Exact Triangle problems constrained by various parameters, often by multiple parameters at the same time. To simplify notation, we propose a systematic naming scheme. For any Min-Plus Product instance $(A, B)$ or any Exact Triangle instance~$(A, B, C)$, consider the following six parameters:

\begin{parameter}[$r$-Rank]
The select-plus rank of $A$ or $B$ (or $C$) is at most $r$. We assume that as part of the input we additionally receive a rank-$r$ decomposition of that respective matrix.
\end{parameter}

\begin{parameter}[$d$-Slice-Uniform]
The rows or columns in $A$ or $B$ (or $C$) have at most $d$ distinct non-$\bot$ entries each.\footnote{In~\cite{AbboudFJVX25} this constraint was instead called \emph{$d$-weights}.}
\end{parameter}

\begin{parameter}[$D$-Uniform]
The total number of distinct non-$\bot$ entries in all matrices $A$ and $B$ (and $C$) is at most $D$.
\end{parameter}

\begin{parameter}[$\rho$-Regular]
In every row and column of $A$ and $B$ (and $C$), no non-$\bot$ entry appears in more than a $\rho$-fraction of the respective row or column.
\end{parameter}

\begin{parameter}[$K$-Doubling]
Let $X$ be the set of integer entries in $A$ and $B$ (and $C$). Then $X$ has doubling $|X + X| \leq K |X|$.
\end{parameter}

\begin{parameter}[$u$-Universe]
All non-$\bot$ entries in the matrices $A$ and $B$ (and $C$) are from $\set{0, \dots, u}$.
\end{parameter}

For instance, the \emph{$r$-rank Exact Triangle} problem is to solve an Exact Triangle instance $(A, B, C)$ for which we additionally receive access to a select-plus rank-$r$ decomposition of $A$, $B$ or $C$. We also introduce the following notation. We write
\begin{equation*}
    \MinPlus(n_1, n_2, n_3 \mid \mathit{constraint}; \dots; \mathit{constraint})
\end{equation*}
and
\begin{equation*}
    \ExactTri(n_1, n_2, n_3 \mid \mathit{constraint}; \dots; \mathit{constraint})
\end{equation*}
to denote the best-possible running times of the Min-Plus Product and Exact Triangle problems, respectively, restricted to instances that satisfy the list of constraints. Each $\mathit{constraint}$ takes two possible forms. The first is $\mathit{parameter} \leq \mathit{value}$, for any of the parameters defined above, to indicate that this parameter is bounded by $\mathit{value}$. E.g., $\MinPlus(n, n, n \mid d \leq n^{0.9})$ refers to the running time of the $n^{0.9}$-slice-uniform Min-Plus Product problem (where in one of the two matrices the rows or columns contain at most $n^{0.9}$ distinct entries each). The second form is a free parameter. E.g., the fact that Min-Plus Product with entries in~$\set{0, \dots, u}$ can be solved in time $\tilde O(n^\omega u)$~\cite{AlonGM97} can be expressed concisely as $\MinPlus(n, n, n \mid u) = \tilde O(n^\omega u)$. As a more complicated example consider
\begin{equation*}
    \ExactTri(n, n, n \mid D \leq n^{0.9}; \rho \leq 1/D; K).
\end{equation*}
This expresses the running time of the Exact Triangle problem, restricted to matrices $A, B, C$ with at most~$D$ distinct non-$\bot$ entries such that $D \leq n^{0.9}$, where each entry in $A, B$ and $C$ appears in at most a $1/D$-fraction of its row and column, and where the set of integer entries in the three matrices has doubling at most $K$. (We will indeed encounter this problem in \cref{lem:exact-tri-unif-reg-to-low-doubling}.)

\begin{remark}
Note that many of these parameters are trivially related to each other. For instance, each $u$-universe instance is trivially $u$-uniform, and each $D$-uniform instance is trivially $D$-slice-uniform. It is also not hard to see that each $d$-slice-uniform instance is trivially $d$-rank. In \cref{sec:exact-tri-low-rank} we will essentially show the opposites of some of these trivial directions, thereby proving that many of these parameters are \emph{equivalent} in terms of subcubic-time algorithms.
\end{remark}

\begin{remark}
We will usually assume that all entries are polynomially bounded, so we always implicitly add the constraint $u \leq (n_1 n_2 n_3)^{O(1)}$. Having said that, all reductions here also apply to larger universes~$u$ with an overhead of \smash{$(\log u)^{O(1)}$}.\footnote{There is a technical detail: Somewhere along our chain of reductions, in \cref{lem:pop-sum-decomp}, we rely on a deterministic algorithm with a running time overhead of \smash{$2^{\tilde O(\sqrt{\log u})} = u^{o(1)}$}. However, this deterministic algorithm can be replaced by a randomized algorithm with only polylogarithmic overhead.}
\end{remark}

\begin{remark}
It is straightforward to verify that Vassilevska~W.\ and Williams'~\cite{VassilevskaW18} reduction from Min-Plus Product to Exact Triangle, $\MinPlus(n_1, n_2, n_3 \mid u) = O(\ExactTri(n_1, n_2, n_3 \mid u) \cdot \log u)$, also preserves the parameters $r$, $d$, and $D$.
\end{remark}

\subsection{Potential Adjustments} \label{sec:prelims:sec:pot}
It turns out to be useful to define a particularly constrained type of fine-grained reductions. We often want to transform an Exact Triangle instance $(A, B, C)$ into one or more instances that are equivalent in the following sense: Each exact triangle is preserved, and for all other triangles $(i, k, j)$ we do not change the value $A[i, k] + B[k, j] - C[i, j]$. Informally, this requirement is exactly what allows us to simulate reductions for Min-Plus Product (see \cref{lem:doubling-reduct}). The following definitions make these requirements formal:

\begin{definition}[Potential Adjustment]
Let $(A, B, C), (A', B', C')$ be Exact Triangle instances of the same size $n_1 \times n_2 \times n_3$. We say that $(A', B', C')$ is a \emph{potential adjustment} of $(A, B, C)$ if there are \emph{potential functions} $u \in \Int^{n_1}, v \in \Int^{n_2}, w \in \Int^{n_3}$ so that
\begin{equation*}
    \makeatletter\expandafter\def\expandafter\@arraycr\expandafter{\@arraycr\noalign{\vskip\jot}}\makeatother%
    \begin{array}{@{}r@{\;}c@{\;}c@{\:}c@{\:}c@{\:}c@{\:}c@{\;\;\;}c@{\;\;\;}r@{\;}c@{\;}c@{\;\;\;}c@{\ }c@{\;}c@{\;}c@{\:}c@{\:}c@{}}
        A'[i, k] & = & A[i, k] & + & u[i] & + & v[k] & \text{or} & A'[i, k] & = & \bot & \text{for all} & (i, k) & \in & [n_1] & \times & [n_2], \\
        B'[k, j] & = & B[k, j] & - & v[k] & + & w[j] & \text{or} & B'[k, j] & = & \bot & \text{for all} & (k, j) & \in & [n_2] & \times & [n_3], \\
        C'[i, j] & = & C[i, j] & + & u[i] & + & w[j] & \text{or} & C'[i, j] & = & \bot & \text{for all} & (i, j) & \in & [n_1] & \times & [n_3].
    \end{array}
\end{equation*}
\end{definition}

That is, in a potential adjustment, one is allowed to ``delete edges'' and to add ``node weights'' in such a way that $A[i, k] + B[k, j] - C[i, j]$ remains unchanged for all surviving triangles. This use of potential adjustments is akin to that in several shortest-path problems, e.g., the well-known Johnson trick~\cite{Johnson77}.

\begin{definition}[Potential-Adjusting Reduction] \label{def:pot-reduct}
A \emph{potential-adjusting $1$-to-$N$ reduction in time $T$} is an algorithm that runs in time $T$, takes as input a (possibly constrained) Exact Triangle instance $(A, B, C)$ of size $n_1 \times n_2 \times n_3$, and returns as output a pair $(\mathcal I, \mathcal T)$ where
\begin{itemize}
    \item $\mathcal I$ is a set of $N$ (possibly constrained) potential adjustments of $(A, B, C)$, and
    \item $\mathcal T \subseteq [n_1] \times [n_2] \times [n_3]$ with size $|\mathcal T| \leq T$,
\end{itemize}
so that, for each exact triangle $(i, k, j)$ in $(A, B, C)$, $(i, k, j)$ appears as an (exact) triangle in an instance in~$\mathcal I$, or $(i, k, j) \in \mathcal T$.
\end{definition}

\begin{remark}
An even stricter notion would be to enforce that each exact triangle appears \emph{either} in $\mathcal T$ or in \emph{exactly one} instance in $\mathcal I$. This would directly imply reductions between counting versions of the problems as well. It can be checked that all of our reductions can actually achieve this stricter constraint (with some minor modifications here and there)---however, since the reductions are already quite involved we have decided to stick to the simpler \cref{def:pot-reduct}.
\end{remark}

\subsection{Additive Combinatorics} \label{sec:prelims:sec:add-comb}
Let $X, Y \subseteq \Int$ be two finite sets. We write $X + Y = \set{x + y : x \in X, y \in Y}$ to denote their \emph{sumset}. We also set $X - Y = \set{x - y : x \in X, y \in Y}$ and $-X = \set{-x : x \in X}$, and for an integer $x$ we write $X + x = X + \set{x}$. Let $r_{X+Y}(z)$ denote the \emph{multiplicity} of $z$ in the sumset $X + Y$, i.e., the number of representations $z = x + y$ with $x \in X, y \in Y$, or equivalently, $r_{X+Y}(z) = |X \cap (z - Y)|$. We occasionally rely on the following two basic facts from additive combinatorics; see e.g.\ the textbooks~\cite{TaoV06,Zhao23}.

\begin{lemma}[Plünnecke-Ruzsa Inequality~\cite{Plünnecke70,Ruzsa89}] \label{lem:pluennecke-ruzsa}
Let $X, Y$ be finite sets with $|X + Y| \leq K |X|$. Then, letting~$n Y$ denote the $n$-fold sumset $Y + \dots + Y$, it holds for all nonnegative integers $n, m$ that:
\begin{equation*}
    |n Y - m Y| \leq K^{n+m} |X|.
\end{equation*}
\end{lemma}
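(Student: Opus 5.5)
The plan is to follow Petridis's short proof of the Plünnecke--Ruzsa inequality. It rests on two ingredients: a minimal-ratio subset argument that controls the iterated sumsets $X' + nY$, and Ruzsa's triangle inequality to pass from sums to differences. The degenerate case $n = m = 0$ is trivial, since $|0Y - 0Y| = |\set{0}| = 1 \leq |X|$ for nonempty $X$. So assume $n + m \geq 1$, and note that $K \geq 1$ because $|X + Y| \geq |X|$.

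\textbf{Step 1 (minimal-ratio subset).} Choose a nonempty $X' \subseteq X$ minimizing the ratio $|X' + Y| / |X'|$, and call the minimum $K'$. Then $K' \leq K$, and $|Z + Y| \geq K' |Z|$ for every nonempty $Z \subseteq X$.

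\textbf{Step 2 (Petridis' lemma).} Show that $|X' + Y + Z| \leq K' |X' + Z|$ for every finite set $Z$, by induction on $|Z|$.
\begin{itemize}
\item The base case $|Z| = 1$ is the definition of $K'$, because translation preserves sizes.
\item For the inductive step $Z \cup \set{z}$, set $S = \set{x \in X' : x + z + Y \subseteq X' + Y + Z}$. Then
\begin{equation*}
X' + Y + (Z \cup \set{z}) \subseteq (X' + Y + Z) \cup \bigl((X' + Y + z) \setminus (S + Y + z)\bigr),
\end{equation*}
so its size is at most $K'|X' + Z| + K'|X'| - K'|S|$. This uses the induction hypothesis and $|S + Y| \geq K'|S|$ from Step 1 (trivially true if $S = \emptyset$).
\item On the other side, $X' + (Z \cup \set{z})$ is the disjoint union of $X' + Z$ and $(X' + z) \setminus (T + z)$, where $T = \set{x \in X' : x + z \in X' + Z}$. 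Hence $|X' + (Z \cup \set{z})| = |X' + Z| + |X'| - |T|$.
\item Every $x \in T$ lies in $S$, because $x + z + Y \subseteq X' + Z + Y$. So $|S| \geq |T|$, which closes the induction.
\end{itemize}
This bookkeeping with the sets $S$ and $T$ is the one genuinely delicate step. Everything else is routine.

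\textbf{Step 3 (iterate).} Apply Step 2 with $Z = (k-1)Y$ for $k = 1, \dots, n$ to get $|X' + nY| \leq K'^n |X'|$, and likewise $|X' + mY| \leq K'^m |X'|$.

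\textbf{Step 4 (Ruzsa triangle inequality).} Use $|X'| \cdot |nY - mY| \leq |X' + nY| \cdot |X' + mY|$. To prove it, fix for each $d \in nY - mY$ one representation $d = a - b$ with $a \in nY$ and $b \in mY$. The map $(x, d) \mapsto (x + a, x + b)$ is injective, since $d$ is recovered as the difference of the two coordinates and then $x$ is recovered from the first. Combining this with Step 3 gives
\begin{equation*}
|nY - mY| \leq K'^{n+m} |X'| \leq K^{n+m} |X|,
\end{equation*}
as claimed.
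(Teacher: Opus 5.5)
Your proof is correct. I checked each step, including the delicate bookkeeping in Step~2 (the containment $T\subseteq S$, the exact count $|X'+(Z\cup\{z\})|=|X'+Z|+|X'|-|T|$, and the bound via $|S+Y|\ge K'|S|$) and the injectivity of $(x,d)\mapsto(x+a,x+b)$ in Step~4.

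The paper gives no proof of this lemma. It imports it as a classical black box from Pl\"unnecke and Ruzsa, pointing to the textbooks of Tao--Vu and Zhao, and uses it only in the special case $Y=X$, in the forms $|nX|\le K^n|X|$ and $|X-X|\le K^2|X|$. The classical proof behind that citation goes through Pl\"unnecke's magnification inequalities for layered graphs and Menger's theorem. Your route, Petridis's minimal-ratio argument followed by a Ruzsa-type injection, is the modern self-contained alternative. It also yields a stronger intermediate fact: one fixed nonempty $X'\subseteq X$ satisfies $|X'+kY|\le K^k|X'|$ for every $k$ simultaneously.

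In Step~4 you prove the difference form $|X'|\,|A-B|\le|X'+A|\,|X'+B|$ directly, rather than appealing to the paper's Lemma~\ref{lem:ruzsa-triangle}, and that is the right choice. That lemma is stated in the sum form $|Z|\,|X+Y|\le|X+Z|\,|Y+Z|$. The sum form does not follow from a direct injection of this kind and is normally deduced from Pl\"unnecke-type results, for instance from exactly the inequality of your Step~2. Proving the difference form directly keeps your argument free of circularity.

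Only cosmetic remarks remain:
\begin{itemize}
\item The statement implicitly requires $X\neq\emptyset$, which you use in Step~1.
\item Your claim $K\ge 1$ and the nonemptiness of $Z=(k-1)Y$ in Step~3 need $Y\neq\emptyset$, but the case $Y=\emptyset$ is trivial.
\item The separate treatment of $n=m=0$ is unnecessary, because Step~4 with $0Y=\{0\}$ already covers it.
\end{itemize}
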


\begin{lemma}[Ruzsa's Triangle Inequality~\cite{Ruzsa96}] \label{lem:ruzsa-triangle}
For all finite sets $X, Y, Z$:
\begin{equation*}
    |X + Y| \leq \frac{|X + Z| \, |Y + Z|}{|Z|}.
\end{equation*}
\end{lemma}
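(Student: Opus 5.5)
The statement is the ``all-plus'' form of Ruzsa's triangle inequality. The classical injection argument only proves the difference version $|X - Y|\,|Z| \leq |X - Z|\,|Z - Y|$. That argument maps $(z, d) \mapsto (x - z, z - y)$, where $d = x - y$ is written via a fixed representation. It does not transfer to sums, because we cannot recover $d$ from $x + z$ and $y + z$. So I would instead derive the sum version from a Plünnecke-type statement. Concretely, I plan to use Petridis' lemma rather than \cref{lem:pluennecke-ruzsa} as a black box. The reason is that \cref{lem:pluennecke-ruzsa} bounds iterated sumsets of a single set, whereas here we need an inequality mixing three sets.

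\textbf{Step 1 (choice of a minimizing subset).} Among all nonempty $Z' \subseteq Z$, choose one minimizing the ratio $|X + Z'|/|Z'|$, and call the minimum $K$. Taking $Z' = Z$ shows $K \leq |X + Z|/|Z|$.

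\textbf{Step 2 (Petridis' lemma).} I will show that for every finite set $C$,
\begin{equation*}
    |X + Z' + C| \leq K\,|Z' + C|.
\end{equation*}
The proof is by induction on $|C|$. The case $|C| = 1$ is exactly the definition of $K$.

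For the inductive step, write $C' = C \cup \{c\}$. Let $W = \{z \in Z' : X + z + c \subseteq X + Z' + C\}$. Then
\begin{equation*}
    X + Z' + C' = (X + Z' + C) \cup \bigl((X + Z' + c) \setminus (X + W + c)\bigr).
\end{equation*}
This gives $|X + Z' + C'| \leq |X + Z' + C| + |X + Z'| - |X + W|$. Minimality gives $|X + W| \geq K|W|$, which holds trivially when $W = \emptyset$. Together with $|X + Z'| = K|Z'|$, we get
\begin{equation*}
    |X + Z' + C'| \leq K\bigl(|Z' + C| + |Z'| - |W|\bigr).
\end{equation*}
On the other side, $Z' + C'$ is the disjoint union of $Z' + C$ and $(Z' + c) \setminus (Z' + C)$. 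Every $z$ with $z + c \in Z' + C$ lies in $W$, since then $X + z + c \subseteq X + Z' + C$. Hence the second part has size at least $|Z'| - |W|$, and the induction closes.

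This counting step is the only delicate part of the proof, and I expect it to be the main obstacle. One must define $W$ exactly as above so that both the union bound and the disjointness bound hold.

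\textbf{Step 3 (conclusion).} Apply Step 2 with $C = Y$:
\begin{equation*}
    |X + Y| \leq |X + Z' + Y| \leq K\,|Z' + Y| \leq \frac{|X + Z|}{|Z|}\,|Y + Z|.
\end{equation*}
The first inequality is a translate argument: $X + Y + z \subseteq X + Z' + Y$ for any fixed $z \in Z'$. The last inequality uses $Z' \subseteq Z$ together with the bound on $K$ from Step 1. Multiplying through by $|Z|$ gives the claimed inequality.
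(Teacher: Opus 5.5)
Your proof is correct and complete. The paper gives no proof of this lemma; it imports it as a known fact from~\cite{Ruzsa96} and the textbooks, so there is no in-paper argument to match. What you have written is Petridis' proof of the sum form, and each step checks out. In particular, your set $W$ does exactly the double duty needed:
\begin{itemize}
\item the inclusion $X+W+c \subseteq X+Z'+C$ gives $|X+Z'+C'| \le |X+Z'+C| + |X+Z'| - |X+W|$;
\item the inclusion $\{z \in Z' : z+c \in Z'+C\} \subseteq W$ gives $|Z'+C'| \ge |Z'+C| + |Z'| - |W|$;
\item since $W \subseteq Z' \subseteq Z$, the minimality from Step~1 applies to $W$.
\end{itemize}

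You are also right on two further points. The injection argument only yields the difference form. And \cref{lem:pluennecke-ruzsa} as stated, which only controls $|nY - mY|$, cannot be plugged in directly. The all-plus form is genuinely what the paper uses, e.g.\ in \cref{lem:sumset-doubling} and in the bound $|X_\ell + X_\ell| \le |X_\ell+Y_\ell|^2/|Y_\ell|$ in \cref{lem:exact-tri-unif-reg-to-low-doubling}.

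The classical textbook route is different. It uses Pl\"unnecke's graph-theoretic inequality for different summands to get a nonempty $Z' \subseteq Z$ with $|Z'+X+Y| \le \frac{|X+Z|}{|Z|}\cdot\frac{|Y+Z|}{|Z|}\,|Z'|$. Compared with that, your argument is elementary and self-contained. Your Step~2 (iterated, and combined with the difference-form triangle inequality) also recovers \cref{lem:pluennecke-ruzsa}. So it would make both cited additive-combinatorics facts in the paper self-contained. The only implicit hypothesis is $Z \neq \emptyset$, which the statement needs anyway.
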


We also need the following algorithm to efficiently compute sumsets:

\begin{lemma}[Sparse Convolution~\cite{BringmannFN22}] \label{lem:sparse-conv}
There is an algorithm that, given sets $X, Y \subseteq \set{-u, \dots, u}$, computes their sumset $X + Y$ along with the multiplicities $r_{X+Y}(\cdot)$ in deterministic time $|X + Y| \cdot (\log u)^{O(1)}$.
\end{lemma}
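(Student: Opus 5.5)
This is the deterministic output-sensitive sparse convolution algorithm of~\cite{BringmannFN22}, so in the paper it is simply invoked. If I had to reprove it, I would use the standard \emph{hash, convolve, peel} scheme. The two ingredients are a small-universe hashing step and a way to recover individual sums from hashed buckets.

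Let $f = \mathbb{1}_X$ and $g = \mathbb{1}_Y$, so that $f \star g = r_{X+Y}$ and the support of $f \star g$ is exactly $X + Y$. The target size $s = |X+Y|$ is unknown, so I guess $s = 2^t$ for $t = 0, 1, 2, \dots$. For a modulus $p \approx s \cdot (\log u)^{O(1)}$, reduce $X$ and $Y$ modulo $p$ and compute three cyclic convolutions of length $p$ by FFT in time $\tilde O(p)$:
\begin{itemize}
    \item $(f \star g) \bmod p$,
    \item $(x f) \star g + f \star (y g)$, i.e.\ the sum of $z$ over all representations landing in a bucket,
    \item the analogous second-moment convolution.
\end{itemize}
A bucket $b$ containing a single sum $z \in X+Y$ is recognized by the identity $\mathrm{mom}_1(b)^2 = \mathrm{mom}_0(b) \cdot \mathrm{mom}_2(b)$. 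In that case $z = \mathrm{mom}_1(b) / \mathrm{mom}_0(b)$ and $r_{X+Y}(z) = \mathrm{mom}_0(b)$. Sums recovered this way are subtracted from the residual vector, and the procedure repeats with fresh moduli. If a constant fraction of the remaining sums become isolated in each round, then after $O(\log s)$ rounds everything is recovered, for a total cost of $s \cdot (\log u)^{O(1)}$.

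Subtracting recovered sums requires evaluating their contribution under the next modulus. This is just a sparse vector of known entries, so it costs no more than the output size. The guessing of $s$ terminates once the final residual is identically zero. I would certify this by one additional hashed check at several moduli, or more simply by verifying $\sum_z r(z) = |X||Y|$ and $\sum_z z \cdot r(z) = |Y| \sum X + |X| \sum Y$ together with the higher moments. A wrong guess is then detected, and doubling $s$ costs only a constant factor overall.

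The main obstacle is \emph{determinism}. With a random prime $p$ from a range of size about $s \cdot \log u$, two fixed distinct sums $z \neq z'$ collide only if $p \mid z - z'$. There are at most $O(\log u)$ such primes, so a random $p$ isolates most sums with good probability. To derandomize, I would first use a small set of primes to pass to a coarse structure: recursively compute $(X \bmod q) + (Y \bmod q)$-type ``shadow'' sumsets for a short chain of moduli, say by dividing coordinates by $2^i$ (scaling from coarse to fine). The coarse sumset then gives an explicit superset of the candidate sums, of size $O(s)$, at each scale. Given such a superset $S$, a good prime can be found deterministically. One candidate approach is to test candidate primes by counting collisions within $S$ via the method of conditional expectations. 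Another is to exploit the fact that $\prod_{z \neq z' \in S} (z - z')$ has few prime factors, so scanning $(\log u)^{O(1)}$ primes is enough to find one that isolates a constant fraction.

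Making this superset-driven search cost only $|X+Y| \cdot (\log u)^{O(1)}$ rather than $|S|^2$ is the delicate step. Here I would follow~\cite{BringmannFN22}: perform the collision counting itself with hashed convolutions over the structured candidate set, rather than over all pairs.
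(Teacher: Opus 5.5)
The paper gives no proof of this lemma; it is quoted from \cite{BringmannFN22}. Your opening sentence therefore coincides with what the paper does, and relying on that citation is legitimate. The sketch you add, however, does not stand on its own as a proof.

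The routine parts are fine. The scaling recursion gives an explicit superset of size $O(|X+Y|)$ at every scale, since $\lfloor x/2^i\rfloor+\lfloor y/2^i\rfloor\in\{\lfloor z/2^i\rfloor-1,\lfloor z/2^i\rfloor\}$ for $z=x+y$. Once a superset $S$ is known, any element isolated within $S$ modulo $p$ can be read off directly from $(f\star g)\bmod p$, so the moment test is not even needed. What remains is finding, deterministically and in time $|S|(\log u)^{O(1)}$, a modulus that isolates a constant fraction of $S$. That is the entire content of the \emph{deterministic} bound, and it is exactly the step you hand back to \cite{BringmannFN22}.

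The two mechanisms you offer for that step do not work as stated.
\begin{itemize}
\item \emph{Few prime factors.} The product $\prod_{z\neq z'\in S}(z-z')$ does not have few prime factors in the relevant range $[M,2M]$ with $M\approx|S|(\log u)^{O(1)}$. A typical $S$ already has about $|S|^2/M\geq 1$ colliding pairs modulo essentially \emph{every} such prime. Counted with multiplicity, the product can have up to $|S|^2\log(2u)/\log M$ prime factors there.
\item \emph{What pair counting actually gives.} Each pair collides under at most $\log(2u)/\log M$ primes, and each bad prime accounts for $\Omega(|S|)$ colliding pairs. Hence at most $O(|S|\log u/\log M)$ primes of the range are bad. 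This gives no control over any fixed list of $(\log u)^{O(1)}$ candidates. Scanning until the averaging bound guarantees a good prime costs $\Omega(|S|)$ per candidate, i.e.\ roughly $|S|^2$ overall.
\item \emph{Conditional expectations.} This method does not apply directly to a uniformly random prime, which has no product structure. The usual workaround builds the modulus as a product of small primes fixed one at a time. Each fixed prime multiplies the modulus by about $L$ while reducing collisions only by about $L/\log u$, so you lose roughly a $\log u$ factor per step. Balancing the parameters yields an overhead of $2^{O(\sqrt{\log|S|\,\log\log u})}$, which is the pre-\cite{BringmannFN22} bound of Chan and Lewenstein~\cite{ChanL15}, not $(\log u)^{O(1)}$.
\end{itemize}
So the near-linear deterministic guarantee rests entirely on the derandomization in \cite{BringmannFN22}. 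Your proposal amounts to an outline of the randomized hash-and-peel scheme plus that citation, not an independent proof.
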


\subsection{Machine Model} \label{sec:prelims:sec:machine}
Throughout we consider the Word RAM model with logarithmic word size $\Theta(\log n)$, where $n$ is the input size. Moreover, for all graph and matrix problems we always only consider weights that are polynomially (i.e., $n^{O(1)}$) bounded and can thus be stored in $O(1)$ machine words. As is the standard in fine-grained complexity, we allow randomized algorithms (that succeed with high probability $1 - n^{-c}$ for an arbitrarily large constant $c$) in hypotheses and reductions.
\section{Select-Plus Rank} \label{sec:rank}
In this section, we recap the central definition of select-plus rank, establish some basic facts (\cref{sec:rank:sec:basics}), and develop a refined notion of rank decompositions that will be crucial later on (\cref{sec:rank:sec:decomp-reg}).

\defrank*

We reiterate that this alternative rank notion is reminiscent of the standard matrix rank, which can be characterized as the smallest number $r$ such that $A$ can be written as $U V$ for matrices $U$ of size $n \times r$ and~$V$ of size~\makebox{$r \times m$}, respectively. In the following we will often refer to the select-plus rank simply as the ``rank'' of a matrix.

\subsection{Basic Facts} \label{sec:rank:sec:basics}
We gather some basic facts concerning the select-plus rank. The first two involve trivial upper bounds on the rank based on the size and maximum entry.

\begin{fact} \label{fac:rank-triv-size}
$r(A) \leq \min\set{n, m}$ for all matrices $A \in (\Int \cup \set{\bot})^{n \times m}$.
\end{fact}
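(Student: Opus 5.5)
The plan is to give two explicit decompositions, one witnessing $r(A) \leq m$ and one witnessing $r(A) \leq n$, and then take the better of the two. The underlying idea is that each entry can be "selected" by letting one factor of the decomposition carry the actual value while the other contributes zero.

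For the bound $r(A) \leq m$, set $r = m$, $V = 0 \in \Int^{m \times m}$, and define $U \in \Int^{n \times m}$ by $U[i, k] = A[i, k]$ when $A[i, k] \neq \bot$ and $U[i, k] = 0$ otherwise. For any $(i, j)$ with $A[i, j] \neq \bot$ we get
\begin{equation*}
    A[i, j] = U[i, j] + V[j, j] \in \set{U[i, 1] + V[1, j], \dots, U[i, m] + V[m, j]}.
\end{equation*}
If instead $A[i, j] = \bot$, the membership condition of \cref{def:rank} holds trivially because $\bot$ belongs to the allowed set.

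The bound $r(A) \leq n$ follows from the symmetric construction. Take $r = n$, let $U = 0 \in \Int^{n \times n}$, and let $V[k, j] = A[k, j]$, again replacing $\bot$ entries by $0$. Then every non-$\bot$ entry satisfies $A[i, j] = U[i, i] + V[i, j]$, so it appears in the required set.

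Combining the two constructions gives $r(A) \leq \min\set{n, m}$. There is no real obstacle here. The only subtlety is the treatment of $\bot$: since $U$ and $V$ must be integer matrices, the $\bot$ entries have to be replaced by some integer (we use $0$), and this is harmless because $\bot$ positions impose no constraint under \cref{def:rank}. The degenerate case $\min\set{n, m} = 0$ is also covered, since the empty decomposition works.
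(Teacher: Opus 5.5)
Your proof is correct and is the paper's argument: the paper takes $U$ all-zero and $V = A$ (giving $r(A) \leq n$), or vice versa (giving $r(A) \leq m$). Your replacement of $\bot$ entries by $0$ is a detail the paper leaves implicit, and it is needed because $U$ and $V$ must be integer matrices.
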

\begin{proof}
Take $U$ to be all-zero and $V = A$, or vice versa.
\end{proof}

\begin{fact} \label{fac:rank-triv-univ}
$r(A) \leq u$ for all matrices $A \in ([u] \cup \set{\bot})^{n \times m}$.
\end{fact}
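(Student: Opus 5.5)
The plan is to write down an explicit rank-$u$ decomposition, in the same spirit as the proof of \cref{fac:rank-triv-size}. The idea is to let the row factor carry nothing and let the column factor enumerate all possible values. Concretely, I take $U \in \Int^{n \times u}$ to be the all-zero matrix. I take $V \in \Int^{u \times m}$ to be the matrix with $V[\ell, j] = \ell$ for all $\ell \in [u]$ and $j \in [m]$.

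With this choice, the candidate set for an entry $(i, j)$ in \cref{def:rank} is
\begin{equation*}
    \set{U[i, 1] + V[1, j], \dots, U[i, u] + V[u, j]} \cup \set{\bot} = \set{1, \dots, u} \cup \set{\bot} = [u] \cup \set{\bot}.
\end{equation*}
By assumption every entry satisfies $A[i, j] \in [u] \cup \set{\bot}$, so the membership condition of \cref{def:rank} holds for all $(i, j) \in [n] \times [m]$. Hence $r(A) \leq u$.

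There is essentially no obstacle; the only care needed is the indexing convention. The fact is stated for entries in $[u] = \set{1, \dots, u}$, which matches the choice $V[\ell, j] = \ell$ exactly. For the $\set{0, \dots, u}$ convention used elsewhere in the paper, one would instead set $V[\ell, j] = \ell - 1$ and obtain rank at most $u + 1$. This bound is presumably what is meant when the paper applies the fact to a remainder matrix with entries bounded by $O(q)$.
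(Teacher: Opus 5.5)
Your proposal is correct and is exactly the paper's proof: take $U$ all-zero and $V[\ell, j] = \ell$, so every value in $[u]$ is available as a candidate in every entry. Your remark about the $\set{0, \dots, u}$ convention (giving rank at most $u+1$) is also accurate; it only affects the bound by an additive constant where the paper applies the fact to the remainder matrix.
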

\begin{proof}
Take $U$ to be all-zero and define $V$ by $V[k, j] = k$.
\end{proof}

\begin{fact}[Submultiplicativity] \label{fac:rank-submult}
$r(A_1 + A_2) \leq r(A_1) \cdot r(A_2)$ for all matrices $A_1, A_2 \in (\Int \cup \set{\bot})^{n \times m}$.
\end{fact}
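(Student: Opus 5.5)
We prove the bound by taking the ``product'' of the two decompositions, pairing every part of the first with every part of the second. Let $r_1 = r(A_1)$ and $r_2 = r(A_2)$, and fix decompositions $U_1 \in \Int^{n \times r_1}, V_1 \in \Int^{r_1 \times m}$ for $A_1$ and $U_2 \in \Int^{n \times r_2}, V_2 \in \Int^{r_2 \times m}$ for $A_2$ as in \cref{def:rank}. We index the $r_1 r_2$ columns of a new matrix $U \in \Int^{n \times r_1 r_2}$ and the rows of a new matrix $V \in \Int^{r_1 r_2 \times m}$ by pairs $(\ell_1, \ell_2) \in [r_1] \times [r_2]$, and set
\begin{equation*}
    U[i, (\ell_1, \ell_2)] = U_1[i, \ell_1] + U_2[i, \ell_2], \qquad V[(\ell_1, \ell_2), j] = V_1[\ell_1, j] + V_2[\ell_2, j].
\end{equation*}

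Next we check the defining condition entrywise. Fix $(i, j)$. If $A_1[i, j] = \bot$ or $A_2[i, j] = \bot$, then by the convention $x + \bot = \bot$ we have $(A_1 + A_2)[i, j] = \bot$, which is always allowed. Otherwise there are indices $\ell_1$ and $\ell_2$ with $A_1[i, j] = U_1[i, \ell_1] + V_1[\ell_1, j]$ and $A_2[i, j] = U_2[i, \ell_2] + V_2[\ell_2, j]$. Summing these, $(A_1 + A_2)[i, j] = U[i, (\ell_1, \ell_2)] + V[(\ell_1, \ell_2), j]$, which lies in the allowed set. Hence $r(A_1 + A_2) \le r_1 r_2$.

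The only degenerate case is $r_1 = 0$ (or $r_2 = 0$). Then $A_1$ is entirely $\bot$, so $A_1 + A_2$ is entirely $\bot$ and has rank $0$, consistent with the bound. There is no real obstacle here; the only care needed is the $\bot$ convention and this rank-zero corner case.
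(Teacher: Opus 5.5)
Your proposal is correct and is essentially the paper's proof: both index the new decomposition by pairs $(\ell_1, \ell_2) \in [r(A_1)] \times [r(A_2)]$, set $U$ and $V$ to the entrywise sums of the two original decompositions, and check that each non-$\bot$ entry of $A_1 + A_2$ equals $U[i, (\ell_1, \ell_2)] + V[(\ell_1, \ell_2), j]$. Your explicit handling of the $\bot$ entries and the rank-zero case is fine; the general construction already covers the latter.
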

\begin{proof}
Let $U_1, V_1$ be matrices witnessing $r(A_1)$ and let $U_2, V_2$ be matrices witnessing $r(A_2)$. Consider the following matrices $U, V$, where we index the rows of $U$ and columns of $V$ by pairs $(k_1, k_2) \in [r(A_1)] \times [r(A_2)]$:
\begin{align*}
    U[i, (k_1, k_2)] &= U_1[i, k_1] + U_2[i, k_2], \\
    V[(k_1, k_2), j] &= V_1[k_1, j] + V_2[k_2, j].
\end{align*}
Each entry in $A_1 + A_2$ can be expressed as
\begin{equation*}
    A_1[i, j] + A_2[i, j] = U_1[i, k_1] + V_1[k_1, j] + U_2[i, k_2] + V_2[k_2, j] = U[i, (k_1, k_2)] + V[(k_1, k_2), j],
\end{equation*}
for some pair $(k_1, k_2) \in [r(A_1)] \times [r(A_2)]$. Thus, $U, V$ witnesses that $r(A_1 + A_2) \leq r(A_1) \cdot r(A_2)$.
\end{proof}

\subsection{Regular Rank Decompositions} \label{sec:rank:sec:decomp-reg}
We will now make the concept of a select-plus rank decomposition formal.

\begin{definition}[Rank Decomposition] \label{def:rank-decomp}
Let $A, U, V$ and $r$ be as in \cref{def:rank}. Let $S \in ([r] \cup \set{\bot})^{n \times m}$ be a matrix such that
\begin{equation*}
    A[i, j] =
    \begin{cases}
        U[i, S[i, j]] + V[S[i, j], j] &\text{if $S[i, j] \neq \bot$,} \\
        \bot &\text{if $S[i, j] = \bot$.}
    \end{cases}
\end{equation*}
Then we call $(U, V, S)$ a \emph{select-plus rank-$r$ decomposition} of $A$.
\end{definition}

Of course, knowing only $A$, $U$, and $V$ it is easy to complete the rank decomposition by constructing $S$ in time $O(n m r)$ (for all our applications this running time is strictly subcubic and hence tolerable). The benefit of \cref{def:rank-decomp} is rather that it is convenient to have access to the ``selector'' matrix $S$.

In a later key step, it will be crucial that the given rank decomposition satisfies an additional structural property, namely that it is \emph{row-} or \emph{column-regular} defined as follows.

\begin{definition}[Regular Rank Decomposition] \label{def:decomp-reg}
Let $(U, V, S)$ be a rank-$r$ decomposition of some $n \times m$ matrix, and let $R \geq 1$. We call $(U, V, S)$ \emph{$R$-row-regular} if, for all $\ell \in [r]$ and all $i \in [n]$,
\begin{equation*}
    |\set{ j \in [m] : S[i, j] = \ell }| \leq R \cdot \frac{m}{r}.
\end{equation*}
Analogously, we call $(U, V, S)$ \emph{$R$-column-regular} if, for all $\ell \in [r]$ and all $j \in [m]$,
\begin{equation*}
    |\set{ i \in [n] : S[i, j] = \ell }| \leq R \cdot \frac{n}{r}.
\end{equation*}
\end{definition}

The following lemma states that we can transform any rank decomposition into a regular one without loss of generality. Specifically, it states that any matrix $A$ with a given rank-$r$ decomposition can be split into a row-regular part $A_\row$, a column-regular part $A_\col$, and a \emph{small} part $A_\sm$ with strictly smaller rank,~$r/2$. Algorithmically we will simply \emph{recurse} on this small part, which effectively means that we can always treat~$A$ as partly row-regular and partly column-regular.

\begin{lemma}[Regular Rank Decomposition] \label{lem:decomp-reg}
For any matrix $A \in (\Int \cup \set{\bot})^{n \times m}$ with rank $r(A) \leq r$ there is a partition $A = A_\row \sqcup A_\col \sqcup A_\sm$ such that
\begin{itemize}
    \item $A_\row$ has an $O(\log (n m))$-row-regular rank-$r$ decomposition,
    \item $A_\col$ has an $O(\log (n m))$-column-regular rank-$r$ decomposition, and
    \item $A_\sm$ has rank $r(A_\sm) \leq r/2$.
\end{itemize}
Moreover, given $A$ and a corresponding rank-$r$ decomposition we can compute the matrices $A_\row, A_\col, A_\sm$ along with their claimed decompositions in deterministic time $\tilde O(n m r)$.
\end{lemma}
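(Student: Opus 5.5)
The plan is to classify, for each part $\ell \in [r]$, which rows and columns are overloaded, and to put each entry into $A_\row$, $A_\col$ or $A_\sm$ according to whether its row or column is overloaded in its part. Let $(U, V, S)$ be the given rank-$r$ decomposition and set $R = C \log(nm)$ for a large constant $C$. Call row $i$ \emph{heavy for $\ell$} if $|\set{j : S[i,j] = \ell}| > R m / r$, and call column $j$ heavy for $\ell$ analogously with threshold $R n / r$. The split of an entry $(i,j)$ with $S[i,j] = \ell$ is:
\begin{itemize}
    \item it goes to $A_\row$ if row $i$ is not heavy for $\ell$;
    \item otherwise it goes to $A_\col$ if column $j$ is not heavy for $\ell$;
    \item otherwise it goes to $A_\sm$.
\end{itemize}
Restricting $S$ to the respective supports (replacing other entries by $\bot$) and keeping $U, V$ gives decompositions of $A_\row$ and $A_\col$ that are $R$-row-regular and $R$-column-regular by construction. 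All this takes time $O(nmr)$.

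The substance is the rank bound $r(A_\sm) \le r/2$. The key counting fact is that every row $i$ is heavy for at most $r/R$ parts, since the row has only $m$ entries in total and each heavy part consumes more than $Rm/r$ of them. Symmetrically, every column is heavy for at most $r/R$ parts. Write $L_i$ and $L'_j$ for these sets of heavy parts. Every entry of $A_\sm$ at $(i,j)$ lies in some part $\ell \in L_i \cap L'_j$.

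I would build the new decomposition through a coloring $c : [r] \to [r/2]$ of the parts. Define
\begin{equation*}
    U'[i,\gamma] = U[i,\ell] \ \text{ for the $\ell \in L_i$ with } c(\ell) = \gamma, \qquad V'[\gamma,j] = V[\ell,j] \ \text{ for the $\ell \in L'_j$ with } c(\ell) = \gamma,
\end{equation*}
with arbitrary values if no such $\ell$ exists. If $c$ is injective on every $L_i$ and every $L'_j$, then an $A_\sm$-entry of part $\ell$ satisfies $A[i,j] = U'[i,c(\ell)] + V'[c(\ell),j]$, because the same $\ell$ is selected on both sides. This gives rank at most $r/2$ with selector $S'[i,j] = c(S[i,j])$.

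The main obstacle is that such a $c$ need not exist: the conflict graph on parts (where $\ell, \ell'$ conflict if they share a heavy row or column) can have large degree, even though each $L_i$ and $L'_j$ is small relative to $r/2$. What I would try first is to relax the requirement. It suffices that each $A_\sm$-entry of part $\ell$ at $(i,j)$ is \emph{good}, meaning $c(\ell)$ is unique within both $L_i$ and $L'_j$. For uniformly random $c$, an entry is good with probability at least $1 - 2(r/R)/(r/2) = 1 - 4/R$. Hence most $A_\sm$-entries are good, and by the method of conditional expectations a coloring achieving the expectation can be found deterministically in $\tilde O(nmr)$ time.

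The bad entries then have to be absorbed somewhere. My intended fix is to run the classification in $O(\log(nm))$ geometric rounds. Each round peels the good entries off into $A_\sm$ at rank cost $r/2$, while the remaining bad entries keep the original decomposition. The rank of $A_\sm$ stays $r/2$ if the colorings of different rounds are chosen on disjoint parts of the color budget, or more simply if the rounds are merged into a single coloring with the thresholds scaled by the $\log$ factor. This bookkeeping is exactly where the $O(\log(nm))$ regularity loss enters, and it is the step I would have to check most carefully.
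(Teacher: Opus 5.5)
Your split into $A_\row$, $A_\col$, $A_\sm$ and the bound $|L_i|, |L'_j| < r/R$ match the paper exactly, and your coloring is a structured instance of the paper's key step. The paper isolates a \emph{conflict-free covering} lemma. Each $A_\sm$ entry gives the item $x_{i,j} = S[i,j]$ with conflict set $C_{i,j} = (L_i \cup L'_j) \setminus \{S[i,j]\}$, of size at most $s = 2r/R$, and it must lie in some $T \in \mathcal S$ with $T \cap C_{i,j} = \emptyset$. Taking $O(s \log(nm))$ independent random subsets of $[r]$ with rate $1/(2s)$ suffices; the paper derandomizes this in its appendix via a potential function and a dyadic binary search. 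The sets $T$ themselves then serve as the new rank indices. Your color classes are exactly such sets, except that each round's classes partition $[r]$. So every entry has a single candidate index per round and the analysis is one union bound. The paper's version buys a standalone lemma for arbitrary items and conflict sets.

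For your final step, only the disjoint-palette option works. A single coloring with $r/2$ colors cannot in general make every entry good, since, as you noted, the conflict graph may need more colors. A random one leaves about a $4/R$ fraction bad, and reusing $r/2$ colors in every round multiplies the rank by the number of rounds. Concretely, take $k = \lceil \log_2(nm) \rceil + 1$ rounds. In round $t$, use a coloring $c_t : [r] \to [q]$ with $q = \lfloor r/(2k) \rfloor$ on its own palette. Give each remaining $A_\sm$ entry the index $(t, c_t(S[i,j]))$ for the first $t$ in which it is good. The collision probability is at most $(2r/R)/q \le 8k/R \le 1/2$ once $R = 16k = \Theta(\log(nm))$. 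If $r < 2k$, then $r < R$, no row is heavy, and $A_\sm$ is empty. So the number of unassigned entries halves each round, none survive after $k$ rounds, and the total rank is $kq \le r/2$. For the derandomization, apply conditional expectations to the union-bound estimator $\sum_{(i,j)} \sum_{\ell' \in C_{i,j}} \Pr[c_t(\ell') = c_t(S[i,j])]$ over the remaining entries, rather than to the exact probability of badness. This estimator is initially at most half the number of remaining entries. After aggregating it into pair weights on $[r] \times [r]$, coloring the parts one by one costs $O(nmr/R + r^2)$ per round. This is within $\tilde O(nmr)$ since $r \le \min\{n, m\}$.
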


The proof of \cref{lem:decomp-reg} relies on the following covering-type lemma.

\begin{restatable}[Conflict-Free Covering]{lemma}{lemcovering} \label{lem:covering}
Let $x_1, \dots, x_n \in [r]$ and let $C_1, \dots, C_n \subseteq [r]$ be sets of size at most~$s$ such that $x_i \not\in C_i$ for all $i$. We say that a set $S \subseteq [r]$ \emph{covers} item $i$ if $x_i \in S$ and $C_i \cap S = \emptyset$. Then there is a collection $\mathcal S$ of size $|\mathcal S| \leq O(s \log n)$ so that each~\makebox{$i \in [n]$} is covered by some set $S \in \mathcal S$. Moreover, there is a deterministic $\tilde O(n r)$-time algorithm that constructs $\mathcal S$, along with a mapping $[n] \to \mathcal S$ indicating for each item~$i$ by which set it is covered.
\end{restatable}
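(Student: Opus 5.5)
We prove \cref{lem:covering} with a greedy set-cover argument, using a probabilistic step to show that a good set always exists and then derandomizing it by the method of conditional expectations. The existence step is as follows. Pick a random set $S \subseteq [r]$ by including each element independently with probability $p = 1/(s+1)$. A fixed item $i$ is covered if $x_i \in S$ and no element of $C_i$ lies in $S$; since $x_i \notin C_i$ these events are independent, so
\begin{equation*}
    \Pr[S \text{ covers } i] = p(1-p)^{|C_i|} \geq \frac{1}{s+1}\Bigl(1 - \frac{1}{s+1}\Bigr)^{s} \geq \frac{1}{e(s+1)}.
\end{equation*}
Hence, for any set $I$ of still-uncovered items, a random $S$ covers in expectation at least $|I|/(e(s+1))$ of them, so some set $S$ covers at least that many.

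The covering itself proceeds greedily. We repeatedly choose a set $S$ covering at least a $1/(e(s+1))$ fraction of the currently uncovered items, record for each newly covered item that it is covered by $S$, and remove those items. After $t$ rounds at most $n(1 - 1/(e(s+1)))^t$ items remain uncovered. This drops below $1$ once $t = O(s \log n)$, which gives $|\mathcal S| = O(s\log n)$. (The case $s=0$ is trivial: the sets $S=\{x\}$ for each value $x$ that occurs, or just the single set $[r]$, cover everything.)

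The main obstacle is doing this deterministically within $\tilde O(nr)$ total time, since each of the $O(s\log n)$ rounds must not cost $\Theta(nr)$. In each round we run the method of conditional expectations over the $r$ elements of $[r]$. We fix membership of element $1,2,\dots,r$ in turn, each time keeping the choice that does not decrease the conditional expected number of covered uncovered items. The conditional probability for item $i$ is a product over the unfixed elements of $\{x_i\}\cup C_i$. Deciding element $e$ only affects items $i$ with $e \in \{x_i\}\cup C_i$, so we maintain, for each element, the list of items that mention it. Fixing $e$ then costs $O(|\{i : e \in \{x_i\}\cup C_i\}|)$ to evaluate both branches and update the per-item probabilities. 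An item whose probability drops to zero can be discarded for the rest of the round. One round therefore costs $O(r + \sum_{i \in I}(|C_i|+1)) = O(r + |I|(s+1))$.

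Summing over rounds, the item-dependent cost telescopes. The sizes $|I|$ decrease geometrically with ratio $1 - 1/(e(s+1))$, so $\sum_t |I_t|(s+1) = O(n(s+1)^2)$, while the $O(r)$ term contributes $O(rs\log n)$. Both are $\tilde O(nr)$ whenever $s^2 \le r$ and $s \le n$. If $s > n$ (or more generally $s\log n \ge n$), we instead use one set per item, $S_i = \{x_i\}$, which trivially covers $i$ and gives $|\mathcal S| \le n \le O(s\log n)$ in $O(n)$ time. In the remaining regime, where $s^2$ is close to $r$, we can bound the per-round cost more crudely by $O(r + \sum_i |C_i|) \le O(r + n\min(s,r))$. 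Then the total over the $O(s \log n)$ rounds is at most $\tilde O(nr)$ after accounting more carefully with the geometric decay: the total $\sum_t|I_t|$ is $O(ns)$, and each item in a round costs at most $O(\min(s,r))$. Verifying these accounting bounds cleanly is what I expect to be the fiddliest part; the combinatorial content is entirely in the single-set probability bound.
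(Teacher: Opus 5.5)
Your existence argument is correct: a random set with rate $1/(s+1)$ covers each item with probability at least $1/(e(s+1))$. Greedily taking sets that cover this fraction of the remaining items gives $|\mathcal S| = O(s\log n)$.

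The gap is in the deterministic $\tilde O(nr)$ running time. Each of your conditional-expectation rounds fixes all $r$ elements and touches every incidence $c \in C_i$ of every remaining item, so a round costs $\Theta(r + |I_t|\,s)$. A round is only guaranteed to remove a $1/(e(s+1))$ fraction of $I_t$, and on instances with random conflict sets no set removes much more. Hence $\sum_t |I_t| = \Theta(ns)$, and the total is $\Theta(ns^2 + rs\log n)$.

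Your fallback for the ``remaining regime'' gives the same bound. Since $|C_i| \le r-1$ you may assume $s < r$, so $O(ns\cdot\min(s,r)) = O(ns^2)$, which is not $\tilde O(nr)$ once $s \gg \sqrt{r}$. This is not a corner case. In \cref{lem:decomp-reg} the lemma is applied to $nm$ items with $s = 2r/R = \Theta(r/\log(nm))$. There your algorithm costs roughly $nm\,r^2$ instead of $\tilde O(nmr)$, which would ruin the subcubic bounds built on top of it.

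The missing idea is to make the work per set proportional to the number of elements actually placed into $S$, which is only $O(r/s)$, rather than to $r + \sum_i |C_i|$. Then the factor $s$ in the number of sets cancels against $r/s$. The paper does this as follows:
\begin{itemize}
\item It grows each $S$ one element at a time under the pessimistic potential $\phi(S) = |G(S)| - |B(S)|/(2s)$, i.e., good items minus scaled bad items.
\item While $|G(S)| < |I|/(16s)$, a uniformly random $j \in [r]$ raises $\phi$ by at least $|I|/(8r)$ in expectation.
\item Such a $j$ is found deterministically by binary search over dyadic intervals $J$, using precomputed counts $|C_i \cap J|$ and $|\{x_i\} \cap J|$. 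Each evaluation then costs $O(n)$ and each insertion $O(n\log r)$.
\item Since $|S|\,|I|/(8r) \le \phi(S) \le |G(S)| < |I|/(16s)$ during construction, each set receives $O(r/s)$ elements.
\end{itemize}
The total is $O(s\log n)\cdot O(r/s)\cdot O(n\log r) = O(nr\log n\log r)$.

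A secondary point: your conditional probabilities $p(1-p)^k$ are not short rationals on the Word RAM, so your route would also need an approximate-arithmetic argument. The paper's potential becomes integral after scaling by $2s$.
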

\begin{proof}[Proof Sketch]
Here, we only provide a quick proof sketch based on a probabilistic argument; in \cref{sec:covering} we will then efficiently derandomize this argument based on the method of conditional expectations. The construction of $\mathcal S$ is simple: Take $O(s \log n)$ independently random sets $S \subseteq [r]$ with sample rate~$1/(2s)$. Each such set $S$ covers any fixed index $i$ with probability at least $\Omega(1/s)$. Indeed, the probability that $x_i \in S$ is $1/(2s)$, and independently we avoid all conflicts, $C_i \cap S = \emptyset$, with probability at least~\makebox{$1 - |C_i| / (2s) \geq 1/2$}. Hence, after $O(s \log n)$ repetitions we covered all items with high probability.
\end{proof}

\begin{proof}[Proof of \cref{lem:decomp-reg}]
Let $(U, V, S)$ denote the rank-$r$ decomposition of $A$, and let $R = \Theta(\log (n m))$ (where we determine the implied constant later). If $r \leq R$ then this rank-$r$ decomposition is trivially $R$-row-regular (and we can simply take $A_\row = A$ without modifying the decomposition), so suppose that $r > R$. For each~$\ell \in [r]$ define the sets
\begin{align*}
    I_\ell &= \set*{i \in [n] : |\set{ j \in [m] : S[i, j] = \ell }| > R \cdot \frac{m}{r} }, \\
    J_\ell &= \set*{j \in [m] : |\set{ i \in [n] : S[i, j] = \ell }| > R \cdot \frac{n}{r} }.
\end{align*}
That is, $I_\ell$ is the set of rows that violate the row-regularity condition for $\ell$, and $J_\ell$ is the set of columns that violate the column-regularity condition for $\ell$. Let~\makebox{$A = A_\row \sqcup A_\col \sqcup A_\sm$} denote the partition of $A$ where in~$A_\row$ we retain the entries $(i, j)$ with $i \not\in I_{S[i, j]}$, in~$A_\col$ we retain the entries $(i, j)$ with $i \in I_{S[i, j]}$ and~$j \not\in J_{S[i, j]}$, and in~$A_\sm$ we retain the entries $(i, j)$ with $i \in I_{S[i, j]}$ and $j \in J_{S[i, j]}$; we will call these entries $(i, j)$ \emph{irregular}. In the following we argue that the three matrices $A_\row, A_\col, A_\sm$ admit the claimed rank decompositions.

\proofparagraph{Rank Decomposition of \boldmath$A_\row$ and $A_\col$}
It is simple to construct an $R$-row-regular rank-$r$ decomposition $(U_\row, V_\row, S_\row)$ of~$A_\row$: Take $U_\row = U$ and $V_\row = V$, and let $S_\row$ be the restriction of $S$ to the entries $(i, j)$ satisfying that $i \not\in I_{S[i, j]}$. This is clearly a valid rank-$r$ decomposition of $A_\row$. To see that it is also $R$-row-regular, focus on any $\ell \in [r]$ and any row $i \in [n]$. In $S_\row$ we have only retained the entries~$S[i, j] = \ell$ with $i \not\in I_\ell$, hence by definition $\abs{\set{j \in [m] : S_\row[i, j] = \ell}} \leq R \cdot \frac{m}{r}$. The construction for~$A_\col$ is analogous.

\proofparagraph{Rank Decomposition of \boldmath$A_\sm$}
Next, we construct a rank-$r/2$ decomposition $(U_\sm, V_\sm, S_\sm)$ of $A_\sm$. In this step we will crucially rely on the covering lemma from before. Specifically, for each irregular pair $(i, j)$ define
\begin{align*}
    x_{i, j} &= S[i, j],
\intertext{and}
    C_{i, j} &= \set{\ell \in [r] : \text{$i \in I_\ell$ or $j \in J_\ell$}} \setminus \set{S[i, j]}.
\end{align*}
We run \cref{lem:covering} on the items $x_{i, j} \in [r]$ and conflict sets $C_{i, j} \subseteq [r]$. Let $\mathcal S$ denote the resulting collection of sets, and write $T_{i, j} \in \mathcal S$ for the set covering $(i, j)$. Let $U_\sm \in \Int^{[n] \times \mathcal S}, V_\sm \in \Int^{\mathcal S \times [m]}, S_\sm \in (\mathcal S \cup \set{\bot})^{n \times m}$ be the matrices defined as follows (here, for simplicity of notation we assume that the columns of $U_\sm$, the rows of $V_\sm$, and the entries of $S_\sm$ are indexed by sets $T \in \mathcal S$, but this can easily be replaced by the integers~$[|\mathcal S|]$):
\begin{align*}
    U_\sm[i, T] &=
    \begin{cases}
        \mathrlap{U[i, \ell]}\phantom{V[\ell, j]} &\text{for some $\ell \in T$ with $i \in I_\ell$,} \\
        0 &\text{if such $\ell$ does not exist,}
    \end{cases} \\
    V_\sm[T, j] &=
    \begin{cases}
        V[\ell, j] &\text{for some $\ell \in T$ with $j \in J_\ell$,} \\
        0 &\text{if such $\ell$ does not exist,}
    \end{cases} \\
    S_\sm[i, j] &=
    \begin{cases}
        \mathrlap{T_{i, j}}\phantom{V[\ell, j]} &\text{if $(i, j)$ is irregular,} \\
        \bot &\text{otherwise.}
    \end{cases}
\end{align*}
(The assignment $0$ of $U_\sm, V_\sm$ in the failure case is arbitrary.) We claim that $(U_\sm, V_\sm, S_\sm)$ is a valid decomposition of $A_\sm$. This is trivially true for the regular pairs, so fix some irregular pair $(i, j)$. \cref{lem:covering} guarantees that $x_{i, j} \in T_{i, j}$ and $C_{i, j} \cap T_{i, j} = \emptyset$. This means that $\ell := x_{i, j} = S[i, j]$ is the only element in $T_{i, j}$ that satisfies $i \in I_\ell$ or $j \in J_\ell$, and thus
\begin{align*}
    U_\sm[i, T_{i, j}] &= U[i, \ell], \\
    V_\sm[T_{i, j}, j] &= V[\ell, j].
\end{align*}
Hence,
\begin{equation*}
    U_\sm[i, S_\sm[i, j]] + V_\sm[S_\sm[i, j], j] = U_\sm[i, T_{i, j}] + V_\sm[T_{i, j}, j] = U[i, \ell] + V[\ell, j] = A_\sm[i, j].
\end{equation*}
It follows that we have indeed constructed a valid rank-$|\mathcal S|$ decomposition. We finally argue that $|\mathcal S| \leq r / 2$ for an appropriate choice of parameters. First observe that each conflict set $C_{i, j}$ has size at most $s := 2 r / R$, as each index $i$ is part of at most $r / R$ sets $I_\ell$, and similarly each index $j$ is part of at most $r / R$ sets $J_\ell$. Therefore, by \cref{lem:covering} we have $|\mathcal S| \leq O(s \log (nm)) = O(r \log (nm) / R)$. Hence, picking~\makebox{$R = \Theta(\log(n m))$} for some sufficiently large hidden constant, we get that $|\mathcal S| \leq r/2$ as was intended.

\proofparagraph{Running Time}
To compute the sets $I_\ell$ and $J_\ell$ we first precompute the statistics $|\set{j \in [m] : S[i, j] = \ell}|$ (for all $i, \ell$) and $|\set{i \in [n] : S[i, j] = \ell}|$ (for all $j, \ell$) in time $\tilde O(n m + n r + m r)$. Afterwards we can read off $I_\ell$ and~$J_\ell$ in the same running time. Then it is straightforward to compute the partition~\makebox{$A = A_\row \sqcup A_\col \sqcup A_\sm$} in time $\tilde O(n m)$. It remains to analyze the time to compute the rank decompositions of $A_\row$, $A_\col$ and~$A_\sm$. The decompositions of \smash{$A_\row$} and \smash{$A_\col$} are computable from $(U, V, S)$ in linear time $\tilde O(n m)$. For the decomposition of~$A_\sm$ the dominating step is the computation of $\mathcal S$ in time $\tilde O(n m r)$ by \cref{lem:covering}; the remaining steps run in time $\tilde O(n m + n r + m r)$.
\end{proof}
\section{Low-Rank Exact Triangle} \label{sec:exact-tri-low-rank}
The main result of this section is that Low-Rank Exact Triangle can be solved in subcubic time:

\begin{theorem}[Low-Rank Exact Triangle] \label{thm:exact-tri-low-rank}
The $r$-rank Exact Triangle problem can be solved in deterministic time
\begin{equation*}
    \ExactTri(n, n, n \mid r) \leq n^{3+o(1)} \cdot \parens*{\frac{r}{n^{3-\omega}}}^{1/200000}.
\end{equation*}
\end{theorem}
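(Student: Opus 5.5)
We follow the strategy from the introduction (\cref{sec:intro:sec:overview:sec:min-plus-low-rank}) and organize it as a recursion on $r$. Write $T(r)$ for the cost of $r$-rank Exact Triangle on $n \times n \times n$ instances. By symmetry of Exact Triangle under permuting $A,B,C$ with sign changes, we assume the decomposition is of $C$. First apply \cref{lem:decomp-reg} to $C$ in time $\tilde O(n^2 r)$. This splits $C = C_\row \sqcup C_\col \sqcup C_\sm$. We recurse on $(A,B,C_\sm)$ at cost $T(r/2)$. The column-regular part is symmetric to the row-regular one (transpose everything). This gives $T(r) \le T(r/2) + O(T_\row(r)) + \tilde O(n^2 r)$. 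It therefore suffices to solve the case where $C$ has an $O(\log n)$-row-regular rank-$r$ decomposition $(U,V,S)$ in time $n^{3-\Omega(\epsilon)}$, with $r = n^{3-\omega-\epsilon}$; the geometric recursion then costs only a logarithmic factor. If $r$ is much smaller, say at most $n^{3-\omega-\epsilon}$ for a large $\epsilon$, we proceed directly. If $r \ge n^{3-\omega}$, the trivial cubic bound suffices.

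For the row-regular case we process the slices $\ell = 1,\dots,r$ one at a time. With the translation $A'[i,k]=A[i,k]-U[i,\ell]$ and $B'[k,j]=V[\ell,j]-B[k,j]$, exact triangles through $C_\ell$ become equalities $A'[i,k]=B'[k,j]$. Fix a threshold $t = n^{1-\epsilon'}/r$, and call $A'[i,k]$ \emph{heavy} if its value occurs more than $t$ times in column $k$ of $A'$ (restricted to the rows still alive).

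Light entries are handled by brute force. For each light $(i,k)$ and each $j$ with $B'[k,j]=A'[i,k]$, we would rather enumerate in the other order: for each $(k,j)$, look up the at most $t$ rows $i$ with $A'[i,k]=B'[k,j]$ in a hash table of column $k$. This costs $O(n^2 t) = O(n^{3-\epsilon'}/r)$ per slice. Heavy entries are handled in two regimes. If there are fewer than $n^{2-\delta}$ of them, we enumerate each heavy $(i,k)$ against the $\tilde O(n/r)$ indices $j$ with $S[i,j]=\ell$, which row-regularity guarantees. This costs $\tilde O(n^{3-\delta}/r)$ per slice. Otherwise, we solve the whole instance $(A',B,C')$ with $C'[i,j]=C[i,j]-U[i,\ell]$, restricted to the heavy entries of $A'$, in one shot. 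This is a potential adjustment, and $A'$ is $d$-slice-uniform with $d \le n/t = r n^{\epsilon'}$ per column. We then delete those $\ge n^{2-\delta}$ heavy entries from $A$ for all later slices. The one-shot case therefore occurs at most $n^\delta$ times in total.

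Everything thus reduces to showing that $d$-slice-uniform Exact Triangle with $d = n^{3-\omega-\Omega(\epsilon)}$ runs in time $n^{3-\Omega(\epsilon)}$, with polynomial dependence so that the $n^\delta$ repetitions are affordable. This is where I expect the main obstacle. It is the \cite{AbboudFJVX25} algorithm for few weights per node. That algorithm is itself a long chain of reductions (slice-uniform to uniform to regular uniform to low doubling, then a Balog--Szemer\'edi--Gowers/additive-hashing step and fast matrix multiplication). It must be re-derived as potential-adjusting reductions, possibly generalized to rectangular shapes, and with linear rather than polynomially degraded dependence on $\epsilon$. I would first try to close this chain using the low-rank recursion itself: a uniform instance with $D$ values whose doubling is large is split via heavy/light on the additive energy. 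Dense low-doubling pieces are mapped into a small universe and solved by \cite{AlonGM97}-style $\tilde O(n^\omega u)$ multiplication. The remaining pieces are reduced back to lower rank. Balancing all thresholds ($\epsilon',\delta$ and the BSG losses) against one another explains the tiny exponent $1/200000$. The final step is just bookkeeping: set $\delta, \epsilon'$ as small constant multiples of $\epsilon$ and sum the costs.
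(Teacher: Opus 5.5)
\textbf{There is a genuine gap.} Your first half is exactly the paper's \cref{lem:exact-tri-low-rank-to-slice-unif}: regularize via \cref{lem:decomp-reg} and recurse on the rank-$r/2$ part, brute-force the light entries slice by slice, use row-regularity when there are few heavy entries, and otherwise solve the heavy part in one shot as a slice-uniform potential adjustment and delete it globally.

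The problem is everything after that. You reduce to ``$d$-slice-uniform Exact Triangle in time $n^{3-\Omega(\epsilon)}$'' and leave it open. Invoking \cite{AbboudFJVX25} as a black box gives only $n^{3-\epsilon^{O(1)}}$, as you note yourself. That does not yield $n^{3+o(1)}(r/n^{3-\omega})^{1/200000}$, so the stated bound rests on precisely the part you have not proved. Your sketch for it does not give a working mechanism either. You never reduce slice-uniform to uniform instances; the paper does this with the popular-sum decomposition (\cref{lem:exact-tri-slice-unif-to-unif}). And ``heavy/light on the additive energy, with the remaining pieces reduced back to lower rank'' gives no reason why those remaining pieces, e.g.\ the BSG leftovers, should have rank below $r$.

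\textbf{The missing idea} is the paper's \cref{lem:exact-tri-low-rank-to-unif-reg}. Take each uniform instance, which is a potential adjustment of the original one. Call an entry row-heavy (column-heavy) if its value fills more than a $q/r$ fraction of its row (column); the split is by \emph{frequency}, not by additive energy.
\begin{itemize}
\item Each row of the row-heavy part then has at most $r/q$ distinct values, so that part has select-plus rank at most $r/q$. Subtracting the potentials $u[i], v[k]$ transfers this decomposition to the same entries of the \emph{original} $A$.
\item Collect these entries over all $t^{O(1)}$ uniform instances. With $q = t^{O(1)}2^{\sqrt{\log n}}$ this gives matrices $A_{\mathrm{sm}}, B_{\mathrm{sm}}, C_{\mathrm{sm}}$ of rank at most $r/2^{\Omega(\sqrt{\log n})}$.
\item Recurse on the \emph{whole} low-rank problem for $(A_{\mathrm{sm}},B,C)$, $(A,B_{\mathrm{sm}},C)$ and $(A,B,C_{\mathrm{sm}})$. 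The recursion has depth $O(\sqrt{\log n})$ and three branches per level, so the overhead is only $n^{o(1)}$. By contrast, \cite{AbboudFJVX25} spawns polynomially many $D'$-slice-uniform subinstances per level, which is the source of its $\epsilon^{O(1)}$ loss.
\end{itemize}
The light part becomes $D$-uniform and $1/D$-regular after a trivial split. This regularity is exactly what makes the BSG covering step (\cref{lem:exact-tri-unif-reg-to-low-doubling}) affordable. The remainder of $O(D^2/L)$ value pairs and the sparse pieces are brute-forced in $O(n^3/L)$, because each value appears at most $n/D$ times in every row and column. Without regularity, a single remainder pair can generate $\Theta(n^2)$ candidate triples per $k$. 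The dense low-doubling pieces are then solved in $\tilde O(n^\omega D K)$ by \cref{lem:exact-tri-low-doubling}. Your idea of mapping them into a small universe is also workable here, since only equality and not order must be preserved. Balancing $t$, $q$ and $K$ then gives the $1/200000$ exponent.
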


The point of this theorem is that for $r \leq n^{3-\omega-\epsilon}$ this running time is truly subcubic, $O(n^{3-\Omega(\epsilon)})$. We have not attempted to optimize the constant 200000. Likely, it can be improved dramatically, but for all our applications the constant does not matter. The proof of \cref{thm:exact-tri-low-rank} is an algorithmic reduction from the Low-Rank Exact Triangle problem via various intermediate steps to the Uniform Low-Doubling Exact Triangle problem:

\begin{restatable}[Reduction from Low-Rank to Uniform Low-Doubling Exact Triangle]{lemma}{lemexacttrilowranktolowdoubling} \label{lem:exact-tri-low-rank-to-low-doubling}
For any parameter $K \geq 1$ there is a potential-adjusting reduction from any \emph{$r$-rank} Exact Triangle instance to $K (n_1 n_2 n_3)^{o(1)}$ instances that are each \emph{$D$-uniform}, \emph{$1/D$-regular} and~\emph{$K$\=/doubling} for some $D \leq r$. The reduction runs in deterministic time $(n_1 n_2 n_3 / K^{1/98000} + (n_1 n_2 + n_1 n_3 + n_2 n_3) r K)^{1+o(1)}$. In particular:
\begin{align*}
    &\ExactTri(n_1, n_2, n_3 \mid r) \\
    &\qquad \leq \parens*{K \cdot \ExactTri(n_1, n_2, n_3 \mid D \leq r; \rho \leq 1/D; K) + \frac{n_1 n_2 n_3}{K^{1/98000}} + (n_1 n_2 + n_2 n_3 + n_1 n_3) r K}^{1+o(1)}.
\end{align*}
\end{restatable}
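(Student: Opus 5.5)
We would prove \cref{lem:exact-tri-low-rank-to-low-doubling} as a chain of potential-adjusting reductions, composing them at the end. Potential adjustments compose, and the sets $\mathcal T$ simply union, so the composition is again a potential-adjusting reduction. The chain is: $r$-rank $\to$ regular $r$-rank $\to$ $d$-slice-uniform with $d \lesssim r$ $\to$ $D$-uniform $\to$ $D$-uniform and $1/D$-regular $\to$ additionally $K$-doubling.

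\textbf{Step 1: regularize.} Assume the decomposition is of $C$; decompositions of $A$ or $B$ are handled by symmetry, renaming the roles of the three matrices. We apply \cref{lem:decomp-reg} to split $C = C_\row \sqcup C_\col \sqcup C_\sm$, and recurse on $C_\sm$, whose rank is at most $r/2$. After $O(\log r)$ levels this leaves only row-regular or column-regular pieces, and by symmetry we treat the row-regular case.

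\textbf{Step 2: slice.} For each part $\ell \in [r]$ we form $A'[i,k]=A[i,k]-U[i,\ell]$ and $B'[k,j]=V[\ell,j]-B[k,j]$, exactly as in the overview. An exact triangle with $S[i,j]=\ell$ then means $A'[i,k]=B'[k,j]$. We call an entry of $A'$ heavy if its value occurs more than $\tau = n_1/(rK^{c})$ times in its column, and light otherwise.
\begin{itemize}
    \item \emph{Light entries.} Each light $A'[i,k]$ is matched against the $\tilde O(n_3/r)$ indices $j$ with $S[i,j]=\ell$, using row-regularity. Alternatively, each $B'[k,j]$ is matched against its at most $\tau$ equal light entries in column $k$. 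These exact triangles are placed into $\mathcal T$.
    \item \emph{Many heavy entries.} If at least $n_1n_2/K^{c'}$ entries are heavy, we output the potential adjustment $(A',B,C')$ with $C'[i,j]=C[i,j]-U[i,\ell]$, restricted to the heavy entries. Every column of $A'$ there has at most $n_1/\tau = rK^{c}$ distinct values, so this instance is slice-uniform. We then delete those entries from $A$ for all later $\ell$, which bounds the number of such instances by $K^{c'}$.
    \item \emph{Few heavy entries.} Otherwise we enumerate the heavy entries against the row-regular $C_\ell$ in time $n_1n_2n_3/(rK^{c'})$ per $\ell$.
\end{itemize}
Balancing $\tau$ against the brute-force costs, and using the term $(n_1n_2+\dots)rK$ for the per-$\ell$ preprocessing, gives cost $n_1n_2n_3/K^{\Omega(1)}$.

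\textbf{Step 3: from slice-uniform to uniform, regular, and low-doubling.} Here we reuse the \cite{AbboudFJVX25} chain, recast as potential-adjusting reductions and generalized to rectangular sizes.
\begin{itemize}
    \item First, slice-uniform is turned into uniform by hashing or bucketing the per-column value sets with potentials.
    \item Next, regularity is enforced by removing frequent values. An entry occupying more than a $1/D$-fraction of its row or column is peeled off into a potential adjustment in which that value becomes $0$, which behaves like a rank-1 piece and is cheap. The recursion depth is logarithmic.
    \item Finally, we obtain $K$-doubling via a Balog--Szemerédi--Gowers-type argument on the entry set $X$. If $|X+X|>K|X|$, then few triangles can satisfy the equality, because exact triangles correspond to additive quadruples, so the additive energy is small. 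In that case we list all exact triangles by a sparse-convolution approach (\cref{lem:sparse-conv}) in time $n_1n_2n_3/K^{\Omega(1)}$. Otherwise we split $X$ into $K^{O(1)}$ structured pieces of doubling at most $K$, with one instance per piece.
\end{itemize}

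The main obstacle is this last doubling step. Turning ``large doubling'' into ``few exact triangles, listable in subcubic time'' needs the regularity and uniformity to translate triangle counts into additive energy, and needs a constructive BSG decomposition with polynomial losses. These polynomial losses are what produce the tiny exponent $1/98000$. I would first try the approach above: the energy bound followed by an iterated BSG peeling of $X$.
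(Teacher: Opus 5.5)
\textbf{Steps 1--2 are fine.} They reproduce the paper's \cref{lem:exact-tri-low-rank-to-slice-unif}, with one caveat: only your second option for light entries fits the budget, since matching each light $A'[i,k]$ against its $\tilde O(n_3/r)$ columns costs $\tilde O(n_1n_2n_3)$ summed over all $\ell$. Slice-uniform-to-uniform is the popular-sum decomposition of \cite{AbboudFJVX25} (\cref{lem:pop-sum-decomp}), not a hashing step.

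\textbf{The genuine gap is regularization, which is the technical heart of the lemma.}
\begin{itemize}
\item With threshold $1/D$, a row can still contain nearly $D$ heavy values, so peeling makes no progress.
\item A peeled piece whose $A$-part is constant per row is not ``cheap''. After the potential shift it still asks, for every $(i,j)$, whether $C[i,j]\in\{B[k,j] : A[i,k]\neq\bot\}$. In general this cannot be brute-forced in time $n_1n_2n_3/K^{\Omega(1)}$.
\item Such a piece is also far from $1/D$-regular, so it is not a legal output instance without a large blow-up.
\end{itemize}
The missing idea (\cref{lem:exact-tri-low-rank-to-unif-reg}) is to use a threshold $q/r$ with $q$ large. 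Light parts are cut into $q^2$ pieces that are $1/r$-regular. Row- or column-heavy parts have at most $r/q$ distinct values per row or column, hence rank at most $r/q$. After undoing the potentials, the heavy parts of \emph{all} $|\mathcal I|$ uniform instances are merged into $A_{\mathrm{sm}},B_{\mathrm{sm}},C_{\mathrm{sm}}$ of rank $r'\le 2r|\mathcal I|/q$. One then recurses on $(A_{\mathrm{sm}},B,C)$, $(A,B_{\mathrm{sm}},C)$ and $(A,B,C_{\mathrm{sm}})$ through the \emph{whole} chain; this is why the statement says ``for some $D\le r$''. Because each level branches three ways, logarithmic depth would already give polynomially many instances. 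The paper therefore takes $q=t^{O(1)}2^{\sqrt{\log n}}$, so that $r'\le r/2^{\Omega(\sqrt{\log n})}$, the depth is $O(\sqrt{\log n})$, and the total blow-up is $3^{O(\sqrt{\log n})}=n^{o(1)}$.

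\textbf{Your doubling step rests on a false dichotomy.} Large doubling forces neither small additive energy nor few solutions of $x+y=z$ in $X$ (which is what exact triangles use). For example, let $X$ be half an arithmetic progression and half a random set. Then $|X+X|=\Theta(|X|^2)$, yet there are $\Theta(|X|^2)$ such triples and energy $\Theta(|X|^3)$. So the branch ``$|X+X|>K|X|$, hence few triangles'' has no valid bound. The object to decompose is the set of \emph{pairs} $(x,y)$ with $x+y\in X$, not $X$ itself.

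In the paper this step is actually routine (\cref{lem:exact-tri-unif-reg-to-low-doubling}):
\begin{itemize}
\item Chan--Lewenstein BSG covering with $X=Y=Z$ yields $L$ products $X_\ell\times Y_\ell$ with $|X_\ell+Y_\ell|=O(L^5D)$, plus a remainder $R$ of $O(D^2/L)$ pairs.
\item Remainder triangles are listed directly in time $|R|\cdot n_2\cdot(n_1/D)(n_3/D)=O(n_1n_2n_3/L)$, using $1/D$-regularity; no sparse convolution is needed.
\item Products with $|X_\ell|\le D/L^2$ or $|Y_\ell|\le D/L^2$ give sparse matrices by regularity and are brute-forced.
\item For the remaining products, Ruzsa's triangle inequality gives $|X_\ell+X_\ell|=O(L^{14})|X_\ell|$, and likewise for $Y_\ell$. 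Then \cref{lem:sumset-doubling} bounds the doubling of $Z_\ell=(X_\ell\cup\{0\})+(Y_\ell\cup\{0\})$, which contains all entries of the three matrices, by $O(L^{112})$.
\end{itemize}
Composing both stages with $t=K^{1/98000}$ gives the lemma. Since $98000=875\cdot 112$, the exponent reflects the losses of the rank/regularization chain at least as much as those of BSG.
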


Intuitively, this lemma states that we can reduce an $r$-rank instance in subcubic time to few instances that are $r$-uniform and $K$-doubling, for an arbitrarily small polynomial $K = n^{\epsilon'}$. (In fact, the instances are also regular, but we do not need to rely on this extra constraint here.) This is a special case of Exact Triangle that can be solved in subcubic time by a simple algebraic algorithm; see~\cite[Lemma~5.5]{AbboudFJVX25}\footnote{A subtlety in~\cite[Lemma~5.5]{AbboudFJVX25} is that it only determines the edges $(i, j) \in [n_1] \times [n_3]$ that are involved in exact triangles, but not the edges $(i, k) \in [n_1] \times [n_2]$ and $(k, j) \in [n_2] \times [n_3]$. However, since in \cref{lem:exact-tri-low-doubling} $X$ denotes the set of entries in all three matrices $A, B, C$, we can simply apply~\cite[Lemma~5.5]{AbboudFJVX25} three times on all rotations of the instance $(A, B, C)$. Alternatively, as stated in~\cite{AbboudFJVX25}, one can apply the Baur-Strassen trick~\cite{Fischer24,BaurS83}.}.

\begin{lemma}[Uniform Low-Doubling Exact Triangle~\cite{AbboudFJVX25}] \label{lem:exact-tri-low-doubling}
Any Exact Triangle instance $(A, B, C)$ can be solved in deterministic time $\tilde O(\MM(n_1, n_2, n_3) \cdot |X + X|)$, where $X$ is the set of integer entries in~$A$,~$B$ and~$C$. In particular:
\begin{equation*}
    \ExactTri(n_1, n_2, n_3 \mid D ; K) = \tilde O(\MM(n_1, n_2, n_3) \cdot D K).
\end{equation*}
\end{lemma}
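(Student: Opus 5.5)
We prove \cref{lem:exact-tri-low-doubling} with a polynomial-method argument: encode each integer entry as a monomial and multiply polynomial matrices. Sparse convolution (\cref{lem:sparse-conv}) keeps the cost proportional to $|X+X|$, since only sums of two elements of $X$ can occur. The ``in particular'' bound then follows from $|X| \le D$ and $|X+X| \le K|X| \le KD$.

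Concretely, form $\hat A[i,k] = z^{A[i,k]}$ and $\hat B[k,j] = z^{B[k,j]}$, with $\bot$ mapped to $0$. The product $\hat A\hat B$ has $(i,j)$-entry $\sum_k z^{A[i,k]+B[k,j]}$. The coefficient of $z^{C[i,j]}$ in this entry counts the $k$ with $A[i,k]+B[k,j]=C[i,j]$. All exponents lie in $X+X$, so the product can be computed as follows.
\begin{itemize}
\item Compute $X+X$ in time $|X+X|\cdot\mathrm{polylog}$ via \cref{lem:sparse-conv}.
\item Pick a prime $p$ and an evaluation scheme. A clean choice is to hash $X+X$ injectively into $\mathbb{Z}_q$ for a prime $q=\tilde O(|X+X|)$ via $x\mapsto x \bmod q$; a deterministic choice of $q$ exists because each of the $\binom{|X+X|}{2}$ differences has only $O(\log u)$ prime divisors, so trying $\tilde O(|X+X|)$ primes suffices. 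Then work in $\mathbb{Z}[z]/(z^q-1)$.
\item Multiply polynomial matrices of degree below $q$ using $O(q)$ integer matrix products, e.g.\ one product per residue pair after Kronecker-style evaluation at $q$ points. The cheapest route is to view the computation as $\sum_{a}\sum_{b} \hat A_a \hat B_b$ with $\hat A_a$ the $0/1$ indicator matrix of entry $a$. This naive grouping costs $|X|^2$ products, which is too many. Instead, evaluate at the $q$-th roots of unity modulo a suitable prime, or more simply use the DFT over a field containing them: $O(q)$ matrix products $\MM(n_1,n_2,n_3)$ recover all $q$ coefficients per entry.
\end{itemize}

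For each $(i,j)$, read off the coefficient at $C[i,j] \bmod q$. This detects the edges $(i,j)$ lying in exact triangles. Injectivity on $X+X$ avoids false positives, because every exponent that occurs lies in $X+X$. If $C[i,j]\notin X+X$, the entry has no exact triangle; when $C[i,j]\in X$, the hash might still collide with some element of $X+X$, so we also hash $X\cup(X+X)$ injectively, which does not change the size bound.

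For the edges $(i,k)$ and $(k,j)$, rotate the instance. Rewrite $A[i,k]+B[k,j]=C[i,j]$ as $C[i,j]+(-B)^T[j,k]=A[i,k]$, which yields a product $C\cdot(-B)^T$ with exponents in $X-X$. By Plünnecke--Ruzsa (\cref{lem:pluennecke-ruzsa}), $|X-X|\le K^2|X|$, which suffices only up to a polynomial loss in $K$. It is cleaner to invoke the Baur--Strassen trick: the gradient of $\sum_{i,j}[z^{C[i,j]}](\hat A\hat B)_{ij}$ with respect to the entries of $\hat A$ and $\hat B$ is computable at the same asymptotic cost, and it gives the counts for all $(i,k)$ and $(k,j)$.

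The main obstacle is achieving $O(|X+X|)$ matrix products deterministically, that is, finding the injective hash modulus and a field with the roots of unity. The remaining steps are routine.
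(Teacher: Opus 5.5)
\textbf{There is a genuine gap in the hashing step.} This is the obstacle you flag at the end, but your resolution of it is incorrect.

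Let $s=|X+X|$. There are about $s^2/2$ nonzero differences, each with $O(\log u)$ prime divisors, so up to $O(s^2\log u)$ primes can be bad. Your counting argument therefore only guarantees an injective prime modulus of size $\tilde O(s^2)$. That gives $\tilde O(\MM(n_1,n_2,n_3)\cdot|X+X|^2)$, not the claimed bound.

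A sharper analysis cannot rescue the step either. Take a random $D$-element set $X$ from a huge range. Then $|X+X|\approx D^2/2$, but a birthday-paradox (second-moment) count shows that with high probability $x\bmod q$ has collisions on $X+X$ for \emph{every} $q$ up to roughly $D^4/\log D$. So in general no modulus of size $\tilde O(|X+X|)$ is injective on $X+X$. A single cyclic product cannot get you down to $O(|X+X|)$ matrix products.

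\textbf{The missing idea is that you never need global injectivity.} You only need each element of the explicitly known support $S=X+X$ to be isolated modulo \emph{some} prime.
\begin{itemize}
\item Fix $t\in S$. At most $|S|\cdot\log(4u)/\log N$ primes in $[N,2N]$ divide some $t-t'$ with $t'\in S\setminus\{t\}$. So for $N=\Theta(|S|\log u)$, at least half of the primes in that range isolate $t$.
\item By averaging plus greedy selection, $O(\log|S|)$ primes of size $\tilde O(|S|)$ isolate every element of $S$. They can be found deterministically in $\tilde O(|S|^2)$ time.
\item This is within budget: without loss of generality $|S|\cdot\MM(n_1,n_2,n_3)\le n_1n_2n_3$ (otherwise brute force is fast enough), and hence $|S|\le n_2$.
\item For each chosen prime $p$, compute $\hat A\hat B$ modulo $z^p-1$ with $O(p)$ matrix products. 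For example, evaluate at $2p$ distinct points modulo a word-size prime $\pi>\max\{n_2,2p\}$ and interpolate each entry. This also removes your unresolved need for $p$-th roots of unity.
\item For each $(i,j)$, first check $C[i,j]\in S$ directly. If so, read off the count at residue $C[i,j]\bmod p$ for a prime $p$ that isolates $C[i,j]$. Your extra step of hashing $X\cup(X+X)$ is then unnecessary.
\end{itemize}

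Your treatment of the other two edge types is fine. The paper cites \cite[Lemma~5.5]{AbboudFJVX25} for the $(i,j)$-edges and obtains the rest by rotation or by Baur--Strassen. You are right that rotation replaces $X+X$ by $X-X$ and costs an extra factor polynomial in $K$. So Baur--Strassen is the route that actually achieves the stated bound.
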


\begin{proof}[Proof of \cref{thm:exact-tri-low-rank}]
Combining \cref{lem:exact-tri-low-doubling,lem:exact-tri-low-rank-to-low-doubling} we obtain that, for any parameter $K \geq 1$,
\begin{align*}
    \ExactTri(n, n, n \mid r)
    &= \parens*{K \cdot \ExactTri(n, n, n \mid D \leq r ; K) + \frac{n^3}{K^{1/98000}} + n^2 r K}^{1+o(1)} \\
    &= \parens*{K \cdot n^\omega r K + \frac{n^3}{K^{1/98000}} + n^2 r K}^{1+o(1)} \\
    &= \parens*{n^\omega r K^2 + \frac{n^3}{K^{1/98000}}}^{1+o(1)}.
\end{align*}
We pick $K = (n^{3-\omega} / r)^{98/200}$ and argue that then the running time is as claimed. Indeed, the second term is clearly $n^{3+o(1)} (r / n^{3-\omega})^{1/200000}$. The first term can be rewritten as $n^{3+o(1)} (r / n^{3-\omega}) (n^{3-\omega} / r)^{98/100}$ which is at most $n^{3+o(1)} (r / n^{3-\omega})^{2/100}$ and thus dominated by the second term.
\end{proof}

\begin{figure}[t]
\newlength\probwidth\deflength\probwidth{22mm}%
\newlength\probpad\deflength\probpad{2mm}%
\newlength\probsep\deflength\probsep{0.25\textwidth-0.25\probwidth-0.5\probpad-0.1pt}%
\centering%
\begin{tikzpicture}[
    prob/.style={
        rectangle split,
        rectangle split parts=2,
        text width=\probwidth,
        inner sep=\probpad,
        execute at begin node=\def\captionnow{\nodepart{two}},
        draw,
        rectangle split part fill={LightBlue, PaleBlue},
        rounded corners=2mm,
        align=center,
        every text node part/.style={align=left},
    },
    reduct/.style={
        draw,
        >=latex,
        shorten <=1mm,
        shorten >=1mm,
        rounded corners,
    },
]
\path[every node/.append style={prob, anchor=north}]
    (0, 0) node (0) {Low-Rank\captionnow$r \leq n^{3-\omega-\epsilon}$}
    ++(\probsep, 0) node (1) {Slice-Uniform\captionnow$d \leq n^{3-\omega-\epsilon}$}
    ++(\probsep, 0) node (2) {Uniform\captionnow$D \leq n^{3-\omega-\epsilon}$}
    ++(\probsep, 0) node (3) {Uniform\\Regular\captionnow$\begin{aligned}D &\leq n^{3-\omega-\epsilon} \\ \rho &\leq 1/D\end{aligned}$}
    ++(\probsep, 0) node (4) {Uniform\\Low-Doubling\captionnow$\begin{aligned}D &\leq n^{3-\omega-\epsilon} \\ K &\leq \smash{n^{\epsilon'}}\end{aligned}$};

\newcommand\reduct[3]{%
    \path[reduct, ->] ([shift={(6pt, 0pt)}]#1.south)
        -- ([shift={(6pt, -16pt)}]#1.south |- #2.south)
        -- node[below, font=\small] {#3} ([shift={(-6pt, -16pt)}]#2.south)
        -- ([shift={(-6pt, 0pt)}]#2.south);}
\reduct{0}{1}{\vphantom{[}\cref{sec:exact-tri-low-rank:sec:low-rank-to-slice-unif}}
\reduct{1}{2}{\vphantom{[}\cref{sec:exact-tri-low-rank:sec:slice-unif-to-unif}~\cite{AbboudFJVX25}}
\reduct{2}{3}{\vphantom{[}\cref{sec:exact-tri-low-rank:sec:low-rank-to-unif-regular}}
\reduct{3}{4}{\vphantom{[}\cref{sec:exact-tri-low-rank:sec:unif-reg-to-low-doubling}~\cite{AbboudFJVX25}}
\path[reduct, ->, dashed] ([shift={(6pt, -16pt + 2mm)}]2.south |- 3.south)
    -- ([shift={(6pt, -16pt)}]2.south |- 3.south)
    -- ([shift={(-6pt, -16pt)}]0.south |- 3.south)
    -- ([shift={(-6pt, 0pt)}]0.south);
\end{tikzpicture}
\vspace{-3mm}%
\caption{Illustrates the four steps in the reduction from Low-Rank Exact Triangle to Uniform Low-Doubling Exact Triangle. (The dashed arrow symbolizes a recursive dependence; see \cref{sec:exact-tri-low-rank:sec:low-rank-to-unif-regular}.)} \label{fig:low-rank-steps}
\end{figure}

For the remainder of this section, we focus on the proof of \cref{lem:exact-tri-low-rank-to-low-doubling}. The reduction can be split into four individual steps, which we treat in \crefrange{sec:exact-tri-low-rank:sec:low-rank-to-slice-unif}{sec:exact-tri-low-rank:sec:unif-reg-to-low-doubling}. See \cref{fig:low-rank-steps} for an outline. Step 1 is new and can be regarded as one of the key technical parts of this paper. Steps 2 and 4 are already established in~\cite{AbboudFJVX25} and can be taken almost without modifications; we provide full proofs here for the sake of completeness and because we need appropriately generalized statements throughout (such as a rectangular version, and the assertion that all reductions are potential-adjusting). Step 3 could be taken as in~\cite{AbboudFJVX25}, but we give an alternative proof that significantly improves the parameters (otherwise we would get the weaker statement that $n^{3-\omega-\epsilon}$-rank Exact Triangle can be solved in time $O(n^{3-\Omega_\epsilon(1)})$ where the hidden dependence on $\epsilon$ is worse than $\Omega(\epsilon)$).

Throughout, we will consider the rectangular case of size $n_1 \times n_2 \times n_3$. Moreover, we will often implicitly assume that the parameter in question ($r, d$ or $D$) satisfies that $r \leq n_1, n_2, n_3$. This is without loss of generality, as all algorithms we consider spend time at least $\Omega(r \cdot (n_1 n_2 + n_1 n_3 + n_2 n_3))$, and hence if the assumption was violated we could alternatively solve the instance naively in time $O(n_1 n_2 n_3)$.

\subsection{Reduction from Low-Rank to Slice-Uniform Exact Triangle} \label{sec:exact-tri-low-rank:sec:low-rank-to-slice-unif}
The following lemma captures the first step in our chain of reductions. An important ingredient for its proof is the previously established \cref{lem:decomp-reg}.

\begin{lemma}[Reduction from Low-Rank to Slice-Uniform Exact Triangle] \label{lem:exact-tri-low-rank-to-slice-unif}
For any parameter $t \geq 1$ there is a potential-adjusting reduction from any \emph{rank\=/$r$} Exact Triangle instance to $O(t^2 \log r)$ \emph{$r$-slice-uniform} instances. The reduction runs in deterministic time $\tilde O(n_1 n_2 n_3 / t + (n_1 n_2 + n_2 n_3 + n_1 n_3) r t^2)$. In particular:
\begin{equation*}
    \ExactTri(n_1, n_2, n_3 \mid r) \leq \tilde O\parens*{t^2 \cdot \ExactTri(n_1, n_2, n_3 \mid d \leq r) + \frac{n_1 n_2 n_3}{t} + (n_1 n_2 + n_2 n_3 + n_1 n_3) r t^2}.
\end{equation*}
\end{lemma}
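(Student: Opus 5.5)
The plan follows the heavy/light strategy sketched in the introduction, using \cref{lem:decomp-reg} to obtain regular decompositions and recursion to handle the small-rank part. By rotating the instance we may assume that the given rank-$r$ decomposition $(U, V, S)$ is of $C$.

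\textbf{Step 1 (regularization and recursion).} Apply \cref{lem:decomp-reg} to $C$, giving $C = C_\row \sqcup C_\col \sqcup C_\sm$. We recurse on $(A, B, C_\sm)$, which has rank at most $r/2$. The recursion has depth $O(\log r)$, and the cost per level is geometrically dominated by the top level; this is the source of the $\log r$ factor. The column-regular part $C_\col$ is symmetric to the row-regular part, with the roles of $A$ and $B$ exchanged. So it remains to handle $(A, B, C_\row)$ with an $R$-row-regular decomposition, where $R = O(\log(n_1 n_3))$.

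\textbf{Step 2 (iterating over the parts).} Process $\ell = 1, \dots, r$ in turn, maintaining a current matrix $\tilde A$ from which entries are only ever deleted. Initially $\tilde A = A$. For a fixed $\ell$, define
\begin{equation*}
A'[i,k] = \tilde A[i,k] - U[i,\ell], \qquad B'[k,j] = V[\ell,j] - B[k,j].
\end{equation*}
Then an exact triangle $(i,k,j)$ with $S[i,j] = \ell$ is exactly a triple with $A'[i,k] = B'[k,j]$. Call an entry $A'[i,k]$ \emph{heavy} if its value occurs at least $n_1/(rt)$ times in column $k$ of $A'$, and \emph{light} otherwise. Computing these column statistics costs $O(n_1 n_2)$ per $\ell$, hence $O(n_1 n_2 r)$ overall.

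\textbf{Step 3 (light entries).} For each pair $(k,j)$, enumerate the at most $n_1/(rt)$ indices $i$ for which $A'[i,k]$ is light and $A'[i,k] = B'[k,j]$. Keep the triple $(i,k,j)$ if in addition $S[i,j] = \ell$, and put it into $\mathcal T$. This takes time $n_1 n_2 n_3/(rt)$ per $\ell$, so $n_1 n_2 n_3/t$ over all $\ell$.

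\textbf{Step 4 (heavy entries).} Here there are two cases, depending on the total number of heavy entries.
\begin{itemize}
\item \emph{Few heavy entries} (fewer than $n_1 n_2/t$). Enumerate all heavy pairs $(i,k)$, and for each the at most $R\,n_3/r$ indices $j$ with $S[i,j] = \ell$ (using row-regularity), testing each triple in constant time. This costs $\tilde O(n_1 n_2 n_3/(rt))$ per $\ell$, which is fine.
\item \emph{Many heavy entries} (at least $n_1 n_2/t$). Give up on the iteration for these entries. Form the potential adjustment $(A'_{\mathrm{heavy}}, B, C')$, where $C'[i,j] = C_\row[i,j] - U[i,\ell]$ (potential $u[i] = -U[i,\ell]$) and $A'_{\mathrm{heavy}}$ keeps only the heavy entries of $A'$. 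Since every heavy value repeats at least $n_1/(rt)$ times, each column of $A'_{\mathrm{heavy}}$ has at most $rt$ distinct values. Split each column's distinct values into $t$ groups of at most $r$ values each; deleting entries is allowed in a potential adjustment, so this yields $t$ instances that are $r$-slice-uniform (column-wise in $A$). Add these to $\mathcal I$ and then delete all the heavy entries from $\tilde A$. Each such event removes at least $n_1 n_2/t$ entries, so it happens at most $t$ times, giving $O(t^2)$ instances per recursion level.
\end{itemize}
Every exact triangle is either reported in $\mathcal T$ or survives in some instance. In particular, an exact triangle deleted from $\tilde A$ in the many-heavy case was heavy for all parts $\ell$ at that moment, so it was captured in the adjusted instance $(A'_{\mathrm{heavy}}, B, C')$.

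\textbf{Main obstacle.} The delicate point is balancing the heavy threshold. It must be high enough (namely $n_1/(rt)$) that the light entries cost only $n_1 n_2 n_3/t$, yet it then only guarantees $rt$-slice-uniformity, which must be repaired by the $t$-way splitting. The few-heavy case works only because row-regularity bounds the number of indices $j$ per row of part $\ell$ by $\tilde O(n_3/r)$, which is exactly why \cref{lem:decomp-reg} and the recursion on $C_\sm$ are needed. I would double-check that every quantity stays within the claimed time bound $\tilde O(n_1 n_2 n_3/t + (n_1 n_2 + n_2 n_3 + n_1 n_3)\,r\,t^2)$ at every recursion level, including the decomposition cost $\tilde O(n_1 n_3 r)$ from \cref{lem:decomp-reg}.
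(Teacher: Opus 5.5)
Your proposal is correct and follows the paper's proof essentially step for step. You regularize via \cref{lem:decomp-reg}, recurse on $C_{\mathrm{sm}}$ and transpose for $C_{\mathrm{col}}$, then iterate over the parts $\ell$ with the column-frequency threshold $n_1/(rt)$. Light edges and the few-heavy case (via row-regularity) are brute-forced, and in the many-heavy case you emit the $rt$-slice-uniform adjustment split into $t$ instances and delete $H$ globally, which happens at most $t$ times.

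Two small imprecisions, neither affecting validity. First, the per-level cost is not geometrically decreasing, since the $n_1 n_2 n_3/t$ term does not depend on $r$; the $O(\log r)$ recursion depth is simply absorbed into the $\tilde O$. Second, heaviness depends on $\ell$, so "heavy for all parts $\ell$" is not the right justification. Deleting the heavy edges is safe because $C'$ retains all of $C_{\mathrm{row}}$, not just the part-$\ell$ entries.
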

\begin{proof}
Let us assume that we are given a low-rank decomposition of $C$; the statements for $A$ and $B$ follow by first rotating the given instance. By \cref{lem:decomp-reg} we may further assume that the given rank\=/$r$ decomposition of~$C$ is $R$-row-regular for some $R = O(\log (n_1 n_3))$. Formally, we apply the lemma to partition~$C$ into~\makebox{$C_\row \sqcup C_\col \sqcup C_\sm$}, where $C_\row$ has an $R$-row-regular rank-$r$ decomposition, $C_\col$ has an $R$-column-regular rank-$r$ decomposition, and $C_\sm$ has a rank-$r/2$ decomposition. It then remains to solve the Exact Triangle instances $(A, B, C_\row)$, $(A, B, C_\col)$ and $(A, B, C_\sm)$. We will describe how to solve $(A, B, C_\row)$ in the following. The instance $(A, B, C_\col)$ can be equivalently viewed as $(B^T, A^T, C_\col^T)$; in this case, too, we have access to a $R$-row-regular rank-$r$ decomposition of $C_\col^T$, so we can solve the instance also as described in the following. Finally, we solve $(A, B, C_\sm)$ recursively---since the rank $r$ halves with every recursive call this only leads to a logarithmic overhead.

Let $(U, V, S)$ denote the $R$-row-regular rank-$r$ decomposition of $C$. Let $\mathcal T \gets \emptyset$ be a set of exceptional triples, and let $\mathcal I \gets \emptyset$ be a set of $r$-slice-uniform potential adjustments of $(A, B, C)$. Consider the following algorithm. In an outer loop, we enumerate $\ell \gets 1, \dots, r$. We maintain the invariant that after the $\ell$-th iteration each exact triangle $(i, k, j)$ with $S[i, j] = \ell$ has been inserted into $\mathcal T$ or is present in one of the instances in $\mathcal I$.
\begin{enumerate}
    \item\label{lem:exact-tri-low-rank-to-slice-unif:itm:setup} \emph{(Heavy/Light Setup)} Define the matrices $A_\ell$ and $B_\ell$ of sizes $n_1 \times n_2$ and $n_2 \times n_3$, respectively, by
    \begin{align*}
        A_\ell[i, k] &= A[i, k] - U[i, \ell], \\
        B_\ell[k, j] &= -(B[k, j] - V[\ell, j]).
    \end{align*}
    Note that for all pairs $(i, j)$ with $S[i, j] = \ell$ the conditions $A[i, k] + B[k, j] = C[i, j]$ and $A_\ell[i, k] = B_\ell[k, j]$ are equivalent. We now classify the non-$\bot$ entries $(i, k)$ in $A_\ell$ as follows: We say that $(i, k)$ is \emph{light} if
    \begin{equation*}
        |\set{i' \in [n_1] : A_\ell[i, k] = A_\ell[i', k]}| \leq \frac{n_1}{r t},
    \end{equation*}
    and \emph{heavy} otherwise. In the following two steps we separately deal with the exact triangles $(i, k, j)$ involving light and heavy edges $(i, k)$, respectively.

    \begin{framed}
        \emph{Running Time:} $O((n_1 n_2 + n_2 n_3 + n_1 n_3) r)$ over all iterations.
    \end{framed}

    \item\label{lem:exact-tri-low-rank-to-slice-unif:itm:light} \emph{(Light Edges)} We enumerate all triples $(i, k, j)$ with $A_\ell[i, k] = B_\ell[k, j]$ for which $(i, k)$ is light as follows. Enumerate all pairs $(k, j) \in [n_2] \times [n_3]$. Only if there is a light edge $(i', k)$ do we enumerate the at most~$n_1 / (r t)$ rows $i$ with $A_\ell[i, k] = B_\ell[k, j]$. We insert each triple $(i, k, j)$ with $A[i, k] + B[k, j] = C[i, j]$ into the set $\mathcal T$. This correctly maintains the invariant for all exact triangles involving light edges.

    \begin{framed}
        \emph{Running Time:} $O(n_1 n_2 n_3 / (r t))$ per iteration, and thus $O(n_1 n_2 n_3 / t)$ across all iterations.
    \end{framed}

    \item\label{lem:exact-tri-low-rank-to-slice-unif:itm:heavy} \emph{(Heavy Edges)} We distinguish two cases depending on the number of heavy edges:
    \begin{enumerate}
        \item\label{lem:exact-tri-low-rank-to-slice-unif:itm:setup:itm:few} \emph{(Few Heavy Edges)} If there are at most $n_1 n_2 / t$ heavy edges, then we apply brute-force as follows. Enumerate all heavy pairs $(i, k)$, and enumerate all columns $j$ with $S[i, j] = \ell$. Each triple $(i, k, j)$ enumerated in this way that satisfies $A[i, k] + B[k, j] = C[i, j]$ is inserted into $\mathcal T$. This again clearly maintains the invariant for all exact triangles involving a heavy edge.

        \begin{framed}
            \emph{Running Time:} For each row $i$ there are at most $R \cdot (n_3 / r)$ columns $j$ with $S[i, j] = \ell$ as the given rank decomposition is assumed to be $R$-row-regular. Hence the running time is $O(n_1 n_2 / t \cdot R \cdot (n_3 / r)) = \tilde O(n_1 n_2 n_3 / (r t))$ per iteration, and thus $\tilde O(n_1 n_2 n_3 / t)$ across all iterations.
        \end{framed}

        \item\label{lem:exact-tri-low-rank-to-slice-unif:itm:setup:itm:many} \emph{(Many Heavy Edges)} The remaining case is that there are more than $n_1 n_2 / t$ heavy pairs. Let $H$ denote the set of heavy pairs $(i, k)$, and let~$A_H$ be the matrix $A$ restricted to the entries in $H$ (where all other entries are set to $\bot$). Our strategy is to solve the entire instance $(A_H, B, C)$ in this step (by means of the reduction), and afterwards to remove all heavy entries from $A$ \emph{globally}. To this end, construct matrices $A'$ and $C'$ by
        \begin{alignat*}{2}
            A'[i, k] &= A[i, k] - U[i, \ell] \qquad &&\text{if $(i, k) \in H$,} \\
            A'[i, k] &= \bot \qquad &&\text{otherwise,}
        \intertext{and}
            C'[i, j] &= C[i, j] - U[i, \ell]. \qquad
        \end{alignat*}
        By definition $(A', B, C')$ is a potential-adjusting restriction of $(A, B, C)$ that retains all the exact triangles involving heavy edges. Moreover, we claim that each column in $A'$ contains at most $r t$ distinct entries. Indeed, by construction $A'$ coincides with~$A_\ell$ on the non-$\bot$ (i.e., heavy) entries, and by definition each heavy entry appears at least~$n_1 / (r t)$ times in its column in $A_\ell$. It follows that the instance $(A', B, C')$ is $rt$-slice-uniform. We now further reduce $(A', B, C')$ to $t$ instances that are $r$-slice-uniform (in the trivial way) and insert these instances into the collection $\mathcal I$. 

        Afterwards, we remove all entries in $H$ \emph{globally}, i.e., we set $A[i, k] \gets \bot$ for all $(i, k) \in H$ for all future iterations.

        Regarding the correctness, observe that each exact triangle $(i, k, j)$ with $(i, k) \in H$ and \makebox{$S[i, j] = \ell$} is preserved in $(A', B, C')$. In fact, we have dealt with \emph{all} exact triangles with $(i, k) \in H$ (irrespective of $S[i, j]$), so we can indeed safely remove all pairs in $H$ from $A$ for the future iterations.

        \begin{framed}
            \emph{Running Time:} The construction of $H$ and of the $t$ equivalent instances takes linear time $\tilde O((n_1 n_2 + n_1 n_3 + n_2 n_3) t)$ per iteration. Moreover, with each execution of step~\ref{lem:exact-tri-low-rank-to-slice-unif:itm:setup:itm:many} we increase the size of $\mathcal I$ by at most $t$. Notably, however, we execute this step~\ref{lem:exact-tri-low-rank-to-slice-unif:itm:setup:itm:many} at most $t$ times in total, as after each iteration we replace $|H| \geq n_1 n_2 / t$ non-$\bot$ entries in $A$ by $\bot$. It follows that the total running time is $\tilde O((n_1 n_2 + n_1 n_3 + n_2 n_3) t^2)$ and that $|\mathcal I| \leq t^2$ throughout.
        \end{framed}
    \end{enumerate}
\end{enumerate}
This completes the proof. The correctness should be clear from the in-text explanations, and the total running time as analyzed before is $\tilde O(n_1 n_2 n_3 / t + (n_1 n_2 + n_1 n_3 + n_2 n_3) r t^2)$ as claimed. (Here we only loosely bound the dependence on $t$.)
\end{proof}

\subsection{Reduction from Slice-Uniform to Uniform Exact Triangle} \label{sec:exact-tri-low-rank:sec:slice-unif-to-unif}
The second step is captured by the following lemma. Its proof can be fully attributed to~\cite{AbboudFJVX25}---no new ideas are necessary. We include a complete proof here only because we need the more general rectangular version, and to point out that the reduction is potential-adjusting. Readers familiar with~\cite{AbboudFJVX25} can safely skip this section.

\begin{lemma}[Reduction from Slice-Uniform to Uniform Exact Triangle~\cite{AbboudFJVX25}] \label{lem:exact-tri-slice-unif-to-unif}
For any parameter $t \geq 1$ there is a potential-adjusting reduction from any \emph{$d$\=/slice-uniform} Exact Triangle instance to $O(t^{41} \log^2(n_1 n_2 n_3))$ \emph{$d$\=/uniform} instances. The reduction runs in deterministic time \makebox{$\tilde O(n_1 n_2 n_3 / t + (n_1 n_2 + n_2 n_3 + n_1 n_3) d t^{41})$}. In particular:
\begin{equation*}
    \ExactTri(n_1, n_2, n_3 \mid d) \leq \tilde O\parens*{t^{41} \cdot \ExactTri(n_1, n_2, n_3 \mid D \leq d) + \frac{n_1 n_2 n_3}{t} + (n_1 n_2 + n_2 n_3 + n_1 n_3) d t^{41}}.
\end{equation*}
\end{lemma}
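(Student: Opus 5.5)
We propose to follow the strategy of \cite{AbboudFJVX25}: we turn the per-slice structure into global structure with Fredman-style potential shifts, and we deal with the slices where this fails using a heavy/light split like the one in \cref{lem:exact-tri-low-rank-to-slice-unif}. By symmetry we may assume that the columns of $A$ each contain at most $d$ distinct entries, so column $k$ has a value set $X_k$ with $|X_k| \leq d$. The difficulty is that different columns $k$ may use completely unrelated value sets. Subtracting a potential $v[k]$ from column $k$ (and adding it to row $k$ of $B$) translates $X_k$ to $X_k - v[k]$. Our goal is to split $[n_2]$ into groups of columns, choose potentials so that the translated sets within a group almost coincide, and thereby get a union of size $O(d)$ per group. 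Each group then yields a potential adjustment of $(A,B,C)$ in which $A$ is $d$-uniform. The entries that fail to align are handled by brute force via $\mathcal T$. To make $B$ and $C$ uniform as well, we would repeat the same procedure on the rotated instances.

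The plan has three steps. \textbf{(1) Weight by frequency.} Within each column $k$, call an entry light if its value occurs at most $n_1/(dt)$ times in that column. We handle all triangles through light edges directly, in the same way as the light case in \cref{lem:exact-tri-low-rank-to-slice-unif}. For each pair $(k,j)$ we list the rows $i$ with $A[i,k] = C[i,j]-B[k,j]$ along the light value. This does not by itself help, since the target $C[i,j]-B[k,j]$ depends on $i$. So we first move to the equality form, using rank-type shifts of $C$ along rows, or we use sampling over $k$ as in \cite{AbboudFJVX25}; the total cost should be $O(n_1n_2n_3/t)$. After this step every remaining value in a column appears at least $n_1/(dt)$ times, so it is ``popular''. \textbf{(2) Align columns by a popular difference.} For two columns $k,k'$, the number of rows where $A[i,k]-A[i,k']$ takes a fixed value is large for some difference, by pigeonhole over the $\le d^2$ pairs of popular values weighted by their frequencies. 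We sample a set of $\tilde O(t^{c})$ pivot columns, and for each column $k$ we choose a pivot $p$ and a shift $v[k]$ such that $A[i,k]-v[k]=A[i,p]$ on a $1/\mathrm{poly}(t)$ fraction of rows. \textbf{(3) Peel.} We produce the adjusted instance restricted to the aligned entries; its values lie in $X_p$, so it is $d$-uniform. We then delete those entries globally and iterate. As in step~\ref{lem:exact-tri-low-rank-to-slice-unif:itm:setup:itm:many} of the previous lemma, each round removes a $1/\mathrm{poly}(t)$ fraction of the surviving entries, so after $\mathrm{poly}(t)\log(n_1n_2n_3)$ rounds only $n_1n_2/t$ entries remain. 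These leftover entries we handle brute-force, using a further heavy/light split on $C$ to keep the cost at $n_1n_2n_3/t$.

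Each round costs $\tilde O((n_1n_2+n_2n_3+n_1n_3)d\,\mathrm{poly}(t))$, namely to compute value histograms and to build the instances. Tracking the polynomial loss in $t$ through the pigeonhole, the sampling and the rotations should give an exponent of the form $t^{41}$, together with the two logarithmic factors that come from the peeling rounds and from the rotation.

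We expect step (2) to be the main obstacle: showing that a \emph{single} shift aligns a large fraction of a column with \emph{some} pivot, while keeping the aligned entries inside a set of size $O(d)$ rather than $d^2$. We would first try a density-increment argument. If a column fails to align well with every pivot, we add it as a new pivot; a counting argument over popular values then bounds the number of pivots by $\mathrm{poly}(t)$. If this bound does not hold up, we would fall back on the additive-energy pigeonhole from \cite{AbboudFJVX25} and accept the weaker exponent.
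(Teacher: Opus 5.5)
Your step (2) fails, and not only quantitatively: in general the slices of $A$ cannot be aligned with one another. Suppose the columns of $A$ carry independent random value sets $X_k$ of size $d$, drawn from a sufficiently large polynomial universe, with each value repeated $n_1/d$ times. Every entry survives your step (1). Yet w.h.p.\ all differences $x-x'$ with $x\in X_k$, $x'\in X_p$ are distinct, so $|X_k\cap(X_p+v)|\le 1$ for every pivot $p\ne k$ and every shift $v$. Hence at most a $1/d$ fraction of a column can be aligned. Your pigeonhole likewise only gives a $1/d^2$ fraction, which is polynomial in $d$, not in $t$. The density-increment fallback then turns every column into a pivot, and peeling removes essentially nothing. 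You are left with $\Theta(n_1n_2)$ entries and no argument for handling them in $o(n_1n_2n_3)$ time.

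The missing idea is that alignment must be sought against slices of the \emph{other} matrix, and that slices which align with nothing are cheap for a separate reason. The paper orients the instance so that the rows of $A$ are slice-uniform. It then runs the popular sum decomposition (Lemma~\ref{lem:pop-sum-decomp}) on the row sets $X_i$ of $A$ and the column sets $Y_j$ of $B$. A template $S_g=-Y_j$ is taken for a column $j$ such that $X_i+Y_j$ has a popular sum $s_{i,g}$ for at least $n_1/p$ rows $i$; each part $X_{i,g}=X_i\cap(s_{i,g}-Y_j)$ then has size at least $d'/p$. For the remainders $X_i^*$, all but $n_1n_3/p$ pairs $(i,j)$ have no popular sum. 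So $C[i,j]$ has only $O(d'/p)$ representations $x+y$ with $x\in X_i^*$, $y\in Y_j$, and each representation gives only $O(n_2/d_A)$ indices $k$. Listing them all costs $O(n_1n_3d+n_1n_2n_3t/p)$.

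Two further pieces are missing.

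First, your step (1) has no implementation, as you note yourself. The light/heavy trick of Lemma~\ref{lem:exact-tri-low-rank-to-slice-unif} uses the rank decomposition of $C$ to make the target independent of $i$, and no such decomposition is available here. The paper instead buckets the entries of $A$ per row and of $B$ per column by frequency (between $2^\ell$ and $2^{\ell+1}$ occurrences). This bucketing is what yields the $O(n_2/d_A)$ bound above.

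Second, you cannot make $B$ and $C$ uniform by ``repeating on rotated instances'', since neither is slice-uniform. The paper first brute-forces the bucket pairs with $d_B\ge d_At$, in time $O(n_1n_3d_A\cdot n_2/d_B)\le O(n_1n_2n_3/t)$. Afterwards $|Y_j|\le d'\le d_At$, so the $Y_j$ can be decomposed as well, giving matrices $B_h$ with values in $T_h$. It then makes $C_{g,h}$ uniform by keeping only entries in $P_{d'/q}(S_g,T_h)$, a set of size at most $d'q$, and enumerating the fewer than $d'/q$ representations of every other entry.

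The count $t^{41}=p^4t^3q^3$ (with $p=t^2$, $q=t^{10}$) and the $\log^2$ factor (from the pairs of frequency buckets) come from exactly this bookkeeping.
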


The proof of \cref{lem:exact-tri-slice-unif-to-unif} requires some setup. Recall that $r_{X+Y}(z)$ denotes the multiplicity of $z$ in the sumset $X + Y$, i.e., the number of representations of the form $z = x + y$ for $(x, y) \in X \times Y$.

\begin{definition}[Popular Sums]
For two integer sets $X, Y \subseteq \Int$, we define their set of \emph{$s$-popular sums} as
\begin{equation*}
    P_s(X, Y) = \set{ z \in X + Y : r_{X+Y}(z) \geq s}.
\end{equation*}
\end{definition}

\begin{observation} \label{obs:pop-sum-size}
$|P_s(X, Y)| \leq |X| \, |Y| / s$.
\end{observation}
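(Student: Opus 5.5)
This is a double-counting argument. The sum of multiplicities over all elements of the sumset counts every pair exactly once:
\begin{equation*}
    \sum_{z \in X + Y} r_{X+Y}(z) = |X| \, |Y|.
\end{equation*}
This identity holds because each pair $(x, y) \in X \times Y$ contributes exactly one representation, namely to $z = x + y$.

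Next, restrict the sum to the $s$-popular sums. All terms $r_{X+Y}(z)$ are nonnegative, so dropping the non-popular $z$ can only decrease the sum. Each remaining term is at least $s$ by the definition of $P_s(X, Y)$. This gives the chain
\begin{equation*}
    s \cdot |P_s(X, Y)| \leq \sum_{z \in P_s(X, Y)} r_{X+Y}(z) \leq \sum_{z \in X + Y} r_{X+Y}(z) = |X| \, |Y|.
\end{equation*}
Dividing by $s$ gives $|P_s(X, Y)| \leq |X| \, |Y| / s$.

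No step here is a real obstacle; this is just a Markov-type counting bound. The only point needing care is that $s > 0$ so the division is valid. This holds in the intended use, since popularity thresholds are positive. For $s \leq 0$ the statement is either vacuous or not meaningful.
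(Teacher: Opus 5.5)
Your proof is correct and is the standard double-counting argument the paper relies on. The paper states the observation without proof, but the identity $\sum_{z} r_{X+Y}(z) = |X|\,|Y|$ is exactly what it uses elsewhere, for instance in the proof of the Greedy Covering lemma. Your assumption $s > 0$ is harmless, since every popularity threshold the paper uses is positive.
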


It is easy to compute (a superset of) the set of $s$-popular sums in randomized time $\tilde O(|X| \, |Y| / s)$. The idea is to subsample $X' \subseteq X$ with rate roughly $1/s$; with good probability the popular sums are preserved in~$X' + Y$. Fischer, Jin, and Xu~\cite{FischerJX25} showed that with more effort this task can be derandomized in essentially the same running time.

\begin{lemma}[{{Approximating Popular Sums~\cite[Theorem~1.9]{FischerJX25}}}] \label{lem:approx-pop-sums}
Given sets $X, Y \subseteq [u]$ and $s \geq 1$, we can compute $P_{2s}(X, Y) \subseteq P \subseteq P_s(X, Y)$ in randomized time $\tilde O(|X| \, |Y| / s)$ or deterministic time~\makebox{$|X| \, |Y| / s \cdot u^{o(1)}$}.
\end{lemma}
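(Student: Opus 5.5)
The plan is to analyse the simple randomized subsampling algorithm first, and then derandomize it by citing \cite{FischerJX25}. The randomized algorithm samples $X' \subseteq X$, keeping each element independently with probability $p = \min\set{1, c\log u / s}$ for a sufficiently large constant $c$. It computes $X' + Y$ together with the multiplicities $r_{X'+Y}(\cdot)$ by the sparse convolution of \cref{lem:sparse-conv}, and outputs
\begin{equation*}
    P = \set{z : r_{X'+Y}(z) \geq \tfrac{3}{2} p s}.
\end{equation*}
If $p = 1$ we simply compute $X + Y$ with its multiplicities exactly and threshold at $s$, which costs $\tilde O(|X|\,|Y|) = \tilde O(|X|\,|Y|/s)$ because then $s = O(\log u)$.

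For correctness, fix $z$ and note that $r_{X'+Y}(z) = |X' \cap (z - Y)|$ is a sum of independent indicators with mean $p \cdot r_{X+Y}(z)$. If $z \in P_{2s}(X,Y)$, the mean is at least $2ps \geq 2c\log u$, and a Chernoff bound shows the count falls below $\frac32 ps$ with probability $u^{-\Omega(c)}$. If $r_{X+Y}(z) < s$, the mean is below $ps$, and exceeding $\frac32 ps$ again has probability $u^{-\Omega(c)}$.

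We need a union bound over all relevant $z$. There are at most $|X+Y| \leq |X|\,|Y| \leq u^2$ candidate sums, so the union bound goes through for large $c$. This gives $P_{2s} \subseteq P \subseteq P_s$ with high probability.

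For the running time, $|X'| = \tilde O(|X|/s)$ with high probability, so $|X'+Y| \leq |X'|\,|Y| = \tilde O(|X|\,|Y|/s)$. \cref{lem:sparse-conv} then runs in time $|X'+Y| \cdot (\log u)^{O(1)}$, which is within the claimed randomized bound.

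The deterministic bound is the main obstacle, and I would not reprove it. It is exactly \cite[Theorem~1.9]{FischerJX25}, and the statement here is a direct citation, so the plan is to invoke that theorem. Conceptually, one replaces the random sample by a structured one, for example reducing modulo a suitably chosen set of primes or moduli and using hashing families whose behaviour on sumsets is controlled via additive combinatorics. This costs the $u^{o(1)}$ overhead, of the form $2^{\tilde O(\sqrt{\log u})}$, mentioned earlier in the remarks. If a self-contained argument were required, I would derandomize the sampling step by the method of conditional expectations. The objective would be the number of misclassified sums $z$, estimated via pessimistic Chernoff estimators. The difficulty is evaluating those estimators for all $z$ simultaneously without time $|X|\,|Y|$, and this is precisely what \cite{FischerJX25} overcomes.
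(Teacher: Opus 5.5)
Your proposal matches the paper's treatment. The paper likewise only sketches subsampling $X$ at rate roughly $1/s$ for the randomized bound and cites \cite[Theorem~1.9]{FischerJX25} for the deterministic one, and your Chernoff-plus-union-bound analysis with threshold $\frac{3}{2}ps$ and sparse convolution correctly fills in that randomized sketch. One small technicality: the claim $|X'| = \tilde O(|X|/s)$ should read $|X'| = O(|X|\log u/s + \log u)$, and you should first dispose of the case $s > \min\{|X|,|Y|\}$ by outputting $\emptyset$ (there $P_s(X,Y)=\emptyset$), so that the extra $|Y|\log u$ term and the input-reading time stay within $\tilde O(|X|\,|Y|/s)$.
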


The following lemma---the so-called ``popular sum decomposition''---is the key insight behind this reduction step from~\cite{AbboudFJVX25}. Intuitively, it states that sets $X_1, \dots, X_n$ and $Y_1, \dots, Y_m$ can be decomposed into few structured parts that look exactly alike (up to shifts), plus some remainders $X_1^*, \dots, X_n^*$ and $Y_1^*, \dots, Y_n^*$. Almost all remainder pairs $X_i^*, Y_j^*$, however, have \emph{no} popular sum.

\begin{lemma}[Popular Sum Decomposition~\cite{AbboudFJVX25}] \label{lem:pop-sum-decomp}
Let $X_1, \dots, X_n, Y_1, \dots, Y_m \subseteq [u]$ be sets of size at most $d$, and let $p \geq 1$. Then there exist partitions
\begin{align*}
    X_i &= X_i^* \sqcup \bigsqcup_{g=1}^{p^2} X_{i, g}, \\
    Y_j &= Y_j^* \sqcup \bigsqcup_{h=1}^{p^2} Y_{j, h}
\end{align*}
satisfying the following two properties:
\begin{enumerate}
    \item There are sets $S_g$ of size $|S_g| \leq d$ and shifts $s_{i, g}$ so that $X_{i, g} \subseteq S_g + s_{i, g}$ (for all $i \in [n], g \in [p^2]$), and\newline there are sets $T_h$ of size $|T_h| \leq d$ and shifts $t_{j, h}$ so that $Y_{j, h} \subseteq T_h + t_{j, h}$ (for all $j \in [m], h \in [p^2]$).
    \item $\abs{\set{(i, j) \in [n] \times [m] : P_{2d / p}(X_i^*, Y_j) \neq \emptyset}} \leq n m / p$, and\newline $\abs{\set{(i, j) \in [n] \times [m] : P_{2d / p}(X_i, Y_j^*) \neq \emptyset}} \leq n m / p$.
\end{enumerate}
Moreover, there is an algorithm that, given the sets $X_i, Y_j$, computes the partitions along with the sets $S_g, T_h$ and the shifts $s_{i, \ell}, t_{j, \ell}$ in randomized time \smash{$\tilde O(nmdp^3)$} or deterministic time \smash{$n m d p^3 \cdot u^{o(1)}$}. 
\end{lemma}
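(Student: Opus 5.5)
We prove \cref{lem:pop-sum-decomp} by an iterative greedy extraction that repeatedly pulls out a common ``template'' pattern, using popular sums as certificates of shared structure. It suffices to construct the decomposition of the $X_i$'s so that the first half of property~2 holds; the $Y_j$'s are handled symmetrically by swapping roles. The $p^2$ budget of parts is then split between the two sides.

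\emph{Step 1 (extraction).} Maintain the current remainders $X_i^*$, initialized to $X_i$, and a counter $g$. As long as more than $nm/p$ pairs $(i,j)$ satisfy $P_{2d/p}(X_i^*, Y_j) \neq \emptyset$, some fixed $j$ must be ``bad'' for more than $n/p$ indices $i$. Fix such a $j$ and set $S_g := -Y_j$, which has $|S_g| \leq d$. For each $i$ with a popular sum $z_i \in P_{2d/p}(X_i^*, Y_j)$, put $s_{i,g} := z_i$ and
\begin{equation*}
    X_{i,g} := X_i^* \cap (z_i - Y_j) = X_i^* \cap (S_g + s_{i,g}),
\end{equation*}
and remove $X_{i,g}$ from $X_i^*$. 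For all other $i$, set $X_{i,g} = \emptyset$. Property~1 holds by construction.

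\emph{Step 2 (termination and counting).} Each extracted part has size $|X_{i,g}| = r_{X_i^*+Y_j}(z_i) \geq 2d/p$, and this happens for more than $n/p$ indices $i$. Hence one round removes more than $2dn/p^2$ elements from the potential $\sum_i |X_i^*| \leq nd$, so the number of rounds is below $p^2/2 \leq p^2$. When the loop stops, property~2 holds for $X^*$. For the symmetric side, I would run the same procedure with $X_i$ (unrestricted) against $Y_j^*$. Note that property~2 for $Y^*$ is stated against the full $X_i$, which is exactly what this separate run gives. The two runs are independent, so no interaction issue arises.

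\emph{Step 3 (algorithm).} This is the main obstacle. Each round must detect which pairs have a $2d/p$-popular sum, which naively costs $nmd^2$. Instead, I would use \cref{lem:approx-pop-sums} with $s = d/p$ on every pair $(X_i^*, Y_j)$, costing $d^2/(d/p) = dp$ per pair and thus $nmdp$ per round. Over $p^2$ rounds this totals $nmdp^3$, randomized up to polylogarithmic factors, or deterministic with a $u^{o(1)}$ factor, matching the claimed bounds.

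The factor-$2$ slack in that lemma is why I use threshold $d/p$ and target $2d/p$. Concretely, the loop runs while more than $nm/p$ pairs have a nonempty approximate set $P \supseteq P_{2d/p}$. Every sum in $P$ is $d/p$-popular, so each extracted part has size at least $d/p$. This changes the counting in Step~2 by only a constant factor: at most $2p^2$ rounds suffice. Running each side with parameter $p/2$ keeps the total part count within $p^2$, or alternatively we absorb the constant by reindexing.

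Finally, the set of bad pairs can be recomputed each round, or updated only for rows $i$ that actually changed. Either choice fits the time budget.
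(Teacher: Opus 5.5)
Your proposal is the paper's proof. You repeatedly pick a $j$ whose approximate popular-sum sets $P_{i,j}$ (from \cref{lem:approx-pop-sums} with threshold $d/p$, so $P_{2d/p}(X_i^*,Y_j)\subseteq P_{i,j}\subseteq P_{d/p}(X_i^*,Y_j)$) are nonempty for many $i$, set $S_g=-Y_j$ with a popular sum as shift, and peel off $X_i^*\cap(S_g+s_{i,g})$. You stop once at most $nm/p$ pairs have $P_{i,j}\neq\emptyset$, and handle the $Y$ side by a separate run against the original $X_i$.

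The only flaw is in your final count. Each continuing round removes at least $d/p$ elements from each of more than $n/p$ rows, i.e.\ more than $nd/p^2$ of the at most $nd$ elements. So there are fewer than $p^2$ rounds per side; your bound of $2p^2$ is needlessly loose. Moreover, the statement allots $p^2$ parts to each side separately (indices $g$ and $h$), so there is no budget to split between the two sides.

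No repair is therefore needed, which matters because the repairs you suggest would not prove the lemma as stated. Running with parameter $p/2$ weakens property~2 to threshold $4d/p$ and bound $2nm/p$. And parts with different templates $S_g$ cannot be merged by reindexing.
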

\begin{proof}
We only show how to construct the partition of $X$ and the accompanying sets~$S_g$ and shifts~$s_{i, g}$. The partition of $Y$ is obtained symmetrically by exchanging $X_1, \dots, X_n$ and $Y_1, \dots, Y_m$. Consider the following algorithm. For each $g \gets 1, 2, \dots$ we run the following steps:
\begin{enumerate}
    \item\label{lem:pop-sum-decomp:itm:approx-pop} Apply \cref{lem:approx-pop-sums} to compute sets $P_{2d / p}(X_i, Y_j) \subseteq P_{i, j} \subseteq P_{d / p}(X_i, Y_j)$ for all $(i, j) \in [n] \times [m]$.
    \item\label{lem:pop-sum-decomp:itm:stop} If $\abs{\set{(i, j) \in [n] \times [m] : P_{i, j} \neq \emptyset}} \leq n m / p$: Pick $X_i^* \gets X_i$ for all $i \in [n]$ and stop the algorithm.
    \item\label{lem:pop-sum-decomp:itm:next} Otherwise: Pick some $j \in [m]$ with $\abs{\set{i \in [n] : P_{i, j} \neq \emptyset}} \geq n / p$. We choose $S_g = -Y_j$. Let us call an index~$i$ \emph{good} if $P_{i, j} \neq \emptyset$ and \emph{bad} otherwise. For each bad $i$ we choose $X_{i, g} = \emptyset$ and $s_{i, g}$ arbitrarily. For each good $i$ we instead choose an arbitrary shift $s_{i, g} \in P_{i, j}$. Then we pick $X_{i, g} = X_i \cap (S_g + s_{i, g})$. This forms the next part in the partition of $X_i$, so we can safely remove it, $X_i \gets X_i \setminus X_{i, g}$.
\end{enumerate}

\proofparagraph{Correctness}
It is clear that if the algorithm terminates then we have satisfied the two claimed properties: By construction $X_i$ is indeed partitioned into $X_{i, 1}, X_{i, 2}, \dots$ and $X_i^*$, we have that $X_{i, g} \subseteq S_g + s_{i, g}$, and when the algorithm terminates then also $\abs{\set{(i, j) \in [n] \times [m] : P_{2d / p}(X_i^*, Y_j) \neq \emptyset}} \leq n m / p$ (as $P_{2d / p}(X_i, Y_j) \subseteq P_{i, j}$).

It remains to prove that the algorithm terminates after at most $p^2$ iterations. To see this, focus on an arbitrary iteration $g$, let $j$ be the index picked in step~\ref{lem:pop-sum-decomp:itm:next}, and let $i$ be good in this iteration. We claim that the constructed set $X_{i, g}$ has size at least $d / p$. Indeed, as $s_{i, g} \in P_{i, j} \subseteq P_{d / p}(X_i, Y_j)$ we have that
\begin{equation*}
    |X_{i, g}| = |X_i \cap (s_{i, g} + S_g)| = |X_i \cap (s_{i, g} - Y_j)| = r_{X_i+Y_j}(s_{i, g}) \geq \frac{d}{p}.
\end{equation*}
Thus, we reduce the size $|X_i|$ of every good index $i$ by at least $d / p$. In total there are at least $n / p$ such good indices, hence in each step we reduce $\sum_{i \in [n]} |X_i|$ by at least $nd / p^2$. Initially we have $\sum_{i \in [n]} |X_i| \leq nd$, and so the claim follows.

\proofparagraph{Running Time}
Focus on a single iteration. In step~\ref{lem:pop-sum-decomp:itm:approx-pop} we apply \cref{lem:approx-pop-sums} $nm$ times. Each execution takes deterministic time $d^2 / (d / p) \cdot u^{o(1)} = d p \cdot u^{o(1)}$, so step~\ref{lem:pop-sum-decomp:itm:approx-pop} takes time $n m d p \cdot u^{o(1)}$. Testing the condition in step~\ref{lem:pop-sum-decomp:itm:stop} takes negligible time $O(n m)$. In step~\ref{lem:pop-sum-decomp:itm:next} we spend time proportional to the total set sizes, $O(n m d)$. Summing over these contributions and over the at most $p^2$ iterations, the claimed time bound $n m d p^3 \cdot u^{o(1)}$ follows. The randomized time bound follows analogously.
\end{proof}

\begin{proof}[Proof of \cref{lem:exact-tri-slice-unif-to-unif}]
Let $(A, B, C)$ denote the given $d$-slice-uniform instance. Note that the statement is symmetric with respect to the sizes $n_1, n_2, n_3$, hence by rotating and/or transposing the given instance we may assume without loss of generality that the rows of $A$ contain at most~$d$ distinct entries each.

\proofparagraph{Outer Regularization}
Before we give the core reduction, we run the following simple preprocessing step to ensure that all entries in $A$ appear roughly equally often in their columns, and similarly for $B$. We partition $A$ into $L = \ceil{\log n_2}$ matrices $A_0, \dots, A_L$ where in the matrix $A_\ell$ we include all entries that appear at least $2^\ell$ and at most $2^{\ell+1}$ times in their respective \emph{row}. Similarly partition $B$ into $B_0, \dots, B_L$ according to the frequencies per \emph{column}. It remains to deal with all pairs of instances $(A_\ell, B_{\ell'}, C)$ (for~\makebox{$\ell, \ell' \in [L]$}).

For the rest of the proof, fix one such instance, and let us write $A = A_\ell$ and $B = B_{\ell'}$ for simplicity of notation. Let $d_A = \min\set{d, n_2 / 2^\ell}$ and $d_B = n_2 / 2^{\ell'}$, and verify that there are at most $d_A$ distinct entries per row in $A$, and at most $d_B$ distinct entries per column in $B$. Let $d' = \max\set{d_A, d_B}$.

Finally, we argue that we may assume that $d_B \leq d_A \cdot t$ in the following. Indeed, if instead $d_B \geq d_A \cdot t$ then we can solve the instance by brute-force as follows. Enumerate all pairs $(i, j) \in [n_1] \times [n_3]$, enumerate the at most $d_A$ entries $x$ in the $i$-th row of $A$, and enumerate all $k \in [n_2]$ with $B[k, j] = C[i, j] - x$. We insert all exact triangles $(i, k, j)$ enumerated in this step into $\mathcal T$.

\begin{framed}
    \emph{Running Time:} $\tilde O(n_1 n_3 \cdot d_A \cdot (n / d_B)) = \tilde O(n_1 n_2 n_3 / t)$.
\end{framed}

\proofparagraph{Uniformization}
Let $X_i$ denote the set of entries in the $i$-th row of $A$. Let $Y_j$ denote the set of entries in the $j$-th column of $B$. All these sets have size at most $d'$. Run the popular sum decomposition (\cref{lem:pop-sum-decomp}) for these sets, and some parameter $p \geq 1$ to be determined later. We call an exact triangle $(i, k, j)$ \emph{$A$-exceptional} if $A[i, k] \in X_i^*$, \emph{$B$-exceptional} if $B[k, j] \in Y_j^*$, and \emph{ordinary} otherwise.

\begin{framed}
    \emph{Running Time:} $\tilde O(n_1 n_3 d p^3 \cdot u^{o(1)})$ (recall that throughout $u = (n_1 n_2 n_3)^{O(1)}$, hence $u^{o(1)} = (n_1 n_2 n_3)^{o(1)}$).
\end{framed}

\begin{enumerate}
    \item\label{lem:exact-tri-slice-unif-to-unif:itm:exc} \emph{(Exceptional Triangles)} We will explicitly enumerate all exceptional exact triangles. Here we describe how to list all $A$-exceptional exact triangles and omit the analogous argument for listing $B$-exceptional triangles. The first step is to apply \cref{lem:approx-pop-sums} to compute sets $P_{4d' / p}(X_i^*, Y_j) \subseteq P_{i, j} \subseteq P_{2d' / p}(X_i^*, Y_j)$, for all pairs $(i, j) \in [n_1] \times [n_3]$. We call an $A$\=/exceptional triangle $(i, k, j)$ \emph{popular} if $C[i, j] \in P_{i, j}$ and \emph{unpopular} otherwise. We will deal with the popular and unpopular exceptional triangles in two separate steps:
    \begin{enumerate}
        \item\label{lem:exact-tri-slice-unif-to-unif:itm:exc:itm:unpop} \emph{(Unpopular Triangles)} Enumerate all pairs $(i, j) \in [n_1] \times [n_3]$ with $C[i, j] \not\in P_{i, j}$. By definition, $C[i, j] \not\in P_{i, j} \supseteq P_{4d' / p}(X_i^*, Y_j)$ can be expressed as a sum $x + y$ of elements $x \in X_i^*$ and $y \in Y_j$ in at most $4d' / p$ ways. Enumerate all such representations $(x, y)$, then enumerate all $k \in [n_2]$ with $A[i, k] = x$ and test if $(i, k, j)$ is an exact triangle; if yes, we insert it into the set $\mathcal T$.

        \begin{framed}
            \emph{Running Time:} For each pair $(i, j)$ we spend time $O(d_A)$ to list the relevant representations, plus time $O(d' / p \cdot n_2 / d_A)$ to enumerate all pairs $x, y, k$. The total time is:
            \begin{equation*}
                O\parens*{n_1 n_3 \cdot \parens*{d_A + \frac{d'}{p} \cdot \frac{n_2}{d_A}}} = O\parens*{n_1 n_3 \cdot \parens*{d + \frac{d_A t}{p} \cdot \frac{n_2}{d_A}}} = O\parens*{n_1 n_3 d + \frac{n_1 n_2 n_3 t}{p}}.
            \end{equation*}
        \end{framed}
        
        \item\label{lem:exact-tri-slice-unif-to-unif:itm:exc:itm:pop} \emph{(Popular Triangles)} The number of pairs $(i, j)$ with \makebox{$P_{i, j} \subseteq P_{2d' / p}(X_i^*, Y_j) \neq \emptyset$} is at most $n_1 n_3 / t$ by \cref{lem:pop-sum-decomp}. For each such pair we enumerate all possible $k \in [n_2]$, test if $(i, k, j)$ is an exact triangle, and in that case store $(i, k, j)$ in $\mathcal T$.

        \begin{framed}
            \emph{Running Time:} $O(n_1 n_2 n_3 / t)$.
        \end{framed}
    \end{enumerate}
    \item\label{lem:exact-tri-slice-unif-to-unif:itm:ord} \emph{(Ordinary Triangles)} Enumerate all pairs $g, h \in [p^2]$. Let $A_h, B_g, C_{g, h}$ be the matrices defined by
    \begin{align*}
        A_g[i, k] &=
        \begin{cases}
            \mathrlap{A[i, k] - s_{i, g}}\phantom{C[i, j] - s_{i, g} - t_{j, h}} &\text{if $A[i, k] \in X_{i, g}$,} \\
            \bot &\text{otherwise,}
        \end{cases} \\[\smallskipamount]
        B_h[k, j] &=
        \begin{cases}
            \mathrlap{B[k, j] - t_{j, h}}\phantom{C[i, j] - s_{i, g} - t_{j, h}} &\text{if $B[k, j] \in Y_{j, h}$,} \\
            \bot &\text{otherwise,}
        \end{cases} \\[\smallskipamount]
        C_{g, h}[i, j] &=
        \begin{cases}
            C[i, j] - s_{i, g} - t_{j, h} &\text{if $C[i, j] \neq \bot$,} \\
            \bot &\text{otherwise.}
        \end{cases}
    \end{align*}
    Each ordinary triangle $(i, k, j)$ appears in exactly one Exact Triangle instance $(A_g, B_h, C_{g, h})$, namely the instance with $A[i, k] \in X_{i, g}$ and $B[k, j] \in Y_{j, h}$. Moreover, the value of the triangle is unchanged in that instance as the terms $s_{i, g}$ and $t_{j, h}$ cancel:
    \begin{equation*}
        A_g[i, k] + B_h[k, j] - C_{g, h}[i, j] = A[i, k] + B[k, j] - C[i, j].
    \end{equation*}
    It is easily verified that all non-$\bot$ entries in $A_g$ stem from the set $S_g$ (as $X_{i, g} - s_{i, g} \subseteq S_g$) and all non-$\bot$ entries in $B_h$ stem from the set $T_h$ (as $Y_{j, h} - t_{j, h} \subseteq T_h$). Recall that further $|S_g|, |T_h| \leq d'$, so in each instance $(A_g, B_h, C_{g, h})$ the matrices $A_g$ and $B_h$ satisfy the $d'$-uniformity condition.
    
    In the following we will further modify $C_{g, h}$ to also become $d'$-uniform. Let $q$ be another parameter to be determined later, and compute the set $P := P_{d' / q}(S_g, T_h)$ (by brute-force, say, in negligible time). Then call a pair~$(i, j)$ \emph{popular} if $C_{g, h}[i, j] \in P$ and \emph{unpopular} otherwise. Consider the following two cases:
    \begin{enumerate}
        \item\label{lem:exact-tri-slice-unif-to-unif:itm:ord:itm:unpop} \emph{(Unpopular Entries)} This case is similar to step~\ref{lem:exact-tri-slice-unif-to-unif:itm:exc:itm:unpop}. Our goal is to list all exact triangles $(i, k, j)$ in $(A_g, B_h, C_{g, h})$ where $(i, j)$ is unpopular. To this end enumerate all unpopular pairs $(i, j)$. Recall that there are at most $d' / q$ representations of $C[i, j]$ as a sum $x + y$ where $x \in S_g$ and $y \in T_h$. We enumerate all such representations, and further all $k \in [n_2]$ with $A_g[i, k] = x$. We test if $(i, k, j)$ is an exact triangle, and in this case store $(i, k, j)$ in $\mathcal T$.

        \begin{framed}
            \emph{Running Time:} For each unpopular pair $(i, j)$ we spend time $O(d')$ to list all relevant representations, plus time $O(d' / q \cdot n_2 / d_A)$ to enumerate all $x, y, k$. In total, we repeat this step~$p^4$ times, leading to a total time of:
            \begin{align*}
                O\parens*{p^4 \cdot n_1 n_3 \cdot \parens*{d' + \frac{d'}{q} \cdot \frac{n_2}{d_A}}}
                &= O\parens*{p^4 \cdot n_1 n_3 \cdot \parens*{d t + \frac{d_A t}{q} \cdot \frac{n_2}{d_A}}} \\
                &= O\parens*{n_1 n_2 n_3 d t p^4 + \frac{n_1 n_2 n_3 t p^4}{q}}.
            \end{align*}
        \end{framed}
        
        \item\label{lem:exact-tri-slice-unif-to-unif:itm:ord:itm:pop} \emph{(Popular Entries)} Let $C^\pop_{g, h}$ be the restriction of $C_{g, h}$ to the popular entries (where we replace all unpopular entries by $\bot$). By \cref{obs:pop-sum-size} we have that $|P| \leq |S_g| \, |T_h| / (d' / q) \leq d' q$, so by construction the instance \smash{$(A_g, B_h, C^\pop_{g, h})$} is $d' q$-uniform. We trivially reduce that instance further to at most \smash{$((d' q) / d)^3 \leq t^3 q^3$} instances that are $d$-uniform and store the resulting instances in~$\mathcal I$.

        \begin{framed}
            \emph{Running Time:} $\tilde O(p^4 t^3 q^3 \cdot (n_1 n_2 + n_2 n_3 + n_1 n_3))$.
        \end{framed}
        
    \end{enumerate}
\end{enumerate}
This completes the description of the reduction. Clearly it is potential-adjusting (with potential functions $u[i] = -s_{i, g}$, $v[k] = 0$, and $w[j] = -t_{j, h}$).

\proofparagraph{Running Time}
The preprocessing step takes time $\tilde O(n_1 n_2 n_3 / t)$ as argued before and leads to $O(\log^2 n_2)$ instances we have to deal with. The running time of the uniformization step is bounded as follows by summing over all the contributions analyzed before:
\begin{equation*}
    \tilde O\parens*{(n_1 n_2 + n_1 n_3 + n_2 n_3) d \cdot p^4 t^3 q^3 + \frac{n_1 n_2 n_3}{t} + \frac{n_1 n_2 n_3 t}{p} + \frac{n_1 n_2 n_3 t p^4}{q}}.
\end{equation*}
This becomes as claimed by setting the parameters $p = t^2$ and $q = t^{10}$. From this choice of parameters it also follows that $|\mathcal I| \leq O(\log^2 (n_1 n_2 n_3) \cdot p^4 t^3 q^3) \leq O(t^{41} \log^2 (n_1 n_2 n_3))$ as claimed.
\end{proof}

\subsection{Reduction from Low-Rank to Uniform Regular Exact Triangle} \label{sec:exact-tri-low-rank:sec:low-rank-to-unif-regular}
The third step is to reduce to instances that are not only $D$-uniform but also \emph{$1/D$-regular} (i.e., each integer appears in at most a $1/D$-fraction of the entries of each row and column in all three matrices). This is one of the most technical steps in~\cite{AbboudFJVX25}. Roughly, the idea is to show that each $D$-uniform instance can be partitioned into two pieces: a $D$-uniform $1/D$-regular part, and a remaining part that is $D'$-slice-uniform for some~\makebox{$D' \ll D$}. The first part is as intended. And the second part can be solved by first applying \cref{lem:exact-tri-slice-unif-to-unif} and then \emph{recursing} on the polynomially many resulting $D'$-uniform instances. By carefully choosing the involved parameters this indeed leads to a subcubic reduction---but with quite bad quantitative bounds. Specifically, the resulting algorithm for $n^{3-\omega-\epsilon}$-slice-uniform Exact Triangle runs in time \smash{$n^{3-\epsilon^{O(1)}}$}.

Here we propose a twist to the argument in~\cite{AbboudFJVX25}: Instead of recursing on polynomially many instances that are $D'$-slice-uniform for some $D' \ll D$, it turns out to be possible to recurse on just $O(1)$ instances that are $r$-rank for some~\makebox{$r \ll D$}. This, in contrast, results in a final running time of $n^{3-\Omega(\epsilon)}$ as stated in \cref{thm:exact-tri-low-rank}.

\begin{lemma}[Reduction from Low-Rank to Uniform Regular Exact Triangle] \label{lem:exact-tri-low-rank-to-unif-reg}
For any parameter $t \geq 1$ there is a potential-adjusting reduction from any \emph{rank-$r$} Exact Triangle instance to $t^{875} (n_1 n_2 n_3)^{o(1)}$ instances that are each \emph{$D$-uniform} and \emph{$1/D$-regular} for some $D \leq r$. The reduction runs in deterministic time $(n_1 n_2 n_3 / t + (n_1 n_2 + n_2 n_3 + n_1 n_3) r t^{875})^{1+o(1)}$. In particular:
\begin{align*}
    &\ExactTri(n_1, n_2, n_3 \mid r) \\
    &\qquad \leq \parens*{t^{875} \cdot \ExactTri(n_1, n_2, n_3 \mid D \leq r; \rho \leq 1/D) + \frac{n_1 n_2 n_3}{t} + (n_1 n_2 + n_2 n_3 + n_1 n_3) r t^{875}}^{1+o(1)}.
\end{align*}
\end{lemma}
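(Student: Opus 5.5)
The reduction has three stages: turn rank into uniformity, split a uniform instance into a regular part and a part of strictly smaller rank, and recurse on that smaller-rank part. Concretely, I would first apply \cref{lem:exact-tri-low-rank-to-slice-unif} and then \cref{lem:exact-tri-slice-unif-to-unif}, each with parameter $t$. This reduces the rank-$r$ instance, by a potential-adjusting reduction, to $t^{O(1)}(n_1n_2n_3)^{o(1)}$ instances that are $D$-uniform for $D \leq r$. Both lemmas cost $\tilde O(n_1n_2n_3/t + (n_1n_2+n_2n_3+n_1n_3) r t^{O(1)})$, and composing potential-adjusting reductions is again potential-adjusting. What remains is to regularize one $D$-uniform instance $(A,B,C)$.

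For the regularization I would use a heavy/light split on frequencies. Fix a threshold $D' = D/t^{c}$ for a suitable constant $c$. Call an entry heavy if its value occupies more than a $1/(D t^{c'})$-fraction of its row or its column, in any of the three matrices, and light otherwise. The light entries would be regular with $\rho \approx t^{c'}/D$ rather than $1/D$. To reach exactly $1/D$-regularity, I would split each matrix into $t^{O(1)}$ random or greedy pieces so that each value's share per row and column drops by the required factor. This costs $t^{O(1)}$ instances and keeps $D$-uniformity.

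For the heavy entries, the key observation is a counting bound. In each row (or column), at most $D t^{c'}$ values can be heavy, and each heavy value occupies a $\geq 1/(Dt^{c'})$ fraction. I would then cover the heavy entries by a small number of ``value-classes per row/column''. I would try to show that the heavy part of each matrix has select-plus rank at most $D'$, with the aim $D' \leq r/2$ so that one halving per level costs only a constant factor. Then I would recurse on it: instead of recursing on polynomially many slice-uniform instances, I recurse on $O(1)$ instances $(A_{\mathrm{heavy}},B,C)$, $(A,B_{\mathrm{heavy}},C)$, $(A,B,C_{\mathrm{heavy}})$ of rank $\leq D'$. This rank bound is the twist mentioned before the lemma, and it should be established using \cref{fac:rank-triv-univ}-style constructions shifted by row potentials. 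Since the rank decreases geometrically, the recursion depth is $O(\log r)$, and with $O(1)$ branches per level the overhead is $(n_1n_2n_3)^{o(1)}$. This explains both the $t^{875}$ instance count (from the per-level polynomial blowups in $t$) and the $1+o(1)$ exponent.

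I expect the main obstacle to be the rank bound for the heavy part. A value heavy in row $i$ need not be heavy in column $k$, so the heavy entries do not form a clean product structure. I would first try the following construction. Index the rank components by pairs (frequency class, value index) using $U[i,\cdot]=0$ and $V[\cdot,k]=$ the value. If instead I only get a bound of the form $D t^{-\Omega(1)}$ rather than $r/2$, I would compensate by iterating with thresholds that shrink by factors $t^{\Omega(1)}$, which still yields a geometric recursion. Along the way I must make sure every step keeps the entries as potential adjustments, so that the exceptional triples remain within budget $n_1n_2n_3/t$ overall.
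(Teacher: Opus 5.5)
Your pipeline matches the paper's: rank $\to$ slice-uniform $\to$ uniform via \cref{lem:exact-tri-low-rank-to-slice-unif,lem:exact-tri-slice-unif-to-unif}, then light entries regularized by splitting and heavy entries recursed on as a lower-rank instance. However, the recursion as you set it up does not have $(n_1n_2n_3)^{o(1)}$ overhead, for two reasons.

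\emph{Per-instance recursion.} You regularize each uniform instance separately and recurse on its own heavy parts. There are $|\mathcal I_{\mathrm{unif}}|=t^{O(1)}(n_1n_2n_3)^{o(1)}$ such instances. So a single level already spawns $3|\mathcal I_{\mathrm{unif}}|$ full-size subinstances, each paying $n_1n_2n_3/t$ for exceptional triples, i.e.\ $t^{\Omega(1)}\cdot n_1n_2n_3$ in total. The missing idea is to pull every heavy entry back to the \emph{original} instance through the potentials and merge them over all of $\mathcal I_{\mathrm{unif}}$. Concretely, suppose $A'[i,k]=A[i,k]+u[i]+v[k]$ and row $i$ of the row-heavy part of $A'$ has at most $r/q$ distinct values. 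Take $U'[i,\ell]$ to be the $\ell$-th such value and $V'=0$. Then $U[i,\ell]=U'[i,\ell]-u[i]$ and $V[\ell,k]=V'[\ell,k]-v[k]$ decompose the corresponding entries of $A$ itself. Concatenating these decompositions over all uniform instances gives just three matrices $A_{\mathrm{sm}},B_{\mathrm{sm}},C_{\mathrm{sm}}$ of rank $r'\le (2r/q)\,|\mathcal I_{\mathrm{unif}}|$. Hence there are exactly three recursive calls, $(A_{\mathrm{sm}},B,C)$, $(A,B_{\mathrm{sm}},C)$ and $(A,B,C_{\mathrm{sm}})$, and their output sets add up rather than multiply.

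\emph{Rank reduction per level.} Even with three branches, halving the rank (your aim $D'\le r/2$) gives depth $\log_2 r$ and $3^{\log_2 r}=r^{\log_2 3}$ calls. A geometric decrease with constant branching is \emph{not} an $n^{o(1)}$ overhead, so the rank must drop by a superconstant factor per level. The paper takes $q\approx|\mathcal I_{\mathrm{unif}}|\cdot 2^{\sqrt{\log n}}$, so that $r'\le r/2^{\Omega(\sqrt{\log n})}$. Then the depth is $O(\sqrt{\log n})$ and there are $3^{O(\sqrt{\log n})}=n^{o(1)}$ calls. Meanwhile the $q^6$ blowup from regularizing the light part stays $t^{O(1)}n^{o(1)}$ per call, which is exactly where $t^{875}$ comes from.

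A few smaller points:
\begin{itemize}
\item The ``obstacle'' you flag is not one. Split the heavy entries into row-heavy ones (more than a $q/r$-fraction of their row) and, among the rest, column-heavy ones. Each part has a trivial rank-$(r/q)$ decomposition: values in $U$ and $V=0$, respectively values in $V$ and $U=0$. Rank is additive over a partition, so no product structure is needed.
\item Your threshold is inverted. With heavy meaning more than a $1/(Dt^{c'})$-fraction, the light part is already better than $1/D$-regular, and up to $Dt^{c'}$ heavy values per row give no rank reduction at all. It should be more than a $t^{c'}/D$-fraction.
\item Running \cref{lem:exact-tri-slice-unif-to-unif} with parameter $t$ on each of the $O(t^2\log r)$ slice-uniform instances costs $t^2\cdot n_1n_2n_3/t$. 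The paper uses parameter $t^3$ there.
\end{itemize}
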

\begin{proof}
Let $q$ be a parameter to be determined later. We design a recursive algorithm that transforms the given low-rank instances in three steps into (1) slice-uniform instances, (2) uniform instances, (3) uniform and regular instances. The third step is incomplete in the sense that we cannot deal with some entries~$A_\sm, B_\sm, C_\sm$ in the matrices $A, B, C$. For these remaining entries, however, we can compute rank decompositions of strictly smaller rank $r' < r$, so in a fourth step we will simply recurse on these matrices. In detail:
\begin{enumerate}
    \item\label{lem:exact-tri-low-rank-to-unif-reg:itm:slice-unif} \emph{(Low-Rank to Slice-Uniform)} Run the potential-adjusting reduction from \cref{lem:exact-tri-low-rank-to-slice-unif} with some parameter $t_\slunif$ to transform the given low-rank instance into $r$-slice-uniform instances; let $(\mathcal I_\slunif, \mathcal T_\slunif)$ denote the output.

    \begin{framed}
        \emph{Running Time:} $\tilde O(n_1 n_2 n_3 / t_\slunif + (n_1 n_2 + n_2 n_3 + n_1 n_3) r t_\slunif^2)$.
    \end{framed}
    
    \item\label{lem:exact-tri-low-rank-to-unif-reg:itm:unif} \emph{(Slice-Uniform to Uniform)} We further reduce (in a potential-adjusting way) to $(\mathcal I_\unif, \mathcal T_\unif)$ as follows. Initialize $\mathcal I_\unif \gets \emptyset, \mathcal T_\unif \gets \mathcal T_\slunif$. For each instance $(A', B', C') \in \mathcal I_\slunif$ run \cref{lem:exact-tri-slice-unif-to-unif} with some parameter $t_\unif$ to obtain an $r$-uniform instances $(\mathcal I', \mathcal T')$. Let $\mathcal I_\unif \gets \mathcal I_\unif \cup \mathcal I'$ and $\mathcal T_\unif \gets \mathcal T_\unif \cup \mathcal T'$. This is the composition of two potential-adjusting reductions, and thus also potential-adjusting.

    \begin{framed}
        \emph{Running Time:} $\tilde O(|\mathcal I_\slunif| \cdot (n_1 n_2 n_3 / t_\unif + (n_1 n_2 + n_2 n_3 + n_1 n_3) r t_\unif^2))$.
    \end{framed}
    
    \item\label{lem:exact-tri-low-rank-to-unif-reg:itm:unif-reg} \emph{(Uniform to Uniform Regular)} Next, we reduce the instances in $\mathcal I_\unif$ further to regular instances~$\mathcal I_\reg$ (where initially $\mathcal I_\reg \gets \emptyset$). In this reduction step we will miss some exact triangles, but we will make sure that each missed triangle is part of one of the three instances $(A_\sm, B, C), (A, B_\sm, C), (A, B, C_\sm)$. These three instances are required to be low-rank, hence we will also maintain low-rank decompositions of $A_\sm, B_\sm, C_\sm$. Initially, $A_\sm, B_\sm, C_\sm$ are all-$\bot$ matrices and we start with trivial rank-$0$ decompositions.

    Enumerate all instances $(A', B', C') \in \mathcal I_\unif$. Since $(A', B', C')$ is a potential adjustment of the initial instance, there are potential functions $u \in \Int^{n_1}, v \in \Int^{n_2}$ such that
    \begin{equation*}
        A'[i, k] = A[i, k] + u[i] + v[k] \;\;\; \text{or} \;\;\; A'[i, k] = \bot
    \end{equation*}
    for all entries. Let $q$ be a parameter. We call an entry $A'[i, k]$ \emph{row-heavy} if the same entry appears in more than a $q / r$-fraction of the entries in its row (i.e.,~\makebox{$|\set{ k' : A'[i, k] = A'[i, k']}| > q / r \cdot n_2$}), otherwise we call the entry \emph{column-heavy} if it appears in more than a $q/r$-fraction of the entries in its column (i.e.,~\makebox{$|\set{ i' : A'[i, k] = A'[i', k]}| > q / r \cdot n_1$}), and otherwise we call the entry \emph{light}. Let
    \begin{equation*}
        A' = A'_\row \sqcup A'_\col \sqcup A'_\light
    \end{equation*}
    denote the partition of $A'$ into the row-heavy, column-heavy and light entries, respectively. In the same way we obtain partitions of $B$ and $C$ into their row-heavy, column-heavy and light parts, respectively. We proceed in two substeps:
    \begin{enumerate}
        \item\label{lem:exact-tri-low-rank-to-unif-reg:itm:unif-reg:itm:light} \emph{(Light Entries)} By definition each entry in \smash{$A'_\light$} appears at most in a $q / r$-fraction of its row and column. By a simple transformation we can further partition $A'_\light$ into $q^2$ matrices where each entry appears in at most a $1/r$-fraction of its row and column. Hence, we can transform the subinstance $(A'_\light, B'_\light, C'_\light)$ (in a potential-adjusting way) to $q^6$ instances that are $r$-uniform and $1/r$-regular. Insert these instances into~$\mathcal I_\reg$.

        \begin{framed}
            \emph{Running Time:} $\tilde O(|\mathcal I_\unif| \cdot q^6 \cdot (n_1 n_2 + n_2 n_3 + n_1 n_3))$.
        \end{framed}
        
        \item\label{lem:exact-tri-low-rank-to-unif-reg:itm:unif-reg:itm:heavy} \emph{(Heavy Entries)} Focus on the row-heavy entries in $A'$ for now. For each such row-heavy entry~$(i, k)$ we will update $A_\sm[i, k] \gets A[i, k]$. Recall that we need to update the respective rank decomposition of~$A_\sm$ to match these new entries. To this end observe that each row in $A'_\row$ contains at most~$r / q$ distinct entries, so we can trivially compute a rank-$(r/q)$ decomposition $(U', V', S')$ of~$A'_\row$. Next we adjust this decomposition by
        \begin{align*}
            U[i, \ell] &= U'[i, \ell] - u[i], \\
            V[\ell, j] &= V'[\ell, j] - v[j],
        \end{align*}
        leading to a same-rank decomposition $(U, V, S = S')$ of the restriction of $A$ to the row-heavy entries (of $A'$). Indeed, for each heavy entry $(i, k)$ we have that
        \begin{align*}
            A[i, k]
            &= A'[i, k] - u[i] - v[k] \\
            &= U'[i, S'[i, k]] + V'[S'[i, k], k] - u[i] - v[k] \\
            &= U[i, S[i, k]] + V[S[i, k], k].
        \end{align*}
        We repeat the same procedure with the column-heavy entries $A'_\col$ to update $A_\sm$. Then, symmetrically, we update $B_\sm$ and $C_\sm$ for all row-heavy and column-heavy entries in $B'$ and $C'$. In the upcoming step~\ref{lem:exact-tri-low-rank-to-unif-reg:itm:rec} we will see why this treatment was useful. 

        \begin{framed}
            \emph{Running Time:} $\tilde O(|\mathcal I_\unif| \cdot (n_1 n_2 + n_2 n_3 + n_1 n_3))$.
        \end{framed}
    \end{enumerate}
    After we have enumerated all instances in $\mathcal I_\unif$ in this way, we claim that we have transformed the initial instance in a potential-adjusting manner into
    \begin{equation*}
        (\mathcal I_\reg \cup \set{(A_\sm, B, C), (A, B_\sm, C), (A, B, C_\sm)}, \mathcal T_\unif).
    \end{equation*}
    To see that this is correct take any exact triangle $(i, k, j)$ in $(A, B, C)$. After step~\ref{lem:exact-tri-low-rank-to-unif-reg:itm:unif}, the triangle is part of $\mathcal T_\unif$ or it appears as an exact triangle in some instance $(A', B', C') \in \mathcal I_\unif$. In the former case we are done, so focus on the latter case. We continue with two subcases: If the entries $(i, k), (k, j), (i, j)$ are all light in $A', B', C'$, respectively, then by construction $(i, k, j)$ appears as an exact triangle in some instance in~$\mathcal I_\reg$. Otherwise, if $(i, k)$ is row-heavy, say, then $(i, k, j)$ appears as an exact triangle in $(A_\sm, B, C)$.

    \item\label{lem:exact-tri-low-rank-to-unif-reg:itm:rec} \emph{(Recursion)} We recurse on the three instances $(A_\sm, B, C)$, $(A, B_\sm, C)$ and $(A, B, C_\sm)$. Recall that we have indeed constructed low-rank decompositions for the three matrices $A_\sm, B_\sm, C_\sm$ as required by the theorem statement. We will soon analyze how small these rank decompositions actually are, and discuss why this recursion leads to a terminating algorithm. Let $(\mathcal I_A, \mathcal T_A), (\mathcal I_B, \mathcal T_B), (\mathcal I_C, \mathcal T_C)$ denote the outputs of the three recursive calls. We return $(\mathcal I, \mathcal T) = (\mathcal I_A \cup \mathcal I_B \cup \mathcal I_C \cup \mathcal I_\reg, \mathcal T_A \cup \mathcal T_B \cup \mathcal T_C \cup \mathcal T_\unif)$.
\end{enumerate}

This completes the description of the recursive step. In the base case, when $r = 0$, one of the matrices contains only $\bot$-entries and so we can trivially return $\mathcal I, \mathcal T \gets \emptyset$. It follows from the in-text explanation that this algorithm is indeed a potential-adjusting reduction (assuming it terminates), and that all resulting instances are regular. It remains to argue that we have constructed sets $(\mathcal I, \mathcal T)$ of the claimed sizes.

\proofparagraph{Analysis of the Recursion}
The critical part is to analyze the rank decompositions of $A_\sm, B_\sm, C_\sm$. Note that with each iteration of step~\ref{lem:exact-tri-low-rank-to-unif-reg:itm:unif-reg:itm:heavy} we increase their sizes by at most $2 r / q$ (namely, plus $r / q$ due to the row-heavy entries, and plus $r / q$ due to the column-heavy entries). Thus, letting $n = \max\set{n_1, n_2, n_3}$ and letting~$r'$ denote the final size of the rank decompositions, we have that
\begin{equation*}
    r' \leq \frac{2r}{q} \cdot |\mathcal I_\unif| \leq O\parens*{\frac{2r}{q} \cdot |\mathcal I_\slunif| \cdot t_\unif^{41} \cdot \log^2 n } \leq O\parens*{\frac{2r}{q} \cdot t_\slunif^2 \cdot t_\unif^{41} \cdot \log^3 n }.
\end{equation*}
Pick $q = t_\unif^{41} \cdot t_\slunif^2 \cdot 2^{\sqrt{\log n}}$, then it follows that
\begin{equation*}
    r' \leq \frac{r}{2^{\Omega(\sqrt{\log n)}}}.
\end{equation*}
First, this implies that each recursive call indeed makes progress and that the algorithm terminates. Second, this further implies that the recursion reaches depth at most $O(\sqrt{\log n})$, and as each call of the algorithms spawns only three recursive calls the total number of recursive calls is \smash{$3^{O(\sqrt{\log n})}$}.

We can use this to bound the final size of $(\mathcal I, \mathcal T)$. Focus on any recursive call of the algorithm. The sets we construct have size
\begin{equation*}
    |\mathcal T_\unif| \leq |\mathcal T_\slunif| + |\mathcal I_\slunif| \cdot \tilde O\parens*{\frac{n_1 n_2 n_3}{t_\unif}} \leq \tilde O\parens*{\frac{n_1 n_2 n_3}{t_\slunif} + \frac{n_1 n_2 n_3 t_\slunif^2}{t_\unif}}
\end{equation*}
and
\begin{equation*}
    |\mathcal I_\reg| \leq q^6 \cdot |\mathcal I_\unif| \leq q^6 \cdot t_\slunif^2 \cdot t_\unif^{41} \cdot \log^3 n \leq t_\slunif^{14} \cdot t_\unif^{287} \cdot 2^{O(\sqrt{\log n})}.
\end{equation*}
The total sizes $|\mathcal T|$ and $|\mathcal I|$ are worse only by a factor $3^{O(\sqrt{\log n})}$. Choosing $t_\slunif = t$ and $t_\unif = t^3$, this is as claimed in the theorem statement.

\proofparagraph{Running Time}
Finally, focus on the running time. Per recursive call the sum of all contributions analyzed before is, loosely bounded,
\begin{align*}
    &\tilde O\parens*{\frac{n_1 n_2 n_3}{t_\slunif} + \frac{n_1 n_2 n_3 t_\slunif^2}{t_\unif} + (n_1 n_2 + n_2 n_3 + n_1 n_3) r \cdot t_\unif^{41} \cdot t_\slunif^2 \cdot q^6} \\
    &\qquad\leq \parens*{\frac{n_1 n_2 n_3}{t} + (n_1 n_2 + n_2 n_3 + n_1 n_3) r t^{875}}^{1+o(1)}.
\end{align*}
Across all recursive calls the running time worsens only by $3^{O(\sqrt{\log n})} = n^{o(1)}$.
\end{proof}

\subsection{Reduction from Uniform Regular to Uniform Low-Doubling Exact Triangle} \label{sec:exact-tri-low-rank:sec:unif-reg-to-low-doubling}
The fourth and final step is to reduce to uniform (regular) low-doubling Exact Triangle. This step is again due to~\cite{AbboudFJVX25} and~\cite{ChanL15} almost without modifications---we merely extend the proof to rectangular matrices and show the slightly stronger condition that the joint set of entries of all three matrices~$A, B, C$ has small doubling. The reduction crucially relies on the regularity condition established before and also on the BSG theorem from additive combinatorics~\cite{BalogS94,Gowers01}. Specifically, we use the following algorithmic covering version of the BSG theorem due to Chan and Lewenstein~\cite{ChanL15}.

\begin{theorem}[BSG Covering~\cite{ChanL15}]
Let $X, Y, Z \subseteq \Int$ be sets of size at most $n$, and let $L \geq 1$. Then there are subsets $X_1, \dots, X_L \subseteq X$ and $Y_1, \dots, Y_L \subseteq Y$ satisfying the following properties:
\begin{enumerate}
    \item $|X_\ell + Y_\ell| \leq O(L^5 n)$ for all $\ell \in [L]$,
    \item $|R| \leq O(n^2 / L)$ where $R = \set{(x, y) \in X \times Y : x + y \in Z} \setminus \bigcup_{\ell=1}^L (X_\ell \times Y_\ell)$.
\end{enumerate}
The sets $X_1, \dots, X_L, Y_1, \dots, Y_L$ and $R$ can be computed in deterministic time $\tilde O(n^\omega L)$.
\end{theorem}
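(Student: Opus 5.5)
The plan is to derive this covering statement by repeatedly applying a single-shot, algorithmic Balog--Szemer\'edi--Gowers theorem to the ``popular-sum graph'', removing a structured piece each round. Define the bipartite graph $G \subseteq X \times Y$ with edge set $\set{(x, y) : x + y \in Z}$. The restricted sumset of $G$, namely $\set{x + y : (x, y) \in G} \subseteq Z$, has size at most $n$. The one-shot BSG statement I would use is the following. If $|G| \geq \alpha n^2$ and the restricted sumset has size at most $n$, then there exist $X' \subseteq X$ and $Y' \subseteq Y$ with
\begin{itemize}
    \item $|X' + Y'| \leq O(n / \alpha^5)$ (or a comparable polynomial in $1/\alpha$), and
    \item $|G \cap (X' \times Y')| \geq \Omega(\alpha n^2)$.
\end{itemize}
The standard graph-theoretic proof goes through paths of length three. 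One finds $X'$ such that most pairs in $X' \times Y'$ are joined by $\Omega(\alpha^{O(1)} n^2)$ paths of length three in $G$. Each sum $x + y$ with $x \in X'$, $y \in Y'$ can then be written as $(x + y_1) - (x_1 + y_1) + (x_1 + y)$ with all three terms in $Z$, in many ways. A counting argument then bounds $|X' + Y'|$.

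The covering loop is as follows. Initialize $G_0 = G$. While $|G_\ell| > n^2 / L$, apply the one-shot theorem to $G_\ell$ with $\alpha = 1/L$ to obtain $(X_{\ell+1}, Y_{\ell+1})$. Then delete all edges inside $X_{\ell+1} \times Y_{\ell+1}$ to form $G_{\ell+1}$. Each round produces $|X_\ell + Y_\ell| \leq O(L^5 n)$ and removes at least $\Omega(n^2 / L)$ edges, so after $O(L)$ rounds fewer than $n^2/L$ edges survive. The surviving edges form exactly $R$. A constant-factor rescaling of $L$ gives the stated bounds.

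For the algorithm, every step of the BSG proof is a degree or codegree computation in $G$. Counting paths of length two and three between all pairs is done by one or two products of $n \times n$ Boolean or integer matrices in time $O(n^\omega)$. Random choices of a starting vertex, such as picking a high-degree $y$ and taking $X'$ as a subset of its neighbourhood, can be derandomized by trying all $n$ candidates and using the computed counts to verify the guarantees. This makes each round cost $\tilde O(n^\omega)$, and $O(L)$ rounds give $\tilde O(n^\omega L)$. Building $G$ itself takes $O(n^2)$ time using a hash table for $Z$.

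I expect the main obstacle to be the second BSG guarantee, namely that a single application captures $\Omega(\alpha n^2)$ edges rather than the weaker $\Omega(\alpha^2 n^2)$ or $\Omega(\alpha |X'||Y'|)$ given by textbook statements. With only the weaker bound the loop would need $L^2$ rounds. If that happens, I would first try the Sudakov--Szemer\'edi--Vu/Chan--Lewenstein refinement. That refinement chooses $X'$ as the set of vertices of high degree into a suitable neighbourhood, so that $X'$ inherits a constant fraction of the edge mass. As a fallback, accept $L^2$ pieces and re-balance parameters, replacing $L$ by $\sqrt{L}$ in the density threshold. This weakens the exponent on the sumset bound but keeps the qualitative statement, and the later reduction only needs some polynomial dependence on $L$.
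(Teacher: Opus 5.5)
Your overall architecture is right: iterate an algorithmic one-shot BSG theorem on $G=\{(x,y)\in X\times Y : x+y\in Z\}$, delete the edges inside each $X_\ell\times Y_\ell$, and output the survivors as $R$. The paper does not reprove this statement; it cites Chan and Lewenstein, whose proof is such an iteration.

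\textbf{The gap: the one-shot lemma is false.} The problem is how you count rounds. You rely on a one-shot lemma that captures $\Omega(\alpha n^2)$ edges while keeping $|X'+Y'|\le \mathrm{poly}(1/\alpha)\,n$. That lemma is false, so no Sudakov--Szemer\'edi--Vu or Chan--Lewenstein refinement of the choice of $X'$ can supply it. Here is a counterexample.
\begin{itemize}
\item Take $m$ blocks $X_j=\{u_j+i d_j : 0\le i<s\}$ and $Y_j=\{v_j+i d_j : 0\le i<s\}$ with $2ms=n$.
\item Choose the parameters generically, e.g.\ $u_j=M^{3j}$, $v_j=M^{3j+1}$, $d_j=M^{3j+2}$ for a base $M>2s$.
\item Let $Z=\bigcup_j (X_j+Y_j)$.
\end{itemize}
Then $|Z|<n$. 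All sums $x+y$ with $x\in X_j$, $y\in Y_{j'}$, $j\ne j'$, are pairwise distinct and lie outside $Z$. Hence $G=\bigcup_j X_j\times Y_j$, which has density $\alpha=1/(4m)$.

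Suppose $X'\times Y'$ contains $c\,\alpha n^2=c\,ms^2$ edges, and write $x_j=|X'\cap X_j|$, $y_j=|Y'\cap Y_j|$.
\begin{itemize}
\item Since $x_j,y_j\le s$, we get $\sum_j x_j\ge c\,ms$ and $\sum_j y_j\ge c\,ms$.
\item Hence $|X'+Y'|\ge\sum_{j\ne j'}x_j y_{j'}\ge c^2m^2s^2-ms^2=\Omega(n^2)$.
\item But $n\alpha^{-p}=O(m^p n)=o(n^2)$ for $m=n^{\delta}$ with $\delta<1/p$.
\end{itemize}
A single block captures only about $\alpha^2 n^2$ edges, so the textbook $\Omega(\alpha^2 n^2)$ guarantee is essentially tight.

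Your second fallback does not prove the statement either. It gives $L^2$ pieces, or $L$ pieces with $|R|=O(n^2/\sqrt{L})$ and sumsets $O(L^{5/2}n)$. Neither is the claimed trade-off.

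\textbf{The missing idea: run each round at the current density.} Let $\alpha_i=|G_i|/n^2$. While $\alpha_i>1/L$, apply the standard (Chan--Lewenstein) algorithmic BSG theorem to $G_i$ with parameter $\alpha_i$; the restricted sumset of $G_i$ still lies in $Z$. It returns $X_{i+1},Y_{i+1}$ with
\begin{itemize}
\item $|X_{i+1}+Y_{i+1}|=O(n/\alpha_i^5)=O(L^5 n)$, and
\item $|G_i\cap(X_{i+1}\times Y_{i+1})|\ge c\,\alpha_i^2 n^2$.
\end{itemize}
So $\alpha_{i+1}\le \alpha_i(1-c\,\alpha_i)$, and therefore $1/\alpha_{i+1}\ge 1/\alpha_i+c$. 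After $O(L)$ rounds the density is at most $1/L$, which gives $|R|\le n^2/L$ with $O(L)$ pieces; rescale $L$ by a constant to get exactly $L$ pieces. At $\tilde O(n^\omega)$ per round, this also gives the stated running time.

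\textbf{A smaller point.} To bound $|X'+Y'|$ via $x+y=(x+y_1)-(x_1+y_1)+(x_1+y)$, you need \emph{every} pair in $X'\times Y'$, not most, to have $\Omega(\alpha^5 n^2)$ paths of length three. The graph lemma provides exactly that, and then $|X'+Y'|\le |Z|^3/\Omega(\alpha^5 n^2)=O(n/\alpha^5)$.
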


The following simple lemma will also be useful.

\begin{lemma} \label{lem:sumset-doubling}
Let $X, Y \subseteq \Int$ be sets with doubling $|X + X| \leq K |X|$ and $|Y + Y| \leq K |Y|$. Then $Z = X + Y$ has doubling $|Z + Z| \leq K^8 |Z|$.
\end{lemma}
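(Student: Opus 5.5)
We will prove this with two applications of Ruzsa's triangle inequality (\cref{lem:ruzsa-triangle}), using Plünnecke-Ruzsa (\cref{lem:pluennecke-ruzsa}) to control the resulting sumsets. Write $Z = X + Y$. The target is $|Z + Z| = |2X + 2Y|$, and we want to compare it with $|X + Y|$. Since $|Z| \geq \max\{|X|, |Y|\}$, it suffices to prove $|2X + 2Y| \leq K^{O(1)} |X + Y|$ with exponent at most $8$.

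\textbf{Step 1: reduce to a single triangle inequality.} Apply \cref{lem:ruzsa-triangle} with the sets $2X$ and $2Y$ and the auxiliary set $-Y$, in the form
\begin{equation*}
    |2X + 2Y| \leq \frac{|2X - Y| \cdot |2Y + Y|}{|Y|}.
\end{equation*}
Here $-Y$ plays the role of $Z$ in the lemma, and $2Y + (-(-Y))$ means we use $|A+B| \leq |A + C|\,|B - C|/|C|$. This is the standard difference form of the triangle inequality, which follows from the stated version by replacing $C$ with $-C$.

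\textbf{Step 2: bound the easy factor.} The factor $|3Y|$ is at most $K^3 |Y|$ by \cref{lem:pluennecke-ruzsa}, applied with $X := Y$, $n = 3$, $m = 0$. It therefore remains to show
\begin{equation*}
    |2X - Y| \leq K^{O(1)} |X + Y|.
\end{equation*}

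\textbf{Step 3: bound the mixed factor.} Apply the triangle inequality again with auxiliary set $X$:
\begin{equation*}
    |2X - Y| \leq \frac{|2X + X| \cdot |Y + X|}{|X|} \leq K^3 |X + Y|.
\end{equation*}
Again this is the difference-form version with the roles chosen appropriately: $|A - B| \leq |A + C|\,|B + C|/|C|$ with $A = 2X$, $B = Y$, $C = X$, which is exactly \cref{lem:ruzsa-triangle} applied to $A$ and $-B$ with $C$. The bound $|3X| \leq K^3 |X|$ again comes from Plünnecke-Ruzsa.

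\textbf{Step 4: combine.} Putting the pieces together gives
\begin{equation*}
    |Z + Z| \leq K^3 |X + Y| \cdot K^3 = K^6 |Z| \leq K^8 |Z|.
\end{equation*}

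The main obstacle is bookkeeping the signs. \cref{lem:ruzsa-triangle} is stated with sums only, so each application must be checked to be a legitimate instance after negating one set. For example, $|A + B| \leq |A - C|\,|B + C|/|C|$ follows by substituting $-C$, since $|{-C}| = |C|$. If a sign does not match, we would instead route through $|X - X| \leq K^2 |X|$ via Plünnecke-Ruzsa, which costs a couple of extra powers of $K$. This is presumably why the statement allows the slack exponent $8$.
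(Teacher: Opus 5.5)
Your chain of inequalities is correct, and it actually proves the stronger bound $|Z+Z| \le K^6|Z|$. The problem is how you justify the two triangle-inequality steps.

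\textbf{The justification.} Neither step follows from \cref{lem:ruzsa-triangle}, which is the all-sums form $|A+B|\,|C| \le |A+C|\,|B+C|$, by negating one set:
\begin{itemize}
\item substituting $-C$ gives $|A+B|\,|C| \le |A-C|\,|B-C|$;
\item applying it to $A,-B,C$ gives $|A-B|\,|C| \le |A+C|\,|B-C|$.
\end{itemize}
What you need is $|A+B|\,|C| \le |A+C|\,|B-C|$ in Step~1 and $|A-B|\,|C| \le |A+C|\,|B+C|$ in Step~3. The same slip appears in your closing remark.

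Both inequalities are nevertheless true. They are instances of the classical Ruzsa triangle inequality $|U-V|\,|W| \le |U-W|\,|V-W|$, with $U=2X$, $V=-2Y$, $W=Y$ in Step~1 and $U=2X$, $V=Y$, $W=-X$ in Step~3. These are exactly the forms in which the auxiliary set cancels, and they have a one-line injection proof. For Step~1, fix for each $s\in 2X+2Y$ a representation $s=a+b$ with $a\in 2X$, $b\in 2Y$. The map $(s,y)\mapsto(a-y,\,b+y)$ from $(2X+2Y)\times Y$ into $(2X-Y)\times 3Y$ is injective, since adding the two coordinates recovers $s$, hence $a$ and $b$, hence $y$. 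Step~3 is analogous with $(d,x)\mapsto(a+x,\,b+x)$ for a fixed representation $d=a-b$. With this justification in place of the citation, your proof is complete.

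\textbf{Comparison with the paper.} The paper stays within the all-sums form of the triangle inequality, which, unlike the cancelling form, is usually derived via Pl\"unnecke-type arguments rather than a direct injection.
\begin{itemize}
\item It first takes $Z=X+Y$ as the auxiliary set, giving $|2X+2Y| \le |3X+Y|\,|X+3Y|/|Z|$.
\item It then bounds $|3X+Y| \le |4X|\,|Z|/|X| \le K^4|Z|$, and symmetrically $|X+3Y| \le K^4|Z|$, which yields $K^8$.
\end{itemize}
Your route instead passes through the difference set $2X-Y$. It uses only the elementary cancelling form and needs only $|3X|\le K^3|X|$ and $|3Y|\le K^3|Y|$, which is why you get exponent $6$. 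The paper's argument is more symmetric but loses two powers of $K$ through $|4X|$ and $|4Y|$. So the exponent $8$ comes from that choice of auxiliary sets, not from any sign bookkeeping, as you guessed.
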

\begin{proof}
We write $nX = X + \dots + X$ for the $n$-fold iterated sumset of $X$. The proof is by repeated applications of Ruzsa's triangle inequality (\cref{lem:ruzsa-triangle}). The first application is with $X' = 2X$, $Y' = 2Y$ and $Z' = Z$ to conclude that
\begin{equation*}
    |Z + Z| = |2X + 2Y| \leq \frac{|2X + Z| \, |2Y + Z|}{|Z|} = \frac{|3X + Y| \, |X + 3Y|}{|Z|}.
\end{equation*}
The second application is with $X' = 3X$, $Y' = Y$ and $Z' = X$ to obtain
\begin{equation*}
    |3X + Y| \leq \frac{|3X + X| \, |X + Y|}{|X|} = \frac{|4X| \, |Z|}{|X|} \leq \frac{K^4 |X| \, |Z|}{|X|} = K^4 |Z|;
\end{equation*}
here we also used the Plünnecke-Ruzsa inequality (\cref{lem:pluennecke-ruzsa}) $|n X| \leq K^n |X|$. The third application is symmetric:
\begin{equation*}
    |X + 3Y| \leq \frac{|X + Y| \, |3Y + Y|}{|Y|} = \frac{|Z| \, |4Y|}{|Y|} \leq \frac{|Z| \, K^4 |Y|}{|Y|} = K^4 |Z|.
\end{equation*}
The statement follows from putting these three bounds together.
\end{proof}

\begin{lemma}[Reduction from Uniform Regular to Uniform Low-Doubling Exact Triangle] \label{lem:exact-tri-unif-reg-to-low-doubling}
For any parameter $K \geq 1$ there is a potential-adjusting reduction from any \emph{$D$-uniform $1/D$-regular} Exact Triangle instance to $O(K^{1/112})$ instances that are each \emph{$D$-uniform}, \emph{$1/D$-regular} and \emph{$K$-doubling}. The reduction runs in deterministic time $\tilde O(n_1 n_2 n_3 / K^{1/112} + (n_1 n_2 + n_2 n_3 + n_1 n_3) D K^{1/112})$. In particular:
\begin{align*}
    &\ExactTri(n_1, n_2, n_3 \mid D; \rho \leq 1/D) \\
    &\qquad \leq \tilde O\parens*{K^{1/112} \cdot \ExactTri(n_1, n_2, n_3 \mid D; \rho \leq 1/D; K) + \frac{n_1 n_2 n_3}{K^{1/112}} + (n_1 n_2 + n_2 n_3 + n_1 n_3) D K^{1/112}}.
\end{align*}
\end{lemma}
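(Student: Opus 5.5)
The plan is to follow the Chan--Lewenstein / \cite{AbboudFJVX25} template. We apply BSG covering to the triple of value sets, handle the few uncovered pairs by brute force using regularity, and split along the covering pieces to obtain low-doubling subinstances. Let $X_A, X_B, X_C$ be the sets of integer entries of $A, B, C$; each has size at most $D$. Set $L = \Theta(K^{1/112})$.

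First we apply BSG covering to $(X_A, X_B, X_C)$ with parameter $L$. This yields pairs $(X_\ell, Y_\ell)$ with $|X_\ell + Y_\ell| = O(L^5 D)$, together with a remainder set $R \subseteq X_A \times X_B$ of size $O(D^2/L)$. Every pair $(x,y)$ with $x+y \in X_C$ either lies in some $X_\ell \times Y_\ell$ or in $R$. The cost is $\tilde O(D^\omega L)$, which is negligible since $D \le n_1, n_2, n_3$.

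Next we handle exact triangles whose value pair $(A[i,k], B[k,j])$ lies in $R$, by listing them into $\mathcal T$. For each $(x,y) \in R$ and each $k$, regularity gives at most $n_1/D$ rows $i$ with $A[i,k]=x$ and at most $n_3/D$ columns $j$ with $B[k,j]=y$. Enumerating all such triples costs
\[
\sum_{(x,y)\in R} n_2 \cdot \frac{n_1}{D}\cdot\frac{n_3}{D} = O\parens*{\frac{n_1 n_2 n_3}{L}},
\]
and we test each triple for exactness. The hard part here is the data structure: per column $k$ we bucket rows by value and columns of $B$ by value, so that enumeration is output-linear.

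For each $\ell$ we form the instance $(A_\ell, B_\ell, C_\ell)$. It keeps the entries of $A$ with values in $X_\ell$, the entries of $B$ with values in $Y_\ell$, and the entries of $C$ with values in $Z_\ell := (X_\ell+Y_\ell)\cap X_C$. This is a potential adjustment with zero potentials that only deletes entries, so it is $D$-uniform and $1/D$-regular, and every non-$R$ exact triangle survives in some instance. The main obstacle is that BSG only bounds $|X_\ell + Y_\ell|$, whereas we need the joint entry set $W_\ell = X_\ell \cup Y_\ell \cup Z_\ell$ to have doubling at most $K$.

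To fix this, I would first translate using potentials: shift $A$ by $-a$ and $C$ by $-a$ (through $u[i]$). This keeps the sets comparable while preserving $A+B-C$. From $|X_\ell+Y_\ell| \le O(L^5D)$ we get the following.
\begin{itemize}
\item If $|X_\ell|, |Y_\ell| \ge D/L$, then Ruzsa's triangle inequality (\cref{lem:ruzsa-triangle}) and Pl\"unnecke--Ruzsa (\cref{lem:pluennecke-ruzsa}) bound the doubling of $X_\ell$, $Y_\ell$, $-X_\ell$ and $X_\ell - Y_\ell$ by $\mathrm{poly}(L)$.
\item If instead $|X_\ell| < D/L$, the entries with values in $X_\ell$ number only $n_1 n_2/L$, so their triangles can be brute-forced within budget using regularity. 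Smaller pieces are discarded analogously.
\end{itemize}
Then $Z_\ell \subseteq X_\ell + Y_\ell$, and the union of a translate of $X_\ell$, $Y_\ell$ and $Z_\ell$ sits inside a set like $X_\ell + Y_\ell + \{0, x_0, y_0\}$. \cref{lem:sumset-doubling} (doubling $K^8$ for a sumset) then gives doubling $L^{O(1)}$ for the union. The exponents are chained so that $L^{112} \le K$, which fixes $L = K^{1/112}$.

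Summing, the reduction produces $O(L)$ instances. Its time is $\tilde O(n_1n_2n_3/L)$ for the brute-force parts plus $\tilde O((n_1n_2+n_2n_3+n_1n_3) D L)$ for constructing the instances and handling sparse pieces, as claimed. The main risk is in the last step: making the doubling of the \emph{union} (rather than the sumset) small after the potential shifts, and tracking exponents to land exactly at $1/112$.
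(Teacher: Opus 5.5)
Your outline matches the paper's proof: BSG covering, brute force on $R$ via regularity, restriction to $X_\ell\times Y_\ell$ and $X_\ell+Y_\ell$, a sparse/dense split, then \cref{lem:ruzsa-triangle} and \cref{lem:sumset-doubling}. However, your sparse/dense threshold breaks the running time, so the step ``if $|X_\ell|<D/L$, brute-force within budget'' fails.

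By regularity, $A_\ell$ has fewer than $|X_\ell|\cdot n_1n_2/D < n_1n_2/L$ entries, so brute-forcing one such piece costs up to $n_1n_2n_3/L$. BSG covering can return up to $L$ pieces, so summing only gives $O(n_1n_2n_3)$ for the sparse case, not $O(n_1n_2n_3/L)$. The threshold must be $D/L^2$, as in the paper. Then each sparse piece has at most $n_1n_2/L^2$ entries and costs $O(n_1n_2n_3/L^2)$, which totals $O(n_1n_2n_3/L)$.

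The cost of the lower threshold shows up in the dense case. There \cref{lem:ruzsa-triangle} gives $|X_\ell+X_\ell|\le |X_\ell+Y_\ell|^2/|Y_\ell|\le O(L^{10}D^2)/(D/L^2)=O(L^{12}D)\le O(L^{14})|X_\ell|$. This $L^{14}$, raised to the eighth power by \cref{lem:sumset-doubling}, is exactly where $L^{112}$ and $L=\Theta(K^{1/112})$ come from. With your threshold the same computation gives $L^{12}$ and $L^{96}$, so your exponent $112$ did not follow from your own parameter choice.

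Once the threshold is fixed, the union problem you flag as the main risk is easy and needs no potentials. Observe that $X_\ell\cup Y_\ell\cup(X_\ell+Y_\ell)\subseteq (X_\ell\cup\{0\})+(Y_\ell\cup\{0\})$, and adjoining $0$ raises the doubling by at most an additive constant. So \cref{lem:sumset-doubling} applies directly, and you need no bounds on $-X_\ell$ or $X_\ell-Y_\ell$.

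Your translation idea would also work, though it is unnecessary. Setting $u[i]=-x_0$ and $w[j]=-y_0$ for some $x_0\in X_\ell$, $y_0\in Y_\ell$ is a valid potential adjustment, and it places all entries in $(X_\ell-x_0)+(Y_\ell-y_0)$. Running BSG covering on $(X_A,X_B,X_C)$ rather than on three copies of the joint entry set, as the paper does, is a harmless variation.
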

\begin{proof}
Let $X$ denote the set of entries in the given $D$-uniform instance, i.e., $|X| \leq D$. We apply BSG Covering with $X = Y = Z$ with some parameter $L$ to be picked later, resulting in sets $X_1, \dots, X_L, Y_1, \dots, Y_L$ and $R$. We say that a triangle $(i, k, j)$ is in the \emph{structured} part if $(A[i, k], B[k, j]) \in X_\ell \times Y_\ell$ (for some~\makebox{$\ell \in [L]$}), and otherwise, if $(A[i, k], B[k, j]) \in R$, we say that it is in the \emph{remainder}. We deal with these two types of triangles in two separate steps:
\begin{enumerate}
    \item\label{lem:exact-tri-unif-reg-to-low-doubling:itm:rem} \emph{(Remainder)} We will explicitly list all exact triangles in the remainder. Enumerate all pairs $(x, y) \in R$, enumerate all $k \in [n_2]$ and enumerate all $i \in [n_1]$ with $A[i, k] = x$ and all $j \in [n_3]$ with $B[k, j] = y$. All exact triangles $(i, k, j)$ that we detect in this way are stored in a set $\mathcal T$.

    \begin{framed}
        \emph{Running Time:} Recall that the given instance is $1/D$-regular, hence for all fixed $k, x$ there are at most $n_1 / D$ indices $i$ with $A[i, k] = x$, and similarly for $B$. It follows that the running time is:
        \begin{equation*}
            O(|R| \cdot n_2 \cdot (n_1 / D) \cdot (n_3 / D)) = O(n_1 n_2 n_3 / L).
        \end{equation*}
    \end{framed}

    \item\label{lem:exact-tri-unif-reg-to-low-doubling:itm:struct} \emph{(Structured Part)} Enumerate all $\ell \in [L]$. We compute the set $X_\ell + Y_\ell$ (by brute-force, say) and construct the matrices
    \begin{align*}
        A_\ell[i, k] &=
        \begin{cases}
            \mathrlap{A[i, k]}\phantom{B[k, j]} &\text{if $A[i, k] \in X_\ell$,} \\
            \bot &\text{otherwise,}
        \end{cases} \\
        B_\ell[k, j] &=
        \begin{cases}
            B[k, j] &\text{if $B[k, j] \in Y_\ell$,} \\
            \bot &\text{otherwise,}
        \end{cases} \\
        C_\ell[i, j] &=
        \begin{cases}
            \mathrlap{C[i, j]}\phantom{B[k, j]} &\text{if $C[i, j] \in X_\ell + Y_\ell$,} \\
            \bot &\text{otherwise.}
        \end{cases}
    \end{align*}
    Clearly each exact triangle in the structured part appears in some instance $(A_\ell, B_\ell, C_\ell)$. To proceed, we distinguish two cases:
    \begin{enumerate}
        \item\label{lem:exact-tri-unif-reg-to-low-doubling:itm:struct:itm:sparse} \emph{(Sparse Case: $|X_\ell| \leq D / L^2$ or $|Y_\ell| \leq D / L^2$)} In this case one of the matrices $A_\ell$ or $B_\ell$ is sparse. Indeed, suppose that $|X_\ell| \leq D / L^2$. Then $A_\ell$ contains at most $n_1 n_2 / L^2$ non-$\bot$ entries, as it contains at most $|X_\ell|$ distinct entries each of which appears in at most a $1/D$-fraction of the entries (by the regularity assumption). We thus solve the instance $(A_\ell, B_\ell, C_\ell)$ by brute-force and place all exact triangles in $\mathcal T$.

        \begin{framed}
            \emph{Running Time:} $O(n_1 n_2 n_3 / L^2)$ per iteration, so time $O(n_1 n_2 n_3 / L)$ in total.
        \end{framed}
        
        \item\label{lem:exact-tri-unif-reg-to-low-doubling:itm:struct:itm:dense} \emph{(Dense Case: $|X_\ell| \geq D / L^2$ and $|Y_\ell| \geq D / L^2$)} Let $Z_\ell = X_\ell \cup Y_\ell \cup (X_\ell + Y_\ell)$ and observe that all entries in the three matrices $A_\ell, B_\ell, C_\ell$ stem from the set~$Z_\ell$. In the remainder we bound the doubling of~$Z_\ell$. The doubling of $X_\ell$ can be bounded by one application of Ruzsa's triangle inequality (\cref{lem:ruzsa-triangle}):
        \begin{equation*}
            |X_\ell + X_\ell| \leq \frac{|X_\ell + Y_\ell| \, |X_\ell + Y_\ell|}{|Y_\ell|} \leq \frac{O(L^{10} D^2)}{D / L^2} = O(L^{12} D) = O(L^{14}) |X_\ell|,
        \end{equation*}
        It follows that $X_\ell \cup \set{0}$ has doubling $O(L^{14})$. Symmetrically, $Y_\ell \cup \set{0}$ has doubling $O(L^{14})$. But $Z_\ell = (X_\ell \cup \set{0}) + (Y_\ell \cup \set{0})$, and thus $Z_\ell$ has doubling at most $O(L^{14})^8 = O(L^{112})$ by \cref{lem:sumset-doubling}. (Again, we have not attempted to optimize this constant.) We pick $L = \Theta(K^{1/112})$ so that $Z_\ell$ has doubling at most $K$. The Exact Triangle instance $(A_\ell, B_\ell, C_\ell)$ is therefore a $K$\=/doubling (and $D$-uniform and $1/D$-regular) instance as desired, so we place it in $\mathcal I$.

        \begin{framed}
            \emph{Running Time:} $O(L \cdot (n_1 n_2 + n_2 n_3 + n_1 n_3))$ in total.
        \end{framed}
    \end{enumerate}
\end{enumerate}
It is clear that this is a potential-adjusting reduction with $|\mathcal I| \leq L \leq O(K)$ (crudely bounded). It remains to analyze the running time. The initial BSG Covering takes time $\tilde O(D^\omega L) = \tilde O(D^3 L) = \tilde O(n_1 n_2 D L)$ (crudely bounding~\makebox{$\omega \leq 3$} and~\makebox{$D \leq n_1, n_2$}), so summing over the running time of steps~\ref{lem:exact-tri-unif-reg-to-low-doubling:itm:rem} and~\ref{lem:exact-tri-unif-reg-to-low-doubling:itm:struct} as analyzed before the total running time is
\begin{equation*}
    \tilde O\parens*{\frac{n_1 n_2 n_3}{L} + (n_1 n_2 + n_2 n_3 + n_1 n_3) D L} = \tilde O\parens*{\frac{n_1 n_2 n_3}{K^{1/112}} + (n_1 n_2 + n_2 n_3 + n_1 n_3) D K^{1/112}},
\end{equation*}
as claimed.
\end{proof}

\subsection{Putting the Pieces Together} \label{sec:exact-tri-low-rank:sec:low-rank-to-low-doubling}
We finally put \cref{lem:exact-tri-low-rank-to-unif-reg,lem:exact-tri-unif-reg-to-low-doubling} together to complete the proof of \cref{lem:exact-tri-low-rank-to-low-doubling}.

\lemexacttrilowranktolowdoubling*
\begin{proof}
First run \cref{lem:exact-tri-low-rank-to-unif-reg} on the given low-rank Exact Triangle instance with parameter $t = K^{1/98000}$. Then transform each of the $t^{875} (n_1 n_2 n_3)^{o(1)}$ resulting uniform regular Exact Triangle instances by running \cref{lem:exact-tri-unif-reg-to-low-doubling} with parameter~$K$. This results in a set of $t^{875} (n_1 n_2 n_3)^{o(1)} \cdot K^{1/112} \leq K (n_1 n_2 n_3)^{o(1)}$ instances that are uniform, regular and low-doubling as claimed. The total running time is
\begin{align*}
    &\parens*{\frac{n_1 n_2 n_3}{t} + (n_1 n_2 + n_2 n_3 + n_1 n_3) r t^{875} + t^{875} \cdot \parens*{\frac{n_1 n_2 n_3}{K^{1/112}} + (n_1 n_2 + n_2 n_3 + n_1 n_3) r K^{1/112}}}^{1+o(1)} \\
    &\qquad= \parens*{\frac{n_1 n_2 n_3}{K^{1/98000}} + (n_1 n_2 + n_2 n_3 + n_1 n_3) r K}^{1+o(1)}.
\end{align*}
This completes the reduction and thus the proof of \cref{thm:exact-tri-low-rank}.
\end{proof}

\section{Universe Reductions for Min-Plus Product} \label{sec:univ-reduct}
In this section, we develop the universe reductions for Min-Plus Product (and thereby for APSP). The structure of this section is as follows: In \cref{sec:univ-reduct:sec:listing} we recap a standard trick from fine-grained complexity. In \cref{sec:univ-reduct:sec:small} we give the universe reduction from small universes to even smaller universes (say from $u = n$ to $u = \sqrt n$), proving \cref{thm:strong-apsp-implies-dir-unw-apsp}. In \cref{sec:univ-reduct:sec:doubling} we give the reduction from Min-Plus Product to Uniform Low-Doubling Min-Plus Product. And in \cref{sec:univ-reduct:sec:hashing} we conclude the conditional universe reduction from polynomially large to small universes, proving \cref{thm:apsp-implies-strong-apsp}.

\subsection{Listing-to-Decision Reduction} \label{sec:univ-reduct:sec:listing}
Before we start, we recap a standard \emph{listing-to-decision reduction} that allows to transform any detection algorithm into an algorithm that also lists some ``witnesses'' for a quite general class of problems. In the context of Min-Plus Product, we call $k$ a \emph{witness} of $(A * B)[i, j]$ if $A[i, k] + B[k, j] = (A * B)[i, j]$. Then:

\begin{lemma}[Listing-to-Decision Reduction for Min-Plus Product] \label{lem:min-plus-wit}
Let $(A, B)$ be a Min-Plus Product instance, and assume that there is an algorithm computing the min-plus product $A' * B'$ of any restrictions of $A$ and $B$ in time $T$. Then, for any parameter $t \geq 1$, we can compute lists of up to $t$ witnesses for all $(A * B)[i, j]$ in randomized time $\tilde O(T t)$.
\end{lemma}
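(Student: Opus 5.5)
The plan is the standard random-sampling reduction from witness listing to computation. As a warm-up, consider finding a single witness for every entry whose number of witnesses lies in $[2^a, 2^{a+1})$. For each scale $a = 0, 1, \dots, \lceil \log n_2 \rceil$, sample a random subset $K_a \subseteq [n_2]$ with rate $2^{-a}$. Restrict $A$ to the columns in $K_a$ and $B$ to the rows in $K_a$, and compute the min-plus product with the given algorithm. With constant probability an entry with $\Theta(2^a)$ witnesses keeps exactly one witness in $K_a$. To recover that witness, use bit encoding. For each bit position $b \in [\lceil \log n_2 \rceil]$, further restrict to the indices $k \in K_a$ whose $b$-th bit is $1$ and recompute the product. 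If the value at $(i,j)$ still equals $(A*B)[i,j]$, the unique witness has bit $b$ set. Since $(A*B)[i,j]$ itself can be computed by one call, we can verify any candidate $k$ in $O(1)$ time by checking $A[i,k]+B[k,j]=(A*B)[i,j]$. Repeating $O(\log n)$ times amplifies the success probability. Each run uses $\tilde O(1)$ calls to the algorithm, each on restrictions of $A$ and $B$, so each costs at most $T$.

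To list up to $t$ witnesses, I would isolate many witnesses at once rather than peel them off sequentially. For each scale $a$, draw $\tilde O(t)$ independent random partitions of $[n_2]$ into about $t$ buckets combined with sampling at rate $2^{-a}$, or equivalently draw $\tilde O(t)$ independent samples $K$ of rate roughly $1/\max\{1, w/t\}$ at the appropriate scale. Fix an entry with $w$ witnesses, let $W$ be its witness set, and fix a target witness $k \in W$. A sample at rate $p \approx 1/w$ contains $k$ and no other element of $W$ with probability $\Omega(1/w)$. Consequently, $O(w \log n)$ samples at that scale isolate every witness with high probability. When $w \le t$ this costs $\tilde O(t)$ samples. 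When $w > t$, we only need $t$ distinct witnesses, and each sample isolates some witness in $W$ chosen roughly uniformly. A coupon-collector bound then shows that $\tilde O(t)$ samples yield $\min\{t, \Omega(w)\}$ distinct witnesses, which is at least $t$. Each sample triggers the $O(\log n)$ bit-decoding calls above, and every recovered candidate is verified in $O(1)$ time per entry. Merging the candidates into per-entry lists with deduplication, and truncating each list at $t$, takes $\tilde O(t n_1 n_3)$ time, which is at most $\tilde O(Tt)$ because $T \ge n_1 n_3$.

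Summed over the $O(\log n)$ scales, the total is $\tilde O(t)$ samples, each costing $\tilde O(T)$, so the overall running time is $\tilde O(Tt)$. The step I expect to be the main obstacle is the probabilistic analysis showing that $\tilde O(t)$ samples per scale collect $\min\{t, w\}$ distinct witnesses simultaneously for all entries with high probability. This needs a union bound over all $n_1 n_3$ entries and a careful choice of the sampling rate relative to $w$: at rate $\Theta(1/w)$, each sample isolates a fixed witness with probability $\Theta(1/w)$. With $\Theta(t \log n)$ samples, the expected number of distinct witnesses collected is $w(1 - (1-\Theta(1/w))^{\Theta(t\log n)}) \ge \min\{w, \Omega(t \log n)\}$, and Chernoff and union bounds then make this hold for all entries at once.
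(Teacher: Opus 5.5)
Your proposal is correct and follows essentially the same route as the paper. Subsample the inner index at a geometric rate $2^{-\ell}$ so an entry keeps a unique witness with good probability, decode that witness bit by bit from $O(\log n_2)$ further-restricted products, and repeat $\tilde O(t)$ times, using that the isolated witness is uniform over the witness set to collect $\min\{t, w\}$ distinct witnesses; the only difference is that the paper samples the scale $\ell$ at random in each repetition while you sweep over all scales, which costs at most a logarithmic factor absorbed by $\tilde O(\cdot)$.
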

\begin{proof}[Proof Sketch]
The proof is a by-now standard argument from sparse recovery; we only give a quick sketch here. Consider the restrictions $A'$ and $B'$ randomly obtained from $A$ and $B$ as follows. Sample $\ell \in [\ceil{\log n_2}]$ uniformly at random, then sample $K \subseteq [n_2]$ uniformly at random with rate $2^{-\ell}$. Let $A'$ be the matrix obtained from $A$ by setting all columns $k$ that are not present in $K$ to $\bot$. It can be shown that, for any fixed pair $(i, j)$, with probability at least $\Omega(1 / \log n_2)$ in the min-plus product $A' * B$ \emph{exactly one} witness of~$(i, j)$ is still present. In this case we say that~$(i, j)$ is \emph{successful.} Moreover, by symmetry this unique witness is sampled uniformly at random.

We now argue that by $O(\log n_2)$ calls to the fast algorithm we can identify the surviving witness for all successful pairs $(i, j)$. To this end, identify each entry $k \in [n_2]$ with its $L = \ceil{\log n_2}$-bit binary representation. We write $k[\ell]$ for the $\ell$-th bit. Let $A_\ell'$ be the matrix obtained from $A'$ where we replace the $k$-th column by~$\bot$ if $k[\ell] = 0$. We compute the min-plus products $A'_\ell * B$ for all $\ell \in [L]$. Now take any successful pair~$(i, j)$, and inspect the sequence $(A_0' * B)[i, j], \dots, (A_L' * B)[i, j]$. Some of these values are exactly equal to $(A * B)[i, j]$, others are strictly larger---the length-$L$ bit-string indicating the larger values is exactly the bit-representation of the unique witness $k$ of $(i, j)$. In particular, for all successful pairs $(i, j)$ we can read off (and verify) the surviving witness $k$.

Overall, by repeating this procedure $\tilde O(t)$ times, we expect to list $\tilde O(t)$ uniformly random witnesses for each entry $(A * B)[i, j]$. Thereby we either list $t$ distinct witnesses (if they exist), or otherwise have listed all witnesses with good probability. 
\end{proof}

With exactly the same proof one can also list witnesses for the Exact Triangle problem, see the following \cref{lem:exact-tri-wit}. In this context we call $k$ a \emph{witness} of $C[i, j]$ if $A[i, k] + B[k, j] = C[i, j]$. Similarly, in this case~$j$ is a \emph{witness} of $A[i, k]$ and~$i$ is a witness of $B[k, j]$.

\begin{lemma}[Listing-to-Decision Reduction for Exact Triangle] \label{lem:exact-tri-wit}
Let $(A, B, C)$ be an Exact Triangle instance, and assume that there is an algorithm solving $(A', B', C')$ for any restrictions~$A'$ of $A$,~$B'$ of $B$, and $C'$ of~$C$ in time $T$. Then, for any parameter $t \geq 1$, we can compute lists of up to $t$ witnesses for all outputs $A[i, k], B[k, j], C[i, j]$ in randomized time $\tilde O(T t)$.
\end{lemma}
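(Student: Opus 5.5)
The plan is to mirror the proof sketch of \cref{lem:min-plus-wit}, replacing min-plus products by Exact Triangle calls. By symmetry (rotating the instance) it suffices to list witnesses $k$ for the entries $C[i,j]$; the witnesses $j$ of $A[i,k]$ and $i$ of $B[k,j]$ are obtained by running the same procedure on the rotations $(B, C^{\mathsf{neg}}, \dots)$. More simply, they follow by choosing which index set to subsample, since the detection algorithm already reports all three edge types. Concretely, for witnesses of $C[i,j]$ we subsample the middle index $k$. We sample a scale $\ell \in [\lceil \log n_2\rceil]$ uniformly, then sample $K\subseteq[n_2]$ with rate $2^{-\ell}$, and let $A'$ be $A$ with the columns outside $K$ replaced by $\bot$. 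This is a restriction of $A$, so the assumed algorithm applies.

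The standard isolation calculation goes as follows. If $(i,j)$ has $w\ge 1$ witnesses, then at the scale $2^{\ell}\approx w$ exactly one witness survives with constant probability. Hence, over the random scale, this happens with probability $\Omega(1/\log n_2)$, and by symmetry the surviving witness is uniform among all witnesses. To identify it, write each $k$ in binary with $L=\lceil\log n_2\rceil$ bits. For each bit position $b$, let $A'_b$ keep only the columns $k\in K$ with $k[b]=1$, and solve $(A'_b,B,C)$. The output for edge $(i,j)$ is then the indicator "some surviving witness has bit $b$ equal to $1$". For an isolated pair, these $L$ bits spell out the unique surviving witness $k$.

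Each candidate is checked in $O(1)$ time by testing $A[i,k]+B[k,j]=C[i,j]$. This filters out garbage from pairs where several witnesses survived, so we only ever record true witnesses. One round costs $O(T\log n_2)$.

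Repeating $\tilde O(t)$ rounds collects independent uniform witnesses for each $(i,j)$. A coupon-collector argument then shows that with high probability we obtain $\min(t,w)$ distinct witnesses, for all pairs simultaneously after a union bound. The total time is $\tilde O(Tt)$. The analogous runs that subsample rows $i$ (restricting $A$ and $C$) and columns $j$ (restricting $B$ and $C$) give witnesses for the $B$- and $A$-entries respectively.

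The main obstacle is the bookkeeping for the isolation probability and the coupon collection. Unlike min-plus, exact triangles have no minimum, so we must rely on the "all-edges" output bits rather than value comparisons. This works cleanly because the problem is posed as a decision per edge, and the restrictions only delete entries. The coupon-collector bound needs witnesses that are exactly uniform among the witnesses of $(i,j)$, and symmetry of the column sampling provides this.
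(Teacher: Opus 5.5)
Your proposal is correct and is essentially the paper's argument: the paper just says the proof of \cref{lem:min-plus-wit} carries over, and you carry it over step by step (random scale, rate-$2^{-\ell}$ subsampling of the witness index, one call per bit position with the bit read directly off the all-edges decision output, an $O(1)$ check of each decoded candidate, and $\tilde O(t)$ repetitions with coupon collection). You can drop the rotation remark, which is not needed and not literally covered by the hypothesis, since a rotated instance is not a restriction of $(A, B, C)$; your direct variant, subsampling $i$ through $A, C$ or $j$ through $B, C$, is the right way to get witnesses for the $B$- and $A$-entries.
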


\subsection{Small-Universe Reduction} \label{sec:univ-reduct:sec:small}
We start with the reduction from small universes $\set{0, \dots, n}$ to even smaller universes $\set{0, \dots, n^\alpha}$ (for any constant $\alpha \leq 1$). The proof of this universe reduction can be split into two parts: dealing with the \emph{unpopular} and \emph{popular} outputs. The first part is as in~\cite{ChanVX23} without modifications. The innovation is to solve the second part with ``low-rank APSP'' instead of ``Fredman's trick meets dominance product''.

\begin{lemma}[Small-Universe Reduction] \label{lem:sm-univ-reduct}
For all parameters $n_1, n_2, n_2', n_3, u, t$ with $n_2' \leq n_2$, it holds that
\begin{equation*}
    \MinPlus(n_1, n_2, n_3 \mid u) = \tilde O\parens*{\MinPlus(n_1, n_2', n_3 \mid u \leq n_2') \cdot \frac{n_2 t}{n_2'} + \ExactTri(n_1, n_2, n_3 \mid r \leq u / t)}.
\end{equation*}
\end{lemma}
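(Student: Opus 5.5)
We set $q = u / n_2'$ and work with the rounded matrices $A' = \lfloor A/q \rfloor$ and $B' = \lfloor B/q \rfloor$. Their entries lie in $\{0,\dots,n_2'\}$. We may assume $q \ge 1$, since otherwise $u \le n_2'$ and the first term trivially covers the instance. The plan splits the output entries $(i,j)$ according to the number of $q$-pseudo-witnesses, i.e.\ indices $k$ attaining $(A' * B')[i,j]$, and uses threshold $t$. The two ingredients are as in the introduction: when there are few pseudo-witnesses we list them all with the listing-to-decision trick, and when there are many, the product has low select-plus rank and we invoke an exact-triangle oracle.

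\textbf{Few pseudo-witnesses.} Partition $[n_2]$ into $n_2/n_2'$ blocks of size $n_2'$. For each block we compute the $(n_1, n_2', n_3)$ min-plus product of the rounded matrices, which has universe $\le n_2'$. Using \cref{lem:min-plus-wit} with list size $\tilde O(t)$ per block, we list all pseudo-witnesses of every block that attain the global minimum $(A'*B')[i,j]$; that global minimum is obtained by taking the minimum over blocks. There is a subtlety: a block may have a local minimum larger than the global one, so only blocks attaining the global value matter. If the total number of global pseudo-witnesses is at most $t$, each relevant block has at most $t$ of them and all are listed.

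The true minimum $(A*B)[i,j]$ is attained at some $k$ with $A'[i,k]+B'[k,j] \le (A'*B')[i,j]+1$, because rounding loses at most $2q$. So we also need near-witnesses, with value $+1$. I would handle them by also listing witnesses of the shifted products, or more simply by running the listing on entries equal to min or min$+1$ via a standard threshold trick: replace the product by one that detects entries $\le$ min$+1$. Then we evaluate $A[i,k]+B[k,j]$ over all listed $k$ and take the minimum. The cost is $\tilde O(\MinPlus(n_1,n_2',n_3\mid u\le n_2')\cdot n_2 t/n_2')$.

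\textbf{Many pseudo-witnesses.} For the entries with more than $t$ pseudo-witnesses (within the min/min$+1$ band), sample a set $\mathcal K$ of $\tilde O(n_2/t)$ indices. With high probability $\mathcal K$ hits one pseudo-witness per such entry. Hence $A'*B'$, restricted to these entries, has a select-plus decomposition of rank $\tilde O(n_2/t)$, with $U$ equal to the columns of $A'$ in $\mathcal K$ and $V$ equal to the rows of $B'$ in $\mathcal K$.

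\textbf{Main obstacle.} This rank bound is where the difficulty sits: $\tilde O(n_2/t)$ does not match the target $u/t$. Instead I would give a rank-$u/t$ decomposition directly. Since $q\cdot(A'*B') + R$ recovers $A*B$ with $R\in\{0,\dots,O(q)\}$, submultiplicativity (\cref{fac:rank-submult}) together with \cref{fac:rank-triv-univ} gives rank $\tilde O(n_2' /t)\cdot O(q) = \tilde O(u/t)$. This requires sampling $\mathcal K$ of size $\tilde O(n_2'/t)$, and it is not clear that this suffices. I would achieve it blockwise: pseudo-witness counts are taken per block, each block has $n_2'$ indices, so the threshold becomes $t$ per block relative to $n_2'$. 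Getting this exactly right is the main point to verify.

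To finish, we guess a candidate matrix $C = q\cdot S_{U,V} + R$ for every value $R$, which exploits the selection freedom in \cref{def:rank}. We then call $\ExactTri(n_1,n_2,n_3\mid r\le u/t)$ with scaling/binary search as in \cite{VassilevskaW18}, which costs $\log$ factors, and read off the minimum.
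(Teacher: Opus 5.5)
There is a genuine gap, and it is the one you flag yourself. With popularity threshold $t$, the popular case needs $\tilde O(n_2/t)$ sampled indices, so the target gets rank $\tilde O(n_2/t)\cdot q=\tilde O(u n_2/(t n_2'))$. That is a factor $n_2/n_2'$ too large.

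Your blockwise fix does not repair this. If popularity is counted per block, different output pairs are popular in different blocks, and with respect to different block minima. A single Exact Triangle instance of size $n_1\times n_2\times n_3$ therefore still needs sampled indices from every block, which brings the rank back to $\tilde O(n_2/t)\cdot q$. Keeping rank $u/t$ would force one Exact Triangle call per block, i.e.\ $n_2/n_2'$ calls of size $n_1\times n_2'\times n_3$, which is not what the lemma asserts.

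The missing idea is to \emph{raise the popularity threshold to $t n_2/n_2'$}. The popular case then needs only $\tilde O(n_2'/t)$ samples, giving rank $\tilde O(n_2'/t)\cdot q=\tilde O(u/t)$. The unpopular case still costs only $\tilde O(t n_2/n_2')$ products with inner dimension $O(n_2')$. The paper samples $\mathcal K\subseteq[n_2]$ at rate $\rho=n_2'/(2tn_2)$ and repeats $\tilde O(1/\rho)$ times. By a union bound, a fixed true witness is the \emph{unique} pseudo-witness in $\mathcal K$ with probability at least $\rho/2$, so one reported witness per entry suffices. In your block framework, randomly permuting $[n_2]$ before blocking has the same effect. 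The at most $t n_2/n_2'$ pseudo-witnesses then spread to $\tilde O(t)$ per block with high probability, so listing $\tilde O(t)$ witnesses per block still finds all of them.

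Two further steps are left unresolved.

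\emph{Near-witnesses.} With plain floors, a true witness can sit at rounded value $\min+1$, and \cref{lem:min-plus-wit} only lists indices attaining the minimum. The paper instead splits $A$ and $B$ according to whether the residue modulo $q$ is below $q/2$. This gives four products, with the upper parts shifted down by $\lceil q/2\rceil$. After this, $\lfloor (a+b)/q\rfloor=\lfloor a/q\rfloor+\lfloor b/q\rfloor$, so every witness is a pseudo-witness.

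\emph{The third matrix.} You never specify the third matrix of the Exact Triangle instance. The low-rank structure pins down only $\lfloor (A*B)[i,j]/q\rfloor$, not the residue modulo $q$. Trying each remainder separately costs $\Theta(q)$ calls rather than one call of rank $u/t$. A simultaneous binary search would need an argument that all intermediate targets, and their rescalings, stay low-rank.

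The paper obtains an additive-$2$ approximation $\tilde C$ by recursing on $\lfloor A/2\rfloor,\lfloor B/2\rfloor$, with depth $O(\log u)$. For each $z\in\{0,1,2\}$ it uses $R[i,j]=\tilde C[i,j]+z$ wherever $\lfloor(\tilde C[i,j]+z)/q\rfloor=R_1[i,j]$. The select-plus freedom then absorbs the arbitrary residue $R[i,j]\bmod q$ via $q$ shifted copies of each column of $qU_1$, giving rank $|\mathcal K|\cdot q$.
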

\begin{proof}
Let $A, B$ be the two given matrices. The first step is to apply a scaling trick to obtain an additive approximation $\tilde C$ of the output matrix $A * B$. To this end, let $A'$ be the matrix obtained from $A$ by dropping the least significant bits of all entries (i.e., $A'[i, j] = \floor{A[i, j] / 2}$), and similarly obtain $B'$ from $B$. We \emph{recursively} compute the min-plus product $C' = A' * B'$ and pick $\tilde C = 2 C'$. It is easy to verify that for all pairs $(i, j)$:
\begin{equation*}
    \tilde C[i, j] \leq (A * B)[i, j] \leq \tilde C[i, j] + 2.
\end{equation*}

Let $p = \ceil{n_2'}$ and $q = \ceil{u / n_2'}$. Without loss of generality, we can assume that $A[i, k] \bmod q < q/2$ and $B[k, j] \bmod q < q/2$ by the following simple trick. Partition $A$ into two submatrices, one containing the entries with $A[i, k] \bmod q < q/2$ and the other containing all remaining entries. Subtract $\ceil{q/2}$ from all entries of the second matrix; then both submatrices are as claimed. Similarly partition $B$ into two parts. It remains to compute four min-plus products from which we can easily recover the original product $A * B$ (by appropriately adding $0$, $\ceil{q/2}$, or $2\ceil{q/2}$ to the respective output matrices).

Now view the matrices~$A$ and $B$ in their ``base-$(p, q)$'' representation, i.e., we rewrite the matrices as
\begin{align*}
    A &= A_1 \cdot q + A_0, \\
    B &= B_1 \cdot q + B_0,
\end{align*}
where $A_0, B_0$ have nonnegative entries less than $q / 2$ and where $A_1, B_1$ have nonnegative entries less than~$p$. We call $k$ a \emph{witness} of $(i, j)$ if $A[i, k] + B[k, j] = (A * B)[i, j]$, and we call $k$ a \emph{pseudo-witness} of $(i, j)$ if~\makebox{$A_1[i, k] + B_1[k, j] = (A_1 * B_1)[i, j]$}. Importantly, note that each witness is also a pseudo-witness (as we have that $\floor{(A[i, k] + B[i, k]) / q} = \floor{A[i, k]/ q} + \floor{B[i, k] / q}$, using that $A_0, B_0$ have entries less than~$q/2$).

We call $(i, j)$ \emph{popular} if it has at least $t n_2 / n_2'$ pseudo-witnesses, and \emph{unpopular} otherwise. In the following two cases we separately deal with the unpopular and popular pairs $(i, j)$. Initially, let $C$ be the all-$\bot$ matrix.

\begin{enumerate}
    \item\label{lem:sm-univ-reduct:itm:unpop} \emph{(Unpopular Pairs)} To deal with the unpopular pairs $(i, j)$, we will list \emph{all} pseudo-witnesses $k$ of $(i, j)$. To this end, repeat the following steps $O(t n_2 / n_2' \cdot \log (n_1 n_3))$ times:
    \begin{enumerate}
        \item Sample a uniformly random subset $\mathcal K \subseteq [n_2]$ with rate $\rho = n_2' / (2 t n_2)$. The resulting set has size $|\mathcal K| \leq O(n_2' / t) \leq O(n_2')$ with high probability.
        \item Restrict $A_1$ to the columns in $\mathcal K$, restrict $B_1$ to the rows in $\mathcal K$, and compute the min-plus product of the resulting matrices. Additionally, report a witness $k$ for each output pair $(i, j)$ by \cref{lem:min-plus-wit}. (Note that a witness for $A_1 * B_1$ is a pseudo-witness in our language.) We update $C[i, j] \gets \min\set{C[i, j], A[i, k] + B[k, j]}$ for each triple $(i, k, j)$ reported in this way.
    \end{enumerate}
    The correctness claim is that after step~\ref{lem:sm-univ-reduct:itm:unpop} is complete we have correctly assigned $C[i, j] = (A * B)[i, j]$ for all unpopular pairs $(i, j)$. Fix any such pair, and let $k$ be a witness of $(i, j)$. In each iteration the probability that we include $k$ into $\mathcal K$ is $\rho$, and independently the probability that we miss all other pseudo-witnesses of $(i, j)$ in $\mathcal K$ is at least $1 - \rho t n_2 / n_2' \geq 1/2$, by a union bound over the at most $t n_2 / n_2'$ pseudo-witnesses of $(i, j)$. Hence, with probability at least $\rho / 2$ the only pseudo-witness of $(i, j)$ in $K$ is a proper witness, and so the algorithm is forced to list the triple $(i, k, j)$. Over the $O(\rho^{-1} \log (n_1 n_3))$ iterations this event happens at least once for each unpopular pair $(i, j)$ with high probability.

    \begin{framed}
        \emph{Running Time:} $\tilde O(\MinPlus(n_1, n_2', n_3 \mid u \leq p) \cdot t n_2 / n_2')$.
    \end{framed}

    \item\label{lem:sm-univ-reduct:itm:pop} \emph{(Popular Pairs)} We deal with the popular pairs by solving $O(1)$ low-rank Exact Triangle instances. Specifically, repeat the following steps for all offsets $z \in \set{0, 1, 2}$. Recall from the first paragraph that we have access to a matrix $\tilde C$ that approximates $A * B$ with small additive error, so in the $z$-th iteration our goal is to correctly assign $C[i, j] = (A * B)[i, j]$ for all entries popular pairs $(i, j)$ with $(A * B)[i, j] = \tilde C[i, j] + z$.
    \begin{enumerate}
        \item\label{lem:sm-univ-reduct:itm:pop:itm:sample} Sample a uniformly random subset $\mathcal K \subseteq [n_2]$ with rate $\Theta(n_2' / (t n_2) \log (n_1 n_3))$. The resulting set has size $|\mathcal K| = \tilde O(n_2' / t)$ with high probability. Moreover, with high probability $\mathcal K$ contains at least one pseudo-witness for each popular pair $(i, j)$.
        \item\label{lem:sm-univ-reduct:itm:pop:itm:decomp-1} Let $U_1$ be the matrix $A_1$ restricted to the columns in $\mathcal K$, let $V_1$ be the matrix $B_1$ restricted to the rows in $\mathcal K$ and compute their min-plus product $R_1 = U_1 * V_1$. Additionally, report a witness~$k$ for each output entry $(i, j)$ by \cref{lem:min-plus-wit}. Let $S_1$ be the matrix that stores these witnesses in the form $S_1[i, j] = k$. Then observe that $(U_1, V_1, S_1)$ is a valid rank-$r_1$ decomposition of $R_1$ for $r_1 = |\mathcal K| = \tilde O(n_2' / t)$.
        
        \begin{framed}
            \emph{Running Time:} $\tilde O(\MinPlus(n_1, n_2', n_3 \mid u \leq p))$.
        \end{framed}
        
        \item\label{lem:sm-univ-reduct:itm:pop:itm:decomp} Let $R$ be the $n_1 \times n_3$ matrix defined by
        \begin{equation*}
            R[i, j] =
            \begin{cases}
                \tilde C[i, j] + z &\text{if $\floor{(\tilde C[i, j] + z)/ q} = R_1[i, j]$,} \\
                \bot &\text{otherwise.}
            \end{cases}
        \end{equation*}
        We can now find a rank-$r$ decomposition of $R$ for $r = r_1 q$. As outlined before, this step involves trivially bounding the select-plus rank of $R - q \cdot R_1$ by $O(q)$ (\cref{fac:rank-triv-univ}) and then applying submultiplicativity (\cref{fac:rank-submult}).
        
        Concretely, we compute the rank decomposition $(U, V, S)$ of $R$ from the data we have precomputed in step~\ref{lem:sm-univ-reduct:itm:pop:itm:decomp-1} as follows. Let \smash{$U \in \Int^{[n_1] \times (\mathcal K \times [q])}$} be the matrix obtained from \smash{$q \cdot U_1 \in \Int^{[n_1] \times \mathcal K}$} by replacing each column by $q$ copies, where in the $\ell$-th copy we add $\ell$ to all entries. Next, we obtain~\smash{$V \in \Int^{(\mathcal K \times [q]) \times [n_3]}$} from \smash{$q \cdot V_1 \in \Int^{\mathcal K \times [n_3]}$} by duplicating each row $q$ times (but without further additions). Finally, define $S$ by
        \begin{equation*}
            S[i, j] =
            \begin{cases}
                (S_1[i, j], R[i, j] \bmod q) &\text{if $R[i, j] \neq \bot$,} \\
                \bot &\text{if $R[i, j] = \bot$.}
            \end{cases}
        \end{equation*}
        Then, indeed, $U$ has $|\mathcal K| q = r$ columns (and $V$ has $|\mathcal K| q = r$ rows), and by construction we have that
        \begin{align*}
            R[i, j]
            &= \tilde C[i, j] + z \\
            &= \smash{\floor{\tilde C[i, j] / q} \cdot q + ((\tilde C[i, j] + z) \bmod q)} \\
            &= R_1[i, j] \cdot q + (R[i, j] \bmod q) \\
            &= U_1[i, S_1[i, j]] \cdot q + (R[i, j] \bmod q) + V_1[S_1[i, j], j] \cdot q \\
            &= U[i, S[i, j]] + V[S[i, j], j],
        \end{align*}
        for all non-$\bot$ entries. Hence, $(U, V, S)$ is a valid rank-$r$ decomposition of $R$.
        
        \begin{framed}
            \emph{Running Time:} $\tilde O(n_1 n_2 + n_1 n_3 + n_2 n_3)$ (negligible).
        \end{framed}
        
        \item\label{lem:sm-univ-reduct:itm:pop:itm:solve} Solve the $r$-Rank Exact Triangle instance $(A, B, R)$, and additionally report a witness $k$ for each output entry $(i, j)$ by \cref{lem:exact-tri-wit}. Update $C[i, j] \gets \min\set{C[i, j], A[i, k] + B[k, j]}$ for each such triple.

        \begin{framed}
            \emph{Running Time:} $\tilde O(\ExactTri(n_1, n_2, n_3 \mid r))$.
        \end{framed}
    \end{enumerate}
    We argue that these steps correctly assign all popular entries $C[i, j]$. In steps~\ref{lem:sm-univ-reduct:itm:pop:itm:sample} and~\ref{lem:sm-univ-reduct:itm:pop:itm:decomp-1} we construct a matrix $R_1$ satisfying that $(A_1 * B_1)[i, j] = R_1[i, j]$ for all popular pairs (since, as noted before, $\mathcal K$ contains a pseudo-witness with high probability). Let $z \in \set{0, 1, 2}$ be such that $(A * B)[i, j] = \tilde C[i, j] + z$. Recalling $\floor{(A * B)[i, j] / q} = (A_1 * B_1)[i, j]$, it follows that~\makebox{$\floor{(\tilde C[i, j] + z) / q} = R_1[i, j]$}. We therefore set $R[i, j] = \tilde C[i, j] + z = (A * B)[i, j]$, and thus $(i, j)$ is involved in an exact triangle in the instance $(A, B, R)$. It follows that we list a witness $k$ of $(i, j)$ and correctly assign $C[i, j]$ in step~\ref{lem:sm-univ-reduct:itm:pop:itm:solve}.
\end{enumerate}
This completes the algorithm and the correctness proof, and we finally analyze the running time. Summing over the contributions of all steps (analyzed in the boxes), the total time indeed becomes
\begin{equation*}
    \tilde O\parens*{\MinPlus(n_1, n_2', n_3 \mid u \leq n_2') \cdot \frac{n_2 t}{n_2'} + \ExactTri(n_1, n_2, n_3 \mid r \leq u / t)},
\end{equation*}
where we have bounded $p = O(n_2')$ and $r = r_1 q = \tilde O(n_2' / t \cdot u / n_2') = \tilde O(u / t)$. (Moreover, here we have omitted the negligible overhead of $\tilde O(n_1 n_2 + n_1 n_3 + n_2 n_3)$ for the various bookkeeping steps as this time is necessarily dominated by the $\ExactTri(n_1, n_2, n_3)$ call simply to read the input.) Finally, each recursive call reduces $u$ by a constant factor. Thus, the recursion reaches depth at most $O(\log u) = O(\log (n_1 n_2 n_3))$ and we incur an overhead that is hidden in the $\tilde O$ notation. 
\end{proof}

\begin{corollary} \label{cor:min-plus-rect-sm-univ}
Let $0 < \alpha \leq 1$ and $\epsilon > 0$ be constants. If $\MinPlus(n, n^{\alpha}, n \mid u \leq n^\alpha) = O(n^{2+\alpha-\epsilon})$ then the Strong APSP Hypothesis fails.
\end{corollary}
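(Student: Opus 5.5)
We argue by contraposition. Assume $\MinPlus(n, n^\alpha, n \mid u \leq n^\alpha) = O(n^{2+\alpha-\epsilon})$; we derive that $\MinPlus(n, n, n \mid u \leq n^{3-\omega}) = O(n^{3-\epsilon'})$ for some constant $\epsilon' > 0$. By the characterization of the Strong APSP Hypothesis in terms of Min-Plus Product (Shoshan--Zwick), this contradicts the hypothesis. The plan is to apply \cref{lem:sm-univ-reduct} with $n_1 = n_2 = n_3 = n$, $u = n^{3-\omega}$, $n_2' = n^\alpha$ and a threshold $t = n^\gamma$. The constant $\gamma > 0$ will be tuned at the end.

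The lemma bounds the running time by two terms.

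\emph{First term.} It is $\MinPlus(n, n^\alpha, n \mid u \leq n^\alpha) \cdot n^{1-\alpha} t$. Under our assumption this is $O(n^{2+\alpha-\epsilon} \cdot n^{1-\alpha+\gamma}) = O(n^{3-\epsilon+\gamma})$. This is truly subcubic as long as $\gamma < \epsilon$, so we fix $\gamma = \epsilon/2$.

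\emph{Second term.} It is $\ExactTri(n, n, n \mid r \leq u/t)$ with $u/t = n^{3-\omega-\gamma}$. By \cref{thm:exact-tri-low-rank}, rank $r \leq n^{3-\omega-\gamma}$ gives time
\[
n^{3+o(1)} \cdot (r / n^{3-\omega})^{1/200000} = n^{3 - \gamma/200000 + o(1)},
\]
which is truly subcubic.

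Summing the two terms gives $\MinPlus(n, n, n \mid u \leq n^{3-\omega}) = O(n^{3-\epsilon'})$ with, say, $\epsilon' = \epsilon/400001$.

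Two small points need checking.

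\emph{Hypothesis of \cref{lem:sm-univ-reduct}.} The lemma requires $n_2' \leq n_2$, which holds since $\alpha \leq 1$.

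\emph{Rounding and the universe bound.} The lemma sets $p = \lceil n_2' \rceil$, so the reduced calls have universe $O(n^\alpha)$ rather than exactly $n^\alpha$. This is a constant-factor slack. We absorb it by splitting each reduced entry into its quotient and remainder modulo $n^\alpha$. The quotient has only $O(1)$ possible values, so this produces $O(1)$ instances with universe at most $n^\alpha$.

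\emph{Randomization.} The reduction uses the randomized listing lemmas. This is permitted because the hypotheses are stated for randomized algorithms.

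Beyond this bookkeeping, the only thing to verify is the balancing of $\gamma$. The main risk is confirming that the $\omega$-dependence cancels: the low-rank algorithm is subcubic exactly when $r \leq n^{3-\omega-\Omega(1)}$, and choosing $u = n^{3-\omega}$ makes $u/t = n^{3-\omega-\gamma}$ fall just inside this regime. This works for any $\omega$, so no assumption $\omega = 2$ is needed for this corollary.
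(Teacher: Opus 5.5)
Your proposal is correct and follows the paper's proof essentially verbatim. Like the paper, you apply \cref{lem:sm-univ-reduct} with $n_1=n_2=n_3=n$, $n_2'=n^\alpha$ and $t=n^{\epsilon/2}$, bound the first term by $n^{3-\epsilon/2}$ using the assumption, and bound the second term via \cref{thm:exact-tri-low-rank} at rank $n^{3-\omega-\epsilon/2}$; your extra bookkeeping (the $\lceil n_2'\rceil$ slack, randomization, and no need for $\omega=2$) is harmless and consistent with the paper.
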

\begin{proof}
Suppose that $\MinPlus(n, n^\alpha, n \mid u \leq n^\alpha) = O(n^{2+\alpha-\epsilon})$ for some $\epsilon > 0$. \cref{thm:exact-tri-low-rank} states that $\ExactTri(n, n, n \mid r \leq n^{3-\omega-\epsilon/2}) \leq O(n^{3-\epsilon/500000})$. Thus, from \cref{lem:sm-univ-reduct} with $n_1 = n_2 = n_3 = n$, $n_2' = n^\alpha$, and $t = n^{\epsilon / 2}$ we get that
\begin{align*}
    &\MinPlus(n, n, n \mid u \leq n^{3-\omega}) \\
    &\qquad = \tilde O\parens*{\MinPlus(n, n^\alpha, n \mid u \leq n^\alpha) \cdot n^{1-\alpha+\epsilon/2} + \ExactTri(n, n, n \mid r \leq n^{3-\omega-\epsilon/2})} \\
    &\qquad = \tilde O\parens*{n^{2+\alpha-\epsilon} \cdot n^{1-\alpha+\epsilon/2} + n^{3-\epsilon/500000}} \\
    &\qquad = \tilde O\parens*{n^{3-\epsilon/500000}}.
\end{align*}
This contradicts the Strong APSP Hypothesis.
\end{proof}

\begin{remark}
The same proof shows that the premise, $\MinPlus(n, n^\alpha, n \mid u \leq n^\alpha) = O(n^{2+\alpha-\epsilon})$, is \emph{equivalent} for all $\alpha \in (0, 1]$, i.e., if the premise holds for some $\alpha \in (0, 1]$ then it also holds for all~\makebox{$\alpha' \in (0, 1]$}, provided that $\omega = 2$. Curiously, this equivalence statement was already established for all $\alpha \in (0, 1)$ in previous work by Chan, Vassilevska W., and Xu~\cite[Corollary~7.6]{ChanVX23}---only the equivalence with $\alpha = 1$ (i.e., the Strong APSP hardness) was missing.
\end{remark}

\cref{cor:min-plus-rect-sm-univ} combined with the equivalence of $\MinPlus(n, n^\mu, n \mid u \leq n^{1-\mu})$ and directed unweighted APSP due to~\cite{ChanVX21} completes the proof of \cref{thm:strong-apsp-implies-dir-unw-apsp}.

In the following, we quickly prove the strengthening of \cref{thm:strong-apsp-implies-dir-unw-apsp} that Zwick's algorithm is best-possible for arbitrary weight bounds provided that $\omega = 2$, \cref{thm:zwick-opt}. For the proof we rely on the following simple generalization of~\cite{ChanVX21}:

\begin{lemma} \label{lem:min-plus-sm-univ-to-apsp}
Let $\APSP(n \mid u)$ denote the running time of APSP in $n$-vertex graphs with weights $\set{0, \dots, u}$. Then $\MinPlus(n, n_2, n \mid u) = O(\APSP(n \mid u \leq n_2 u / n))$ for all parameters $n_2, u \leq n$.
\end{lemma}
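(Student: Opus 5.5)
We encode the $n \times n_2$ by $n_2 \times n$ min-plus product with entries in $\set{0,\dots,u}$ as a layered graph on $O(n)$ vertices whose edge weights lie in $\set{0, \dots, n_2u/n}$. The natural first idea, building a 3-layer graph $I \to K \to J$ with edge weights $A[i,k]$ and $B[k,j]$, has the right vertex count $2n+n_2$ but weights up to $u$, which is too large when $n_2<n$. The extra vertices must therefore be used to ``pay'' for large weights by long paths. This is exactly the Chan--Vassilevska W.--Xu idea, where unweighted directed APSP simulates $\MinPlus(n, n^\mu, n \mid n^{1-\mu})$ by paths of unit edges, generalized so that each edge carries weight up to $w := \lceil n_2 u / n\rceil$.

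The construction is as follows. For each middle index $k \in [n_2]$, create a path $P_k$ of length $L := \lceil u/w \rceil \le n/n_2$ (up to constants): vertices $k^{(0)}, \dots, k^{(L)}$ with edges of weight $w$. Hence $n_2 L = O(n)$ vertices suffice. For an entry $A[i,k] = a$, write $a = w\cdot\alpha + \beta$ with $0\le \beta < w$, and add an edge from $i$ to $k^{(L-\alpha)}$ of weight $\beta$. The remaining walk along $P_k$ to $k^{(L)}$ then contributes exactly $\alpha w$, so the shortest $i \to k^{(L)}$ path costs exactly $A[i,k]$. Distinct $i$ may enter $P_k$ at different points, which is fine because $P_k$ is a directed path. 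Then add edges $k^{(L)} \to j$ of weight $B[k,j]$. Since this weight may also be as large as $u$, we split it symmetrically using a second path $Q_k$: the edge $k^{(L)}\to q_k^{(0)}$ has weight $0$, $Q_k$ has weight-$w$ edges, and from $q_k^{(\gamma)}$ we add an edge to $j$ of weight $\delta$, where $B[k,j]=\gamma w+\delta$. Setting edge direction $I \to P \to Q \to J$ keeps the graph a DAG, so the only $i\to j$ paths are $i \to P_k \to Q_k \to j$ for a single $k$. Each such path has cost exactly $A[i,k]+B[k,j]$ when the entry and exit points are chosen as above; other entry or exit points give either no path or a larger cost. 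Hence $d(i,j)=(A*B)[i,j]$, and $\bot$ entries correspond simply to omitted edges.

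The graph has $2n + 2n_2(L+1) = O(n + n_2 + n_2 u / w) = O(n)$ vertices, using $n_2\le n$ and $w \ge n_2u/n$ (round $w$ up to at least $1$). All weights lie in $\set{0,\dots,w}$ with $w = O(n_2u/n)$. The construction takes $O(n n_2 + n_2^2 \cdot \text{stuff})$ time, which is at most the $O(n^2)$ needed to write the output. So a single call to $\APSP(O(n) \mid u \le O(n_2u/n))$ computes the product, and constant-factor blowups in $n$ and the weight bound are absorbed in the usual way (padding, or assuming $\APSP$ is polynomially well-behaved).

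The main obstacle is verifying that no spurious shorter path exists. This is handled by the DAG structure and by the monotone entry/exit encoding: entering $P_k$ earlier only adds weight-$w$ edges, and each entry edge from $i$ is unique per $(i,k)$. The edge case $w\ge u$, i.e.\ $n_2\ge n$, collapses to the trivial 3-layer construction.
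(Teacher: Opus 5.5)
Your construction is correct and is essentially the paper's own. Your two chains $P_k$ and $Q_k$, joined by a weight-$0$ edge, play the role of the paper's single weight-$q$ path $(k,-p),\dots,(k,p)$. That path is entered at $(k,-A_1[i,k])$ via an edge of weight $A_0[i,k]$ and left at $(k,B_1[k,j])$ via an edge of weight $B_0[k,j]$. The only cosmetic difference is that the paper takes $q=\lfloor n_2u/n\rfloor$ and path length $2\lceil n/n_2\rceil$, so the weight bound $n_2u/n$ holds exactly; your $w=\lceil n_2u/n\rceil$ can exceed it by up to a factor of $2$, which is fixed by rounding down instead.
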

\begin{proof}
Let $A, B$ be the given matrices of sizes $n \times n_2$ and $n_2 \times n$ with entries in $\set{0, \dots, u} \cup \set{\bot}$. Pick the parameters $q = \floor{n_2 u / n}$ and $p = 2\ceil{n / n_2}$. Since $pq \geq u$ we can rewrite the matrices in their ``$(p, q)$-ary'' representations
\begin{align*}
    A &= A_1 \cdot q + A_0, \\
    B &= B_1 \cdot q + B_0,
\end{align*}
where $A_0, B_0$ have entries in $\set{0, \dots, q-1} \cup \set{\bot}$ and $A_1, B_1$ have entries in $\set{0, \dots, p-1} \cup \set{\bot}$.

Next, we construct a graph $G$ as follows. The vertices consist of three separate sets $I \sqcup K \sqcup J$. The sets~$I$ and~$J$ are each copies of $[n]$, and $K$ is $[n_2] \times \set{-p, \dots, p}$. We choose the edges as follows: For each non-$\bot$ entry~$A[i, k]$ we insert an edge from $i \in I$ to $(k, -A_1[i, k]) \in K$ with weight $A_0[i, k]$. For each non-$\bot$ entry~$B[k, j]$ we insert an edge from $(k, B_1[k, j]) \in K$ to $j \in J$ with weight $B_0[k, j]$. And finally we turn each of the vertex sequences $(k, -p), \dots, (k, p)$ into a directed path, where each edge along the path has weight $q$.

We claim that the min-plus product $A * B$ equals exactly the $I$-to-$J$ distances in $G$. Indeed, the only directed paths from $i \in I$ to $j \in J$ take the form $i, (k, -A_1[i, k]), \dots, (k, B_1[k, j]), j$. The weight of such a path is exactly $A_0[i, k]$ (the first edge) plus $B_0[k, j]$ (the last edge) plus $q \cdot (A_1[i, k] + B_1[k, j])$ (the middle edges). In sum this is $A[i, k] + B[k, j]$, so the claim follows.

Moreover, $G$ is a graph on $|I| + |K| + |J| \leq n + O(p n_2) + n = O(n)$ vertices, and the largest edge weight in $G$ is $q \leq n_2 u / n$. Hence we can solve APSP on $G$ in time $\APSP(n \mid u \leq n_2 u / n)$ and read off the desired min-plus product from the computed distances.
\end{proof}

\thmzwickoptimal*
\begin{proof}
Suppose that $\APSP(n \mid u \leq n^{\delta}) = O(n^{2.5+\delta/2-\epsilon})$ for some constant $\epsilon > 0$. Then by \cref{lem:min-plus-sm-univ-to-apsp} it follows that $\MinPlus(n, n^{1/2+\delta/2}, n \mid u \leq n^{1/2+\delta/2}) = O(n^{2.5+\delta/2-\epsilon})$. This contradicts the Strong APSP Hypothesis by \cref{cor:min-plus-rect-sm-univ} (with $\alpha = 1/2+\delta/2$).
\end{proof}

Finally, we consider undirected graphs and prove that the Shoshan-Zwick algorithm is best-possible conditioned on the Strong APSP Hypothesis.

\thmshoshanzwickoptimal*
\begin{proof}
Suppose that APSP in undirected graphs with weights $\set{0, \dots, n^\delta}$ is in time $O(n^{2+\delta-\epsilon})$. By a standard construction it follows easily that $\MinPlus(n, n, n \mid u < n^\delta / 2) = O(n^{2+\delta-\epsilon})$. Indeed, we transform the given pair of matrices $A, B$ with entries less than $u = n^{\delta} / 2$ into an undirected graph with three vertex layers $I, K, J$, where we set the weight of any edge $(i, k) \in I \times K$ to $A[i, k] + u$ and of any edge $(k, j) \in K \times J$ to $B[k, j] + u$, and where we delete all edges with $A[i, k] = \bot$ or $B[k, j] = \bot$. Observe that the $I$-$J$ distances in the resulting graph are exactly $A * B$ plus an additive offset of $2u$ (except that for distances larger than~$2u$ which correspond to $\bot$-entries in $A * B$). The reason is that any path of length more than $2$ has length at least $4$ and thus weight at least $4u$, which exceeds any entry in $A * B$ plus $2u$.

But $\MinPlus(n, n, n \mid u < n^\delta / 2) = O(n^{2+\delta-\epsilon})$ contradicts the Strong APSP Hypothesis by \cref{cor:min-plus-rect-sm-univ} applied with $\alpha = \delta$ (observing that the constant factor $1/2$ in the universe bound can easily be removed).
\end{proof}

\subsection{Doubling Reduction} \label{sec:univ-reduct:sec:doubling}
Next, we prove the following ``doubling reduction'' which allows to reduce from arbitrary instances of Min-Plus Product to Low-Doubling instances.

\begin{lemma}[Doubling Reduction] \label{lem:doubling-reduct}
For all parameters $n_1, n_2, n_3, K$ with $n_2 \leq n_1, n_3$ it holds that:
\begin{equation*}
    \MinPlus(n_1, n_2, n_3 \mid u) \leq \parens*{K \cdot \MinPlus(n_1, n_2, n_3 \mid u; D \leq n_2; \rho \leq 1/D; K) + \frac{n_1 n_2 n_3}{K^{1/300000}}}^{1+o(1)} \cdot \log u.
\end{equation*}
\end{lemma}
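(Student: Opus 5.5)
We follow the outline from the introduction ("grow $q$ from $1$ to $u$"). First apply the Vassilevska~W.--Williams scaling approach to Min-Plus Product. We recursively compute an approximation $\tilde C$ with $\tilde C \le A*B \le \tilde C + 2$ by halving all entries. This accounts for the $\log u$ factor and leaves $O(1)$ Exact Triangle-type tasks per level. Each task has the following form: given $(A,B)$ and a candidate matrix $R = \tilde C + z$, report witnesses. This is exactly the setting of \cref{lem:sm-univ-reduct}. Since every reduction from \cref{sec:exact-tri-low-rank} is potential-adjusting, exact triangles of $(A,B,R)$ pull back to witnesses of $A*B$. Reductions to Exact Triangle on uniform low-doubling instances can therefore be simulated by Min-Plus Product calls on those instances, after a potential adjustment and with $C$ absorbed as in \cref{lem:min-plus-wit}. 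Hence it suffices to reduce to the uniform, regular, low-doubling Exact Triangle problem via \cref{lem:exact-tri-low-rank-to-low-doubling}.

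Fix $\epsilon' = \epsilon'(K)$ with $n_2^{\epsilon'} = K^{\Theta(1)}$. Consider the geometric scales $q_s = 2^s$ for $s = 0,1,\dots,\log u$. For each output pair $(i,j)$, let $w_s(i,j)$ be the number of $q_s$-pseudo-witnesses. At $s=0$ these are the witnesses, and at $s = \log u$ every finite $k$ is one. The key quantity is the normalized count $w_s(i,j)/q_s$. I would classify $(i,j)$ by the first scale $s$ at which $w_s(i,j) \ge q_s \cdot n_2^{1-\epsilon'}/\ldots$; more precisely, where the count crosses a threshold $\tau q_s$ with $\tau$ chosen so that $n_2/\tau \cdot$ (rank) balances. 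Two facts are used. First, at scale $s-1$ the pair is \emph{unpopular}, so, as in step~1 of \cref{lem:sm-univ-reduct}, we can list all $q_{s-1}$-pseudo-witnesses by repeated random sampling with rate about $1/(\tau q_{s-1})$. This costs Min-Plus Product on smaller inner dimension, which is recursed on, or else brute force on the sampled columns. Second, at scale $s$ the pair is \emph{popular}, so as in step~2 the rounded product has a select-plus rank decomposition of rank about $n_2/\tau$. Here the remainder modulo $q_s$ does not contribute a factor $q_s$, because we only need the entries to agree with $R$ up to the $O(1)$-bounded residue coming from the listed $q_{s-1}$-information. I expect this to be the main obstacle. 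The naive submultiplicativity argument gives rank $(n_2/(\tau q_s))\cdot q_s$, which is fine if thresholds are set as $w_s \ge \tau q_s$. The real difficulty is doing this for all scales simultaneously and with the \emph{general} universe $u$, which may be much larger than $n_2$. I would handle it by splitting $C$ by scale index $s$ into $O(\log u)$ parts and treating each part separately. On part $s$, the $R$ matrix has rank at most $\tilde O(n_2/\tau)\cdot O(1) \le n_2^{1-\epsilon'}$ after replacing low-order digits by those determined from the listed pseudo-witnesses. Pairs with no crossing, which are unpopular even at $s=0$, are handled by listing alone.

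Given the rank-$r$ part with $r \le n_2$ (so $D \le r \le n_2$), apply \cref{lem:exact-tri-low-rank-to-low-doubling} with parameter $K$. This gives $K(n_1n_2n_3)^{o(1)}$ instances that are $D$-uniform, $1/D$-regular and $K$-doubling, plus an explicit list of exceptional triangles costing $n_1n_2n_3/K^{1/98000}$. The additive term $(n_1n_2+n_2n_3+n_1n_3) rK$ is at most $n_1n_2n_3\cdot K/n_2^{\epsilon'}$, using $n_2 \le n_1,n_3$. Choosing $\epsilon'$ so that $n_2^{\epsilon'} = K^{1+1/300000}$ makes this at most $n_1n_2n_3/K^{1/300000}$; if $K > n_2^{\Omega(1)}$, solve trivially. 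The listing costs from the unpopular side are $n_1n_2n_3/\tau$ times polylogarithmic factors, and they are bounded in the same way by choosing $\tau$ as $K^{\Theta(1)}$. Each resulting Exact Triangle instance is converted to a Min-Plus Product instance with the same three parameters via witness reporting (\cref{lem:min-plus-wit}). Summing over the $O(\log u)$ scales and $O(1)$ offsets then gives the stated bound.
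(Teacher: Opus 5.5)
There is a genuine gap, and it sits at the obstacle you flagged. Your low-rank candidate matrix comes from the base-$q$ trick of \cref{lem:sm-univ-reduct}, i.e.\ a sampled rounded product of rank $n_2/(\tau q_s)$ times a remainder factor $q_s$. That forces you to scale the popularity threshold to $\tau q_s$. The complementary side must then \emph{list} up to $\tau q_{s-1}$ pseudo-witnesses per pair by separate sampling. This takes $\Theta(\tau q_{s-1}\log n)$ rounds, each a product of inner dimension $n_2/(\tau q_{s-1})$ of $\lfloor A/q_{s-1}\rfloor$ and $\lfloor B/q_{s-1}\rfloor$, whose universe $u/q_{s-1}$ is still large. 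By brute force that is $\tilde\Theta(n_2)$ work per output pair, so $\tilde\Theta(n_1n_2n_3)$ overall, not $n_1n_2n_3/\tau$. Recursing instead produces general-universe Min-Plus instances, not instances of the oracle problem on the right-hand side. Nothing in the statement lets you charge roughly $\tau q_{s-1}$ such calls of inner dimension $n_2/(\tau q_{s-1})$ to $K$ oracle calls of size $n_1\times n_2\times n_3$. In \cref{lem:sm-univ-reduct} this step was affordable only because the sampled subproblems had universe at most $n_2'$ and were themselves the target problem.

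It gets worse: if the $n_2$ sums $A[i,k]+B[k,j]$ are pairwise distinct (the generic case when $u\gg n_2$), then $w_s=O(q_s)$ at every scale. So for $\tau$ above a constant no pair ever crosses, and the whole instance falls into your ``listing alone'' fallback, which is the same cubic listing. Also, once you have listed all $q_{s-1}$-pseudo-witnesses you already know $(A*B)[i,j]$, so for that pair the low-rank instance is redundant; the two halves of your scheme do not actually combine. Finally, $\tilde C+z$ itself has no low rank, so the initial Vassilevska~W.--Williams approximation does not help.

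The missing idea is to keep the threshold \emph{absolute} and avoid the remainder factor altogether. Sample $\mathcal K$ once at rate $\tilde\Theta(1/t)$ and take $R=U*V$ on the \emph{original} entries. This is an honest select-plus rank-$|\mathcal K|=\tilde O(n_2/t)$ decomposition, independent of $u$. If $(i,j)$ has at least $t$ indices $k$ with $A[i,k]+B[k,j]<(A*B)[i,j]+2^{\ell+1}$, then $(A*B)[i,j]\le R[i,j]<(A*B)[i,j]+2^{\ell+1}$. Rounding to $\lfloor R/2^\ell\rfloor-z$ keeps rank $O(|\mathcal K|)$. Pose the Exact Triangle instance with the \emph{rounded} matrices $A_{\ell,x},B_{\ell,y}$, i.e.\ $\lfloor A/2^\ell\rfloor$ split by the bit just below $2^\ell$ so that rounding is additive up to the offset $x+y$, not with $(A,B)$.

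Call a pair relevant at level $\ell$ if it has at most $t$ $2^\ell$-pseudo-witnesses but at least $t$ $2^{\ell+1}$-pseudo-witnesses. Such a level exists unless the pair has at least $t$ witnesses (handled by sampling plus brute force) or at most $t$ indices with both entries finite (handled by a Boolean product). For a relevant pair and the right $x,y\in\{0,1\}$, $z\in\{0,\dots,3\}$, the exact triangles at $(i,j)$ are exactly the witnesses of $(A_{\ell,x}*B_{\ell,y})[i,j]$. All of these are $2^\ell$-pseudo-witnesses, so there are at most $t$, and at least one is a true witness. The listing therefore happens \emph{inside} the uniform regular low-doubling instances produced by \cref{lem:exact-tri-low-rank-to-low-doubling}, by requesting $t$ witnesses per entry via \cref{lem:min-plus-wit}; no separate sampling-based listing is needed. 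With $r=\tilde O(n_2/t)$ and $t=K^{1+1/98000}$, your closing accounting (using $n_2\le n_1,n_3$) then goes through.
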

\begin{proof}
Let $L = \ceil{\log u}$. The proof of the lemma relies on a scaling trick over $L$ scales, and requires some notational setup. In analogy to \cref{lem:sm-univ-reduct}, we call $k$ a \emph{witness} of $(A * B)[i, j]$ if $(A * B)[i, j] = A[i, k] + B[k, j]$, and we call~$k$ an \emph{$q$-pseudo-witness} if $A[i, k] + B[k, j] < (A * B)[i, j] + q$. (Note that the definition of pseudo-witness differs slightly from \cref{lem:sm-univ-reduct}.)

For each $0 \leq \ell \leq L$, define the matrices~$A_\ell = \floor{A / 2^\ell}$ and $B_\ell = \floor{B / 2^\ell}$. We partition each such matrix into two submatrices, $A_\ell = A_{\ell,0} \sqcup A_{\ell,1}$, where $A_{\ell, 0}$ retains all entries with $A[i, k] \bmod 2^\ell < 2^{\ell-1}$, and $A_{\ell,1}$ retains all other entries. Similarly, partition $B_\ell = B_{\ell,0} \sqcup B_{\ell,1}$. A observation related to this scaling setup is the following:

\begin{claim} \label{lem:doubling-reduct:clm:scaling}
For each $(i, j) \in [n_1] \times [n_3]$ and each scale $0 \leq \ell \leq L$, there is some pair $x, y \in \set{0, 1}$ such that:
\begin{enumerate}[label=(\roman*)]
    \item\label{lem:doubling-reduct:clm:scaling:itm:scaling} $(A_{\ell, x} * B_{\ell, y}) = \floor{((A * B)[i, j] - (x + y) 2^{\ell-1}) / 2^\ell}$,
    \item\label{lem:doubling-reduct:clm:scaling:itm:one-witness} some witness of $(A_{\ell, x} * B_{\ell, y})[i, j]$ is a witness of $(A * B)[i, j]$, and
    \item\label{lem:doubling-reduct:clm:scaling:itm:all-pseudo-witnesses} every witness of $(A_{\ell, x} * B_{\ell, y})[i, j]$ is a $2^\ell$-pseudo-witness of $(A * B)[i, j]$.
\end{enumerate}
\end{claim}
\begin{proof}
To see this, let $k^*$ be an arbitrary witness of $(A * B)[i, j]$. By construction each non-$\bot$ entry in $A_\ell$ shows up either in $A_{\ell, 0}$ or $A_{\ell, 1}$. Take $x \in \set{0, 1}$ such that $A_{\ell, x}[i, k^*] \neq \bot$. Similarly, take $y \in \set{0, 1}$ so that~\makebox{$B_{\ell, y}[k^*, j] \neq \bot$}. Now consider an arbitrary $k \in [n_2]$ with $A_{\ell, x}[i, k] \neq \bot$ and $B_{\ell, y}[k, j] \neq \bot$, and note that
\begin{equation*}
    A_{\ell, x}[i, k] + B_{\ell, y}[k, j]
    = \floor*{\frac{A[i, k]}{2^\ell}} + \floor*{\frac{B[k, j]}{2^\ell}} = \frac{A[i, k] + B[k, j] - (A[i, k] \bmod 2^\ell) - (B[k, j] \bmod 2^\ell)}{2^\ell}.
\end{equation*}
By definition, $A[i, k] \bmod 2^\ell$ falls into the range $x2^{\ell-1} + [0, 2^{\ell-1})$, and $B[k, j] \bmod 2^\ell$ is in $y2^{\ell-1} + [0, 2^{\ell-1})$. In particular, we obtain the upper bound
\begin{equation*}
    A_{\ell, x}[i, k] + B_{\ell, y}[k, j]
    \leq \frac{A[i, k] + B[k, j] - (x + y) 2^{\ell-1}}{2^\ell},
\end{equation*}
and the lower bound
\begin{equation*}
    A_{\ell, x}[i, k] + B_{\ell, y}[k, j]
    > \frac{A[i, k] + B[k, j] - (x + y) 2^{\ell-1} - 2^\ell}{2^\ell} = \frac{A[i, k] + B[k, j] - (x + y) 2^{\ell-1}}{2^\ell} - 1,
\end{equation*}
Recalling that this is an integer, we must have
\begin{equation*}
    A_{\ell, x}[i, k] + B_{\ell, y}[k, j] = \floor*{\frac{A[i, k] + B[k, j] - (x + y) 2^{\ell-1}}{2^\ell}}.
\end{equation*}
Therefore, $k^*$ is a minimizer of this expression and thus a witness of $(A_{\ell, x} * B_{\ell, y})[i, j]$, proving~\ref{lem:doubling-reduct:clm:scaling:itm:one-witness}. Moreover, for each other witness $k$ we must have $A[i, k] + B[k, j] < A[i, k^*] + B[k^*, j] + 2^\ell$, and thus $k$ is a $2^\ell$-pseudo-witness, proving~\ref{lem:doubling-reduct:clm:scaling:itm:all-pseudo-witnesses}.
\end{proof}

Returning to the proof of \cref{lem:doubling-reduct}, we are ready to state the reduction. We classify the pairs~$(i, j)$ into three categories based on their number of pseudo-witnesses:
\begin{enumerate}
    \item If $(A * B)[i, j]$ has at least $t$ $1$-pseudo-witnesses (i.e., witnesses) then we call $(i, j)$ \emph{exceptionally popular}.
    \item If $(A * B)[i, j]$ has at most $t$ $2^L$-pseudo-witnesses then we call $(i, j)$ \emph{exceptionally unpopular}.
    \item Otherwise we call $(i, j)$ \emph{ordinary}. For each ordinary pair there is some level $0 \leq \ell < L$ such that~$(i, j)$ has at most $t$ $2^\ell$-pseudo-witnesses and simultaneously at least $t$ $2^{\ell+1}$-pseudo-witnesses---we say that~$(i, j)$ is \emph{relevant} at that level $\ell$.
\end{enumerate}
These definitions are only required for the analysis, and we do not compute the classification explicitly.

We will now describe the algorithm. We initially set $C$ to be the all-$\bot$ matrix. Throughout we will only update $C$ via $C[i, j] \gets A[i, k] + B[k, j]$, so we will never underestimate the values of $A * B$. We will deal with the exceptionally popular, exceptionally unpopular and ordinary pairs in three separate steps. The first and especially the second one can be regarded as exceptional special cases. 
\begin{enumerate}
    \item \emph{(Exceptionally Popular Pairs)} Sample a uniformly random subset $\mathcal K \subseteq [n_2]$ with rate $\Theta(t^{-1} \log (n_1 n_3))$. Then enumerate all triples $(i, k, j) \in [n_1] \times \mathcal K \times [n_3]$ and update $C[i, j] \gets \min\set{C[i, j], A[i, k] + B[k, j]}$. Each exceptionally popular pair $(i, j)$ has at least $t$ witnesses $k$, so with high probability we include at least one such witness in~$\mathcal K$. In this case we correctly assign $C[i, j]$.
    
    \begin{framed}
        \emph{Running Time:} $\tilde O(n_1 n_2 n_3 / t)$.
    \end{framed}

    \item \emph{(Exceptionally Unpopular Pairs)} By definition a pair $(i, j)$ is exceptionally unpopular if it has less than~$t$ $2^L$-pseudo-witnesses. But the matrices $A_L$ and $B_L$ in fact only contain two possible entries: $0$~and~$\bot$. Hence, a pair $(i, j)$ is exceptionally unpopular if and only if the number of indices $k$ with~\makebox{$A[i, k] \neq \bot$} and $B[k, j] \neq \bot$ is at most $t$. One way to deal with these pairs is to apply \cref{lem:min-plus-wit} in combination with the fast algorithm for Min-Plus Product to list $t$ witnesses in the product $A_L * B_L$. (However, this product is of course just a Boolean matrix multiplication, so one could have also solved this step in $\tilde O(t)$ matrix products.) 
    
    \begin{framed}
        \emph{Running Time:} $\tilde O(t \cdot \MinPlus(n_1, n_2, n_3 \mid u \leq 0))$.
    \end{framed}

    \item\label{lem:doubling-reduct:itm:ord} \emph{(Ordinary Pairs)} To deal with the ordinary pairs, we prepare a low-rank matrix approximating $A * B$ (in a sense that will be clear soon). Sample a uniformly random subset $\mathcal K \subseteq [n_2]$ with rate $\Theta(t^{-1} \log (n_1 n_3))$. Let $U$ be the submatrix of~$A$ restricted to the columns in $\mathcal K$ and let $V$ be the submatrix of $B$ restricted to the rows in $\mathcal K$. Compute the min-plus product $R = U * V$ by brute-force. Note that $R$ is a matrix of rank at most $r = |\mathcal K|$ as is witnessed by the rank-$r$ decomposition $(U, V, S)$ (for some appropriate matrix $S$ that is easy to compute along the way). With high probability, the rank is bounded by $r = |\mathcal K| = \tilde O(n_2 / t)$.

    \begin{framed}
        \emph{Running Time:} $\tilde O(n_1 n_2 n_3 / t)$.
    \end{framed}
    Run the following steps for all levels $0 \leq \ell \leq L$, for all $x, y \in \set{0, 1}$ and all $z \in \set{0, 1, 2, 3}$:
    \begin{enumerate}
            \item\label{lem:doubling-reduct:itm:ord:itm:reduct}
            Let $R_{\ell, z} = \floor{R / 2^\ell}$ be the $n_1 \times n_3$ matrix defined by
            \begin{equation*}
                R_{\ell, z}[i, j] = \floor*{\frac{R[i, j]}{2^\ell}} - z.
            \end{equation*}
            Obtain a rank-$r' = O(r)$ decomposition for $R_{\ell, z}$ by appropriately rounding the rank-$r$ decomposition of~$R$.
            Run \cref{lem:exact-tri-low-rank-to-low-doubling} on the Exact Triangle instance $(A_{\ell,x}, B_{\ell,y}, R_{\ell,z})$ with parameter~$K$. This is a potential-adjusting reduction, and hence the output is $(\mathcal I, \mathcal T)$---a set of $r'$-uniform $K$-doubling Exact Triangle instances~$\mathcal I$ and a set of special triples $\mathcal T \subseteq [n_1] \times [n_2] \times [n_3]$. Afterwards, enumerate all triples $(i, k, j) \in \mathcal T$ and update $C[i, j] \gets \min\set{C[i, j], A[i, k] + B[k, j]}$.

            \begin{framed}
                \emph{Running Time:} The $O(1)$ calls to \cref{lem:exact-tri-low-rank-to-low-doubling} take time
                \begin{equation*}
                    \parens*{\frac{n_1 n_2 n_3}{K^{1/98000}} + (n_1 n_2 + n_2 n_3 + n_1 n_3) r' K}^{1+o(1)} \leq \parens*{\frac{n_1 n_2 n_3}{K^{1/98000}} + \frac{n_1 n_2 n_3 K}{t}}^{1+o(1)}.
                \end{equation*}
                For the inequality we have used that $n_2 \leq n_1, n_3$ and that $r' = O(r) = \tilde O(n_2 / t)$.
            \end{framed}
            
            \item\label{lem:doubling-reduct:itm:ord:itm:solve} Enumerate all instances $(P', Q', R') \in \mathcal I$. \cref{lem:exact-tri-low-rank-to-low-doubling} guarantees that $(P', Q', R')$ is an $D$\=/uniform $1/D$-regular $K$-doubling Exact Triangle instance for some $D \leq r'$. Hence, $(P', Q')$ is an $D$-uniform $1/D$-regular $K$-doubling Min-Plus Product instance. In fact, we may loosely bound $D \leq r' = O(r) = O(n_2)$ here. We compute the min-plus product $P' * Q'$ and list $t$ witnesses for each output entry $(i, j)$ by \cref{lem:min-plus-wit}. For each triple $(i, k, j)$ reported in this way we update $C[i, j] \gets \min\set{C[i, j], A[i, k] + B[k, j]}$.
            
            \begin{framed}
                \emph{Running Time:}
                \begin{align*}
                    &\tilde O\parens*{|\mathcal I| \cdot t \cdot \MinPlus(n_1, n_2, n_3 \mid u; D \leq n_2; \rho \leq 1/D; K)} \\
                    &\qquad\leq \parens*{K \cdot t \cdot \MinPlus(n_1, n_2, n_3 \mid u; D \leq n_2; \rho \leq 1/D; K)}^{1+o(1)}.
                \end{align*}
            \end{framed}
        \end{enumerate}
    We argue that we correctly compute $C[i, j]$ for all ordinary pairs $(i, j)$. Fix any such pair, and let~$\ell$ be a level at which $(i, j)$ is relevant. By definition, $(i, j)$ has at least $t$ $2^{\ell+1}$\=/pseudo-witnesses. Thus, with high probability we include at least one $2^{\ell+1}$\=/pseudo-witness in the randomly sampled set $\mathcal K$. It follows that
    \begin{equation*}
        (A * B)[i, j] \leq R[i, j] < (A * B)[i, j] + 2^{\ell+1}.
    \end{equation*}
    Property~\ref{lem:doubling-reduct:clm:scaling:itm:scaling} of \cref{lem:doubling-reduct:clm:scaling} further implies that there exist $x, y \in \set{0, 1}$ satisfying
    \begin{equation*}
        (A_{\ell, x} * B_{\ell, y})[i, j] = \floor*{\frac{(A * B)[i, j] - (x + y)2^{\ell-1}}{2^\ell}}.
    \end{equation*}
    Putting both statements together, we have
    \begin{equation*}
        \floor*{\frac{R[i, j]}{2^\ell}} - 3 \leq (A_{\ell, x} * B_{\ell, y})[i, j] \leq \floor*{\frac{R[i, j]}{2^\ell}},
    \end{equation*}
    and consequently, there is some $z \in \set{0, 1, 2, 3}$ such that $(A_{\ell, x} * B_{\ell, y})[i, j] = R_{\ell, z}[i, j]$. In particular, for each witness $k$ of $(A_{\ell, x} * B_{\ell, y})[i, j]$, $(i, k, j)$ forms an exact triangle in the instance $(A_{\ell, x}, B_{\ell, y}, R_{\ell, z})$. Moreover, all these witnesses are $2^\ell$-pseudo-witnesses of $(A * B)[i, j]$ by Property~\ref{lem:doubling-reduct:clm:scaling:itm:all-pseudo-witnesses}, and they include at least one proper witness $k^*$ of $(A * B)[i, j]$ by Property~\ref{lem:doubling-reduct:clm:scaling:itm:one-witness}.
    
    We argue that in the substeps~\ref{lem:doubling-reduct:itm:ord:itm:reduct} and~\ref{lem:doubling-reduct:itm:ord:itm:solve} we enumerate all witnesses of $(A_{\ell, x} * B_{\ell, y})[i, j]$, including the desired proper witness $k^*$. \cref{lem:exact-tri-low-rank-to-low-doubling} implements a potential-adjusting reduction. We are thus guaranteed that each exact triangle in $(A_{\ell, x}, B_{\ell, y}, R_{\ell, z})$ appears in $\mathcal T$ or it appears as an exact triangle in one of the instances in $\mathcal I$. If the exact triangle $(i, k^*, j)$ falls into the former case, we successfully discover it in substep~\ref{lem:doubling-reduct:itm:ord:itm:reduct}. We focus on the latter case.
    
    Focus on the instance $(A', B', R') \in \mathcal I$ in which the exact triangle $(i, k^*, j)$ is present. Recall that this instance is a potential adjustment of $(A_{\ell, x}, B_{\ell, y}, R_{\ell, z})$, and thus
    \begin{equation*}
        \makeatletter\expandafter\def\expandafter\@arraycr\expandafter{\@arraycr\noalign{\vskip\jot}}\makeatother%
        \begin{array}{@{}r@{\;}c@{\;}c@{\:}c@{\:}c@{\:}c@{\:}c@{\;\;\;}c@{\;\;\;}r@{\;}c@{\;}c@{\;\;\;}c@{\ }c@{\;}c@{\;}c@{\:}c@{\:}c@{}}
            A'[i, k] & = & A_{\ell, x}[i, k] & + & u[i] & + & v[k] & \text{or} & A'[i, k] & = & \bot & \text{for all} & (i, k) & \in & [n_1] & \times & [n_2], \\
            B'[k, j] & = & B_{\ell, y}[k, j] & - & v[k] & + & w[j] & \text{or} & B'[k, j] & = & \bot & \text{for all} & (k, j) & \in & [n_2] & \times & [n_3], \\
            R'[i, j] & = & R_{\ell, z}[i, j] & + & u[i] & + & w[j] & \text{or} & R'[i, j] & = & \bot & \text{for all} & (i, j) & \in & [n_1] & \times & [n_3],
        \end{array}
    \end{equation*}
    for some potential functions $u, v, w$. From this it follows that $(A' * B')[i, j] = R'[i, j]$. On the one hand, $k^*$ witnesses $(A' * B')[i, j] \leq R'[i, j]$. On the other hand, we cannot have $(A' * B')[i, j] < R'[i, j]$ as then, rewriting $A', B', R'$ by the previous three identities, we would also have $(A_{\ell, x} * B_{\ell, y})[i, j] < R_{\ell, z}[i, j]$, leading to a contradiction.

    Now, as $(A' * B')[i, j] = R'[i, j]$, it follows that each witness $k$ of $(A' * B')[i, j]$ must form an exact triangle $(i, k, j)$ in $(P', Q', R')$ and thus also in $(A_{\ell, x}, B_{\ell, y}, R_{\ell, z})$. Hence, $k$ is a witness of \makebox{$(A_{\ell, x} * B_{\ell, y})[i, j]$}, and Property~\ref{lem:doubling-reduct:clm:scaling:itm:all-pseudo-witnesses} implies that $k$ is a $2^\ell$-pseudo-witness of $(i, j)$. Recall that we assume that $(i, j)$ is relevant at level $\ell$, hence the total number of $\ell$-pseudo-witnesses is at most $t$. Therefore, we are guaranteed to list \emph{all} witnesses of $(A' * B')[i, j]$ by \cref{lem:min-plus-wit} (with high probability). In particular, we will list the proper witness $k^*$, and therefore update $C[i, j]$ as intended.
\end{enumerate}
This completes the description of the reduction and the correctness analysis.

\proofparagraph{Running Time}
Summing over all $O(L) = O(\log u)$ levels, the total time of the steps as analyzed above is
\begin{equation*}
    \parens*{K t \cdot \MinPlus(n_1, n_2, n_3 \mid u; D \leq n_2; \rho \leq 1/D; K) + \frac{n_1 n_2 n_3}{K^{1/98000}} + \frac{n_1 n_2 n_3 K}{t}}^{1+o(1)} \cdot \log u.
\end{equation*}
Pick $t = K^{1+1/98000} \leq K^2$. The total running time becomes
\begin{equation*}
    \parens*{K^3 \cdot \MinPlus(n_1, n_2, n_3 \mid u; D \leq n_2; \rho \leq 1/D; K) + \frac{n_1 n_2 n_3}{K^{1/98000}}}^{1+o(1)} \cdot \log u,
\end{equation*}
which is as claimed by readjusting $K$ to $K^{1/3}$.
\end{proof}

\begin{corollary} \label{cor:min-plus-rect-low-doubling}
Let $\kappa, \epsilon > 0$ and $0 < \alpha_2 \leq \alpha_1, \alpha_3$ be constants. If $\MinPlus(n^{\alpha_1}, n^{\alpha_2}, n^{\alpha_3} \mid D \leq n^{\alpha_2}; \rho \leq 1/D; K \leq n^{\kappa}) \leq O(n^{\alpha_1+\alpha_2+\alpha_3-\epsilon})$ then the APSP Hypothesis fails.
\end{corollary}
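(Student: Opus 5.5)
The plan is to show that a fast rectangular algorithm yields a truly subcubic algorithm for square $\MinPlus(n, n, n)$, and thus for APSP. There are two ingredients. The first is \cref{lem:doubling-reduct}, applied with the rectangular dimensions $n_1 = n^{\alpha_1}$, $n_2 = n^{\alpha_2}$, $n_3 = n^{\alpha_3}$; it is applicable because $\alpha_2 \leq \alpha_1, \alpha_3$. The second is the standard blocking argument that reduces a square product to rectangular ones. Concretely, suppose
\begin{equation*}
\MinPlus(n^{\alpha_1}, n^{\alpha_2}, n^{\alpha_3} \mid D \leq n^{\alpha_2}; \rho \leq 1/D; K \leq n^{\kappa}) = O(n^{\alpha_1+\alpha_2+\alpha_3-\epsilon}).
\end{equation*}
Choose the parameter $K = n^{\kappa'}$ with $\kappa' = \min\set{\kappa, \epsilon/2}$. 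A $K$-doubling instance is also $n^\kappa$-doubling, since $K \leq n^\kappa$, so the hypothesised algorithm applies to every instance produced by the reduction. Moreover, \cref{lem:doubling-reduct} has no hidden constraint on the universe beyond polynomial boundedness, so the factor $\log u$ is only $O(\log n)$.

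\cref{lem:doubling-reduct} then gives
\begin{equation*}
\MinPlus(n^{\alpha_1}, n^{\alpha_2}, n^{\alpha_3} \mid u) \leq \bigl(n^{\kappa'} \cdot n^{\alpha_1+\alpha_2+\alpha_3-\epsilon} + n^{\alpha_1+\alpha_2+\alpha_3 - \kappa'/300000}\bigr)^{1+o(1)}.
\end{equation*}
Since $\kappa' \leq \epsilon/2$, the right-hand side is $n^{\alpha_1+\alpha_2+\alpha_3-\delta}$ for some constant $\delta > 0$, for instance $\delta = \kappa'/400000$; the $1+o(1)$ in the exponent only affects lower-order terms for constant exponents. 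So general polynomially bounded rectangular Min-Plus Product of these shapes is truly sub-trivial.

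Next, lift this to square matrices. Given $n^{\alpha'} \times n^{\alpha'}$ matrices, for a rescaled size parameter, split $A$ into blocks of size $n^{\alpha_1} \times n^{\alpha_2}$ and $B$ into blocks of size $n^{\alpha_2} \times n^{\alpha_3}$. To avoid rescaling, it is simplest to take the square instance of size $N = n^{\max(\alpha_1, \alpha_3)}$ and form $N/n^{\alpha_1} \cdot N/n^{\alpha_2} \cdot N/n^{\alpha_3}$ block products. Then take entrywise minima over the middle block index. The total time is $N^3 / n^{\alpha_1+\alpha_2+\alpha_3} \cdot n^{\alpha_1+\alpha_2+\alpha_3-\delta} = N^3 \cdot n^{-\delta} = N^{3-\delta'}$ with $\delta' = \delta/\max(\alpha_1,\alpha_3) > 0$, since $N$ is polynomially related to $n$. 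Combining blocks costs only $O(N^2 \cdot N/n^{\alpha_2})$ additional time, which is subcubic because $\alpha_2 > 0$. This gives $\MinPlus(N, N, N) = O(N^{3-\delta'})$ for polynomially bounded weights. Note that \cref{hypo:apsp} permits any constant $c$; the reduction's overhead is polylogarithmic in $u$, so the argument works for every $c$. By the equivalence of APSP and $\MinPlus(n,n,n)$, the APSP Hypothesis fails.

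The main point needing care is the parameter bookkeeping in the first step. The factor $K$ multiplying the oracle cost must be beaten by the oracle's saving $n^{\epsilon}$, while $K$ must remain polynomial so that $K^{1/300000}$ still gives a polynomial saving. Choosing $K$ as a small power of $n$, below both $n^\kappa$ and $n^{\epsilon/2}$, resolves this. Everything else is routine blocking.
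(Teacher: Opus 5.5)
Your proposal is correct and follows the paper's proof. As in the paper, you apply \cref{lem:doubling-reduct} with $K = n^{\min\{\kappa, \epsilon/2\}}$, so that the $K$ factor is absorbed by the $n^{\epsilon}$ saving while $K^{1/300000}$ still gives a polynomial gain independent of the universe exponent $c$. The only addition is that you spell out the routine blocking step from the rectangular shape $n^{\alpha_1} \times n^{\alpha_2} \times n^{\alpha_3}$ to square Min-Plus Product, which the paper leaves implicit.
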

\begin{proof}
Suppose that $\MinPlus(n^{\alpha_1}, n^{\alpha_2}, n^{\alpha_3} \mid D \leq n^{\alpha_2}; \rho \leq 1/D; K \leq n^{\kappa}) \leq O(n^{\alpha_1+\alpha_2+\alpha_3-\epsilon})$, i.e., the $D$-uniform $1/D$-regular $n^\kappa$-doubling Min-Plus Product is in time $O(n^{\alpha_1+\alpha_2+\alpha_3-\epsilon})$ for all $D \leq n^{\alpha_2}$. Then by \cref{lem:doubling-reduct} applied with parameter $K = n^{\min\set{\kappa, \epsilon/2}}$ the general Min-Plus Product of size $n^{\alpha_1} \times n^{\alpha_2} \times n^{\alpha_3}$ over the universe~$[u]$ can be solved in time
\begin{align*}
    &\MinPlus(n^{\alpha_1}, n^{\alpha_2}, n^{\alpha_3} \mid u) \\
    &\qquad\leq \parens*{\MinPlus(n^{\alpha_1}, n^{\alpha_2}, n^{\alpha_3} \mid u; D \leq n^{\alpha_2}; K \leq n^{\kappa}) \cdot n^{\epsilon/2} + \frac{n^{\alpha_1 + \alpha_2 + \alpha_3}}{n^{\min\set{\kappa, \epsilon / 2}/300000}}}^{1+o(1)} \cdot \log u \\
    &\qquad\leq \parens*{n^{\alpha_1 + \alpha_2 + \alpha_3 - \epsilon / 2} + n^{\alpha_1 + \alpha_2 + \alpha_3 - \min\set{\kappa, \epsilon / 2}/300000}}^{1+o(1)} \cdot \log u \\
    &\qquad\leq \vphantom{\Big(}n^{\alpha_1 + \alpha_2 + \alpha_3 - \min\set{\kappa, \epsilon / 2}/300000 + o(1)} \cdot \log u.
\end{align*}
In particular, for any polynomially bounded universe $u \leq n^c$ (where $c$ is constant) the running time becomes truly subcubic, $O(n^{\alpha_1 + \alpha_2 + \alpha_3 - \gamma})$ for any constant $0 < \gamma < \min\set{\kappa, \epsilon / 2} / 300000$. This contradicts the APSP Hypothesis.
\end{proof}

\cref{cor:min-plus-rect-low-doubling} also immediately implies \cref{thm:apsp-unif-low-doubling}: APSP reduces to Min-Plus Product, which reduces to $n$-Uniform $n^\kappa$-Doubling Min-Plus Product by \cref{thm:apsp-unif-low-doubling}, which in turn can be seen as an APSP instance on a 3-layered graph with at most $n$ distinct weights and doubling $n^\kappa$. For many lower bounds the following simpler corollary (that ignores the regular and low-doubling constraints) turns out to be sufficient.

\begin{corollary} \label{cor:min-plus-rect-unif}
Let $\epsilon > 0$ and $0 < \alpha_2 \leq \alpha_1, \alpha_3$ be constants. If $\MinPlus(n^{\alpha_1}, n^{\alpha_2}, n^{\alpha_3} \mid D \leq n^{\alpha_2}) \leq O(n^{\alpha_1+\alpha_2+\alpha_3-\epsilon})$ then the APSP Hypothesis fails.
\end{corollary}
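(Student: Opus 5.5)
This follows directly from \cref{cor:min-plus-rect-low-doubling}, once we check that dropping constraints only makes the hypothetical algorithm more powerful. The class of instances in \cref{cor:min-plus-rect-low-doubling}, namely $D$-uniform with $D \leq n^{\alpha_2}$, $1/D$-regular and $n^\kappa$-doubling, is a subclass of the $D$-uniform instances with $D \leq n^{\alpha_2}$. Hence any algorithm for the latter also solves the former within the same time bound, so
\begin{equation*}
    \MinPlus(n^{\alpha_1}, n^{\alpha_2}, n^{\alpha_3} \mid D \leq n^{\alpha_2}; \rho \leq 1/D; K \leq n^{\kappa}) \leq \MinPlus(n^{\alpha_1}, n^{\alpha_2}, n^{\alpha_3} \mid D \leq n^{\alpha_2}).
\end{equation*}

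Assume $\MinPlus(n^{\alpha_1}, n^{\alpha_2}, n^{\alpha_3} \mid D \leq n^{\alpha_2}) \leq O(n^{\alpha_1+\alpha_2+\alpha_3-\epsilon})$. Fix an arbitrary constant $\kappa > 0$, say $\kappa = 1$. By the inequality above, the premise of \cref{cor:min-plus-rect-low-doubling} holds with this $\kappa$ and the same $\epsilon$, $\alpha_1$, $\alpha_2$, $\alpha_3$. \cref{cor:min-plus-rect-low-doubling} then implies that the APSP Hypothesis fails.

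The one point to check is that the time bound quantifies over all $D \leq n^{\alpha_2}$ in both statements. In the parametric notation, $\MinPlus(\cdot \mid D \leq n^{\alpha_2})$ is the worst case over all such instances, so every $D$-uniform instance with $D \leq n^{\alpha_2}$ that \cref{lem:doubling-reduct} produces is covered. No universe issue arises either: both corollaries allow polynomially bounded entries, and \cref{lem:doubling-reduct} only introduces a $\log u$ overhead. There is no real obstacle; the argument is a monotonicity observation.
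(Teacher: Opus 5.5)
Your proof is correct and matches the paper's approach. The paper states this corollary without a separate proof, as the immediate weakening of \cref{cor:min-plus-rect-low-doubling} obtained by dropping the regularity and doubling constraints, which is exactly the monotonicity argument you give (and any fixed $\kappa>0$, such as $\kappa=1$, works).
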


\subsection{Large-Universe Reduction} \label{sec:univ-reduct:sec:hashing}
In this subsection we finally provide the missing reduction from polynomially large universes $\set{0, \dots, n^{O(1)}}$ to~$\set{0, \dots, n}$, conditioned on the additive combinatorics assumption \cref{hypo:hashing}; we refer to \cref{sec:hashing} for a discussion of this hypothesis. To effectively apply \cref{hypo:hashing} we rely on the following greedy covering lemma.

\begin{lemma}[Greedy Covering] \label{lem:sumset-cover}
There is an algorithm that, given $X, Y \subseteq \Int$, computes a set $S \subseteq \Int$ of size at most $|Y - X| / |Y| \cdot \ln |X|$ such that $X \subseteq Y + S$. It runs in deterministic time $|Y - X| \, |S| \cdot (\log u)^{O(1)}$, where $u$ is an upper bound on the largest integer in $X, Y$ in absolute value.
\end{lemma}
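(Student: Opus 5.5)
The plan is a standard greedy set-cover argument driven by an averaging bound. For a shift $s$, write $c(s) = |X \cap (Y + s)|$ for the number of elements of $X$ covered by the translate $Y + s$. The key averaging identity is
\begin{equation*}
    \sum_{s \in X - Y} c(s) = \sum_{s} |\set{(x, y) \in X \times Y : x - y = s}| = |X| \, |Y|,
\end{equation*}
and the same identity holds with $X$ replaced by any subset $X' \subseteq X$. Since $|X' - Y| = |Y - X'| \leq |Y - X|$, some shift $s$ satisfies $|X' \cap (Y + s)| \geq |X'| \, |Y| / |Y - X|$. So the greedy algorithm runs as follows: keep the uncovered set $X'$ (initially $X$), pick the shift $s$ that maximizes $|X' \cap (Y + s)|$, add it to $S$, remove the covered elements from $X'$, and repeat until $X' = \emptyset$.

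For the size bound, each round multiplies $|X'|$ by at most $(1 - |Y|/|Y - X|)$. After $m$ rounds we therefore have $|X'| \leq |X| \exp(-m |Y| / |Y - X|)$, which is below $1$ once $m > |Y - X| / |Y| \cdot \ln |X|$. This gives the stated bound up to the usual off-by-one in the greedy analysis; to get it exactly, I would note that the last step covers at least one element and adjust the count, or accept the bound as stated with a ceiling.

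For the running time, in each round I compute all multiplicities $r_{X' + (-Y)}(s) = |X' \cap (Y + s)|$ at once using sparse convolution (\cref{lem:sparse-conv}). This takes time $|X' - Y| \cdot (\log u)^{O(1)} \leq |Y - X| \cdot (\log u)^{O(1)}$. From these multiplicities I take the maximizing $s$ and update $X'$ in time $O(|X|) \leq O(|Y - X|)$. Summing over the $|S|$ rounds gives $|Y - X| \, |S| \cdot (\log u)^{O(1)}$.

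The main point needing care is that every round is charged against $|Y - X|$ and not against $|X' - Y|$ for the shrinking set $X'$. Monotonicity handles this directly, since $X' - Y \subseteq X - Y$. I do not expect any real obstacle here.
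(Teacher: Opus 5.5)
Your proposal is correct and is essentially the paper's proof: greedily pick the shift covering the most uncovered elements, use the averaging identity $\sum_s |X' \cap (Y+s)| = |X'|\,|Y|$ together with $|X' - Y| \leq |Y - X|$ to get a $(1 - |Y|/|Y - X|)$ shrinkage per round, and pay one sparse convolution per round for the running time. The off-by-one you flag is genuine. For instance, $|X| = 1$ forces $|S| \geq 1$ while the stated bound is $0$, so the bound really needs a $+1$. The paper's proof glosses over this as well, which is harmless since the lemma is only used inside an $O(\cdot)$ bound.
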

\begin{proof}
We run a simple greedy algorithm. Initialize $S \gets \emptyset$. While $X \neq \emptyset$ we compute a shift $s \in \Int$ that maximizes $r_{Y-X}(s)$. We update $S \gets S \cup \set{s}$ and $X \gets X \setminus (Y + s)$. When the algorithm terminates, we return $S$.

It is clear that when the algorithm terminates the set $S$ is as desired. It remains to bound $|S|$, i.e., to bound the number of iterations. Let $X_0$ denote the initial set $X$. We show that with each iteration of the algorithm removes at least a $|Y| / |Y - X_0|$-fraction of the elements from $X$; from this it follows easily that after $|Y - X_0| / |Y| \cdot \ln |X_0|$ iterations the set $X$ must be empty. To see this, observe that
\begin{equation*}
    \sum_{s \in Y - X} r_{Y-X}(s) = |X| \, |Y|,
\end{equation*}
and hence there is some $s \in Y - X$ with
\begin{equation*}
    r_{Y-X}(s) \geq \frac{|X| \, |Y|}{|Y - X|} \geq \frac{|X| \, |Y|}{|Y - X_0|}.
\end{equation*}
This shift $s$ witnesses that the algorithm decreases the size of $|X|$ by at least a $|Y| / |Y - X_0|$-fraction as desired.

We finally consider the running time. By \cref{lem:sparse-conv} we can compute the set $Y - X$ along with all multiplicities $r_{Y-X}(\cdot)$ in deterministic time $O(|Y - X| \cdot (\log u)^{O(1)})$. Finding $s$ and updating $S$ and $X$ then runs in time $O(|Y - X|)$. We repeat this over $|S|$ iterations, leading to the claimed total time.
\end{proof}

\begin{remark}
This lemma relates to the well-known ``Ruzsa's Covering Lemma''~\cite{Ruzsa99}; see~\cite[Lemma~2.14]{TaoV06}. That fundamental result alternatively covers $X$ by shifts of $Y - Y$ (instead of $Y$) but achieves $|S| \leq |Y - X| / |Y|$ (without the logarithmic factor).
\end{remark}

Equipped with \cref{lem:sumset-cover}, we now show that \cref{hypo:hashing} implies a reduction from low-doubling Min-Plus Product to small-universe Min-Plus Product.

\begin{lemma} \label{lem:min-plus-sum-order-preserving}
Assuming \cref{hypo:hashing} there is a constant $c$ such that for all parameters $n_1, n_2, n_3, K$ with $n_2 \leq n_1 n_3$:
\begin{equation*}
    \MinPlus(n_1, n_2, n_3 \mid D \leq n_2; K) = \tilde O\parens*{K^c \cdot \MinPlus(n_1, n_2, n_3 \mid u \leq n_2)}.
\end{equation*}
\end{lemma}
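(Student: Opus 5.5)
Let $X$ be the set of integer entries of $A$ and $B$, so $|X| \leq D \leq n_2$ and $|X + X| \leq K |X|$. The plan has three steps. First, cover $X$ by few translates of a set that admits a sum-order-preserving hash. Second, split the instance according to these translates and hash each piece into $\set{0, \dots, |X|}$. Third, recover witnesses for $A * B$ from the hashed products.

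For the first step, apply \cref{hypo:hashing} to $X$. This yields $Y \subseteq X$ with $|Y| \geq \Omega(|X| / K^c)$ and a sum-order-preserving $h : Y \to \set{0, \dots, |X|}$. Next apply \cref{lem:sumset-cover} to $X$ and $Y$ to obtain shifts $S$ with $X \subseteq Y + S$. Since $Y \subseteq X$, we have $|Y - X| \leq |X - X| \leq K^2 |X|$ by Plünnecke--Ruzsa (\cref{lem:pluennecke-ruzsa}). Hence $|S| \leq K^2 |X| / |Y| \cdot \ln |X| = \tilde O(K^{c+2})$, and computing $S$ costs $\tilde O(K^{O(1)} |X|)$, which is negligible. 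For each $x \in X$ fix one $s(x) \in S$ with $x - s(x) \in Y$. Partition $A = \bigsqcup_{s \in S} A_s$ according to $s(A[i, k])$, and partition $B = \bigsqcup_{s' \in S} B_{s'}$ analogously. Then $A * B$ is the entrywise minimum over the $|S|^2 = \tilde O(K^{2c+4})$ products $A_s * B_{s'}$.

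For the second step, fix a pair $(s, s')$. Define $A'_s = h(A_s - s)$ and $B'_{s'} = h(B_{s'} - s')$ entrywise, leaving $\bot$ in place. These have entries in $\set{0, \dots, |X|} \subseteq \set{0, \dots, n_2}$. Sum-order preservation is stated for four elements of $Y$, and here every summand lies in $Y$. Within the fixed pair $(s, s')$, the comparison $A[i, k] + B[k, j] < A[i, k'] + B[k', j]$ is equivalent to $(A[i, k] - s) + (B[k, j] - s') < (A[i, k'] - s) + (B[k', j] - s')$, because the shifts cancel. Therefore any witness of $(A'_s * B'_{s'})[i, j]$ is a witness of $(A_s * B_{s'})[i, j]$.

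For the third step, compute $A'_s * B'_{s'}$ together with one witness per entry, using \cref{lem:min-plus-wit} with $t = 1$ at cost $\tilde O(\MinPlus(n_1, n_2, n_3 \mid u \leq n_2))$. Evaluate $A[i, k] + B[k, j]$ at the reported witness $k$, and take the minimum over all $(s, s')$. The total running time is $\tilde O(K^{2c+4} \cdot \MinPlus(n_1, n_2, n_3 \mid u \leq n_2))$. The preprocessing costs, namely the $|X|^{1+o(1)} K^{O(1)}$ time from \cref{hypo:hashing}, the covering, and the partitioning, are dominated because the target problem must at least read its input, which has size $n_1 n_2 + n_2 n_3 \geq n_2$. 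The slight mismatch between the $|X|^{o(1)}$ factor and $\tilde O$ is absorbed; the condition $n_2 \leq n_1 n_3$ appears to be there for exactly this purpose. Renaming $2c+4$ as $c$ gives the claimed bound.

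The main point to verify is that the witness transfer is correct despite the shifts. This requires that the shifts depend only on the pair $(s, s')$ and not on $k$, so that all competing sums within one subproblem lie in $(Y + Y) + s + s'$ and are compared through $h$ consistently. The fixed partition guarantees this. The other point needing care is the bound $|Y - X| \leq K^{O(1)} |X|$, which holds since $Y \subseteq X$.
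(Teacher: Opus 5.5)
Your proposal is correct and follows the paper's proof step for step. It uses \cref{hypo:hashing} to obtain $Y$ and $h$, and \cref{lem:sumset-cover} with $|Y - X| \leq |X - X| \leq K^2|X|$ to get $|S| = \tilde O(K^{c+2})$. It then solves $|S|^2$ hashed $n_2$-universe products with witnesses via \cref{lem:min-plus-wit}, with sum-order preservation ensuring the reported witnesses are genuine. The only cosmetic difference is that you fix one shift $s(x)$ per value so the subinstances partition $A$ and $B$, whereas the paper lets the restrictions $A_s, B_t$ overlap (it keeps every entry with $A[i,k] - s \in Y$); both variants work.
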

\begin{proof}
To bound $\MinPlus(n_1, n_2, n_3 \mid D \leq n_2; K)$ we design an algorithm for the $n_2$-uniform $K$-doubling Min-Plus Product problem of size $n_1 \times n_2 \times n_3$. Let $A, B \in (X \cup \set{\bot})^{n \times n}$ be the given matrices where $X \subseteq \Int$ denotes their set of integer entries, i.e., $|X| \leq n_2$ and $|X + X| \leq K |X|$. Our goal is to compute their product $C$.

We first use the algorithm from \cref{hypo:hashing} to compute a set $Y \subseteq X$ of size $|Y| \geq |X| / K^c$ and a sum-order-preserving additive hash function $h : Y \to \set{0, \dots, |X|}$. Then apply \cref{lem:sumset-cover} to construct a set~$S$ satisfying that $X \subseteq Y + S$. We enumerate all pairs $s, t \in S$, and for each such pair construct the following two matrices:
\begin{align*}
    A_s[i, k] &=
    \begin{cases}
        \mathrlap{h(A[i, k] - s)}\phantom{h(B[k, j] - t)} &\text{if $A[i, k] - s \in Y$,} \\
        \bot &\text{otherwise,}
    \end{cases} \\
    B_t[k, j] &=
    \begin{cases}
        h(B[k, j] - t) &\text{if $B[k, j] - t \in Y$,} \\
        \bot &\text{otherwise.}
    \end{cases}
\end{align*}
Clearly this is well-defined---we only apply $h$ to elements in $Y$. Moreover, the Min-Plus Product instance $(A_s, B_t)$ has a universe size of $|X| \leq n_2$. We solve the instance and report for each output pair $(i, j)$ a witness~$k$ by \cref{lem:min-plus-wit}. We update $C[i, j] \gets \min\set{C[i, j], A[i, k] + B[k, j]}$ for each triple $(i, k, j)$ listed in this way.

\proofparagraph{Correctness}
Fix any pair $(i, j)$ and let $k$ be a witness of $(i, j)$. We have that $A[i, k], B[k, j] \in X \subseteq Y + S$, hence there are $s, t \in S$ such that $A[i, k] - s, B[k, j] - t \in Y$. In particular, we have that $A_s[i, k] = h(A[i, k] - s)$ and $B_t[k, j] = h(B[k, j] - t)$, so it follows that $(A_s * B_t)[i, j] \leq h(A[i, k] - s) + h(B[k, j] - t)$. Conversely, for any non-witness $k'$ we have
\begin{equation*}
    A[i, k] + B[k, j] < A[i, k'] + B[k', j]
\end{equation*}
and thus
\begin{equation*}
    (A[i, k] - s) + (B[k, j] - t) < (A[i, k'] - s) + (B[k', j] - t).
\end{equation*}
As the function $h$ is sum-order-preserving, it follows that
\begin{equation*}
    h(A[i, k] - s) + h(B[k, j] - t) < h(A[i, k'] - s) + h(B[k', j] - t),
\end{equation*}
or equivalently,
\begin{equation*}
    A_s[i, k] + B_t[k, j] < A_s[i, k'] + B_t[k', j]
\end{equation*}
(unless one of these entries is $\bot$). Thus, $k'$ is not a witness of $(A_s * B_t)[i, j]$. All in all, this proves that~$k$ is a witness of $(A * B)[i, j]$ if and only if $k$ is a witness of $\min_{s, t \in S} (A_s * B_t)[i, j]$. Therefore, the algorithm is forced to enumerate a proper witness $k$ of $(A * B)[i, j]$ and to correctly assign $C[i, j]$.

\proofparagraph{Running Time}
First we compute the set $Y \subseteq X$ and the hash function $h$ in time $|X|^{1+o(1)} \cdot K^{c'}$, for some constant $c'$, by \cref{hypo:hashing}. Then we apply \cref{lem:sumset-cover} which runs in time $|Y - X| \, |S| \cdot (\log u)^{O(1)}$. By the Plünnecke-Ruzsa inequality (\cref{lem:pluennecke-ruzsa}) we can bound $|Y - X| \leq |X - X| \leq K^2 |X|$. The set~$S$ is promised to have size $|S| \leq O(|Y - X| / |Y| \cdot \log |X|) \leq O(K^2 |X| / |Y| \cdot \log |X|)$ by \cref{lem:sumset-cover}, which, using the bound $|Y| \geq |X| / K^c$ from \cref{hypo:hashing}, is at most $O(K^{c+2} \log |X|)$. Hence, this step takes time at most $|X| \cdot K^{c+4} (\log u)^{O(1)}$. Then we enumerate all $|S|^2 \leq O(K^{2c+4} \log^2 |X|)$ pairs $s, t \in S$, and for each pair solve an $n_2$-universe Min-Plus product instance. The total time is:
\begin{equation*}
    \tilde O\parens*{|X|^{1+o(1)} \cdot K^{c'} + |X| \cdot K^{c+4} + K^{2c+4} \cdot \MinPlus(n_1, n_2, n_3 \mid u \leq n_2)}
\end{equation*}
Recall that $|X| \leq n_2$. Therefore, the lemma statement follows for the constant that is the maximum of $c'$ and $2c + 4$.
\end{proof}

\begin{remark}
As is apparent from the proof, \cref{lem:min-plus-sum-order-preserving} still holds even if the algorithm to compute~$Y$ and~$h$ in \cref{hypo:hashing} is slower than $|X|^{1+o(1)} K^{O(1)}$. Any algorithm in time $|X|^{3-\Omega(1)} K^{O(1)}$ would suffice.
\end{remark}

The proof of \cref{thm:apsp-implies-strong-apsp} (restated next) is now a simple combination of \cref{lem:doubling-reduct,lem:min-plus-sum-order-preserving}.

\thmapspimpliesstrongapsp*
\begin{proof}
It is clear that the Strong APSP Hypothesis implies the APSP Hypothesis. For the other direction assume that $\MinPlus(n, n, n \mid u \leq n) = O(n^{3-\epsilon})$. This is in particular the case if the Strong APSP Hypothesis fails and if simultaneously $\omega = 2$. Additionally assume \cref{hypo:hashing}. Under these assumptions it follows that:
\begin{alignat*}{2}
    \MinPlus(n, n, n \mid u)
    &= \parens*{K \cdot \MinPlus(n, n, n \mid u; D \leq n; K) + \frac{n^3}{K^{1/300000}}}^{1+o(1)} \!\!\cdot \log u \;\;\; &&\text{(\cref{lem:doubling-reduct})} \\
    &= \parens*{K^{c+1} \cdot \MinPlus(n, n, n \mid u \leq n) + \frac{n^3}{K^{1/300000}}}^{1+o(1)} \!\!\cdot \log u \; &&\text{(\cref{lem:min-plus-sum-order-preserving})} \\
    &= \parens*{K^{c+1} n^{3-\epsilon} + \frac{n^3}{K^{1/300000}}}^{1+o(1)} \!\!\cdot \log u,
\end{alignat*}
where $c$ is the constant from \cref{lem:min-plus-sum-order-preserving} and $K$ is an arbitrary parameter. Setting $K = n^{\epsilon / (2c+2)}$, the running time becomes $n^{3-\epsilon / (600000c+600000) + o(1)} \cdot \log u$, which, recalling that $u = n^{c'}$ for some arbitrarily large constant~$c'$, is truly subcubic as claimed.
\end{proof}

\begin{remark} \label{rem:hashing-weak}
The statement of \cref{thm:apsp-implies-strong-apsp} remains valid when assuming the weaker \cref{hypo:hashing-weak} instead (stated in \cref{sec:hashing}). That hypothesis leads to an overhead of $|X|^\epsilon \leq n^\epsilon$ for an arbitrarily small constant $\epsilon > 0$, and the constant $c$ from before would depend on $\epsilon$ (i.e., $c = c(\epsilon)$). 
\end{remark}
\section{Lower Bounds for Intermediate Problems} \label{sec:intermediate}
In this section, we derive APSP-based lower bounds for various graph and matrix problems with intermediate complexity. The proofs in this section are mostly technically simple consequences of \cref{cor:min-plus-rect-low-doubling} or~\ref{cor:min-plus-rect-unif}.

\subsection{Node-Weighted APSP} \label{sec:intermediate:sec:node-wgt-apsp}
Recall that node-weighted APSP can be solved in time $\tilde O(n^{(3+\omega)/2})$~\cite{Chan10,Yuster09,AbboudFJVX25}, even in directed graphs. Chan, Vassilevska W., and Xu~\cite{ChanW21} gave an $n^{2.5-o(1)}$ lower bound based on the Directed Unweighted APSP Hypothesis, even for undirected graphs. Here, we strengthen this lower bound to be based on the APSP Hypothesis.

\thmnodewgtapsp*
\begin{proof}
Suppose that Node-Weighted APSP is in time $O(n^{2.5-\epsilon})$ for some $\epsilon > 0$; we show that in this case $\MinPlus(n, \sqrt{n}, n \mid D \leq \sqrt{n}) = O(n^{2.5-\epsilon})$ which contradicts the APSP Hypothesis by \cref{cor:min-plus-rect-unif}.

Let $n$ be a square number, and let $A \in (X \cup \set{\bot})^{n \times \sqrt{n}}$ and $B \in (X \cup \set{\bot})^{\sqrt{n} \times n}$ be the given matrices, where $X \subseteq \Int$ is a set of size $\sqrt{n}$. We construct a 4-layered undirected node-weighted graph $G$ on $4n$ nodes as follows:
\begin{itemize}
    \item \emph{(Vertices)} The vertices consist of four distinct layers: $I = [n]$ (the first layer), $K_1 = [\sqrt{n}] \times X$ (the second layer), $K_2 = [\sqrt{n}] \times X$ (the third layer), and $J = [n]$ (the fourth layer).
    \item \emph{(Edges)} For each $A[i, k] \neq \bot$ we add an edge $(i, (k, A[i, k])) \in I \times K_1$, and for each $B[k, j] \neq \bot$ we add an edge $((k, B[k, j]), j) \in K_2 \times J$. Finally, add all edges $((k, x_1), (k, x_2)) \in K_1 \times K_2$.
    \item \emph{(Node Weights)} Let $u =\max_{x \in X} |x|$. All nodes in $I$ and $J$ have weight $10u$. All nodes $(k, x)$ in $K_1$ and~$K_2$ have weight $10u + x$.
\end{itemize}
We compute the pairwise distances in $G$ in time $O(n^{2.5-\epsilon})$, and claim that the resulting distances between any pair of nodes $i \in I$ and $j \in J$ is exactly $(A * B)[i, j] + 40u$. Indeed, whenever $k$ is a witness of $(A * B)[i, j]$ then the nodes $(i, (k, A[i, k]), (k, B[k, j]), j) \in I \times K_1 \times K_2 \times J$ form a path of weight
\begin{equation*}
    10u + (10u + A[i, k]) + (10u + B[k, j]) + 10u = (A * B)[i, j] + 40u.
\end{equation*}
Conversely, all $i$-$j$-paths of weight less than $42u$ necessarily have at most $3$ edges, and thus have the form $(i, (k', w_1), (k', w_2), j) \in I \times K_1 \times K_2 \times J$. But then $A[i, k'] = x_1$ and $B[k', j] = x_2$, and thus $(A * B)[i, j] \leq w_1 + w_2$.
\end{proof}

We finally characterize the complexity of node-weighted APSP in graphs with small weights $\set{0, \dots, u}$ (conditioned on the Strong APSP Hypothesis). The characterization of directed graphs is easy: Even for unweighted directed graphs (i.e., $u = 1$) we cannot beat time $O(n^{2.5})$ without breaking the Strong APSP Hypothesis (\cref{thm:strong-apsp-implies-dir-unw-apsp}).

For node-weighted graphs, the fastest-known APSP algorithm runs in time $\tilde O(\min\set{n^{(3+\omega)/2}, n^\omega u})$ (up to improvements from rectangular matrix multiplication), by combining the $\tilde O(n^{(3+\omega)/2})$-time algorithm for general node weights by~\cite{AbboudFJVX25}, with Shoshan and Zwick's $\tilde O(n^\omega u)$-time algorithm for small edge weights~\cite{ShoshanZ99}. We show that this time is optimal (if $\omega = 2$), conditioned on the Strong APSP Hypothesis:

\begin{theorem}[Node-Weighted APSP with Small Weights] \label{thm:node-wgt-apsp-sm}
Let $\delta \geq 0$. APSP in undirected node-weighted graphs with weights in $\set{0, \dots, n^\delta}$ cannot be solved in time $O(n^{\min\set{2.5, 2+\delta}-\epsilon})$ (for any constant $\epsilon > 0$), unless the Strong APSP Hypothesis fails.
\end{theorem}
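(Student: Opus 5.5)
We split into the two regimes of $\min\{2.5, 2+\delta\}$. If $\delta \geq 1/2$, the bound $n^{2.5}$ already follows from the $\sqrt{n}$-weight case, since weights in $\{0,\dots,n^{1/2}\}$ are a special case of weights in $\{0,\dots,n^\delta\}$. So it suffices to treat $0 \leq \delta \leq 1/2$ and show that an $O(n^{2+\delta-\epsilon})$-time algorithm for undirected node-weighted APSP with weights in $\{0,\dots,n^\delta\}$ contradicts the Strong APSP Hypothesis. The target is \cref{cor:min-plus-rect-sm-univ}, and the bridge is the four-layer construction from the proof of \cref{thm:node-wgt-apsp}. Note that \cref{cor:min-plus-rect-sm-univ} requires $\alpha > 0$, so for $\delta = 0$ we should either use the unweighted statement separately or pick $\alpha$ tiny; the case $\delta = 0$ (time $n^{2-\epsilon}$) is in any case trivial from output size.

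Concretely, take $\alpha = \delta$ and a $\MinPlus(n, n^\delta, n \mid u \leq n^\delta)$ instance $A, B$. Mimic the graph of \cref{thm:node-wgt-apsp} with layers $I = [n]$, $K_1 = [n^\delta] \times \{0,\dots,u\}$, $K_2 = [n^\delta] \times \{0,\dots,u\}$, and $J = [n]$. Edges go from $i$ to $(k, A[i,k])$, from $(k, B[k,j])$ to $j$, and between $(k,x_1)$ and $(k,x_2)$ for all $x_1, x_2$. This gives $|K_1| = |K_2| = O(n^{2\delta}) \leq O(n)$ for $\delta \leq 1/2$, so the graph has $O(n)$ nodes. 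The $K_1 \times K_2$ edges number $n^\delta u^2 = n^{3\delta} \leq n^{2}$ when $\delta \leq 2/3$, which is fine.

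Node weights are $10u$ on $I, J$ and $10u + x$ on $(k,x)$, all at most $11u = O(n^\delta)$. The constant factor in the weight bound is harmless: rescale $n$ or use the remark in \cref{thm:shoshan-zwick-opt} that constant factors in the universe can be removed. The distance argument is verbatim from \cref{thm:node-wgt-apsp}. Every $I$-$J$ path has at least four nodes; any path with at least five nodes weighs at least $50u > 40u + 2u$; four-node paths are exactly $(i,(k,A[i,k]),(k,B[k,j]),j)$. Hence $d(i,j) = (A*B)[i,j] + 40u$ whenever this is below $42u$, and otherwise the entry is $\bot$. One point to check is that the undirected setting does not create shortcuts within a layer. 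It does not: $K_1$ and $K_2$ are connected only through same-$k$ edges, and going back through $I$ or $J$ only adds nodes.

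The resulting algorithm computes $\MinPlus(n, n^\delta, n \mid u \leq n^\delta)$ in time $O(n^{2+\delta-\epsilon})$, which contradicts the Strong APSP Hypothesis by \cref{cor:min-plus-rect-sm-univ} with $\alpha = \delta \in (0, 1/2]$. The only real obstacle is keeping the vertex count linear. The complete bipartite gadget between $K_1$ and $K_2$ within each $k$ costs $n^\delta u$ vertices, which forces $u \leq n^{1-\delta}$ and therefore the cap at $\delta = 1/2$. This matches the breakpoint $\min\{2.5, 2+\delta\}$, so the construction is tight exactly where needed.
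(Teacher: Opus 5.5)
Your proposal is correct and follows the paper's proof essentially verbatim. The paper likewise dismisses $\delta = 0$ by output size and sets $\alpha = \min\{1/2, \delta\}$, which is exactly your reduction of the $\delta \ge 1/2$ regime to the $\sqrt{n}$-weight case. It then reuses the four-layer construction from \cref{thm:node-wgt-apsp} with $X = \{0, \dots, u\}$, so that $|K_1|, |K_2| = O(n^{2\alpha}) \le O(n)$ and the node weights are $O(n^\alpha)$, and concludes via \cref{cor:min-plus-rect-sm-univ}.
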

\begin{proof}
The statement is trivially true for $\delta = 0$ (as it is impossible for any subquadratic-time algorithm to write down the $n^2$ pairwise distances). So suppose that $\delta > 0$, and choose $\alpha = \min\set{1/2, \delta} > 0$.

We show that an $O(n^{\min\set{2.5, 2+\delta}-\epsilon}) = O(n^{2+\alpha-\epsilon})$-time algorithm for undirected node-weighted APSP implies that $\MinPlus(n, n^\alpha, n \mid u \leq n^\alpha) = O(n^{2+\alpha-\epsilon})$, which contradicts the Strong APSP Hypothesis by \cref{cor:min-plus-rect-sm-univ}. For this reduction we mimic the proof of \cref{thm:node-wgt-apsp} almost exactly. Specifically, given matrices $A, B$ of sizes $n \times n^\alpha$ and $n^\alpha \times n$ with entries bounded by $u = n^\alpha$, we construct the same undirected graph on the four vertex parts $I = [n], K_1 = [n^\alpha] \times X, K_2 = [n^\alpha] \times X, J = [n]$, where $X = \set{0, \dots, u}$ denotes the set of weights. Analogously, the number of nodes is at most $|I|, |J| \leq n$ and $|K_1|, |K_2| \leq O(n^{2\alpha}) \leq O(n)$, and the $I$-$J$-distances encode the min-plus product $A * B$. Notably, this time the node weights are bounded by $O(u) = O(n^\alpha)$. Therefore, we can solve this constructed instance in time $O(n^{2+\alpha-\epsilon})$, leading to the desired contradiction.
\end{proof}

\subsection{Min Product} \label{sec:intermediate:sec:min-prod}
Min Product is a particularly simple intermediate matrix product problem, equivalent to the restriction of Min-Plus Product where one matrix consists only of entries $\set{0, \bot}$. This problem was studied before in~\cite{AbboudFJVX25} (under the name \emph{Boolean} Min-Plus Product). While perhaps not as important in its own right, one can easily conclude the hardness of other intermediate problems from the hardness of Min Product.

\begin{definition}[Min Product] \label{def:min-prod}
The \emph{min product} of two matrices $A \in \Int^{n \times n}$ and $B \in \set{0, 1}^{n \times n}$ is the matrix $C \in \Int^{n \times n}$ defined by
\begin{equation*}
    C[i, j] = \min_{\substack{k \in [n]\\B[k, j] = 1}}A[i, k].
\end{equation*}
The \emph{Min Product} problem is to compute the min product of two given matrices $A, B$.
\end{definition}

\begin{lemma}[Min Product] \label{lem:min-prod}
Min Product cannot be solved in time $O(n^{2.5-\epsilon})$ (for any constant $\epsilon > 0$), unless the APSP Hypothesis fails.
\end{lemma}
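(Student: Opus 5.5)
We reduce from $\MinPlus(n, \sqrt{n}, n \mid D \leq \sqrt{n})$. Suppose Min Product on $N \times N$ matrices can be solved in time $O(N^{2.5-\epsilon})$. We show that this gives $\MinPlus(n, \sqrt n, n \mid D \leq \sqrt n) = O(n^{2.5-\epsilon})$, which contradicts the APSP Hypothesis by \cref{cor:min-plus-rect-unif} with $\alpha_1 = \alpha_3 = 1$ and $\alpha_2 = 1/2$.

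Let $A \in (X \cup \set{\bot})^{n \times \sqrt n}$ and $B \in (X \cup \set{\bot})^{\sqrt n \times n}$ with $|X| \leq \sqrt n$. The idea mirrors the node-weighted construction in \cref{thm:node-wgt-apsp}: index the inner dimension by pairs $(k, y) \in [\sqrt n] \times X$, of which there are at most $n$. Define the $n \times n$ matrices
\begin{equation*}
    A'[i, (k, y)] = A[i, k] + y \quad\text{if } A[i, k] \neq \bot,
\end{equation*}
and $A'[i,(k,y)] = M$ otherwise, where $M$ is a large sentinel value, say larger than $2\max|X|$ plus everything else. Define
\begin{equation*}
    B'[(k, y), j] = 1 \quad\text{iff } B[k, j] = y.
\end{equation*}
Pad both to size $n \times n$ if fewer than $n$ pairs exist.

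Then the min product gives
\begin{equation*}
    C[i,j] = \min_{(k,y) : B[k,j]=y} A'[i,(k,y)] = \min_k \bigl(A[i,k] + B[k,j]\bigr) = (A * B)[i, j],
\end{equation*}
where any value at least $M$ is read as $\bot$. Pairs with $B[k,j] = \bot$ are never selected, and pairs with $A[i,k] = \bot$ contribute only the sentinel. The whole construction takes $O(n^2)$ time, and a single Min Product call on $n \times n$ matrices takes $O(n^{2.5-\epsilon})$. This yields the claimed contradiction.

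The only point needing care is the bookkeeping: that each $j$ selects exactly the pair $(k, B[k,j])$ for every $k$, and the handling of $\bot$ via the sentinel. The hard work lies in \cref{cor:min-plus-rect-unif}, which we take as given. I expect no real obstacle beyond checking that the inner dimension $\sqrt n \cdot |X| \leq n$ fits.
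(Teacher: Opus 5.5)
Your proposal is correct and matches the paper's proof: the inner dimension is indexed by pairs $(k,x)\in[\sqrt n]\times X$, with $A'[i,(k,x)]=A[i,k]+x$ and $B'$ the indicator of $B[k,j]=x$, and the contradiction comes from \cref{cor:min-plus-rect-unif}. Your sentinel for $\bot$-entries of $A$ and the padding (padded rows of $B'$ must be all-zero so they are never selected) are extra bookkeeping that the paper leaves implicit.
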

\begin{proof}
Suppose that Min Product can be solved in time $O(n^{2.5-\epsilon})$ for some $\epsilon > 0$; we show that in this case $\MinPlus(n, \sqrt{n}, n \mid D \leq \sqrt{n}) = O(n^{2.5-\epsilon})$ which contradicts the APSP Hypothesis by \cref{cor:min-plus-rect-unif}.

Let $A, B$ denote the two given matrices of sizes $n \times \sqrt{n}$ and $\sqrt{n} \times n$ (assuming that~$n$ is a square number at the cost of doubling~$n$ in the worst case), and let $X \subseteq \Int$ be the set of entries appearing in $A$ and $B$; we assume that $|X| \leq \sqrt{n}$. Our goal is to compute the min-plus product $A * B$. Construct the matrices $n \times n$ matrices $A', B'$ as follows, where we index the inner dimension by pairs $[\sqrt n] \times X$:
\begin{align*}
    A'[i, (k, x)] &= A[i, k] + x, \qquad \\
    B'[(k, x), j] &=
    \begin{cases}
        1 &\text{if $B[k, j] = x$,} \\
        0 &\text{otherwise.}
    \end{cases}
\end{align*}
Note that the min product $C'$ of $A'$ and $B'$ satisfies that
\begin{equation*}
    C'[i, j] = \min_{\substack{(k, x) \in [\sqrt{n}] \times X\\B'[(k, x), j] = 1}} A'[i, (k, x)] = \min_{\substack{(k, x) \in [\sqrt{n}] \times X\\B[k, j] = x}} (A[i, k] + x) = \min_{k \in [\sqrt{n}]} (A[i, k] + B[k, j]),
\end{equation*}
and thus $C' = A * B$ is the desired min-plus product. The matrices $A'$ and $B'$ can be constructed in time~$\tilde O(n^2)$, so the total time is dominated by the Min Product computation in time $O(n^{2.5-\epsilon})$ as claimed.
\end{proof}

\subsection{Min-Max Product} \label{sec:intermediate:sec:min-max-prod}
Next, we consider the Min-Max Product problem which was studied in the context of All-Pairs Bottleneck Paths and All-Pairs Nondecreasing Paths. We show that Duan and Pettie's $\tilde O(n^{(3+\omega)/2})$-time algorithm~\cite{DuanP09} is best-possible (if $\omega = 2$), conditioned on the APSP Hypothesis.

\begin{definition}[Min-Max Product]
The \emph{min-max product} of two matrices $A, B \in \Int^{n \times n}$ is the matrix $C \in \Int^{n \times n}$ defined by
\begin{equation*}
    C[i, j] = \min_{k \in [n]} \max\set{A[i, k], B[k, j]}.
\end{equation*}
The \emph{Min-Max Product} problem is to compute the min-max product of two given matrices $A, B$.
\end{definition}

\begin{lemma}[Min-Max Product] \label{lem:min-max-prod}
Min-Max Product cannot be solved in time $O(n^{2.5-\epsilon})$ (for any constant $\epsilon > 0$), unless the APSP Hypothesis fails.
\end{lemma}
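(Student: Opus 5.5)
We plan to follow the template of the Min Product lower bound (\cref{lem:min-prod}). Assume Min-Max Product is solvable in time $O(n^{2.5-\epsilon})$; we will show that this gives $\MinPlus(n, \sqrt{n}, n \mid D \leq \sqrt{n}) = O(n^{2.5-\epsilon})$ up to polylogarithmic factors, which contradicts the APSP Hypothesis by \cref{cor:min-plus-rect-unif}. Let $A \in (X \cup \set{\bot})^{n \times \sqrt n}$ and $B \in (X \cup \set{\bot})^{\sqrt n \times n}$ with $|X| \leq \sqrt n$. We index a new inner dimension by $[\sqrt n] \times X$, which has size $n$. The idea is to let the outer index pick a row entry $A[i,k]$ and the inner pair $(k,x)$ guess the value $x = B[k,j]$. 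The value $A[i,k] + x$ then sits in $A'$, and $B'$ only has to gate whether $B[k,j] = x$. Since $\max$ cannot add, the gating uses the two extreme values $\pm\infty$.

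Concretely, set $A'[i,(k,x)] = A[i,k] + x$ (and $+\infty$, i.e.\ a large sentinel $M$, if $A[i,k] = \bot$). Set $B'[(k,x),j] = -M$ if $B[k,j] = x$, and $+M$ otherwise, where $M$ exceeds $2\max_{x\in X}|x|$. Then
\[
\max\set{A'[i,(k,x)], B'[(k,x),j]} = A[i,k] + x \quad \text{if } B[k,j] = x,
\]
and it equals $M$ otherwise. Hence the min-max product $C'$ satisfies
\[
C'[i,j] = \min_k \bigl(A[i,k] + B[k,j]\bigr) = (A*B)[i,j],
\]
with the value $M$ exactly encoding $\bot$. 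Building $A'$ and $B'$ takes $\tilde O(n^2)$ time, so $A * B$ is computed in time $O(n^{2.5-\epsilon})$. If the problem is to be stated over nonnegative entries, we shift everything by $M$, since $\max$ commutes with adding a constant.

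The only real check is the sentinel bookkeeping: entries with $A[i,k] = \bot$ must never beat genuine values, and a non-matching $x$ must always lose. This holds because $M$ dominates every real sum $A[i,k]+x$. Since this is a direct analogue of \cref{lem:min-prod}, we expect no genuine obstacle beyond choosing $M$ correctly. The substance of the argument lies in \cref{cor:min-plus-rect-unif}, which guarantees that the uniform instances with $|X| \leq \sqrt n$ are already APSP-hard.
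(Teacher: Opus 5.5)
Your proposal is correct and is essentially the paper's proof. The paper reduces Min Product to Min-Max Product by replacing $1$ with $-\infty$ and $0$ with $+\infty$ in the Boolean matrix, and its Min Product lemma uses exactly your construction $A'[i,(k,x)] = A[i,k] + x$. You have simply inlined that composition, using finite sentinels $\pm M$ in place of $\pm\infty$.
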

\begin{proof}
The proof is almost immediate by \cref{lem:min-prod}: The min product of two matrices~$A$ and $B$ can be expressed as the min-max product of the matrices $A$ and $B'$, where $B'$ is obtained from~$B$ by replacing $1$ with $-\infty$ and $0$ by $\infty$.
\end{proof}

This lower bound entails various other lower bounds. By the known equivalence~\cite{VassilevskaWY07} we conclude our lower bound for All-Pairs Bottleneck Paths (\cref{thm:apbp}). Moreover, from known reductions~\cite{Vassilevska10} it follows that also the Min-$\leq$ Product and the All-Pairs Nondecreasing Paths problems cannot be solved in time $O(n^{2.5-\epsilon})$ unless the APSP Hypothesis fails, matching their respective upper bounds if $\omega = 2$~\cite{Vassilevska10,DuanGZ18,DuanJW19}. \cref{lem:min-max-prod} also implies a matching lower bound for the All-Edges Monochromatic Equality Triangle problem studied in~\cite{VassilevskaX20b}.

\subsection{Min-Equality Product} \label{sec:intermediate:sec:min-eq-prod}
Min-Equality Product is yet another matrix problem with intermediate complexity $\tilde O(n^{(3+\omega)/2})$, first introduced by Vassilevska W.\ and Xu in~\cite{VassilevskaX20b}. An interesting aspect of the problem is that it naturally generalizes the Equality Product problem which is one of the few intermediate problems for which we currently cannot prove conditional lower bounds.

\begin{definition}[Min-Equality Product] \label{def:min-eq-prod}
The \emph{min-equality product} of two matrices $A, B \in \Int^{n \times n}$ is the matrix $C \in \Int^{n \times n}$ defined by
\begin{equation*}
    C[i, j] = \min\set{A[i, k] : k \in [n], A[i, k] = B[k, j]}.
\end{equation*}
The \emph{Min-Equality Product} problem is to compute the min-equality product of two given matrices $A, B$.
\end{definition}

\begin{lemma}[Min-Equality Product] \label{lem:min-eq-prod}
Min-Equality Product cannot be solved in time $O(n^{2.5-\epsilon})$ (for any constant $\epsilon > 0$), unless the APSP Hypothesis fails.
\end{lemma}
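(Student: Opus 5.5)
We reduce from Min Product (\cref{lem:min-prod}), or directly from uniform rectangular Min-Plus Product via \cref{cor:min-plus-rect-unif}. The reduction mirrors the one for Min Product: we encode the addition $A[i,k]+B[k,j]$ into an equality test by splitting the inner dimension over the small value set. Concretely, suppose Min-Equality Product runs in time $O(n^{2.5-\epsilon})$. Let $A\in(X\cup\{\bot\})^{n\times\sqrt n}$ and $B\in(X\cup\{\bot\})^{\sqrt n\times n}$ with $|X|\le\sqrt n$. We will show that $\MinPlus(n,\sqrt n,n\mid D\le\sqrt n)=O(n^{2.5-\epsilon})$, contradicting \cref{cor:min-plus-rect-unif}.

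Index the inner dimension by pairs $(k,x)\in[\sqrt n]\times X$, which gives $n$ indices. We want an entry $A'[i,(k,x)]$ carrying the value $A[i,k]+x$, and an entry $B'[(k,x),j]$ that equals it precisely when $B[k,j]=x$. Since $B'$ cannot depend on $i$, we cannot simply copy the value. Instead we use a pairing trick: encode both the value and the tag $x$ into a single integer. Set
\[
A'[i,(k,x)] = M\cdot (A[i,k]+x) + \phi(x),
\]
where $M$ is large and $\phi:X\to[\,|X|\,]$ is an injection. We then need $B'[(k,x),j]$ to equal this value, but again $B'$ does not know $A[i,k]$.

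The fix is to take the minimum over a quantity that is independent of the equality condition. We use a Min Product instance as the intermediate problem. Given a min product instance with $A\in\Int^{n\times n}$ and $B\in\{0,1\}^{n\times n}$, we need $C[i,j]=\min\{A[i,k]: B[k,j]=1\}$. Equality can only select on the common index $k$, so we make equality hold exactly when $B[k,j]=1$, independently of $A[i,k]$. To do this, replace $A$ by an inner dimension that also ranges over candidate values, $A''[i,(k,v)]$, and ask for equality with $B''$. This works cleanly if $A$ has few distinct values per column, that is, if the instance is uniform.

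So the plan is to reduce directly from $\sqrt n$-uniform $\MinPlus(n,\sqrt n,n)$ with an inner index $(k,x,y)\in[\sqrt n]\times X\times X$. That has $n^{1.5}$ indices, which is too many. To stay within $n$ indices we trade the $y$ dimension for a tag, as follows:
\[
A'[i,(k,x)] =
\begin{cases}
M(A[i,k]+x)+x & \text{if } A[i,k]\neq\bot,\\
\bot & \text{otherwise,}
\end{cases}
\qquad
B'[(k,x),j] =
\begin{cases}
\text{?} & \text{if } B[k,j]=x.
\end{cases}
\]
Since $B'$ must equal $A'$, this again forces knowledge of $A[i,k]$.

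Equality therefore fundamentally needs $A[i,k]$ to be determined by $(k,\cdot)$. We achieve this by moving $A$'s value into the index instead, using inner index $(k,a)$ with $a\in X$ for the value of $A[i,k]$, and folding $B$'s value into $j$'s side. Concretely, set
\[
A'[i,(k,a)]=M\cdot a \ \text{ if } A[i,k]=a,
\]
and otherwise put a fresh unique value that matches nothing. Set
\[
B'[(k,a),j]=M\cdot a,
\]
so equality holds iff $A[i,k]=a$. The output then gives $\min\{a : A[i,k]=a\}$, which still lacks $B[k,j]$. To include it, transpose the roles: split $j$ as well, with output index $(j,b)$ for $b\in X$. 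This blows the output up by $\sqrt n$, which is the main obstacle. I expect the actual proof resolves it by reducing instead from Min-Max Product or Min Product (\cref{lem:min-prod}) using the known reduction of Vassilevska~W.\ and Xu~\cite{VassilevskaX20b} from those to Min-Equality Product, and that is what I would try first.
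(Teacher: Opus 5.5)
Your proposal has a genuine gap: it never produces a working reduction. Each encoding you try fails for the reason you identify yourself ($B'$ cannot depend on $A[i,k]$). The $(k,x,y)$ indexing costs $n^{1.5}$ inner indices. The closing appeal to a reduction of Vassilevska~W.\ and Xu from Min Product or Min-Max Product to Min-Equality Product is neither stated nor checked. It is also not routine: the obvious way to turn Min Product into Min-Equality Product is to split the inner index over the values in $A$'s columns, and that reintroduces exactly the $\sqrt n$ blowup you ran into.

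The missing idea is to index the inner dimension by the \emph{difference} of the two values rather than by the pair. Use indices $(k,z) \in [\sqrt n] \times Z$ with $Z = X - X$, and set $A'[i,(k,z)] = 2A[i,k] - z$ and $B'[(k,z),j] = 2B[k,j] + z$. Equality holds iff $z = A[i,k] - B[k,j]$, and in that case the common value is $2A[i,k] - z = A[i,k] + B[k,j]$. So the min-equality product of $A',B'$ is exactly $A * B$. The $i$-dependence that blocked you disappears because only the single parameter $z$ has to be guessed, not both values.

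The price is an inner dimension of $\sqrt n \cdot |X - X|$. For a merely uniform instance this can be $n^{1.5}$, which is why \cref{cor:min-plus-rect-unif} is not enough. You need the low-doubling hardness \cref{cor:min-plus-rect-low-doubling}, applied with $\kappa = \epsilon/4$. With $|X+X| \le K|X|$ and $K \le n^{\epsilon/4}$, Pl\"unnecke--Ruzsa gives $|X - X| \le K^2|X| \le n^{1/2+\epsilon/2}$. The inner dimension is then $n^{1+\epsilon/2}$. Split it into $n^{\epsilon/2}$ blocks of size $n$, solve each square instance in $O(n^{2.5-\epsilon})$, and take entrywise minima, for $O(n^{2.5-\epsilon/2})$ in total. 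This contradicts the APSP Hypothesis.
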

\begin{proof}
Suppose that Min-Equality Product is in time $O(n^{2.5-\epsilon})$ for some $\epsilon > 0$. We show that in this case $\MinPlus(n, \sqrt{n}, n \mid D \leq \sqrt{n}; K \leq n^{\epsilon/4}) = O(n^{2.5-\epsilon/2})$ which contradicts the APSP Hypothesis by \cref{cor:min-plus-rect-low-doubling} (applied with parameters $\epsilon' = \epsilon/2, \kappa = \epsilon/4$ and $\alpha_1 = 1, \alpha_2 = 1/2, \alpha_3 = 1$).

Let $n$ be a square number, let $A, B$ be the given matrices of sizes $n \times \sqrt{n}$ and $\sqrt{n} \times n$, respectively, and let $X \subseteq \Int$ denote the set of entries that appear in $A$ and $B$. We assume that $|X| \leq \sqrt{n}$ and $|X + X| \leq K |X|$ for some $K \leq n^{\epsilon/4}$. First compute $Z = X - X$ (e.g., by brute-force), and recall that by the Plünnecke-Ruzsa inequality (\cref{lem:pluennecke-ruzsa}) we have $|X - X| \leq K^2 |X| \leq n^{1/2+\epsilon/2}$.

Construct the augmented matrices $A', B'$ indexed by $[n] \times ([\sqrt{n}] \times Z)$ and $([\sqrt{n}] \times Z) \times [n]$, respectively, defined by
\begin{align*}
    A'[i, (k, z)] &= 2A[i, k] - z, \\
    B'[(k, z), j] &= 2B[k, j] + z.
\end{align*}
We compute their min-equality product $C$ and claim that $C = A * B$ is exactly the desired min-plus product. On the one hand, we have that $A'[i, (k, z)] = B'[(k, z), j]$ if and only if $z = A[i, k] - B[k, j]$. In this case we have $A'[i, (k, z)] = 2A[i, k] - z = A[i, k] + B[k, j]$, and thus $C \geq A * B$. On the other hand, focus on any witness $(A * B)[i, j] = A[i, k] + B[k, j]$. Then clearly $z = A[i, k] - B[k, j] \in X - X = Z$, and thus by construction $C[i, j] \leq A[i, k] + B[k, j]$, and therefore $C \leq A * B$. The claim follows.

Finally, consider the running time. The matrices $A', B'$ have size $n \times n^{1+\epsilon/2}$ and $n^{1+\epsilon/2} \times n$, respectively. We can partition the inner dimension into $O(n^{\epsilon/2})$ blocks of size $n$ and thereby compute the rectangular min-equality product of $A'$ and $B'$ by $n^{\epsilon/2}$ square computations, each running in time $O(n^{2.5-\epsilon})$. The total time is $O(n^{2.5-\epsilon/2})$ as claimed.
\end{proof}

\subsection{Bounded-Difference and Monotone Min-Plus Product} \label{sec:intermediate:sec:min-plus-bd}
Next, we consider the restriction of the Min-Plus Product problem to \emph{bounded-difference} or, more generally, \emph{monotone} matrices defined as follows.

\begin{definition}[Bounded Difference]
A matrix $A \in \Int^{n \times n}$ is \emph{row-bounded-difference} if $|A[i, j] - A[i, j+1]| \leq O(1)$ for all $i, j$. Similarly, $A$ is \emph{column-bounded-difference} if $|A[i, j] - A[i+1, j]| \leq O(1)$ for all $i, j$. $A$ is \emph{bounded-difference} if it is both row- and column-bounded-difference.
\end{definition}

\begin{definition}[Monotone]
A matrix $A \in \Int^{n \times n}$ is \emph{row-monotone} if $0 \leq A[i, 1] \leq \dots \leq A[i, n] \leq O(n)$ for all rows $i$. Similarly, $A$ is \emph{column-monotone} if $0 \leq A[n, j] \leq \dots \leq A[1, j] \leq O(n)$ for all columns $j$. $A$ is \emph{monotone} if it is both row- and column-monotone.
\end{definition}

Bringmann, Grandoni, Saha, and Vassilevska W.~\cite{BringmannGSV19} designed the first subcubic-time algorithm for bounded-difference Min-Plus Product, which they used to derive subcubic-time algorithms for various string problems like language edit distance and RNA-folding. Mao~\cite{Mao21} later similarly applied their algorithm to unweighted tree edit distance. Motivated by these applications, a line of research~\cite{VassilevskaX20a,GuPVX21,ChiDX22,ChiDXZ22} sought to optimize the running time of bounded-difference and monotone Min-Plus Product. This line culminated only recently in an $\tilde O(n^{(3+\omega)/2})$-time algorithm due to Chi, Duan, Xie, and Zhang~\cite{ChiDXZ22}. Specifically, they design two different algorithms---one to compute the min-plus product of a \emph{row-monotone} matrix $A$ and an unconstrained matrix~$B$, and another one for a \emph{column-monotone} matrix $A$ and an unconstrained matrix $B$. It has remained open if these algorithms are best-possible. In this section we show that the first algorithm is conditionally optimal (if $\omega = 2$); it remains an interesting open question if the second algorithm is similarly optimal.

\thmminplusbd*

The conceptual idea behind \cref{thm:min-plus-bd} is simple. For a reduction based on the Directed Unweighted APSP Hypothesis, i.e., starting from Min-Plus Product of size $n \times \sqrt n \times n$ with entries bounded by $\sqrt n$, one can readjust the matrix entries $A[i, k] \gets A[i, k] + k \cdot \sqrt n$ and $B[k, j] \gets B[k, j] - k \cdot \sqrt n$, which leaves $A * B$ unchanged, and ensures that $A$ is row-monotone and $B$ is column-monotone. This insight was independently communicated to us by Łukasiewicz~\cite{Lukasiewicz25}. To make $A$ row-bounded-difference one can additionally blow up the inner dimension from $\sqrt n$ to $n$, by adding $\sqrt n$ new columns between any pair of consecutive columns in~$A$ to appropriately interpolate between the previous entries.

The major technical complication is that we aim to base the reduction on the APSP Hypothesis instead. We thus cannot assume that the entries in the given matrices are bounded by $\sqrt n$, but only that there are~$\sqrt n$ distinct entries that form a low-doubling set (by \cref{cor:min-plus-rect-low-doubling}). We show that after an appropriate transformation we can substitute numbers by their rank (i.e., position in the sorted order), and thereby still reduce to matrices with entries in the range roughly $\set{0, \dots, \sqrt n}$. We give the details in the following technical lemma. In this first step we only care to make $A$ row-monotone; later we comment how to achieve the other properties without loss of generality.

\begin{lemma} \label{lem:min-plus-bd}
The min-plus product of a row-bounded-difference $A \in [\sqrt{n}]^{n \times n}$ with a matrix $B \in \set{0, \bot}^{n \times n}$ containing at most $n^{1.5}$ non-$\bot$ entries cannot be computed in time $O(n^{2.5-\epsilon})$ (for any constant $\epsilon > 0$), unless the APSP Hypothesis fails.
\end{lemma}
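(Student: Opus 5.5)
The plan is to reduce from the hard instances of \cref{cor:min-plus-rect-low-doubling} (or \cref{cor:min-plus-rect-unif} if doubling is not needed), choosing the inner dimension slightly below $\sqrt n$ to absorb the doubling loss. Concretely, start from a Min-Plus Product instance $A_0, B_0$ of size $n \times m \times n$ with $m = n^{1/2-\kappa}$. Its set of entries $X$ satisfies $|X| \le m$ and $|X+X| \le n^{\kappa}|X| \le \sqrt n$. If such instances required time $n^{2+1/2-\kappa-\epsilon'}$, \cref{cor:min-plus-rect-low-doubling} (with $\alpha_1=\alpha_3=1$, $\alpha_2 = 1/2-\kappa$) gives APSP hardness. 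Taking $\kappa$ small compared to $\epsilon$, an $O(n^{2.5-\epsilon})$ algorithm for the target problem contradicts this. I would first make sure that the full statement of \cref{lem:min-plus-bd} is only needed up to $n^{O(\kappa)}$ losses, so that this slack is available.

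\emph{Step 1 (Fredman-style encoding).} Index the new inner dimension by pairs $(k,y) \in [m] \times X$, giving $m|X| \le n$ columns. Set $B'[(k,y),j] = 0$ if $B_0[k,j]=y$ and $\bot$ otherwise. Each pair $(k,j)$ contributes exactly one non-$\bot$ entry, so $B'$ has at most $nm \le n^{1.5}$ non-$\bot$ entries. Put $A'[i,(k,y)] = A_0[i,k]+y$. Then $(A'*B')[i,j] = (A_0*B_0)[i,j]$, and moreover witnesses correspond.

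\emph{Step 2 (rank substitution).} All entries of $A'$ lie in $X+X$. Since $A'*B'$ is a pure minimum over the entries of $A'$ (the relevant entries of $B'$ are all $0$), we may replace each entry $s$ by its rank $\operatorname{rk}(s)$ in the sorted set $X+X$. This map is strictly order-preserving on a single set, unlike sum-order-preserving hashing, so no additive combinatorics hypothesis is needed. The new entries lie in $[|X+X|] \subseteq [\sqrt n]$. A witness of the rank product is a witness of $A'*B'$. We then recover $A_0*B_0$ by listing one witness per entry via \cref{lem:min-plus-wit} and evaluating it in the original matrices.

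\emph{Step 3 (row-bounded difference) --- the main obstacle.} Within block $k$, order the columns by $y$ increasing. Then for every row $i$ the values $\operatorname{rk}(A_0[i,k]+y)$ are strictly increasing and stay in $[\sqrt n]$, but consecutive jumps can be large and the jump pattern depends on $i$. My first attempt is to insert filler columns whose $B$-column is entirely $\bot$, so they never affect the product, and whose $A$-values interpolate linearly between neighbouring real columns. The difficulty is that the filler columns are shared by all rows while the gaps vary per row. Per row and block, the total variation is at most $|X+X| \le \sqrt n$ upward, plus one drop of at most $\sqrt n$ when passing to the next block.

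I would handle this by inserting, between every pair of consecutive real columns and between blocks, a fixed budget of $G$ filler columns. In each row, a jump of size $g$ is spread over these $G$ columns in steps of size $\lceil g/G\rceil$, which keeps the differences bounded as long as $g \le G \cdot O(1)$. The remaining worry is the total width $m|X|G$, which must be at most $n^{1+O(\kappa)}$. If individual gaps can be as large as $\sqrt n$, this forces $G \approx \sqrt n$ and a width of about $n^{1.5}$, which is too much. In that case I would exploit that the sum of gaps per (row, block) is at most $\sqrt n$. The fallback is to choose the column order within block $k$ and the filler allocation using the ranks of the shifted set $X + A_0[i,k]$, grouped by which translate of a common BSG-structured piece of $X$ they come from, so that the gap pattern becomes essentially row-independent. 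Alternatively, I would shrink $m$ further and accept a loss of $n^{O(\kappa)}$. The last bit of bookkeeping is to make all entries lie in $[\sqrt n]$ after padding, by adding a constant shift and truncating. This rows' step-size control is the step I would work out first, since Steps 1 and 2 are routine.
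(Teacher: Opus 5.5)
Your Steps 1 and 2 are exactly the paper's setup: inner index $(k,y)$, $B'$ the indicator of $B_0[k,j]=y$, and $A_0[i,k]+y$ replaced by its rank in the sorted sumset $X+X$. But Step 3 is the actual content of the lemma, and you leave it open; neither of your fallbacks closes it.

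\emph{Shrinking $m$ does not help.} A single jump $\operatorname{rank}(x+y_{t+1})-\operatorname{rank}(x+y_t)$ can be of order $|X|$; take $X$ to be an interval plus one far-away element. So uniform padding needs about $|X|$ fillers at each of the $m|X|$ positions, i.e.\ width about $m|X|^2$. The hard instances of \cref{cor:min-plus-rect-low-doubling} have $|X|$ up to $m$, so this forces $m\lesssim n^{1/3}$. That yields only an $n^{7/3-o(1)}$ bound, not a loss of $n^{O(\kappa)}$.

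\emph{The BSG-grouping idea is not an argument as stated.} BSG gives subsets with small sumsets, not the kind of order structure that would make jump patterns row-independent. You also never use the $1/D$-regularity that the corollary provides, and the paper's proof relies on it.

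The missing step in the paper is to \emph{cap} the jumps rather than pad for them. Start from $f(x,y)=\operatorname{rank}(x+y)$, sweep the pairs in descending order, and raise $f(x,y)$ to $f(x,y')-L$ whenever the jump to the successor $y'$ exceeds $L$. Then:
\begin{itemize}
\item every $f(x,\cdot)$ is increasing and $L$-bounded-difference, so $L$ fillers per within-block position and $|X+X|$ per block boundary suffice, for width $(L+K)n$;
\item since $f\ge\operatorname{rank}$, the product mapped back through the sorted order of $X+X$ only overestimates, and it is exact whenever some witness $k$ has a good pair $(A[i,k],B[k,j])$;
\item each bad pair forces a jump of exactly $L$ inside the range $[|X+X|]$, so there are at most $K|X|^2/L$ bad pairs;
\item all triples through bad pairs are enumerated by brute force in time $O(|R|\sqrt n\,(n/D)^2)=O(n^{2.5}K/L)$, which is where regularity is used;
\item $L=n^{\epsilon/3}$ balances the two parts.
\end{itemize}

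Your non-uniform filler idea could also be completed directly, with no bad pairs at all. For every row value $x$, the jump at position $t$ is $1+|(X+X-x)\cap(y_t,y_{t+1})|\le 1+|(2X-X)\cap(y_t,y_{t+1})|$. This bound is row-independent, and its sum over the disjoint gaps of $X$ is at most $|2X-X|\le K^3|X|$ by \cref{lem:pluennecke-ruzsa}. Allotting that many fillers per position gives width $O(K^3 m|X|)$. But some such argument is needed, and your proposal does not supply one.
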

\begin{proof}
Assume that there is an $O(n^{2.5-\epsilon})$-time algorithm for the problem described in the lemma statement. We design an algorithm for the uniform regular low-doubling Min-Plus Product problem of size $n \times \sqrt{n} \times n$, proving $\MinPlus(n, \sqrt{n}, n \mid D \leq \sqrt{n}; \rho \leq 1/D; K \leq n^{\epsilon/6}) \leq O(n^{2.5-\epsilon/6})$. This contradicts the APSP Hypothesis by \cref{cor:min-plus-rect-low-doubling}.

Let $n$ be a square number, and let $A$ and $B$ be the given $n \times \sqrt{n}$ and $\sqrt{n} \times n$ matrices with integer entries in~$X$. That is, we can assume that $D = |X| \leq \sqrt{n}$ and $|X + X| \leq K |X|$ where $K \leq n^{\epsilon/6}$. We compute~$X + X$ (by brute-force, say). Then we sort $X$ and $X + X$. We say that $y \in X$ is the \emph{successor} of~\makebox{$x \in X$} if it is the next-larger element in $X$. We also define two functions
\renewcommand\select{\operatorname{select}}
\renewcommand\rank{\operatorname{rank}}
\begin{align*}
    \select &: [|X + X|] \to X + X, \\
    \rank &: X \times X \to [|X + X|],
\end{align*}
where $\select$ maps each index $i$ to the $i$-th smallest element in the sumset $X + X$, and $\rank(x, y)$ is exactly the index $i$ such that $x + y$ is the $i$-th smallest element in the sumset $X + X$. In particular, we have that $\select(\rank(x, y)) = x + y$.

We now define another function $f : X \times X \to [|X + X|]$. Let $L > 0$ be a parameter to be determined later. Initially, set $f(x, y) \gets \rank(x, y)$. Now enumerate all pairs $(x, y) \in X^2$ in lexicographically descending order. If $y$ has a successor $y'$ and if $f(x, y) < f(x, y') - L$, then we update $f(x, y) \gets f(x, y') - L$. The resulting function $f$ satisfies the following three properties by construction:
\begin{enumerate}[label=(\roman*)]
    \item For each fixed $x \in X$ the function $f(x, \cdot)$ is increasing and $L$-bounded-difference, i.e., for any successive $y, y' \in X$ we have that $f(x, y') - f(x, y) \leq L$.
    \item $f(x, y) \geq \rank(x, y)$ for all $x, y \in X$.
    \item Call a pair $(x, y) \in X^2$ \emph{good} if $f(x, y) = \rank(x, y)$ and \emph{bad} otherwise. Let $R \subseteq X^2$ be the subset of bad pairs. Then $|R| \leq K |X|^2 / L$. Indeed, fix $x \in X$ and consider the increasing sequence $f(x, \cdot)$. For each bad pair $(x, y)$, this sequence ``jumps'' from $f(x, y)$ to $f(x, y')$ by $L$. However, the range of $f$ is~$[|X + X|]$, hence the number of such jumps is at most $|X + X| / L \leq K |X| / L$. Summing over all $|X|$ choices of $x$, we obtain the total bound $|R| \leq K |X|^2 / L$.
\end{enumerate}
The algorithm now proceeds in two cases:
\begin{itemize}
    \item \emph{(Good Case)} Let us call a pair $(i, j)$ \emph{good} if there is some witness $k$ such that $(A[i, k], B[k, j]) \in X^2$ is good, and \emph{bad} otherwise. Our goal in this step is to compute a matrix $C \geq A * B$ that correctly solves all good entries (i.e., $C[i, j] = (A * B)[i, j]$ whenever $(i, j)$ is good). To this end, construct the matrices~$A'$ indexed by $[n] \times ([\sqrt{n}] \times X)$ and~$B'$ indexed by $([\sqrt{n}] \times X) \times [n]$ as follows:
    \begin{align*}
        A'[i, (k, y)] &= f(A[i, k], y), \\
        B'[(k, y), j] &= 
    \begin{cases}
        0 &\text{if $B[k, j] = y$,} \\
        \bot &\text{otherwise.}
    \end{cases}
    \end{align*}
    Denote their min-plus product by $C' = A' * B'$. Now define the matrix $C$ by $C[i, j] = \select(C'[i, j])$. Then indeed it holds for all pairs $(i, j)$ that
    \begin{align*}
        C[i, j]
        &= \vphantom{\min_k} \select(C'[i, j]) \\
        &= \min_k \select(f(A[i, k], B[k, j])) \\
        &\geq \min_k \select(\rank(A[i, k], B[k, j])) \\
        &= \min_k (A[i, k] + B[k, j]),
    \end{align*}
    where the inequality is due to Property~(ii). For any good pair this inequality is an equality by~(iii), so the matrix $C$ is as claimed.

    It remains to describe how to compute the min-plus product $C' = A' * B'$. The matrix $A'$ is not row-bounded-difference, but it is close to that. Indeed, from Property~(i) we get that (1) along any row~$i$ there are $\sqrt{n} \cdot |X|$ pairs of adjacent entries $A'[i, (k, y)]$ and $A'[i, (k, y')]$ that differ by at most $L$. Moreover, (2) there are $\sqrt{n}$ pairs of adjacent entries of the form $A'[i, (k, y)]$ and $A'[i, (k', y')]$ (where~$k \neq k'$), and these entries can differ by up to~$|X + X|$. To turn $A'$ into a row-bounded-difference matrix we will now insert new columns to $A'$, filled with dummy entries whose only purpose is to interpolate the gaps described before. It suffices to add $\sqrt{n} \cdot |X| \cdot L$ columns to deal with the gaps of type (1), and $\sqrt{n} \cdot |X + X|$ columns to deal with the gaps of type (2). In total the number of new columns is $\sqrt{n} \cdot |X| \cdot L + \sqrt{n} \cdot |X + X| = (L + K) n$. For each column added in this way we also add a corresponding row in $B'$ filled with $\bot$-entries. Let $A'', B''$ denote the resulting matrices. This operation leaves the min-plus product unchanged, $A' * B' = A'' * B''$.

    \begin{framed}
        \emph{Running Time:} In summary we compute the min-plus product $A'' * B''$, where both matrices can be viewed as square matrices of size at most $O((K + L) n)$, where $A''$ is row-bounded-difference, has entries bounded by $|X + X| \leq K |X| \leq K \sqrt{n}$, and where $B''$ has at most $n^{1.5}$ non-$\bot$ entries. This is an instance of the problem described in the lemma statement of size $N = O((K^2 + L) n)$. Therefore, the running time is $O(N^{2.5-\epsilon}) = O((K^5 + L^{2.5}) n^{2.5-\epsilon})$.
    \end{framed}

    \item \emph{(Bad Case)} It remains to correct $C[i, j]$ for all bad pairs $(i, j)$. To this end enumerate all bad pairs $(x, y) \in R$, and all $k \in [\sqrt{n}]$. Then enumerate all $i \in [n]$ with $A[i, k] = x$ and all $j \in [n]$ with $B[k, j] = y$. We update $C[i, j] \gets \min\set{C[i, j], A[i, k] + B[k, j]}$. If $(i, j)$ is bad then by definition we will correctly compute $C[i, j]$ in this step.
    
    \begin{framed}
        \emph{Running Time:} $O(|R| \cdot \sqrt{n} \cdot \rho n \cdot \rho n) = O(K D^2 / L \cdot \sqrt{n} \cdot n/D \cdot n/D) = O(n^{2.5} K / L)$, recalling that each integer appears in at most a $\rho \leq 1/D$-fraction of all rows and columns in $A$ and $B$.
    \end{framed}
\end{itemize}
The correctness is clear from before. The total running time is $O((K^5 + L^{2.5}) n^{2.5-\epsilon} + n^{2.5} K / L)$. Recall that $K \leq n^{\epsilon/6}$ and choose the parameter $L = n^{\epsilon/3}$. The total time becomes $O(n^{2.5-\epsilon/6})$ as claimed.
\end{proof}

To conclude \cref{thm:min-plus-bd} from \cref{lem:min-plus-bd} we additionally need to argue that $A$ is row-monotone, and that $B$ is column-bounded-difference and column-monotone. It turns out that these three extra constraints are without loss of generality, based on a sequence of observations mostly from previous work; see~\cite{GuPVX21} and~\cite[Section 2]{ChiDXZ22}.

\begin{lemma}[\cite{GuPVX21,ChiDXZ22}] \label{lem:min-plus-bd-obs}
The computation of an $n \times n \times n$ min-plus product $A * B$ where $A$ is row-bounded-difference can be reduced in time $O(n^2)$ to the computation of an $n \times n \times n$ min-plus product $A' * B'$ where~$A'$ is row-bounded-difference and row-monotone and $B'$ is column-bounded-difference and column-monotone.
\end{lemma}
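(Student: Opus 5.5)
We apply a sequence of potential adjustments that leave the min-plus product essentially unchanged. For an arbitrary vector $\lambda \in \Int^n$, replacing $A[i,k] \gets A[i,k] + \lambda[k]$ and $B[k,j] \gets B[k,j] - \lambda[k]$ leaves $A * B$ unchanged. Likewise, adding $\alpha[i]$ to row $i$ of $A$ and $\beta[j]$ to column $j$ of $B$ shifts $(A*B)[i,j]$ by the known amount $\alpha[i] + \beta[j]$, which we undo at the end. The whole reduction chooses such vectors, plus $O(n^2)$-time padding, so that the four properties hold.

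\emph{Step 1 (row-monotonicity of $A$).} Since $A$ is row-bounded-difference, there is a constant $c$ with $|A[i,k+1] - A[i,k]| \leq c$. Set $\lambda[k] = c \cdot k$. Then $A'[i,k] = A[i,k] + ck$ satisfies $0 \leq A'[i,k+1] - A'[i,k] \leq 2c$, so $A'$ is row-monotone and still row-bounded-difference (with constant $2c$). Next subtract $\alpha[i] = A'[i,1]$ from each row so that rows start at $0$. Each row then increases by at most $2c$ per step and ends at most $2cn = O(n)$, exactly as required. The matrix $B$ becomes $B[k,j] - ck$; we must keep track of this.

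\emph{Step 2 (making $B$ column-bounded-difference and column-monotone).} Here the key observation is that $A'$ is row-monotone with increments in $[0,2c]$. We may therefore replace each column $j$ of $B$ by its ``lower envelope'': define $B'[k,j] = \min_{k' \geq k}(B[k',j] + 2c(k'-k))$, or more directly $B'[k,j] = \min_{k'} (B[k',j] + 2c|k - k'|)$. Because $A'[i,k] \leq A'[i,k'] + 2c|k-k'|$, substituting $B'$ for $B$ does not change $A' * B$. Indeed, $B' \leq B$ pointwise, and any value $A'[i,k] + B[k',j] + 2c|k-k'|$ is at least $A'[i,k'] + B[k',j]$. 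This only uses the $2c$-Lipschitz property of the rows of $A'$, which holds.

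$B'$ is then $2c$-Lipschitz in each column, i.e., column-bounded-difference. It is computable in $O(n^2)$ time by two linear passes (distance transform). Then add $\mu[k] = -3ck$-type shifts, i.e., transfer a linear term between $A'$ and $B'$ again (adding $dk$ to $A'$ and subtracting it from $B'$), to make column $j$ of $B'$ monotone in the required direction (nonincreasing in $k$, as $B[n,j] \leq \dots \leq B[1,j]$). Adding $dk$ to rows of $A'$ preserves row-monotonicity and bounded difference. Finally subtract $\beta[j] = B'[n,j]$ from each column so that entries lie in $[0, O(n)]$.

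The main obstacle is Step 2. First, $B$ is unconstrained, possibly containing $\bot$ or huge values; the envelope trick handles this, since $\bot$ entries are simply ignored in the min, and if a column is entirely $\bot$ we treat it separately. Second, we must check that the envelope replacement commutes with the later linear shifts. I would do the shifts in the order above and verify the Lipschitz constants at each stage.
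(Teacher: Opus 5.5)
Your proposal is correct and follows essentially the paper's proof: both move a linear term $ck$ from $B$ into $A$ and replace each column of $B$ by the lower envelope $\min_{k'}(B[k',j] + c'|k-k'|)$, computed with two linear passes. The differences are cosmetic. You shift before taking the envelope, which forces the constant $2c$ and a second shift, whereas the paper takes the envelope first and then does a single shift ($+kc$ on $A$, $+(n-k)c$ on $B$). Your envelope also bounds each column's range to $O(n)$ and absorbs $\bot$ entries, so you do not need the paper's separate truncation of $B$ at $b+2u$.
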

\begin{proof}
Let $A, B$ be the given matrices, where $A$ is row-bounded-difference, i.e., $|A[i, k] - A[i, k + 1]| \leq c$ for some constant $c$. The reduction is to apply the following four simple transformations step-by-step.
\begin{itemize}
    \item \emph{(Make $A$ Small-Universe)} When $A$ is row-bounded-difference, by definition each row in $A$ only takes integer values in some interval $\set{a, \dots, a + O(n)}$. By subtracting $a$ from the entire row we can make sure that $A$ only takes values in $\set{0, \dots, u = O(n)}$. We can recover the original min-plus product $A * B$ from this transformation by adding back the respective offset $a$ to each row.
    \item \emph{(Make $B$ Small-Universe)} Next, we also restrict the range of entries in $B$. Focus on any column in~$B$, and let $b$ denote its minimum entry. All entries larger than $b + 2u$ and all $\bot$ entries cannot matter in the min-plus product $A * B$, so we simply replace these entries by $b + 2u$. Afterwards we can remove $b$ from the entire column; the resulting matrix has integer entries in $\set{0, \dots, 2u = O(n)}$. We can recover the original min-plus product $A * B$ from this transformation by adding back the respective offset $b$ to each column.
    \item \emph{(Make $B$ Column-Bounded-Difference)} Next, to ensure that $B$ is column-bounded-difference we replace each entry $B[k, j]$ by $B'[k, j] = \min_{k'} (B[k', j] + |k' - k| \cdot c)$. We first claim that this transformation leaves the min-plus product unchanged, i.e., $A * B = A * B'$. The ``$\geq$'' direction is clear, so focus on the ``$\leq$'' direction. Suppose that $(A * B')[i, j] = A[i, k] + B'[k, j]$ and that $B'[k, j] = B[k', j] + |k' - k| \cdot c$. As $A$ is row-bounded-difference (with constant $c$), it follows that indeed
    \begin{align*}
        (A * B)[i, j]
        &\leq A[i, k'] + B[k', j] \\
        &\leq A[i, k] + |k - k'| \cdot c + B[k', j] \\
        &= A[i, k] + B'[k, j] \\
        &= (A * B')[i, j].
    \end{align*}
    Moreover, the augmented matrix $B'$ is indeed column-bounded-difference as $|B'[k, j] - B'[k+1, j]| \leq c$, and still consists of integer entries in the range $\set{0, \dots, O(n)}$.
    
    Finally, we remark that $B'$ can be computed in quadratic time from $B$. We will separately compute $B_L[k, j] = \min_{k' \leq k} (B[k', j] + (k - k') \cdot c)$ and $B_R[k, j] = \min_{k' \geq k} (B[k', j] + (k' - k) \cdot c)$, and then take their entry-wise minimum. But each column in $B_L$ can be computed in linear time by a dynamic program evaluating $B_L[k, j] = \min\set{B'[k, j], B_L[k - 1, j] + c}$, and similarly for $B_R$.
    
    In the following we will continue to write $B$ for the augmented matrix $B'$.
    
    \item \emph{(Make $A$ Row-Monotone and $B$ Column-Monotone)} We finally turn $A$ into a row-monotone matrix and~$B$ into a column-monotone matrix based on the following observation due to~\cite{GuPVX21}. Replace each entry $A[i, k]$ by $A[i, k] + kc$, and each entry $B[k, j]$ by $B[k, j] + (n - k)c$. The resulting matrices still have entries in $\set{0, \dots, O(n)}$. Each row in $A$ forms a monotonically non-decreasing sequence, recalling that before the transformation any two adjacent entries in the same row differed by at most $c$. Similarly, each column in $B$ forms a monotonically non-increasing sequence. Note that $A$ also remains row-bounded-difference (though with a larger constant,~$2c$), and similarly $B$ remains column-bounded-difference. The min-plus product $A * B$ remains unchanged, up to adding the fixed offset $n c$ to all entries. \qedhere
\end{itemize}
\end{proof}

The proof of \cref{thm:min-plus-bd} is now an immediate combination of \cref{lem:min-plus-bd} and \cref{lem:min-plus-bd-obs}.

\subsection{Min-Witness Product} \label{sec:intermediate:sec:min-witness-prod}
Finally, we consider the Min-Witness Product problem first introduced by Czumaj, Kowaluk, and Lingas~\cite{CzumajKL07}. It takes an important role in the class of intermediate-complexity problems as it is among the \emph{simplest} such problems: It can be solved in time $\tilde O(n^{2+\mu})$ by fast rectangular matrix multiplication, and is known to reduce to many known intermediate problems (via typically very simple reductions). On the flip-side this means that it is among the \emph{hardest} intermediate problems to establish tight lower bounds. To date, only non-matching lower bounds are known---an $n^{11/5-o(1)}$-time lower bound based on Strong APSP, and an $n^{7/3-o(1)}$-time lower bound based on Directed Unweighted APSP~\cite{ChanVX23}. Here we give a strengthened albeit still non-matching lower bound of $n^{7/3-o(1)}$ based on the APSP Hypothesis.

\begin{definition}[Min-Witness Product] \label{def:min-witness-prod}
The \emph{min-witness product} of two matrices $A, B \in \set{0, 1}^{n \times n}$ is the matrix $C \in ([n] \cup \set{\bot})^{n \times n}$ defined by
\begin{equation*}
    C[i, j] = \min\set{k \in [n] : A[i, k] = B[k, j] = 1}
\end{equation*}
(where we understand that the minimum is $\bot$ if there is no $k$ with $A[i, k] = B[k, j] = 1$). The \emph{Min-Witness Product} problem is to compute the min-witness product of two given matrices $A, B$.
\end{definition}

\begin{lemma}[Min-Witness Product] \label{lem:min-witness-prod}
Min-Witness Product cannot be solved in time $O(n^{7/3-\epsilon})$ (for any constant $\epsilon > 0$), unless the APSP Hypothesis fails.
\end{lemma}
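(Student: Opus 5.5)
We reduce from the uniform Min-Plus Product problem of \cref{cor:min-plus-rect-unif}, in rectangular form: given $A, B$ of sizes $n \times n^{\alpha} \times n$ with entries in a set $X$, $|X| \le n^\alpha$, compute $A * B$. The parameter $\alpha$ will be fixed later (we expect $\alpha = 1/3$). The approach mirrors the Chan--Vassilevska W.--Xu reduction from bounded-universe Min-Plus Product to Min-Witness Product. We replace the bounded universe $\{0,\dots,n^\alpha\}$ by the $D$-uniform set $X$ and use ranks instead of values.

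The encoding goes through a Min-Witness Product whose inner dimension is indexed by triples $(k, y, \text{rank})$. Sort $X$ as $x_1 < \dots < x_D$. For an inner index $k$, a value $y \in X$ of $B[k,j]$, and a value $x \in X$ of $A[i,k]$, the triangle weight is $x + y$. The difficulty is that sums in $X+X$ are not orderable by a single index coming from one side. We handle this by sorting the set $X + X$, of size at most $D^2 = n^{2\alpha}$. Then $(x,y)$ is identified with its sum-rank $\rank(x,y) \in [D^2]$.

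The inner dimension is indexed by pairs $(s,k) \in [D^2] \times [n^\alpha]$, ordered lexicographically with $s$ first. We set $A'[i,(s,k)] = 1$ iff $s = \rank(A[i,k], y)$ for some $y$ with a $0/1$ constraint on the $B$-side, which fails because $y$ is unknown on the $A$-side. To fix this we also split the $i$-side. The left matrix is indexed by $i$ and the middle by $(s,k)$. We set $A'[i,(s,k)] = 1$ iff there is $y \in X$ with $\rank(A[i,k],y) = s$, and $B'[(s,k),j] = 1$ iff $\rank(x,B[k,j]) = s$ for some $x$.

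This creates false positives where the two $y$'s differ. To avoid them, since $x$ and $y$ determine $s$, we instead index the middle by $(s,k,y)$ and require exact agreement on $y$. We set $A'[i,(s,k,y)] = [\rank(A[i,k],y)=s]$ and $B'[(s,k,y),j] = [B[k,j]=y]$, with inner dimension $D^2 \cdot n^\alpha \cdot D = n^{4\alpha}$. The min-witness then returns the smallest sum-rank, hence $(A*B)[i,j]$ via $\select$.

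The sizes are $n \times n^{4\alpha} \times n$. Choosing $\alpha = 1/4$ would give a square instance, but this yields only $n^{9/4}$, so the reduction must be more economical. Following \cite{ChanVX23}, the fix is to enumerate $x$ on the $B$-side instead. We build matrices that drop the $y$-coordinate and pay with repetitions in rows. We index the outer dimension by $(i, \text{block})$, and split the inner dimension into blocks of $n$ columns, giving $n^{4\alpha-1}$ square products. Balancing the total cost $n^{4\alpha-1}\cdot n^{7/3-\epsilon}$ against the target $n^{2+\alpha}$ forces $\alpha = 1/3$ in the CVX-style accounting.

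The main obstacle is reducing the $D^2$ blowup from indexing by the sumset. In \cite{ChanVX23} the universe is bounded, so sums lie in an interval of length $2u$ rather than $D^2$. Here we would instead use the low-doubling version, \cref{cor:min-plus-rect-low-doubling}, with $K \le n^{\epsilon'}$. Then $|X+X| \le K D$, the inner index $(s,k,y)$ ranges over $KD \cdot n^\alpha$, and the reduction becomes identical to the bounded-universe one up to $n^{O(\epsilon')}$ factors. With this, the $\alpha = 1/3$ accounting gives the $n^{7/3}$ bound. Regularity ($\rho \le 1/D$) is available if a brute-force cleanup of rare rank collisions is needed.
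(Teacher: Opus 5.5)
Your final construction can be rescued, but the accounting you give for it is wrong exactly where the exponent $7/3$ is supposed to come from. The ``CVX-style'' step is not a concrete construction, and its balancing is false. At $\alpha = 1/3$ one gets $n^{4\alpha-1}\cdot n^{7/3-\epsilon} = n^{8/3-\epsilon}$, which exceeds the target $n^{2+\alpha} = n^{7/3}$.

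In the low-doubling variant, the index $(s,k,y)$ ranges over $|X+X|\cdot n^{\alpha}\cdot D \le KD^2 n^{\alpha}$, not $KD\cdot n^{\alpha}$. Your count would already give inner dimension $Kn$ at $\alpha = 1/2$, hence an $n^{5/2-o(1)}$ bound. That would be a matching lower bound (for $\omega=2$), which the paper explicitly does not obtain. With the correct count $Kn^{3\alpha}$ you must take $\alpha = 1/3$. Then split the inner dimension, in your lexicographic order, into $K$ consecutive blocks of size $n$, and keep for each $(i,j)$ the answer of the first block that is not $\bot$. Applying \cref{cor:min-plus-rect-low-doubling} with $K = n^{\epsilon/2}$ then does give the lemma. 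So the route survives, but only after fixing the count, and only by invoking the stronger low-doubling reduction.

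What you are missing is that the superfluous factor $D$ does not come from the sumset. It comes from carrying both $s$ and $y$. For fixed $y$, the sum-rank $s$ determines $x = \mathrm{select}(s) - y$. So every column $(s,k,y)$ with $\mathrm{select}(s) - y \notin X$ is identically zero in $A'$ and can simply be deleted. What remains is an indexing by $(k,x,y) \in [n^{1/3}]\times X\times X$, and this is exactly the paper's proof:
set $A'[i,(k,x,y)] = 1$ iff $A[i,k]=x$;
set $B'[(k,x,y),j] = 1$ iff $B[k,j]=y$;
order the inner dimension by $x+y$, with ties broken arbitrarily.
The minimal witness $(k,x,y)$ returned for $(i,j)$ satisfies $x+y = (A*B)[i,j]$. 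The inner dimension is $n^{1/3}|X|^2 \le n$, so one square Min-Witness product suffices. The plain uniform \cref{cor:min-plus-rect-unif} then gives the contradiction, with no doubling assumption, no blocks, and no rank/select bookkeeping.
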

\begin{proof}
Suppose that Min-Witness Product is in time $O(n^{7/3-\epsilon})$ for some $\epsilon > 0$. We show that in this case $\MinPlus(n, n^{1/3}, n \mid D \leq n^{1/3}) = O(n^{7/3-\epsilon})$ which contradicts the APSP Hypothesis by \cref{cor:min-plus-rect-unif}.

Let $A, B$ be the given matrices of respective sizes $n \times n^{1/3}$ and $n^{1/3} \times n$ (where we assume that $n^{1/3}$ is an integer), and let $X \subseteq \Int$ denote the set of entries appearing in $A$ and $B$ with size $|X| \leq n^{1/3}$. Construct the following two augmented matrices $A', B'$ indexed by $[n] \times ([n^{1/3}] \times X \times X)$ and $([n^{1/3}] \times X \times X) \times [n]$, respectively:
\begin{align*}
    A'[i, (k, x, y)] &=
    \begin{cases}
        1 &\text{if $A[i, k] = x$,} \\
        0 &\text{otherwise,}
    \end{cases} \\
    B'[(k, x, y), j] &=
    \begin{cases}
        1 &\text{if $B[k, j] = y$,} \\
        0 &\text{otherwise.}
    \end{cases}
\end{align*}
Moreover, define an arbitrary order $\leq$ on triples $[n^{1/3}] \times X \times X$ satisfying that $(k_1, x_1, y_1) \leq (k_2, x_2, y_2)$ whenever $x_1 + y_1 \leq x_2 + y_2$. Then compute the min-witness product of $A'$ and $B'$, where we order the inner dimension $[n^{1/3}] \times X \times X$ by that order $\leq$. By construction, in the resulting min-witness product each entry~$(i, j)$ is a triple $(k, x, y)$ where $(A * B)[i, j] = x + y$, and where $k$ is a witness of $(A * B)[i, j]$. In particular, we can easily read off the desired min-plus product $A * B$.
\end{proof}

\section*{Acknowledgements}
I am very grateful to Amir Abboud and Karl Bringmann for valuable discussions, to Amir Abboud and Leo Wennmann for many helpful comments on an earlier draft of this paper, and to Ernie Croot for answering some questions related to~\cite{AmirkhanyanBC18}.

\bibliographystyle{plainurl}
\bibliography{main}

\appendix
\section{Sum-Order-Preserving Hashing} \label{sec:hashing}
In this section, we recap the concept of sum-order-preserving hashing, provide a proof of \cref{thm:sum-order-preserving-quasi-poly}, and further discuss the plausibility of \cref{hypo:hashing}. Recall that a function $h : X \to \Int$ (for some integer set~$X$) is \emph{sum-order-preserving} if for all $x_1, x_2, y_1, y_2 \in X$:
\begin{equation*}
    x_1 + x_2 < y_1 + y_2 \quad\text{implies}\quad h(x_1) + h(x_2) < h(y_1) + h(y_2).
\end{equation*}
Analogously, we call $h$ \emph{order-preserving} if for all $x, y \in X$:
\begin{equation*}
    x < y \quad\text{implies}\quad h(x) < h(y).
\end{equation*}

\subsection{Quasi-Polynomial Bounds} \label{sec:hashing:sec:quasi-poly}
We will first focus on the proof of \cref{thm:sum-order-preserving-quasi-poly} (restated next) which claims that for any integer set $X$ with doubling $K$ there exists a large fraction, depending \emph{quasi-polynomially} on $K$, that can be hashed with a sum-order-preserving hash function.

\thmsumorderpreservingquasipoly*

The proof of this theorem follows by combining the work of Amirkhanyan, Bush, and Croot~\cite{AmirkhanyanBC18} with Sanders' state-of-the-art bounds~\cite{Sanders12} for the Freiman-Ruzsa theorem. To state these two ingredients, we start with some definitions. A set of the form
\begin{equation*}
    P = \set*{ \sum_{i=1}^d \ell_i a_i : |\ell_i| \leq L_i}
\end{equation*}
is called a (centered) \emph{generalized arithmetic progression}. Here, $d$ is called the \emph{dimension} of the progression. We say that $P$ is \emph{proper} if each element $p \in P$ has a unique representation of the form \smash{$\sum_{i=1}^d \ell_i a_i$} for $|\ell_i| \leq L_i$. Then Sanders' result~\cite{Sanders12} can be stated as follows; see also~\cite{Lovett15} for an exposition of this result.

\begin{theorem}[Quasi-Polynomial Freiman-Ruzsa~\cite{Sanders12}] \label{thm:freiman-quasi-poly}
Let $X$ be an integer set with $|X + X| \leq K |X|$. Then $X$ can be covered by $\exp((\log K)^{O(1)})$ translates of a proper generalized arithmetic progression with dimension $O((\log K)^{O(1)})$ and size at most $\exp((\log K)^{O(1)}) |X|$. 
\end{theorem}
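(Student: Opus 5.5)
We would derive the covering statement from the \emph{Bogolyubov--Ruzsa} form of Sanders' theorem, which we take as the black-box input from~\cite{Sanders12}. In that form: if $|X+X|\le K|X|$, then $2X-2X$ contains a proper generalized arithmetic progression $P$ of dimension $d=O((\log K)^{O(1)})$ and size $|P|\ge \exp(-(\log K)^{O(1)})\,|X|$. The plan is to convert this containment statement into a covering of $X$ by translates of a proper progression. We use only Ruzsa-type covering (in the spirit of \cref{lem:sumset-cover}, but without the logarithmic loss) together with the Plünnecke--Ruzsa inequality (\cref{lem:pluennecke-ruzsa}).

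The first step is to shrink $P$ so that its difference set remains inside $P$. Write $P=\set{\sum_i \ell_i a_i : |\ell_i|\le L_i}$ and let $P'$ be the progression with the same generators $a_i$ and lengths $\lfloor L_i/2\rfloor$. Then:
\begin{itemize}
    \item $P'-P'$ is the progression with lengths $2\lfloor L_i/2\rfloor\le L_i$, so $P'-P'\subseteq P$.
    \item Since $P$ is proper, every progression with the same generators and smaller lengths is proper as well; in particular $P'-P'$ is a proper generalized arithmetic progression of dimension $d$.
    \item By counting lattice points, $|P'|\ge 2^{-d}|P|$ (with a little care when some $L_i$ are $0$ or $1$, where the corresponding coordinate can simply be dropped).
\end{itemize}

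The second step is the covering itself. Take a maximal set $T\subseteq X$ such that the translates $t+P'$ for $t\in T$ are pairwise disjoint. By maximality, every $x\in X$ satisfies $(x+P')\cap(t+P')\neq\emptyset$ for some $t\in T$, so $x\in t+P'-P'$. Hence $X\subseteq T+(P'-P')$. Disjointness gives $|T|\,|P'|\le |X+P'|$. Since $P'\subseteq P\subseteq 2X-2X$, Plünnecke--Ruzsa yields
\[
|X+P'|\le|3X-2X|\le K^5|X|.
\]
Combining these bounds,
\[
|T|\le \frac{K^5|X|\,2^d}{|P|}\le K^5\,2^{d}\exp((\log K)^{O(1)})=\exp((\log K)^{O(1)}).
\]
Finally, the covering progression $Q:=P'-P'$ has size at most $|P|\le|2X-2X|\le K^4|X|$. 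This gives all three claimed bounds: the number of translates, the dimension, and the size of the progression.

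The main obstacle is the properness bookkeeping. Sanders' theorem is most often stated in terms of Bohr sets or convex coset progressions rather than proper GAPs. The literature statement we rely on must therefore genuinely provide a proper GAP inside $2X-2X$ with quasi-polynomial losses in $\mathbb Z$. If it only provides a convex progression, we would first apply the standard lemma that a $d$-dimensional convex progression contains a proper GAP of comparable size, losing a factor $d^{O(d)}=\exp((\log K)^{O(1)})$, which is acceptable; see the exposition~\cite{Lovett15}. The halving step then keeps all subsequent progressions proper for free.
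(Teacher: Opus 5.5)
Your derivation is correct. It differs from the paper only in where the black box sits. The paper gives no proof of this statement; it imports the covering form directly from \cite{Sanders12}, with \cite{Lovett15} as an exposition. You instead import Sanders' Bogolyubov--Ruzsa form: a proper symmetric progression $P\subseteq 2X-2X$ of dimension $d=O((\log K)^{O(1)})$ and size $\exp(-(\log K)^{O(1)})|X|$. You then derive the covering yourself via Ruzsa's maximal-disjoint-translates argument together with Pl\"unnecke--Ruzsa.

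Your route handles the one delicate point explicitly, namely that the covering progression is \emph{proper}. The paper genuinely needs this later, when it shrinks $Q$ by factors $4$ and $8$ in the proof of \cref{thm:sum-order-preserving-quasi-poly}. In your argument, $P'-P'$ is a sub-progression of the proper $P$ with the same generators, so it inherits properness automatically. You also get concrete bounds: $|Q|\le|2X-2X|\le K^4|X|$ and $|T|\le K^5|X|/|P'|$. The paper's route is a one-line citation, but it depends on the source stating the covering form with a proper GAP.

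One small slip: the claim $|P'|\ge 2^{-d}|P|$ is false in general. For $L_i=3$ the per-coordinate ratio is $3/7$, and dropping coordinates does not fix this. The correct bound is $(2\lfloor L_i/2\rfloor+1)/(2L_i+1)\ge 1/3$, so $|P'|\ge 3^{-d}|P|$, which is equally harmless. Your convex-progression fallback is fine but should be unnecessary. In $\mathbb Z$, a proper symmetric coset progression (the form in which Sanders states his result) is simply a proper symmetric GAP.
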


The main result from~\cite{AmirkhanyanBC18} can be stated as follows.

\begin{theorem}[Condensing Lemma~\cite{AmirkhanyanBC18}] \label{thm:condensing}
Let $P$ be a generalized arithmetic progression of the form
\begin{equation*}
    P = \set*{ \sum_{i=1}^d \ell_i a_i : |\ell_i| \leq L_i}
\end{equation*}
such that
\begin{equation*}
    \set*{ \sum_{i=1}^d \ell_i a_i : |\ell_i| \leq 4 L_i}
\end{equation*}
is also proper. Then there are integers $a_1', \dots, a_d'$ such that the map $h : P \to \Int$ defined by
\begin{equation*}
    h\parens*{\sum_{i=1}^d \ell_i a_i} = \sum_{i=1}^d \ell_i a_i'
\end{equation*}
is order-preserving, and such that $|h(x)| \leq d^{O(d)} |P|$ for all $x \in P$.
\end{theorem}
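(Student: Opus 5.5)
The plan is to reduce the order-preservation requirement to a finite system of strict homogeneous linear inequalities in the unknowns $a_1', \dots, a_d'$. We then find a small integer solution by Cramer's rule applied to the extreme rays of the resulting cone.

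First, $h$ is order-preserving on $P$ iff for all $x = \sum_i \ell_i a_i$ and $y = \sum_i m_i a_i$ in $P$ with $x < y$ we have $h(x) < h(y)$. Setting $\lambda = m - \ell$, this says: for every nonzero integer vector $\lambda$ with $|\lambda_i| \leq 2L_i$ that arises as such a difference, $\operatorname{sign}(\lambda \cdot a') = \operatorname{sign}(\lambda \cdot a)$. The properness of the dilated progression is used exactly here. It guarantees that $\lambda \cdot a \neq 0$ for every nonzero $\lambda$ in the box $\Lambda = \prod_i \{-2L_i, \dots, 2L_i\}$, so every such sign is well-defined. (Properness of the $2$-fold dilation already suffices; the $4$-fold assumption gives slack and also makes $h$ well-defined on $P$.) Hence the admissible $a'$ are exactly the integer points of the open cone
\begin{equation*}
    \mathcal C = \set*{ z \in \mathbb R^d : \operatorname{sign}(\lambda \cdot z) = \operatorname{sign}(\lambda \cdot a) \text{ for all } \lambda \in \Lambda \setminus \set{0}}.
\end{equation*}
This cone is nonempty since $a \in \mathcal C$, and it is open, hence full-dimensional. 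Its closure $\overline{\mathcal C}$ is pointed. Indeed, $\Lambda$ contains the unit vectors $e_i$, so any $z$ lying in the lineality space satisfies $z_i = e_i \cdot z = 0$ for all $i$.

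Second, I would bound the extreme rays of the closed polyhedral cone $\overline{\mathcal C}$. Each extreme ray is cut out by $d-1$ linearly independent tight constraints $\lambda^{(1)} \cdot z = \dots = \lambda^{(d-1)} \cdot z = 0$ with $\lambda^{(s)} \in \Lambda$. By Cramer's rule it is spanned by the integer vector $r$ whose $i$-th coordinate is the $(d-1) \times (d-1)$ minor of these rows with column $i$ deleted. Expanding the determinant, and using that column $j$ of the minor has entries bounded by $2L_j$, gives
\begin{equation*}
    |r_i| \leq (d-1)! \prod_{j \neq i} 2L_j \leq (d-1)!\, 2^{d}\, |P| / L_i ,
\end{equation*}
where $|P| = \prod_j (2L_j + 1)$ since $P$ is proper.

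Third, since $\overline{\mathcal C}$ is full-dimensional and pointed, it is generated by its extreme rays, and among these I pick $d$ linearly independent ones $r^{(1)}, \dots, r^{(d)}$. These span a full-dimensional simplicial subcone of $\overline{\mathcal C}$, so its interior lies in the interior of $\overline{\mathcal C}$. That interior equals $\mathcal C$, because $\mathcal C$ is an open cone defined by strict inequalities whose closure is $\overline{\mathcal C}$. Thus $a' := r^{(1)} + \dots + r^{(d)}$ is an integer point of $\mathcal C$, and $h$ is order-preserving. Finally, for $x = \sum_i \ell_i a_i \in P$ we get
\begin{equation*}
    |h(x)| \leq \sum_i L_i |a_i'| \leq d \cdot d \cdot (d-1)!\, 2^d\, |P| = d^{O(d)} |P| .
\end{equation*}

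The step I expect to be most delicate is the cone-geometry argument in the third step. One has to check that the relevant constraint set really consists of \emph{all} nonzero $\lambda \in \Lambda$, so that the cone has the stated bounded-coefficient description, and that a sum of $d$ independent extreme rays lands strictly inside rather than on a facet. If the interiority claim turns out to be shaky, the fallback is a perturbation argument. Take $a' = M \cdot (r^{(1)} + \dots + r^{(d)}) + a''$ with a suitable small correction $a''$ and a scalar $M = d^{O(d)}$ that makes all inequalities strict while keeping the size bound.
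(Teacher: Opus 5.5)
Your proof is correct, and it takes a genuinely different route from the one the paper relies on. The paper gives no proof of this statement: it imports it from \cite{AmirkhanyanBC18} and remarks only that the argument there is based on geometry of numbers. Yours is essentially the classical Cramer's-rule argument for bounding integer weights of linear threshold functions. The admissible $a'$ are the integer points of an open polyhedral cone whose normals lie in the box $\Lambda$. Each extreme ray of its closure is a vector of signed $(d-1)\times(d-1)$ minors, controlled column by column via $|\lambda_j|\le 2L_j$. A sum of $d$ independent extreme rays is then interior. That last step is fine as written, since $\operatorname{int}\overline{\mathcal C}=\mathcal C$ for any nonempty open convex set (directly: if $\lambda\cdot z=0$ for some nonzero $\lambda\in\Lambda$, then one of $z\pm\epsilon\lambda$ leaves $\overline{\mathcal C}$). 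So the perturbation fallback is unnecessary. Two small remarks. First, discard coordinates with $L_i=0$ beforehand and set $a_i'=0$ there; otherwise $e_i\notin\Lambda$ and your bound divides by $L_i$. Second, your parenthetical undersells the argument. $\Lambda$ is exactly the difference set of the index box $\prod_i\set{-L_i,\dots,L_i}$, so nonvanishing of $\lambda\cdot a$ on $\Lambda\setminus\set{0}$ is equivalent to properness of $P$ itself, and no dilation hypothesis is needed at all. Your route therefore buys a short self-contained proof under a weaker hypothesis, with an explicit constant (your own estimates give $|h(x)|\le d\cdot d!\,|P|$). What the cited geometry-of-numbers route is meant to buy, per the paper's discussion in \cref{sec:hashing:sec:poly}, is closeness to the Freiman--Ruzsa machinery. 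An example is a convex-progression version on the way to polynomial bounds, where your column estimate, which depends on the box shape, would have to give way to determinant--volume estimates.
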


\begin{proof}[Proof of \cref{thm:sum-order-preserving-quasi-poly}]
Let $X \subseteq \Int$ be arbitrary with doubling $K$. From \cref{thm:freiman-quasi-poly} it follows that~$X$ can be covered by $\exp((\log K)^{O(1)})$ translates of a proper generalized arithmetic progression $Q$ with dimension $d \leq O((\log K)^{O(1)})$ and size $\exp((\log K)^{O(1)}) |X|$. We express $Q$ as
\begin{equation*}
    Q = \set*{ \sum_{i=1}^d \ell_i x_i : |\ell_i| \leq L_i },
\end{equation*}
and let
\begin{equation*}
    P = \set*{ \sum_{i=1}^d \ell_i x_i : |\ell_i| \leq \floor*{\frac{L_i}{4}} }.
\end{equation*}
Note that $Q$ satisfies the conditions of \cref{thm:condensing}, so there are integers $x_1', \dots, x_d'$ such that the map $h' : Q \to \Int$ defined by
\begin{equation*}
    h'\parens*{\sum_{i=1}^d \ell_i x_i} = \sum_{i=1}^d \ell_i x_i'.
\end{equation*}
is order-preserving, and such that $|h'(x)| \leq d^{O(\log d)} |Q|$. Let
\begin{equation*}
    R = \set*{ \sum_{i=1}^d \ell_i x_i : |\ell_i| \leq \floor*{\frac{L_i}{8}} }.
\end{equation*}
We claim that the restriction of $h'$ to the domain $R$ is sum-order-preserving. Indeed, fix any $x_1, x_2, y_1, y_2 \in R$, and suppose that $x_1 + x_2 < y_1 + y_2$. Note that $x_1 + x_2$ is an element of $P$, and thus $h'(x_1) + h'(x_2) = h'(x_1 + x_2)$ by definition. The same holds for $y_1$ and $y_2$. Moreover, we have $h'(x_1 + x_2) < h'(y_1 + y_2)$ as $h'$ is order-preserving. Combining these facts we indeed have $h'(x_1) + h'(x_2) < h'(y_1) + h'(y_2)$. 

Next, observe that $O(1)^d = \exp((\log K)^{O(1)})$ translates of $R$ cover $P$, and in turn $\exp((\log K)^{O(1)})$ translates of $P$ cover $X$. Hence, for some translate $t$ the set $Y = (R + t) \cap X$ has size $|X| / \exp((\log K)^{O(1)})$. Let $h : Y \to \Int$ be defined by $h(x) = h'(x - t)$. The sum-order-preserving property is maintained under linear transformations, hence $h$ is sum-order-preserving. Moreover, recall that for all $x \in Y$ we have that
\begin{align*}
    |h(x)| = |h'(x - t)|
    &\leq d^{O(d)} |Q| \\
    &\leq d^{O(d)} O(1)^d |P| \\
    &\leq d^{O(d)} O(1)^d \exp((\log K)^{O(1)}) |X| \\
    &\leq \exp((\log K)^{O(1)}) |X|,
\end{align*}
and thus $h : Y \to \set{-K' |X|, \dots, K' |X|}$ for some $K' = \exp((\log K)^{O(1)})$.

A final cosmetic modification is to restrict the range of $h$ to $\set{0, \dots, |X|}$ as in the theorem statement. To this end partition $\set{-K' |X|, \dots, K' |X|}$ into $O(K')$ intervals of size at most $|X|$. Take the range with largest preimage in $Y$, and restrict $Y$ to that preimage. This preserves at least an $\exp((\log K)^{O(1)})$-fraction of the elements in $Y$ and the remaining map is clearly still sum-order-preserving.
\end{proof}

\subsection{Polynomial Bounds?} \label{sec:hashing:sec:poly}
It is an open question whether the quasi-polynomial bounds from the previous section can be improved to polynomial. This is exactly \cref{hypo:hashing}---the hypothesis that our full universe reduction for APSP is conditioned on:

\hypohashing*

In the following paragraphs we muse on a possible proof of \cref{hypo:hashing}.

It seems plausible that a proof of \cref{hypo:hashing} could work along the same lines as in the previous subsection. This would require two improvements: strengthening the Freiman-Ruzsa theorem as well as the condensing lemma to have polynomial bounds. The first is exactly the famous \emph{Polynomial Freiman-Ruzsa (PFR) Conjecture}. Its finite field analogue has recently been resolved in a celebrated paper by Gowers, Green, Manners, and Tao~\cite{GowersGMT25}. The second is not as well-studied. However, the proof of the condensing lemma relies on the same ``geometry of numbers'' technique that the state-of-the-art proofs for the Freiman-Ruzsa theorem also rely on~\cite{Sanders12}. For this reason it is reasonable to hope that a resolution of the PFR Conjecture might resolve our second issue along the way.

On a technical level, there is one more obstacle: The PFR Conjecture is known to be \emph{false} for generalized arithmetic progressions by a counterexample due to Lovett and Regev~\cite{LovettR18}. That is, we cannot expect that \cref{thm:freiman-quasi-poly} as it is stated here can be improved to polynomial bounds. Instead, the usual formulation of the PFR Conjecture involves more general objects called \emph{convex progressions}; see also~\cite{Manners17}. The consequence for us is that, in a hypothetical proof of \cref{hypo:hashing} along the lines we described before, the condensing lemma would also need to be generalized to convex progressions. Alternatively, this issue could be avoided by resorting to the following even weaker version of \cref{hypo:hashing}:

\begin{hypothesis} \label{hypo:hashing-weak}
For every $\epsilon > 0$ there is some constant $c \geq 0$ such that the following holds. For every integer set $X$ with doubling $|X + X| \leq K|X|$ there is a subset $Y \subseteq X$ of size $|Y| \geq \Omega(|X|^{1-\epsilon} / K^c)$ and a sum-order-preserving function $h : Y \to \set{0, \dots, |X|}$. Moreover, given $X$ one can compute $Y$ and $h$ in time $|X|^{3-\Omega_\epsilon(1)} K^{O_\epsilon(1)}$.
\end{hypothesis}

Indeed, when allowing a small polynomial loss of the form $|X|^\epsilon$ then Lovett and Regev's counterexample does no longer apply, and it is conceivable that the proof avoids convex progressions. \cref{hypo:hashing-weak} is still strong enough to imply all of our results implied by \cref{hypo:hashing}; see \cref{rem:hashing-weak}.

As a final note, recall that for us it is necessary to also \emph{compute} $Y$ and $h$ (as stated in \cref{hypo:hashing,hypo:hashing-weak}). This is perhaps the smallest concern---prior work on additive combinatorics in algorithm design has already ported many existential theorems to computational versions~\cite{ChanL15,BringmannW21,BringmannN21,AbboudBF23,RandolphW24,FischerJX25}, and most of these adaptations did not deviate significantly from their original proofs (a notable exception is~\cite{ChenMZ25}); relatedly, there has already been recent work on turning the finite-field PFR theorem~\cite{GowersGMT25} into an efficient algorithm~\cite{ArunachalamCDG26}.
\section{Derandomization of the Conflict-Free Covering Lemma} \label{sec:covering}
The purpose of this section is to complete the missing proof of \cref{lem:covering}. The overall idea is to derandomize the simple randomized algorithm described earlier in \cref{sec:exact-tri-low-rank} by the method of conditional expectations, however, to achieve a sufficiently efficient algorithm some technical details are necessary.

\lemcovering*
\begin{proof}
Let $\mathcal S \gets \emptyset$ be the collection of sets we are about to construct, and let $I \subseteq [n]$ denote the set of remaining items $i$ that are not covered by $\mathcal S$ (i.e., $I = \set{ i \in [n] : \nexists S \in \mathcal S : \text{$x_i \in S$ and $C_i \cap S = \emptyset$}}$). We will repeatedly construct sets $S$ to be inserted into $\mathcal S$. We describe how to construct the next such set~$S$. Initialize $S \gets \emptyset$, and consider the partition $I = B(S) \sqcup G(S) \sqcup U(S)$ into the remaining \emph{bad}, \emph{good}, and \emph{undecided} items defined by
\begin{align*}
    B(S) &= \set{i \in I : C_i \cap S \neq \emptyset}, \\
    G(S) &= \set{i \in I : \text{$C_i \cap S = \emptyset$ and $x_i \in S$}}, \\
    U(S) &= \set{i \in I : \text{$C_i \cap S = \emptyset$ and $x_i \not\in S$}}
\end{align*}
which we will maintain throughout the construction of $S$. Write
\begin{equation*}
    \phi(S) = |G(S)| - \frac{|B(S)|}{2s}
\end{equation*}
for the \emph{potential} of $S$. As we will prove shortly, we are always in one of two cases: (1) there is some $j \in [r]$ so that when inserting $j$ into $S$ the potential increases by at least $\phi(S \cup \set{j}) - \phi(S) \geq |I| / (8r)$, or (2) there are many good items, $|G(S)| \geq |I| / (16s)$. While we are in the first case we insert the respective element into $S$. When we are eventually in the second case we have completed the construction of $S$, so we update~\makebox{$\mathcal S \gets \mathcal S \cup \set{S}$}, update $I$ appropriately (i.e., remove all indices $i$ covered by $S$), and, while~\makebox{$I \neq \emptyset$}, proceed to the construction of the next set $S$.

This almost completes the description of the algorithm, but we have not described yet how to efficiently find such an element $j \in [r]$. We will now first analyze that the algorithm is correct, and later describe the missing efficient implementation.

\paragraph{Correctness}
First, it is clear that the algorithm terminates after at most $|\mathcal S| \leq O(s \log n)$ rounds. Indeed, in each round we decrease $|I|$ by at least $|G(S)| \geq |I|  / (16s)$, so after $O(s)$ rounds the size of $|I|$ has at least halved, and after $O(s \log n)$ rounds the remaining set $I$ must be empty.

It remains to prove the key claim that whenever $|G(S)| < |I| / (16s)$ we can find some $j \in [r]$ with a potential increase of at least $\phi(S \cup \set{j}) - \phi(S) \geq |I| / (8r)$. In the construction of $S$ the initial potential $\phi(\emptyset)$ is zero and it stays nonnegative throughout, thus we can bound the number of bad items by:
\begin{equation*}
    |B(S)| = 2s \cdot (|G(S)| - \phi(S)) \leq 2s \cdot |G(S)| \leq \frac{|I|}{8}.
\end{equation*}
It follows that almost all items are undecided:
\begin{equation*}
    |U(S)| = |I| - |B(S)| - |G(S)| \geq |I| - \frac{|I|}{8} - \frac{|I|}{16s} \geq \frac{3 |I|}{4}.
\end{equation*}
Now pretend that we sample an element $j \in [r]$ uniformly at random. On the one hand, for each undecided item $i \in U(S)$ the probability that $i$ becomes good in $S \cup \set{j}$ is exactly $1/r$. On the other hand, for each good item $i \in G(S) \sqcup U(S)$ the probability that $i$ becomes bad in $S \cup \set{j}$ is at most $s / r$, and thus:
\begin{equation*}
    \Ex\brackets*{|G(S \cup \set{j})| - |G(S)|} \geq \frac{|U(S)|}{r} - \frac{s |G(S)|}{r} \geq \frac{3|I|}{4r} - \frac{s |I|}{16rs} \geq \frac{5|I|}{8r}.
\end{equation*}
At the same time, the probability that any item $i$ becomes bad is at most $s / r$, hence
\begin{equation*}
    \Ex\brackets*{|B(S \cup \set{j})| - |B(S)|} \leq \frac{s |I|}{r}.
\end{equation*}
It follows that
\begin{equation*}
    \Ex[\phi(S \cup \set{j}) - \phi(S)] = \Ex\brackets*{|G(S \cup \set{j})| - |G(S)| - \frac{|B(S \cup \set{j})| - |B(S)|}{2s}} \geq \frac{5|I|}{8r} - \frac{s |I|}{2r s} = \frac{|I|}{8r}.
\end{equation*}
In particular, there exists an element $j \in [r]$ so that by adding $j$ to $S$ the potential would increase by at least $|I| / (8r)$.

\paragraph{Efficient Implementation}
We now turn to the problem of efficiently constructing the set $S$. Specifically, we give an efficient algorithm that, while $|G(S)| < |I| / (16s)$, finds some $j \in [r]$ with a potential increase of at least $\phi(S \cup \set{j}) - \phi(S) \geq |I| / (8r)$.

We rely on some preprocessing. If necessary increase $r$ to a power of $2$. Recall that a \emph{dyadic} interval~\makebox{$J \subseteq [r]$} is a set of the form $a2^\ell + \set{1, \dots, 2^\ell}$ for some $0 \leq \ell \leq \log r$ and $a \in \set{0, \dots, r / 2^\ell - 1}$. In the preprocessing phase we compute and store, for all $i \in [n]$ and all dyadic intervals $J \subseteq [r]$, the sizes $|C_i \cap J|$ and $|\set{x_i} \cap J|$. With access to this data we can, given any sets $G, U \subseteq [n]$ and any dyadic interval $J \subseteq [r]$, evaluate the following quantity in time $O(n)$:
\begin{equation*}
    E_J(G, U) = \frac{1}{|J|} \parens*{\sum_{i \in U} |\set{x_i} \cap J| - \sum_{i \in G} |C_i \cap J| - \frac{1}{2s} \sum_{i \in G \cup U} |C_i \cap J|}.
\end{equation*}
The key here is that $E_J(G(S), U(S)) = \Ex[\phi(S \cup \set{j}) - \phi(S)]$, where $j$ is sampled uniformly random from $J$. Indeed, by the same argument as before we have that
\begin{align*}
    E_{J}(G(S), U(S))
    &= \Ex\brackets*{\sum_{i \in U(S)} |\set{x_i} \cap \set{j}| - \sum_{i \in G(S)} |C_i \cap \set{j}| - \frac{1}{2s} \sum_{i \in G(S) \cup U(S)} |C_i \cap \set{j}|} \\
    &= \Ex\brackets*{|G(S \cup \set{j})| - |G(S)| - \frac{|B(S \cup \set{j})| - |B(S)|}{2s}} \\
    &= \Ex\brackets*{\phi(S \cup \set{j}) - \phi(S)}\vphantom{\brackets*{\frac{|}{s}}}.
\end{align*}

We return to the construction of $S$. If $|G(S)| < |I| / (16s)$ we have that $E_{[r]}(G(S), U(S)) \geq |I| / (8r)$ by the correctness argument from before. Moreover, note that for any dyadic intervals $J = J_1 \sqcup J_2$, $E_J(G, U)$ is exactly the average of $E_{J_1}(G, U)$ and $E_{J_2}(G, U)$. Based on this insight, we binary-search for $j$. We start with $J = [r]$. While $J$ can be split into $J = J_1 \sqcup J_2$ we update $J \gets J_1$ if $E_{J_1}(G(S), U(S)) \geq E_J(G(S), U(S))$, and update $J \gets J_2$ otherwise. Eventually we reach $J = \set{j}$, and then
\begin{equation*}
    \frac{|I|}{8r} \leq E_{\set{j}}(G(S), U(S)) = \phi(S \cup \set{j}) - \phi(S)
\end{equation*}
as claimed.

\paragraph{Running Time}
We finally analyze the running time of this algorithm. In the preprocessing step we consider the $n$ items and $O(r \log r)$ dyadic intervals; with a little care we can precompute all sizes $|C_i \cap J|$ and $|\set{x_i} \cap J|$ in time $O(n r \log r)$.

To analyze the main phase of the algorithm we first bound the size of the sets $S$. Recall that whenever we insert an element $j$ into some set $S$ then the potential $\phi(S)$ increases by at least $|I| / (8r)$. Hence, we have that
\begin{equation*}
    |G(S)| \geq \phi(S) \geq \frac{|S| \, |I|}{8r},
\end{equation*}
and thus after inserting at most $r / s$ elements into $S$ the algorithm halts the construction and proceeds to the next set $S$. That is, $|S| \leq r / s$. In particular, we query at most $O(|\mathcal S| \cdot r / s \cdot \log r)$ values $E_J(G, U)$ in total. Each query takes time $O(n)$, so the total query time is $O(|\mathcal S| \cdot r / s \cdot \log r \cdot n) = O(n r \log n \log r)$. All other bookkeeping steps run in negligible time $O(n r)$.
\end{proof}

\end{document}